\newtheorem{theorem}{Theorem}
\newtheorem{lemma}{Lemma}
\newtheorem{corollary}{Corollary}
\newtheorem{proposition}{Proposition}
\newtheorem{claim}{Claim}
\theoremstyle{definition}
\newtheorem{definition}{Definition}
\crefname{claim}{Claim}{Claims}
\Crefname{claim}{Claim}{Claims}
\newcommand{\N}{\mathbb{N}}
\newcommand{\F}{\mathbb{F}}
\renewcommand{\emptyset}{\varnothing}
\renewcommand{\epsilon}{\varepsilon}
\renewcommand{\tilde}{\widetilde}
\renewcommand{\bar}{\overline}
\DeclareMathOperator{\poly}{poly}
\DeclareMathOperator*{\E}{E}
\DeclareMathOperator{\polylog}{polylog}
\DeclareMathOperator{\size}{size}
\DeclareMathOperator{\length}{length}
\DeclareMathOperator{\density}{density}
\DeclareMathOperator{\gip}{GIP}
\DeclareMathOperator{\queries}{queries}
\DeclareMathOperator{\stconn}{STConn}
\newcommand{\botn}{\bot_{\normalfont\text{n}}}
\newcommand{\botr}{\bot_{\normalfont\text{r}}}
\newcommand{\boti}{\bot_{\normalfont\text{i}}}
\title{Typically-Correct Derandomization for Small Time and Space}
\author{William M. Hoza\thanks{Supported by the NSF GRFP under Grant DGE-1610403 and by a Harrington Fellowship from UT Austin.}\\Department of Computer Science,\\University of Texas at Austin\\\texttt{whoza@utexas.edu}}
\begin{document}
	\maketitle
	
	\begin{abstract}
		Suppose a language $L$ can be decided by a bounded-error randomized algorithm that runs in space $S$ and time $n \cdot \poly(S)$. We give a randomized algorithm for $L$ that still runs in space $O(S)$ and time $n \cdot \poly(S)$ that uses only $O(S)$ random bits; our algorithm has a low failure probability on all but a negligible fraction of inputs of each length. As an immediate corollary, there is a deterministic algorithm for $L$ that runs in space $O(S)$ and succeeds on all but a negligible fraction of inputs of each length. We also give several other complexity-theoretic applications of our technique.
	\end{abstract}
	
	\section{Introduction}
	
	\subsection{The power of randomness when time and space are limited} \label{sec:bptisp-intro}
	
	A central goal of complexity theory is to understand the relationship between three fundamental resources: time, space, and randomness. Based on a long line of research \cite{yao82, bm84, bfnw93, nw94, iw97, stv01, kvm02}, most complexity theorists believe that randomized decision algorithms can be made deterministic without paying too much in terms of time and space. Specifically, suppose a language $L$ can be decided by a randomized algorithm that runs in time $T = T(n) \geq n$ and space $S = S(n) \geq \log n$. Klivans and van Melkebeek showed that assuming some language in $\mathbf{DSPACE}(n)$ has exponential circuit complexity, there is a deterministic algorithm for $L$ that runs in time $\poly(T)$ and space $O(S)$ \cite{kvm02}.\footnote{More generally, Klivans and van Melkebeek constructed a pseudorandom generator that fools size-$T$ circuits on $T$ input bits under this assumption. The generator has seed length $O(\log T)$ and is computable in $O(\log T)$ space.}
	
	Proving the hypothesized circuit lower bound seems unlikely for the foreseeable future. In the 90s and early 2000s, researchers managed to prove powerful \emph{unconditional} derandomization theorems by focusing on the space complexity of the deterministic algorithm. For example, Nisan and Zuckerman showed that if $S \geq T^{\Omega(1)}$, there is a deterministic algorithm for $L$ that runs in space $O(S)$ \cite{nz96}.\footnote{More generally, the Nisan-Zuckerman theorem applies as long as the original randomized algorithm for $L$ uses only $\poly(S)$ random bits, regardless of how much time it takes.} Alas, in the past couple of decades, progress on such general, unconditional derandomization has stalled. Nobody has managed to extend the Nisan-Zuckerman theorem to a larger regime of pairs $(T, S)$, and researchers have been forced to focus on more restricted models of computation.
	
	In this paper, we focus on \emph{highly efficient} randomized algorithms. That is, we consider the case that $T$ and $S$ are both small, such as $T \leq \tilde{O}(n)$ and $S \leq O(\log n)$.
	
	\subsection{Our results}
	
	\subsubsection{Reducing the amount of randomness to $O(S)$}
	Suppose $T \leq n \cdot \poly(S)$. For our main result, we give a randomized algorithm for $L$ that still runs in time $n \cdot \poly(S)$ and space $O(S)$ that uses only $O(S)$ random bits. The catch is that our algorithm is only guaranteed to succeed on \emph{most} inputs. The fraction of ``bad'' inputs of length $n$ is at most $2^{-S^c}$, where $c \in \N$ is an arbitrarily large constant. On ``good'' inputs, our algorithm's failure probability is at most $2^{-S^{1 - \alpha}}$, where $\alpha > 0$ is an arbitrarily small constant.
	
	\subsubsection{Eliminating randomness entirely}
	From the result described in the preceding paragraph, a deterministic algorithm that runs in space $O(S)$ follows immediately by iterating over all $O(S)$-bit random strings. We can express this theorem in terms of complexity classes using terminology introduced by Kinne et al.\ for typically-correct algorithms \cite{kvms12}. Suppose $L$ is a language, $\mathbf{C}$ is a complexity class, and $\epsilon(n)$ is a function. We say that \emph{$L$ is within $\epsilon$ of $\mathbf{C}$} if there is some $L' \in \mathbf{C}$ such that for every $n$,
	\begin{equation}
	\Pr_{x \in \{0, 1\}^n}[x \in L \Delta L'] \leq \epsilon(n).
	\end{equation}
	If $\mathbf{C}$ and $\mathbf{C}'$ are complexity classes, we say that \emph{$\mathbf{C}$ is within $\epsilon$ of $\mathbf{C}'$} if every language in $\mathbf{C}$ is within $\epsilon$ of $\mathbf{C}'$. In these terms, our result is that
	\begin{equation} \label{eqn:bptisp}
		\mathbf{BPTISP}(n \cdot \poly(S), S) \text{ is within } 2^{-S^c} \text{ of } \mathbf{DSPACE}(S).
	\end{equation}
	Here, $\mathbf{BPTISP}(T, S)$ is the class of languages that can be decided by a bounded-error randomized algorithm that runs in time $O(T(n))$ and space $O(S(n))$, and $\mathbf{DSPACE}(S)$ is the class of languages that can be decided by a deterministic algorithm that runs in space $O(S)$. Note that if $S \geq n^{\Omega(1)}$, the mistake rate in \cref{eqn:bptisp} drops below $2^{-n}$. Since there are only $2^n$ inputs of length $n$, the algorithm must in fact be correct on all inputs. Our result can therefore be viewed as a generalization of the Nisan-Zuckerman theorem $\mathbf{BPTISP}(\poly(S), S) \subseteq \mathbf{DSPACE}(S)$ \cite{nz96}.
	
%
	
	\subsubsection{Derandomization with advice}
	
	Adleman's argument \cite{adl78} shows that $\mathbf{BPL} \subseteq \mathbf{L}/\poly$. We study the problem of derandomizing $\mathbf{BPL}$ with as little advice as possible. Goldreich and Wigderson discovered a critical threshold: roughly, if an algorithm can be derandomized with fewer than $n$ bits of advice, then there is a \emph{typically-correct} derandomization of the algorithm with \emph{no} advice \cite{gw02}.\footnote{This result also requires that (a) most advice strings are ``good'', and (b) there is an appropriate efficient extractor.}
	
	Motivated by this phenomenon, Fortnow and Klivans proved that $\mathbf{BPL} \subseteq \mathbf{L}/O(n)$ \cite{fk06}. We refine their argument and show that $\mathbf{BPL} \subseteq \mathbf{L}/(n + O(\log^2 n))$, getting very near the critical threshold of $n$ bits of advice. More interestingly, we show that the connection identified by Goldreich and Wigderson \cite{gw02} works the other way: in the space-bounded setting, typically-correct derandomizations imply derandomizations with just a little advice. Combining with our main result gives that for every constant $c \in \N$,
	\begin{equation}
		\mathbf{BPTISP}(\tilde{O}(n), \log n) \subseteq \mathbf{L}/(n - \log^c n).
	\end{equation}
	
	\subsubsection{Derandomizing Turing machines} \label{sec:tm-overview}
	All algorithms in the results mentioned so far are formulated in a general \emph{random-access} model, i.e., the algorithm can read any specified bit of its input in a single step. (See \cref{sec:model} for details.) We also study the weaker \emph{multitape Turing machine} model. The main weakness of the Turing machine model is that if its read head is at position $i$ of its input and it wishes to read bit $j$ of its input, it must spend $|i - j|$ steps moving its read head to the appropriate location. Let $\mathbf{BPTISP}_{\text{TM}}(T, S)$ denote the class of languages that can be decided by a bounded-error randomized Turing machine that runs in time $O(T(n))$ and space $O(S(n))$.
	
	\paragraph{Beyond linear advice}
	We give a typically-correct derandomization for $\mathbf{BPTISP}_{\text{TM}}$ analogous to our main result but with a lower mistake rate. In terms of advice, our derandomization implies that for every constant $c \in \N$,
	\begin{equation} \label{eqn:bptisp-tm-advice}
		\mathbf{BPTISP}_{\text{TM}}(\tilde{O}(n), \log n) \subseteq \mathbf{L}/O\left(\frac{n}{\log^c n}\right).
	\end{equation}
	\Cref{eqn:bptisp-tm-advice} gives an interesting example of a class of $\mathbf{BPL}$ algorithms that can be derandomized with $o(n)$ bits of advice.
	
	\paragraph{Beyond quasilinear time}
	Using different techniques, we also show how to derandomize log-space Turing machines that use almost a \emph{quadratic} amount of time. In particular, we show that if $TS^2 \leq o(n^2 / \log n)$, then
	\begin{equation} \label{eqn:bptisp-tm-cc}
	\mathbf{BPTISP}_{\text{TM}}(T, S) \text{ is within } o(1) \text{ of } \mathbf{DTISP}(\poly(n), S).
	\end{equation}
	
	\subsubsection{Disambiguating nondeterministic algorithms}
	
	For some of our derandomization results, we give analogous theorems regarding \emph{unambiguous} simulations of \emph{nondeterministic} algorithms. We defer a discussion of these results to \cref{sec:nl-ul}.
	
	\subsection{Techniques}
	
	\subsubsection{``Out of sight, out of mind''} \label{sec:out-of-sight}
	
	Our typically-correct derandomizations work by treating the \emph{input as a source of randomness}. This idea was pioneered by Goldreich and Wigderson \cite{gw02}. For the sake of discussion, let $\mathcal{A}$ be a randomized algorithm that uses $n$ random bits. A na{\"i}ve strategy for derandomizing $\mathcal{A}$ is to run $\mathcal{A}(x, x)$. Most random strings of $\mathcal{A}$ lead to the right answer, so it is tempting to think that for most $x$, $\mathcal{A}(x, x)$ will give the right answer. This reasoning is flawed, because $\mathcal{A}$ might behave poorly when its input is \emph{correlated} with its random bits.
	
	In this work, we avoid these troublesome correlations using a simple idea embodied by the adage ``out of sight, out of mind.'' We use \emph{part} of the input as a source of randomness while $\mathcal{A}$ is processing \emph{the rest} of the input.
	
	To go into more detail, suppose $\mathcal{A}$ runs in time $\tilde{O}(n)$ and space $O(\log n)$. Our randomness-efficient simulation of $\mathcal{A}$ operates in $\polylog(n)$ \emph{phases}. At the beginning of a new phase, we pick a random $\polylog(n)$-bit block $x \vert_I$ of the input $x$. We apply a seeded extractor to $x \vert_I$, giving a string of length $\Theta(\log^2 n)$. We apply Nisan's pseudorandom generator for space-bounded computation \cite{nis92}, giving a pseudorandom string of length $\tilde{O}(n)$. We use the pseudorandom string to run the simulation of $\mathcal{A}$ forward until it tries to read from $x \vert_I$, at which time we pause the simulation of $\mathcal{A}$ and move on to the next phase.
	
	The key point is that the output of the extractor is processed without ever looking at $x \vert_I$, the input to the extractor. Extractors are good \emph{samplers} \cite{zuc97}, and $\mathcal{A}$ only has polynomially many possible configurations, so for most $x$, the output of the extractor is essentially as good as a uniform random seed to Nisan's generator. Therefore, in each phase, with high probability, we successfully simulate $n/\polylog(n)$ steps of $\mathcal{A}$ before it reads from $x \vert_I$ and we have to move on to the next phase. Thus, with high probability, after $\polylog(n)$ phases, the simulation of $\mathcal{A}$ is complete.
	
	Each bit of the output of Nisan's generator can be computed in time\footnote{See work by Diehl and van Melkebeek \cite{dvm06} for an even faster implementation of Nisan's generator.} $\polylog(n)$ and space $O(\log n)$. Therefore, our simulation of $\mathcal{A}$ still runs in time $\tilde{O}(n)$ and space $O(\log n)$, but now it uses just $\polylog(n)$ random bits ($O(\log n)$ random bits per phase to pick the random block $I$ and to pick a seed for the extractor).
	
	The reader may wonder whether we could have achieved the same effect by simply directly applying Nisan's generator from the start -- its seed length is $\polylog(n)$, after all. The point is that Nisan's generator requires \emph{two-way} access to its seed, whereas our simulation only uses one-way access to its random bits. During our simulation, we are able to give Nisan's generator two-way access to its seed, because we have two-way access to the \emph{input} $x$ from which we extract that seed.
	
	Finally, because our simulation reads its $\polylog(n)$ random bits from left to right, we can further reduce the number of random bits to just $O(\log n)$ by applying the Nisan-Zuckerman pseudorandom generator \cite{nz96}.
	
	\subsubsection{Other techniques}
	
	Our derandomizations with advice are based on Fortnow and Klivans' technique for proving $\mathbf{BPL} \subseteq \mathbf{L}/O(n)$ \cite{fk06} and Nisan's technique for proving $\mathbf{RL} \subseteq \mathbf{SC}$ \cite{nis94}. Our derandomization of $\mathbf{BPTISP}_{\text{TM}}$ with a low mistake rate uses a similar ``out of sight, out of mind'' technique as our main result. The lower mistake rate is achieved by exploiting knowledge of the region of the input that will be processed in the near future, based on the locality of the Turing machine's read head. Our derandomization of $\mathbf{BPTISP}_{\text{TM}}(T, S)$ for $T(n) \approx n^2$ is based on a seed-extending pseudorandom generator for multiparty communication protocols by Kinne et al.\ \cite{kvms12}.
	
	\subsection{Related work}
	
	We will only mention some highlights of the large body of research on unconditional derandomization of time- and space-bounded computation. Fix $L \in \mathbf{BPTISP}(T, S)$. Nisan gave a randomized algorithm for $L$ that runs in time $\poly(T)$ and space $O(S \log T)$ that uses only $O(S \log T)$ random bits \cite{nis92}. Nisan also gave a deterministic algorithm for $L$ that runs in time $2^{O(S)}$ and space $O(S \log T)$ \cite{nis94}. Nisan and Zuckerman gave a randomized algorithm for $L$ that runs in time $\poly(T)$ and space $O(S + T^{\epsilon})$ that uses only $O(S + T^{\epsilon})$ random bits, where $\epsilon > 0$ is an arbitrarily small constant \cite{nz96} (this is a generalization of the result mentioned in \cref{sec:bptisp-intro}). Saks and Zhou gave a deterministic algorithm for $L$ that runs in space $O(S \sqrt{\log T})$ \cite{sz99}. Combining the techniques from several of these works, Armoni \cite{arm98} gave a deterministic algorithm for $L$ that runs in space\footnote{Actually, the space bound given in \cref{eqn:arm98} is achieved by using better extractors than were known when Armoni wrote his paper \cite{arm98, knw08}.}
	\begin{equation} \label{eqn:arm98}
		O\left(S \cdot \sqrt{\frac{\log T}{\max\{1, \log S - \log \log T\}}}\right).
	\end{equation}
	Armoni's algorithm remains the most space-efficient derandomization known for all $T$ and $S$. When $T = \tilde{\Theta}(n)$ and $S = \Theta(\log n)$, Armoni's algorithm runs in space $\Theta(\log^{3/2} n)$, just like the earlier Saks-Zhou algorithm \cite{sz99}. Cai et al.\ gave a time-space tradeoff \cite{ccvm06} interpolating between Nisan's deterministic algorithm \cite{nis94} and the Saks-Zhou algorithm \cite{sz99}.
	
	All of the preceding results apply, \emph{mutatis mutandis}, to derandomizing algorithms that use at most $T$ random bits, regardless of how much time they take. In contrast, our proofs crucially rely on the fact that a time-$T$ algorithm queries its \emph{input} at most $T$ times. This aspect of our work is shared by work by Beame et al.\ \cite{bssv03} on time-space lower bounds. 
	
	Goldreich and Wigderson's idea of using the input as a source of randomness for a typically-correct derandomization \cite{gw02} has been applied and developed by several researchers \cite{at04, vms05, ks05, zim08, sha11, kvms12, sw14, alm19}; see related survey articles by Shaltiel \cite{sha10} and by Hemaspaandra and Williams \cite{hw12}. Researchers have proven unconditional typically-correct derandomization results for several restricted models, including sublinear-time algorithms \cite{zim08, sha11}, communication protocols \cite{sha11, kvms12}, constant-depth circuits \cite{sha11, kvms12}, and streaming algorithms \cite{sha11}. On the other hand, Kinne et al.\ proved that any typically-correct derandomization of $\mathbf{BPP}$ with a sufficiently low mistake rate would imply strong circuit lower bounds \cite{kvms12}. We are the first to study typically-correct derandomization for algorithms with simultaneous bounds on time and space.
	
	\subsection{Outline of this paper}
	In \cref{sec:prelim}, we discuss random-access models of computation and extractors. In \cref{sec:bptisp}, we give our derandomization of $\mathbf{BPTISP}(n \cdot \poly(S), S)$. In \cref{sec:bptisp-tm}, we give our two derandomizations of $\mathbf{BPTISP}_{\text{TM}}(T, S)$. 
	In \cref{sec:bpl-advice}, we discuss derandomization with advice. \Cref{sec:nl-ul} concerns disambiguation of nondeterministic algorithms, and we conclude in \cref{sec:future} with some suggested directions for further research.
	
	\section{Preliminaries} \label{sec:prelim}
	
	\subsection{General notation}
	
	\paragraph{Strings}
	For strings $x, y$, let $x \circ y$ denote the concatenation of $x$ with $y$. For a natural number $n$, let $[n] = \{1, 2, \dots, n\}$. For a string $x \in \{0, 1\}^n$ and a set $I = \{i_1 < i_2 < \dots < i_{\ell}\} \subseteq [n]$, let $x \vert_I = x_{i_1} x_{i_2} \dots x_{i_{\ell}} \in \{0, 1\}^{\ell}$.
	
	\paragraph{Sets}
	For a finite set $X$, we will use the notations $\#X$ and $|X|$ interchangeably to refer to the number of elements of $X$. For $X \subseteq \{0, 1\}^n$, let $\density(X) = |X|/2^n$. We will sometimes omit the parentheses, e.g., $\density\{000, 111\} = 0.25$. We identify a language $L \subseteq \{0, 1\}^*$ with its indicator function $L: \{0, 1\}^* \to \{0, 1\}$, i.e.,
	\begin{equation}
	L(x) = \begin{cases}
	1 & \text{if } x \in L \\
	0 & \text{if } x \not \in L.
	\end{cases}
	\end{equation}
	
	\paragraph{Probability}
	If $X$ and $Y$ are probability distributions on the same space, we write $X \sim_{\epsilon} Y$ to indicate that $X$ and $Y$ are $\epsilon$-close in total variation distance. For $T \in \N$, let $U_T$ denote the uniform distribution over $\{0, 1\}^T$.
	
	\subsection{Random-access algorithms} \label{sec:model}
	
	Our main theorems govern general \emph{random-access algorithms}. Our results are not sensitive to the specific choice of model of random-access computation. For concreteness, following Fortnow and van Melkebeek \cite{fvm00}, we will work with the \emph{random-access Turing machine} model. This model is defined like the standard multitape Turing machine model, except that each ordinary tape is supplemented with an ``index tape'' that can be used to move the ordinary tape's head to an arbitrary specified location in a single step. See the paper by Fortnow and van Melkebeek \cite{fvm00} for details.
	
	A \emph{randomized random-access Turing machine} is a random-access Turing machine equipped with an additional read-only tape, initialized with random bits, that can only be read from \emph{left to right}. Thus, if the algorithm wishes to reread old random bits, it needs to have copied them to a work tape, which counts toward the algorithm's space usage. The random tape does not have a corresponding index tape. 
	
	For functions $T: \N \to \N$ and $S: \N \to \N$, we define $\mathbf{BPTISP}(T, S)$ to be the class of languages $L$ such that there is a randomized random-access Turing machine $\mathcal{A}$ such that on input $x \in \{0, 1\}^n$, $\mathcal{A}(x)$ always halts in time $O(T(n))$, $\mathcal{A}(x)$ always touches $O(S(n))$ cells on all of its read-write tapes, and $\Pr[\mathcal{A}(x) = L(x)] \geq 2/3$.

	\subsection{Randomized branching programs}
	Our algorithms are most naturally formulated in terms of branching programs, a standard \emph{nonuniform} model of time- and space-bounded computation. Recall that in a digraph, a \emph{terminal vertex} is a vertex with no outgoing edges. In the following definition, $n$ is the number of input bits and $m$ is the number of random bits.

	\begin{definition}
		A \emph{randomized branching program} on $\{0, 1\}^n \times \{0, 1\}^m$ is a directed acyclic graph, where each nonterminal vertex $v$ is labeled with two indices $i(v) \in [n], j(v) \in [m]$ and has four outgoing edges labeled with the four two-bit strings. If $\mathcal{P}$ is a randomized branching program, we let $V(\mathcal{P})$ be the set of vertices of $\mathcal{P}$.
	\end{definition}

	The interpretation is that from vertex $v$, the program follows the edge labeled $x_{i(v)} y_{j(v)}$, where $x$ is the input and $y$ is the random string. This interpretation is formalized by the following definition, which sets $\mathcal{P}(v; x, y)$ to be the vertex reached from $v$ on input $x$ using randomness $y$.
	
	\begin{definition} \label{def:randomized-branching-program-operation}
		Suppose $\mathcal{P}$ is a randomized branching program on $\{0, 1\}^n \times \{0, 1\}^m$. We identify $\mathcal{P}$ with a function $\mathcal{P}: V(\mathcal{P}) \times \{0, 1\}^n \times \{0, 1\}^m \to V(\mathcal{P})$ defined as follows. Fix $v \in V(\mathcal{P}), x \in \{0, 1\}^n, y \in \{0, 1\}^m$. Take a walk through $\mathcal{P}$ by starting at $v$ and, having reached vertex $u$, following the edge labeled $x_{i(u)} y_{j(u)}$. Then $\mathcal{P}(v; x, y)$ is the terminal vertex reached by this walk.
	\end{definition}

	As previously discussed, random-access Turing machines can only access their random bits from left to right. This corresponds to an \emph{R-OW randomized branching program}.
	\begin{definition}
		An \emph{R-OW randomized branching program} is a randomized branching program $\mathcal{P}$ such that for every edge $(v, v')$ between two nonterminal vertices, $j(v') \in \{j(v), j(v) + 1\}$.
	\end{definition}
	The term ``R-OW'' indicates that the branching program has ``random access'' to its input bits and ``one-way access'' to its random bits.

	The \emph{size} of a branching program is defined as $\size(\mathcal{P}) = |V(\mathcal{P})|$. The \emph{length} of the program, $\length(\mathcal{P})$, is defined to be the length of the longest path through the program. Observe that $\mathbf{BPTISP}(T, S)$ corresponds to R-OW randomized branching programs of size $2^{O(S)}$ and length $O(T)$.
	
	Many of our algorithms will use a \emph{restriction} operation that we now introduce.
	\begin{definition}
		Suppose $\mathcal{P}$ is a randomized branching program on $\{0, 1\}^n$ and $I \subseteq [n]$. Let $\mathcal{P} \vert_I$ be the program obtained from $\mathcal{P}$ by deleting all outgoing edges from vertices $v$ such that $i(v) \not \in I$.
	\end{definition}
	
	So in $\mathcal{P} \vert_I$, there are two types of terminal vertices: vertices that were terminal in $\mathcal{P}$, and vertices $v$ that are now terminal because $i(v) \not \in I$. The computation $\mathcal{P} \vert_I(v; x, y)$ halts when it reaches either type of terminal vertex. Thus, $\mathcal{P}\vert_I(v; x, y)$ does not depend on $x \vert_{[n] \setminus I}$, because $\mathcal{P}\vert_I(v; x, y)$ outputs the vertex reached by running the computation $\mathcal{P}(v; x, y)$ until it finishes or it tries to read from $x \vert_{[n] \setminus I}$.

	\subsection{Extractors}
	
	Recall that a \emph{$(k, \epsilon)$-extractor} is a function $\mathsf{Ext}: \{0, 1\}^{\ell} \times \{0, 1\}^d \to \{0, 1\}^s$ such that if $X$ has ``min-entropy'' at least $k$ and $Y \sim U_d$ is independent of $X$, then $\mathsf{Ext}(X, Y) \sim_{\epsilon} U_s$. It can be shown nonconstructively that for every $\ell, k, \epsilon$, there exists $\mathsf{Ext}$ with $d \leq \log(\ell - k) + 2\log(1/\epsilon) + O(1)$ and $s \geq k + d - 2\log(1/\epsilon) - O(1)$ (see, e.g., Vadhan's monograph \cite{vad12}).
	
	We will need a computationally efficient extractor. The extractor literature has mainly focused on the time complexity of computing extractors, but we are concerned with space complexity, too. This paper is not meant to be about extractor constructions, so we encourage the reader to simply pretend that optimal extractors can be computed in a single step with no space overhead. In actuality, we will use two incomparable non-optimal extractors.
	
	To prove our main results, we will use an extractor by Shaltiel and Umans \cite{su05}. The benefit of the Shaltiel-Umans extractor is that it allows for small error $\epsilon$.
	\begin{theorem}[\cite{su05}] \label{thm:su05}
		Fix a constant $\alpha > 0$. For every $\ell, k \in \N, \epsilon > 0$ such that $k \geq \log^{4/\alpha} \ell$ and $k \geq \log^{4/\alpha}(1/\epsilon)$, there is a $(k, \epsilon)$-extractor $\mathsf{SUExt}: \{0, 1\}^{\ell} \times \{0, 1\}^d \to \{0, 1\}^s$ where $d \leq O\left(\log \ell + \frac{\log \ell \log(1/\epsilon)}{\log k}\right)$ and $s \geq k^{1 - \alpha}$. Given $x, y, k$, and $\epsilon$, $\mathsf{SUExt}(x, y)$ can be computed in time $\poly(\ell)$ and space $O(d)$.
	\end{theorem}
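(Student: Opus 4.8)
The existence half of the claim---a $(k,\epsilon)$-extractor $\mathsf{SUExt}\colon\{0,1\}^\ell\times\{0,1\}^d\to\{0,1\}^s$ with $d\le O\!\left(\log\ell+\frac{\log\ell\log(1/\epsilon)}{\log k}\right)$ and $s\ge k^{1-\alpha}$ under the stated min-entropy hypotheses---is (a mild restatement of) the construction of Shaltiel and Umans \cite{su05}, who moreover show it is computable in time $\poly(\ell)$. So the only thing left to prove is the space bound $O(d)$, and the plan is to audit their construction step by step. First I would pin down the relevant sizes. Their construction works over a finite field $\F_q$ with $q=\poly(\ell,1/\epsilon)$, so $\log q=O(\log\ell+\log(1/\epsilon))$, which is $O(d)$ because $d\ge\Omega(\log\ell)$ and $d\ge\Omega(\log(1/\epsilon))$ (using $\log\ell\ge\log k$). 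The source $x\in\{0,1\}^\ell$ is read via a Reed--Muller encoding as the list of $\le\ell/\log q$ coefficients of an $m$-variate polynomial $p_x$ of low individual degree, and the seed names a point $a$ of (an extension of) $\F_q^m$ together with $O(1)$ auxiliary field elements, so $m\log q\le d$. The point is that every object the algorithm ever holds---a single field element, an exponent vector indexing a monomial of $p_x$, a loop counter over the $\le s$ output positions---has bit-length $O(d)$.

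Granting that, the audit of the core computation is mechanical. The output of $\mathsf{SUExt}(x,y)$ is, up to formatting, the sequence $p_x(a),p_x(\sigma a),p_x(\sigma^2 a),\dots$, where $\sigma$ is a fixed ``successor'' permutation of the domain obtained from multiplication by an element of large multiplicative order in an extension field. To emit the $i$-th chunk I would compute $\sigma^i(a)$ by one repeated-squaring exponentiation followed by a field multiplication, then evaluate $p_x$ there by iterating over the monomials $M$ of $p_x$, forming $M(\sigma^i a)$ by $m$ repeated-squaring exponentiations in $\F_q$, reading the coefficient $x_M$ by a single random access into $x$, and accumulating a running sum---reusing the same $O(d)$ work cells for every $M$ and every $i$, and writing each chunk to a one-way output tape so that the $s$-bit output is never stored in full. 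If \cite{su05} composes this with a further extractor or condenser to shrink the seed, the same discipline applies recursively; and any univariate interpolation invoked along the way is evaluated at the single point currently needed by an $O(d)$-space Lagrange computation rather than by materializing a whole polynomial. Thus the core computation runs in space $O(d)$.

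The step I expect to need the most care is the \emph{deterministic preprocessing} that fixes the algebraic gadgets: a representation of $\F_q$ (and of the small-degree extension defining $\sigma$) by an irreducible polynomial, and an element of multiplicative order at least the orbit length, which is at most $s\le\ell$. Here \cite{su05} relies on the fact that it only needs these objects to be polynomial-time computable, whereas we need them in space $O(d)$. For the element of large order this is easy: only fewer than $\ell^2$ elements have order below $\ell$, so a short exhaustive search, testing each candidate's order by repeated multiplication, finds one in time $\poly(\ell)$ and space $O(m\log q)=O(d)$. For the field representation I would check that the deterministic subroutine used by \cite{su05} (irreducibility being testable by $\gcd$ with $X^{q^i}-X$ computed by repeated squaring modulo the candidate) likewise maintains only a constant number of $O(d)$-bit field elements and hence runs in space $O(d)$. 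Once this preprocessing is confirmed to be $O(d)$-space, combining it with the audit of the previous paragraph yields a space-$O(d)$ (and time-$\poly(\ell)$) algorithm for $\mathsf{SUExt}$, as claimed.
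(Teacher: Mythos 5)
Your overall strategy is the same as the paper's: take the low-error Shaltiel--Umans construction as given, and audit its space usage object by object. Your audit of the core loop is essentially the argument in \cref{apx:su05} --- the algorithm only ever needs to hold the current point $B^i\vec v$, a running sum, and a constant number of field elements, each of $O(d)$ bits, and the brute-force searches for the degree-$d$ irreducible over the subfield $H$ and for the generator of $(H^d)^{\times}$ range over sets of size roughly $h^d\approx\ell$, so they cost $\poly(\ell)$ time and $O(d)$ space. (One caution: the construction needs a \emph{generator} of $(H^d)^{\times}$, not merely an element of order at least the orbit length; but a generator is found by the same $\poly(\ell)$-time search, so this does not hurt you.)

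The genuine gap is the step you flag as needing the most care and then leave unresolved: constructing the representation of the \emph{base} field $\F_q$ itself, i.e., an irreducible polynomial of degree $\log q$ over $\F_2$. Neither of your two options works. Shaltiel and Umans obtain this polynomial from Shoup's deterministic algorithm, which is not merely ``gcd with $X^{q^i}-X$ over a list of candidates''; it is a more involved procedure, and the paper's whole reason for modifying the construction is that Shoup's algorithm is \emph{not} known to run in space $O(d)$ --- so the ``check'' you propose fails. Your fallback, exhaustive search over all degree-$(\log q)$ candidates with an irreducibility test, is $O(d)$-space but takes time about $2^{\log q}=q$, and in the low-error regime $q\geq\poly(1/\epsilon)$ with $1/\epsilon$ superpolynomial in $\ell$ (exactly the regime this extractor exists for, and the one used in \cref{thm:simulate-branching-program}, where $\ell=S^{c+1}$ and $\epsilon'=2^{-\Theta(S)}$), so this violates the $\poly(\ell)$ time bound. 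The paper's fix is to \emph{change the construction}: replace $q$ by a power of two of the form $2^{2\cdot 3^a}$ in $[q,q^3]$ and use the explicit irreducible family $x^{2\cdot 3^a}+x^{3^a}+1$ of \cref{lem:vl99}, which requires no search at all and only costs a factor of $3$ in $\log q$. Some such modification (an explicit or searchless irreducible family for $\F_q$) is needed to close your argument; without it the preprocessing step is not simultaneously $\poly(\ell)$-time and $O(d)$-space.
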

	
	To derandomize $\mathbf{BPL}$ with as little advice as possible, we will use an extractor by Guruswami, Umans, and Vadhan \cite{guv09} (not the most famous extractor from their work, but a slight variant). The benefit of the GUV extractor is that it outputs a constant fraction of the entropy.
	\begin{theorem}[\cite{guv09}] \label{thm:guv09}
		Let $\alpha, \epsilon > 0$ be constant. For every $\ell, k \in \N$, there is a $(k, \epsilon)$-extractor $\mathsf{GUVExt}: \{0, 1\}^{\ell} \times \{0, 1\}^d \to \{0, 1\}^s$ with $s \geq (1 - \alpha) k$ and $d \leq O(\log \ell)$ such that given $x$ and $y$, $\mathsf{GUVExt}(x, y)$ can be computed in $O(\log \ell)$ space.
	\end{theorem}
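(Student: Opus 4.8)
The plan is to take the near-optimal extractor of Guruswami, Umans, and Vadhan \cite{guv09} essentially off the shelf and argue that, with a careful choice of the underlying algebra, every step runs in space $O(\log \ell)$. Recall that the GUV extractor is the composition of two pieces. The first is their \emph{lossless condenser}, based on Parvaresh--Vardy codes: for constant $\alpha$ it uses a seed of length $O(\log \ell)$ to map any source of min-entropy $k$ on $\ell$ bits to a distribution $\epsilon/2$-close to one with min-entropy $\geq k$ supported on only $\ell_1 = (1 + O(\alpha))k + O(\log \ell)$ bits, i.e., a source whose min-entropy rate is close to $1$. The second is an extractor for sources of min-entropy rate close to $1$, which in \cite{guv09} is itself obtained from a constant number of additional applications of the condenser together with elementary block-source extraction; applied to the condensed source it uses another $O(\log \ell_1) = O(\log \ell)$ seed bits and outputs $(1 - \alpha)k$ bits that are $\epsilon/2$-close to uniform. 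Composing the pieces gives the parameters claimed in the theorem, so it suffices to check that each ingredient is computable in space $O(\log \ell)$.

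All of the elementary pieces --- the block-source extraction, leftover-hash steps on short blocks, and the bookkeeping that glues everything together --- operate on $O(\log \ell)$-bit blocks and plainly run in space $O(\log \ell)$, and since we compose only a bounded number of space-$O(\log \ell)$ maps the composition is again space-$O(\log \ell)$. The substantive object is the algebraic condenser, which I would implement over $\F_q$ with $q = 2^a$ a power of two, $a = \Theta(\log \ell)$ large enough for GUV's agreement bound, and $\nu = \ell_0 / a$, where $\ell_0 = \poly(\ell)$ is the source length after a harmless padding. Represent $\F_q = \F_2[Z]/(r(Z))$ with $r$ irreducible of degree $a$, and $\F_{q^\nu} = \F_{2^{\ell_0}} = \F_q[Y]/(Y^\nu - \gamma)$, where the padding is arranged so that $Y^\nu - \gamma$ is irreducible over $\F_q$ for some $\gamma$ --- for which it suffices to take $\gamma$ a generator of $\F_q^\times$. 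Finding $r$ and $\gamma$ are brute-force searches over sets of size $\poly(\ell)$ (low-degree polynomials over $\F_2$, respectively elements of $\F_q$), together with a polynomial-time primitive test (irreducibility, respectively full multiplicative order, which also involves factoring the $\poly(\ell)$-size integer $q - 1$ by trial division); all of this runs in space $O(\log \ell)$. Reading the source as an element $\beta \in \F_{2^{\ell_0}}$, the condenser's output on seed $y \in \F_q$ is the Parvaresh--Vardy symbol $\bigl(y,\, \beta_0(y),\, \beta_1(y),\, \dots,\, \beta_{c-1}(y)\bigr) \in \F_q^{c+1}$, where $c = \Theta(k/\log\ell)$ and $h \leq \poly(\ell)$ are GUV's parameters and $\beta_i(y) \in \F_q$ is the value at $Y = y$ of the degree-$(<\nu)$ polynomial representing $\beta^{h^i} \in \F_{2^{\ell_0}}$. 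The correctness analysis is taken verbatim from \cite{guv09} and only needs $q$ polynomially larger than $c$, $h$, and $\nu$, which holds with $q = \poly(\ell)$.

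The main obstacle is computing $\beta_i(y)$ in space $O(\log \ell)$: the polynomial $\beta_i$ has degree $\Theta(\ell_0/\log\ell)$, a $\Theta(\ell)$-bit object that we cannot afford to materialize. Two observations resolve this. First, evaluation at $Y = y$ is an $\F_q$-linear functional, $\beta_i(y) = \sum_{j=0}^{\nu-1} (\beta^{h^i})_j\, y^j$, where $(\beta^{h^i})_j \in \F_q$ is the $j$-th coordinate of the field element $\beta^{h^i}$ in the basis $1, Y, \dots, Y^{\nu-1}$; so it suffices to generate the coordinates of $\beta^{h^i}$ one at a time and accumulate the sum in $\F_q$. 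Second, $\beta^{h^i} = \beta^{N_i}$ where $N_i := h^i \bmod (q^\nu - 1)$, and both are computable in $\mathsf{DLOGTIME}$-uniform $\mathbf{TC}^0$, hence in $\mathbf{L}$: computing $N_i$ amounts to forming the integer $h^i$ (an iterated product of $c \leq \poly(\ell)$ copies of $h$) and reducing it modulo $q^\nu - 1$, while computing $\beta^{N_i}$ is a single exponentiation in $\F_{2^{\ell_0}}$, and iterated multiplication, division, and exponentiation in $\Z$ and in finite fields of characteristic two are all in $\mathsf{DLOGTIME}$-uniform $\mathbf{TC}^0$ by the results of Hesse, Allender, and Barrington and of Healy and Viola. (A bit of care is needed to port those circuits to the representation $\F_q[Y]/(Y^\nu - \gamma)$ of $\F_{2^{\ell_0}}$, but this is routine, since both multiplication and the Frobenius map $z \mapsto z^2$ act on coordinates by easily-described $\F_q$-linear maps there.) Composing the resulting space-$O(\log\ell)$ subroutine for the coordinates of $\beta^{N_i}$ with the linear-functional accumulation gives $\beta_i(y)$, hence the condenser, hence the whole extractor, in space $O(\log \ell)$. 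I expect the bulk of the write-up to be verifying that the known $\mathbf{TC}^0$ arithmetic algorithms apply in exactly this regime and field representation, together with the routine-but-fiddly choice of padding that makes an irreducible binomial modulus available.
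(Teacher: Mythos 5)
Your architecture is the same as the paper's: implement the Guruswami--Umans--Vadhan condenser over a tower of binary fields chosen so that explicit irreducible \emph{binomials} are available, verify that every algebraic step runs in space $O(\log \ell)$, and compose with a logspace extractor for sources of near-maximal entropy rate. Within that architecture you make two genuinely different implementation choices. First, for the final high-rate extractor the paper uses the Gutfreund--Viola space-efficient expander walk \cite{gv04} (a constant-degree walk suffices since $\alpha$ and $\epsilon$ are constants), whereas you propose block-source extraction with leftover hashing; both are standard and logspace-implementable, so this is a wash. Second, and more substantively, for the Parvaresh--Vardy powering $f \mapsto f^{h^i} \bmod E$ the paper rounds $h$ up to a power of $2$, so that $f^{h^i}=f^{2^t}$ is a pure Frobenius power: by the characteristic-$2$ Frobenius and the relation $x^{3n}\equiv 1 \pmod{E}$ for $E(x)=x^n-g^5$, each coefficient of $f^{2^t}\bmod E$ is just $f_i^{2^t}$ relocated and scaled by a known power of $g$, which is computed coefficient-by-coefficient with an elementary $O(\log n + \log t + \log q)$-space loop (this is the Healy--Viola trick \cite{hv06}). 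You instead treat $h^i$ as a general exponent and invoke the full $\mathbf{TC}^0$ machinery for iterated integer multiplication and finite-field exponentiation (Hesse--Allender--Barrington and \cite{hv06}), ported to the representation $\F_q[Y]/(Y^\nu-\gamma)$. That route should go through, but it is considerably heavier than necessary, and the paper's observation that one may take $h$ a power of $2$ collapses the whole difficulty into a self-contained one-page lemma; you would save yourself most of the ``verifying that the known $\mathbf{TC}^0$ algorithms apply'' work by adopting it.

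One detail in your sketch is stated too glibly: $Y^\nu-\gamma$ with $\gamma$ a generator of $\F_q^\times$ is \emph{not} irreducible for arbitrary $\nu$; by the standard criterion one needs every prime factor of $\nu$ to divide $q-1$ (plus a side condition when $4\mid\nu$), and since $q$ is a power of $2$ this already rules out all even $\nu$. So the ``harmless padding'' must do real work: it has to force $\nu$ into a special arithmetic form compatible with $q$. The paper resolves this by taking $q=16^{5^a}$ and $\nu$ a power of $3$ (note $3\mid q-1$), with the modulus $Y^{3^b}-g^5$; you should make an equally concrete choice rather than leaving it to the brute-force search, since otherwise the claimed irreducible binomial may simply not exist.
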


	In both cases, the original authors \cite{su05, guv09} did not explicitly analyze the space complexity of their extractors, so we explain in \cref{apx:su05,apx:guv09} why these extractors can be implemented in small space. (We remark that Hartman and Raz also constructed small-space extractors \cite{hr03}, but the seed lengths of their extractors are too large for us.)
%
%

	\subsubsection{Extractors as samplers} We will actually only be using extractors for their \emph{sampling} properties. The connection between extractors and samplers was first discovered by Zuckerman \cite{zuc97}. The following standard proposition expresses this connection for non-Boolean functions.

	\begin{proposition}[\cite{zuc97}] \label{prop:zuc97}
		Suppose $\mathsf{Ext}: \{0, 1\}^{\ell} \times \{0, 1\}^d \to \{0, 1\}^s$ is a $(k, \epsilon)$-extractor and $f: \{0, 1\}^s \to V$ is a function. Let $\delta = \epsilon |V| / 2$. Then
		\begin{equation} \label{eqn:zuc97}
			\#\{x \in \{0, 1\}^{\ell} : f(U_s) \not \sim_\delta f(\mathsf{Ext}(x, U_d))\} \leq 2^{k + 1} |V|.
		\end{equation}
	\end{proposition}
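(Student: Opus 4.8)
The plan is to prove the contrapositive: assuming the set
$B = \{x \in \{0,1\}^{\ell} : f(U_s) \not\sim_\delta f(\mathsf{Ext}(x, U_d))\}$
has $\#B > 2^{k+1}|V|$, I would construct a source of min-entropy at least $k$ on which $\mathsf{Ext}$ is not $\epsilon$-close to uniform, contradicting the extractor property. The first step is a pigeonhole reduction that replaces each bad $x$'s statistical test (a priori an arbitrary subset of $V$) by a single coordinate together with a sign. Write $p_x(v) = \Pr[f(\mathsf{Ext}(x, U_d)) = v]$ and $q(v) = \Pr[f(U_s) = v]$ for $v \in V$. Expanding total variation distance via its $\ell_1$ formula, membership of $x$ in $B$ means $\sum_{v \in V} |p_x(v) - q(v)| > 2\delta = \epsilon |V|$; since this is a sum of $|V|$ nonnegative terms, at least one exceeds $\epsilon$. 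Hence to each $x \in B$ we may associate a pair $(v(x), \sigma(x)) \in V \times \{+1, -1\}$ with $\sigma(x)\,(p_x(v(x)) - q(v(x))) > \epsilon$.

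Next I would bucket $B$ according to the value of this pair. There are only $2|V|$ possible pairs, so some bucket $B' \subseteq B$, associated to a fixed $(v, \sigma)$, has $\#B' > 2^{k+1}|V| / (2|V|) = 2^k$. Let $X$ be the uniform distribution on $B'$; being flat on a set of size greater than $2^k$, it has min-entropy at least $k$. Applying the defining property of the $(k,\epsilon)$-extractor to this source, $\mathsf{Ext}(X, U_d) \sim_\epsilon U_s$. In particular, since total variation distance upper-bounds the difference in the probability of any event — here the event $\{w : f(w) = v\}$ — we obtain $\bigl|\Pr[f(\mathsf{Ext}(X, U_d)) = v] - q(v)\bigr| \le \epsilon$.

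Finally I would contradict this by averaging over the bucket. By construction $\Pr[f(\mathsf{Ext}(X, U_d)) = v] = \E_{x \in B'}[p_x(v)]$, and every $x \in B'$ satisfies $\sigma\,(p_x(v) - q(v)) > \epsilon$, so $\sigma\,(\E_{x \in B'}[p_x(v)] - q(v)) > \epsilon$; that is, $\bigl|\Pr[f(\mathsf{Ext}(X, U_d)) = v] - q(v)\bigr| > \epsilon$, contradicting the previous paragraph. Therefore $\#B \le 2^{k+1}|V|$, which is exactly \eqref{eqn:zuc97}.

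I do not expect a genuine obstacle here; this is a standard extractor-as-sampler calculation, and the data-processing step is immediate from the definition of total variation distance. The only point that requires care is the constant bookkeeping: one must split the budget $2\delta = \epsilon|V|$ across the $|V|$ coordinates so that a \emph{single} coordinate carries deviation strictly more than $\epsilon$ (matching the extractor's error parameter), and one must retain the sign of the deviation so that the averaging step in the last paragraph does not suffer cancellation. These two factors of $2$ are precisely what turns the count $2^k|V|$ one might naively expect into the stated $2^{k+1}|V|$.
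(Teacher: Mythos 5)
Your proposal is correct and is essentially the contrapositive phrasing of the paper's own argument: both reduce the total-variation test to a single coordinate $v$ with a sign via pigeonhole on the $\ell_1$ sum, bucket the bad inputs by $(v,\sigma)$, and apply the extractor guarantee to the flat distribution on any bucket of size exceeding $2^k$. No gaps.
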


	For completeness, we include a proof of \cref{prop:zuc97} in \cref{apx:zuc97}, since the specific statement of \cref{prop:zuc97} does not appear in Zuckerman's paper \cite{zuc97}.
	
	\subsection{Constructibility}
	We say that $f: \N \to \N$ is \emph{constructible} in space $S(n)$, time $T(n)$, etc. if there is a deterministic random-access Turing machine $\mathcal{A}$ that runs in the specified resource bounds with $\mathcal{A}(1^n) = f(n)$, written in binary. As usual, we say that $f$ is \emph{space constructible} if $f$ is constructible in space $O(f(n))$. We say that $\delta: \N \to [0, 1]$ is constructible in specified resource bounds if $\delta$ can be written as $\delta(n) = \frac{\delta_1(n)}{\delta_2(n)}$, where $\delta_1, \delta_2: \N \to \N$ are both constructible in the specified resource bounds.
	
	\section{Derandomizing efficient random-access algorithms} \label{sec:bptisp} 
	
	\subsection{Main technical algorithm: Low-randomness simulation of branching programs}
	Suppose $\mathcal{P}$ is an R-OW randomized branching program on $\{0, 1\}^n \times \{0, 1\}^T$ of length $T$ and size $2^S$. (As a reminder, such a program models $\mathbf{BPTISP}(T, S)$.) Given $\mathcal{P}$, $v_0$, and $x$, the distribution $\mathcal{P}(v_0; x, U_T)$ can trivially be sampled in time $T \cdot \poly(S)$ and space $O(S)$ using $T$ random bits. Our main technical result is an efficient typically-correct algorithm for approximately sampling $\mathcal{P}(v_0; x, U_T)$ using roughly $T/n$ random bits.
	
	\begin{samepage}
	\begin{theorem} \label{thm:simulate-branching-program}
		For each constant $c \in \N$, there is a randomized algorithm $\mathsf{A}$ with the following properties. Suppose $\mathcal{P}$ is an R-OW randomized branching program on $\{0, 1\}^n \times \{0, 1\}^T$ with $S \geq \log n$, where $S \stackrel{\normalfont\text{def}}{=} \lceil \log \size(\mathcal{P}) \rceil$. Suppose $v_0 \in V(\mathcal{P})$, $T \geq \length(\mathcal{P})$, and $x \in \{0, 1\}^n$. Then $\mathsf{A}(\mathcal{P}, v_0, x, T)$ outputs a vertex $v \in V(\mathcal{P})$ in time\footnote{The graph of $\mathcal{P}$ should be encoded in adjacency list format, so that the neighborhood of a vertex $v$ can be computed in $\poly(S)$ time.} $T \cdot \poly(S)$ and space $O(S)$ using $\lceil T/n \rceil \cdot \poly(S)$ random bits. Finally, for every such $\mathcal{P}, v_0, T$,
		\begin{equation} \label{eqn:simulate-branching-program-correctness}
			\density\{x \in \{0, 1\}^n : \mathsf{A}(\mathcal{P}, v_0, x, T) \not \sim_{\exp(-cS)} \mathcal{P}(v_0; x, U_T)\} \leq 2^{-S^c}.
		\end{equation}
	\end{theorem}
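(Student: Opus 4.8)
The plan is to realize the ``out of sight, out of mind'' scheme. Fix $c$. Set $\ell := \poly(S)$ and $k := \poly(S)$ (polynomials, depending on $c$, to be pinned down at the end), and take $\epsilon, \epsilon' = 2^{-\Theta(S)}$ exponentially small; then the Shaltiel--Umans extractor $\mathsf{SUExt}\colon \{0,1\}^\ell \times \{0,1\}^d \to \{0,1\}^s$ of \cref{thm:su05} has seed length $d = O(S)$ and output length $s \ge k^{1-\alpha}$, which we arrange to exceed the seed length $O(S\log(\hat T/\epsilon')) = \poly(S)$ of Nisan's generator \cite{nis92} set to $\epsilon'$-fool width-$2^S$, length-$\hat T$ programs; here $\hat T := \min(T, 2^S)$, which is at least $\length(\mathcal{P})$ since the DAG has longest path $< 2^S$. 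The algorithm $\mathsf{A}$ keeps a current vertex $v$ (initially $v_0$), a random-tape pointer, and a phase counter, and runs for $P := \poly(S)\cdot\lceil T/n\rceil$ phases. In phase $i$ it draws a uniform contiguous block $I_i \subseteq [n]$ of length $\ell$ and a uniform extractor seed $\sigma_i \in \{0,1\}^d$, forms the Nisan seed $\mathsf{SUExt}(x\vert_{I_i}, \sigma_i)$, expands it with Nisan's generator, and runs $\mathcal{P}\vert_{[n]\setminus I_i}$ from $v$ on that pseudorandom string, updating $v$ to the halting vertex; if that vertex is terminal in $\mathcal{P}$ it is the output, and otherwise $i(v) \in I_i$ and $\mathsf{A}$ moves on to phase $i+1$. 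If all $P$ phases elapse, $\mathsf{A}$ outputs its current vertex. (If $n = O(\ell)$ the budget $\lceil T/n\rceil\poly(S)$ already exceeds $\hat T$, so there $\mathsf{A}$ just runs $\mathcal{P}(v_0; x, U_{\hat T})$ directly; hence assume $n$ is large.)

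For the correctness bound I would first isolate a per-phase claim. Fix a vertex $v$, a block $I$, and a value of $x\vert_{[n]\setminus I}$. The map $w \mapsto \mathcal{P}\vert_{[n]\setminus I}(v; x, \mathrm{Nisan}(w))$ depends only on $w$ — it never inspects $x\vert_I$ — and takes values in the $\le 2^S$-element set $V(\mathcal{P})$. By \cref{prop:zuc97}, all but $\le 2^{k+S+O(1)}$ strings $z \in \{0,1\}^\ell$ are ``good'' for this map, meaning that plugging in $\mathsf{SUExt}(z, U_d)$ is within $\delta := \epsilon\,2^{S+O(1)}$ of plugging in $U_s$; and plugging in $U_s$ is within $\epsilon'$ of running $\mathcal{P}\vert_{[n]\setminus I}(v; x, U_{\hat T})$, by Nisan's theorem. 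Call $x$ \emph{phase-good} if $x\vert_I$ is good for every vertex $v$ and every block $I$. Since $x\vert_I$ and $x\vert_{[n]\setminus I}$ are independent and uniform, for each fixed $(v,I)$ the probability over $x$ that $x\vert_I$ is bad is the average over $x\vert_{[n]\setminus I}$ of a quantity that is always at most $2^{k+S+O(1)-\ell}$; a union bound over the $\le 2^S$ vertices and $\le n \le 2^S$ blocks gives $\Pr_x[x \text{ not phase-good}] \le 2^{3S + k + O(1) - \ell}$, which drops below $2^{-S^c}$ once $\ell \ge S^c + k + O(S)$. This is precisely where ``out of sight'' does the work: because the sampled map ignores $x\vert_I$, the sampler bound applies for \emph{every} one of the exponentially many contexts $x\vert_{[n]\setminus I}$, so one may average over contexts instead of attempting a hopeless union bound.

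Now fix a phase-good $x$. A hybrid argument over the $P$ phases — swapping the extracted seed for true randomness one phase at a time, invoking the per-phase claim for whatever vertex and block arise — shows $\mathsf{A}(\mathcal{P}, v_0, x, T)$ is within $P(\delta + \epsilon')$ of the ``ideal'' process that runs the same $P$ phases with fresh uniform $U_{\hat T}$ in each. The R-OW structure makes the ideal process faithful: a phase halts exactly when $\mathcal{P}$ is about to read from $x\vert_{I_i}$, i.e., before consuming the random bit at that vertex, so the next phase resumes reading the random tape precisely where the previous one stopped — no half-consumed bit carries over — and by memorylessness of the uniform distribution the per-phase fresh strings may be replaced by a single shared $U_{\hat T}$; thus on the event that $\le P$ phases suffice, the ideal process's output coincides with $\mathcal{P}(v_0; x, U_{\hat T})$ under the natural coupling, and $U_{\hat T}$ may be replaced by $U_T$ since $\mathcal{P}$ reads fewer than $2^S$ random bits and both $\hat T, T \ge \length(\mathcal{P})$. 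To bound the abort probability: since $I_i$ is drawn afresh and independently of phase $i$'s computation, with probability $\ge 1/2$ over $I_i$ the first $\Theta(n/\ell)$ input positions queried in that phase all miss $I_i$, so each phase either terminates or advances $\Omega(n/\ell)$ steps with probability $\ge 1/2$ regardless of the past; as $\mathcal{P}$ runs at most $\hat T$ steps overall, a Chernoff bound makes $P = \poly(S)\lceil T/n\rceil$ phases enough except with probability $\exp(-\Omega(\ell\lceil T/n\rceil)) \le \exp(-cS)$. Choosing $\epsilon, \epsilon'$ so that $P(\delta + \epsilon') \le \exp(-cS)$ then yields \cref{eqn:simulate-branching-program-correctness}.

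It remains to verify the resource bounds. Each phase uses $O(\log n) + d = O(S)$ random bits, so $\mathsf{A}$ uses $P \cdot O(S) = \lceil T/n\rceil\poly(S)$ in total. Over the whole run $\mathsf{A}$ executes at most $\hat T \le T$ steps of $\mathcal{P}$, each costing an adjacency-list lookup and one bit of Nisan's generator — both $\poly(S)$ once $\log\hat T \le S$ — plus $\poly(\ell) = \poly(S)$ per phase of extractor overhead, for $T \cdot \poly(S)$ time. The space bound is the second subtle point (besides the density argument): the Nisan seed is $\poly(S)$ bits, too long to store, so $\mathsf{A}$ stores only the $O(S)$-bit extractor seed $\sigma_i$ and recomputes any needed bit of the Nisan seed on the fly as a bit of $\mathsf{SUExt}(x\vert_{I_i}, \sigma_i)$ in space $O(d) = O(S)$, reading $x\vert_{I_i}$ through the random-access input tape; since producing one Nisan-generator bit needs only $O(\log\hat T) = O(S)$ workspace given random access to its seed, and that access is served by these $O(S)$-space recomputations, the whole simulation fits in $O(S)$ space. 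The main obstacle I anticipate is making the ``ideal process'' step fully rigorous — pinning down the equivalence between per-phase-fresh and shared randomness, and the abort analysis, while carefully tracking the R-OW random-tape pointer — together with threading the mutual parameter constraints among $S$, $\ell$, $k$, $\epsilon$, $\epsilon'$, $d$, and Nisan's seed length; the sampler step and the averaging over contexts, by comparison, should be short.
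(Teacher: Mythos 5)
Your proposal follows essentially the same route as the paper's proof: the same phase structure (pick a random block, apply the Shaltiel--Umans extractor to it, expand with Nisan's generator, and run the program restricted away from that block), the same sampler-plus-averaging-over-contexts argument to bound the density of bad inputs, the same hybrid chain from extracted seeds to uniform seeds to true per-phase randomness, and the same independence-of-the-block abort analysis via a Hoeffding bound. The only differences are presentational --- the paper packages the per-phase hybrid through a Markov-chain matrix lemma (\cref{lem:markov}) and splits your final coupling step into two explicit hybrids ($\mathsf{H}_2$ and $\mathsf{H}_3$) --- so there is nothing substantive to add.
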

	\end{samepage}

	The algorithm of \cref{thm:simulate-branching-program} relies on Nisan's pseudorandom generator \cite{nis92}. The seed length of Nisan's generator is not $O(S)$, but Nisan's generator does \emph{run} in space $O(S)$, given two-way access to the seed.
	\begin{theorem}[\cite{nis92}] \label{thm:nis92}
		For every $S, T \in \N, \epsilon > 0$ with $T \leq 2^S$, there is a generator $\mathsf{NisGen}: \{0, 1\}^s \to \{0, 1\}^T$ with seed length $s \leq O((S + \log(1/\epsilon)) \cdot \log T)$, such that if $\mathcal{P}$ is an R-OW randomized branching program of size $2^S$, $v$ is a vertex, and $x$ is an input, then
		\begin{equation}
			\mathcal{P}(v; x, \mathsf{NisGen}(U_s)) \sim_{\epsilon} \mathcal{P}(v; x, U_T).
		\end{equation}
		Given $S, T, \epsilon, z, i$, the bit $\mathsf{NisGen}(z)_i$ can be computed in time $\poly(S, \log(1/\epsilon))$ and space $O(S + \log(1/\epsilon))$.
	\end{theorem}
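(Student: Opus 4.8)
I plan to prove \cref{thm:nis92} via Nisan's recursive hash-composition construction, and then check the stated time and space bounds for producing a single output bit. First I would reduce to a clean combinatorial setting. Fix $\mathcal{P}$, $v_0$, $x$ as in the statement; once $x$ is fixed, the input-bit reads in $\mathcal{P}$ are determined, so after padding every computation path to length exactly $T$ and grouping the $T$ random bits into consecutive blocks of $m$ bits each (choosing $m$ so that $m \mid T$ and $T/m = 2^k$ is a power of two; these normalizations cost only constant factors), the computation $\mathcal{P}(v_0; x, \cdot)$ becomes an ordinary layered branching program of width $w \le 2^S$ reading its $2^k$ blocks left to right — precisely the object Nisan's generator fools. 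Set $m := c_0 \cdot (S + \log(1/\epsilon))$ for a large absolute constant $c_0$; since $T \le 2^S$ we have $k \le \log T \le S$. Working in $\F_{2^m}$, let $\mathcal{H} = \{\, y \mapsto ay + b : a, b \in \F_{2^m}\,\}$ be the standard pairwise-independent family, each member describable by $2m$ bits. The seed is $z = (y_0, h_1, \dots, h_k)$ of length $s = m + 2mk = O(m \log T) = O((S + \log(1/\epsilon)) \log T)$, and $\mathsf{NisGen}(z) := G_k(y_0; h_1, \dots, h_k)$, where $G_0(y_0) := y_0$ and $G_j(y_0; h_1, \dots, h_j) := G_{j-1}(y_0; h_1,\dots,h_{j-1}) \circ G_{j-1}(h_j(y_0); h_1,\dots,h_{j-1})$; concretely, the block of $\mathsf{NisGen}(z)$ at a given position is obtained by applying to $y_0$ the sub-sequence of $h_1, \dots, h_k$ dictated by the recursion-tree path to that leaf.

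\textbf{The mixing lemma and the hybrid argument.} The analytic core is that, as seen by a width-$w$ program, a pairwise-independent hash behaves almost like a fresh independent block. I would prove: for any transition functions $f, g : [w] \times \{0,1\}^m \to [w]$ and any $\delta > 0$,
\[
\Pr_{h \in \mathcal{H}}\bigl[\; \exists u \in [w] : \; g(f(u, U_m), h(U_m)) \not\sim_{w\delta} g(f(u, U_m), U_m')\;\bigr] \;\le\; w^2 \cdot 2^{-m} / \delta^2,
\]
where $U_m'$ is an independent uniform block. The proof fixes $u, v \in [w]$ and considers the event that the composed transition sends $u$ to $v$; averaged over $y_0 \sim U_m$ this quantity has, over a random $h$, mean $\Pr_{y,y'}[g(f(u,y), y') = v]$ (since $h(y_0)$ is uniform for each fixed $y_0$) and variance $O(2^{-m})$ (since $(h(y_1), h(y_2))$ is uniform for $y_1 \ne y_2$, the diagonal $y_1 = y_2$ contributing $\le 2^{-m}$), and Chebyshev plus a union bound over the $w^2$ pairs $(u,v)$ finish it. Next I would show by induction on $j = 0, \dots, k$ that $G_j$ fools every width-$w$ layered program reading $2^j$ blocks with total-variation error — \emph{averaged over} $h_1, \dots, h_j$ and worst-case over the start vertex — at most $\epsilon_j$, where $\epsilon_0 = 0$ and $\epsilon_j \le 2\epsilon_{j-1} + w\delta + w^2 2^{-m}/\delta^2$. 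The step splits the program into a left half followed by a right half and passes through the hybrids $U \circ U'$, then $U \circ G_{j-1}(U_m''; \vec h)$, then $G_{j-1}(U_m; \vec h) \circ G_{j-1}(U_m''; \vec h)$ (same $\vec h = (h_1, \dots, h_{j-1})$, independent $U_m, U_m''$), then $G_{j-1}(U_m; \vec h) \circ G_{j-1}(h_j(U_m); \vec h) = G_j(U_m; \vec h, h_j)$: the first two transitions cost $\epsilon_{j-1}$ each by the induction hypothesis — it is essential here that the hypothesis averages over the shared hashes, so conditioning on $\vec h$ in the sibling subtree is harmless — and the last costs $w\delta + w^2 2^{-m}/\delta^2$ by the mixing lemma applied to the transitions induced by $G_{j-1}(\cdot;\vec h)$ through the two halves.

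\textbf{Error accounting and efficiency; the main obstacle.} Unrolling the recurrence gives $\epsilon_k \le 2^k(w\delta + w^2 2^{-m}/\delta^2)$. Taking $\delta := \epsilon 2^{-k}/(2w)$ makes the first term $\le \epsilon/2$, and since $2^k = T/m \le 2^S$ and $w \le 2^S$, the second term is $\le \epsilon/2$ provided $m \ge \Omega(S + \log(1/\epsilon))$, which holds for $c_0$ large enough. Applying this with $j = k$ to the layered program from the first step, and using that the total-variation distance between two mixtures over the seed is at most the average distance, yields $\mathcal{P}(v_0; x, \mathsf{NisGen}(U_s)) \sim_{\epsilon} \mathcal{P}(v_0; x, U_T)$. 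For the resource bounds: choosing $m$ of the form $2 \cdot 3^t$ (changing $m$ by at most a constant factor), the polynomial $z^{2 \cdot 3^t} + z^{3^t} + 1$ is irreducible over $\F_2$, giving an explicit representation of $\F_{2^m}$ with no search, and arithmetic in $\F_{2^m}$ takes $\poly(m)$ time and $O(m)$ space. To output bit $i$ of $\mathsf{NisGen}(z)$, locate the containing block, apply the relevant at-most-$k$ affine maps to $y_0$ one at a time while maintaining a single field element, and read off the appropriate bit — total $\poly(m, k) = \poly(S, \log(1/\epsilon))$ time and $O(m) = O(S + \log(1/\epsilon))$ space, as required. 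I expect the main obstacle to be carrying out the hybrid argument cleanly — in particular phrasing the induction hypothesis as an expectation over the hash functions so that the sharing of $h_1, \dots, h_{j-1}$ across the two subtrees does not create a circular dependency; the calibration of $m$ and $\delta$ is then routine bookkeeping.
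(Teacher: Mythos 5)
The paper does not prove \cref{thm:nis92} at all -- it imports the statement from \cite{nis92} (with the per-bit efficiency remark left as standard), so there is no internal proof to compare against. Your proposal is, in substance, a correct reconstruction of Nisan's original argument: the reduction (fix $x$, collapse the one-way random-bit reads into a width-$2^S$ layered program over $2^k$ blocks of $m$ bits), the pairwise-independence mixing lemma via Chebyshev plus a union bound over the $w^2$ vertex pairs, and the recursive halving with error recurrence $\epsilon_j \le 2\epsilon_{j-1} + w\delta + w^2 2^{-m}/\delta^2$ are exactly Nisan's construction and analysis, and your parameter calibration $m = \Theta(S + \log(1/\epsilon))$, $\delta = \epsilon 2^{-k}/(2w)$ does give total error $\epsilon$ and seed length $O((S+\log(1/\epsilon))\log T)$. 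The one place where you deviate from Nisan's packaging -- stating the induction hypothesis as an error \emph{averaged} over the shared hash functions rather than as Nisan's ``all but a small fraction of hash tuples are good for every start vertex'' statement -- does go through as you claim: conditioned on $\vec h$, the two halves use independent block-seeds, so data processing bounds each hybrid step by the conditional TV distance of one half, and taking expectations over $\vec h$ is precisely what the averaged hypothesis controls; this is a legitimate (arguably cleaner) alternative to the high-probability formulation, at the cost of not yielding the ``check a good hash tuple'' feature that the paper later needs separately in \cref{lem:nis94}. Your efficiency argument is also sound and mirrors what the paper itself does elsewhere: the explicit irreducible polynomials $z^{2\cdot 3^t}+z^{3^t}+1$ are exactly \cref{lem:vl99}, and applying at most $k \le S$ affine maps over $\F_{2^m}$ one at a time gives $\poly(S, \log(1/\epsilon))$ time and $O(S + \log(1/\epsilon))$ space per output bit, as the theorem asserts.
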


	\begin{figure}
		\begin{framed}
			\begin{enumerate}
				\item If $S^{c + 1} > \lfloor n/9 \rfloor$, directly simulate $P(v_0; x, U_T)$ using $T$ random bits. Otherwise:
				\item Let $I_1, I_2, \dots, I_B \subseteq [n]$ be disjoint sets of size $S^{c + 1}$ with $B$ as large as possible.
				\item Initialize $v = v_0$. Repeat $r$ times, where $r$ is given by \cref{eqn:r-def}:
				\begin{enumerate}
					\item Pick $b \in [B]$ uniformly at random and let $I = I_b$.
					\item Pick $y \in \{0, 1\}^{O(S)}$ uniformly at random.
					\item Let $v = \mathcal{P}\vert_{[n] \setminus I}(v; x, \mathsf{NisGen}(\mathsf{SUExt}(x\vert_I, y)))$.
				\end{enumerate}
				\item Output $v$.
			\end{enumerate}
			\vspace{-5mm}
		\end{framed}
		\vspace{-5mm}
		\caption{The algorithm $\mathsf{A}$ of \cref{thm:simulate-branching-program}.} \label{fig:simulate-branching-program}
	\end{figure}
	For \cref{thm:simulate-branching-program}, we can replace $T$ with $\min\{T, 2^S\}$ without loss of generality, so we will assume that $T \leq 2^S$. The algorithm $\mathsf{A}$ is given in \cref{fig:simulate-branching-program}.
	
	\paragraph{Parameters} \label{par:parameters}
	Set
	\begin{equation} \label{eqn:r-def}
		r \stackrel{\text{def}}{=} \max\left\{\left\lceil \frac{8T}{B - 8} \right\rceil, 8(cS + 1)\right\} = \lceil T/n \rceil \cdot \poly(S).
	\end{equation}
	The parameter $r$ is the number of ``phases'' of $\mathsf{A}$ as outlined in \cref{sec:out-of-sight}. Note that if $S^{c + 1} \leq \lfloor n/9 \rfloor$, then $B \geq 9$, so \cref{eqn:r-def} makes sense. Naturally, Nisan's generator $\mathsf{NisGen}$ is instantiated with the parameters $S, T$ from the statement of \cref{thm:simulate-branching-program}. The error of $\mathsf{NisGen}$ is set at
	\begin{equation} \label{eqn:eps-def}
		\epsilon \stackrel{\text{def}}{=} \frac{e^{-cS}}{4r} = 2^{-\Theta(S)}.
	\end{equation}
	That way, the seed length of $\mathsf{NisGen}$ is $s \leq O(S \log T) \leq O(S^2)$. The algorithm $\mathsf{A}$ also relies on the Shaltiel-Umans extractor $\mathsf{SUExt}$ of \cref{thm:su05}. This extractor is instantiated with source length $\ell \stackrel{\text{def}}{=} S^{c + 1}$, $\alpha \stackrel{\text{def}}{=} 2/3$, error
	\begin{equation} \label{eqn:eps'-def}
		\epsilon' \stackrel{\text{def}}{=} \frac{e^{-cS}}{2r \cdot 2^S} = 2^{-\Theta(S)},
	\end{equation}
	and entropy
	\begin{equation}
		k \stackrel{\text{def}}{=} \max\{s^3, \log^6 \ell, \log^6(1/\epsilon)\} = \Theta(S^6).
	\end{equation}
	Our choice of $k$ explicitly meets the hypotheses of \cref{thm:su05}, and by construction, $k^{1 - \alpha} \geq s$, so we can think of $\mathsf{SUExt}$ as outputting $s$ bits.
	
	\paragraph{Efficiency}
	We now analyze the computational efficiency of $\mathsf{A}$. First, we bound the running time. If $S^{c + 1} > \lfloor n/9 \rfloor$, then $\mathsf{A}$ clearly runs in time $T \cdot \poly(S)$. Otherwise, $\mathsf{A}$ repeatedly replaces $v$ with one of its neighbors a total of at most $T$ times, since $T \geq \length(\mathcal{P})$. Each such step requires computing a bit of Nisan's generator, which takes time $\poly(S)$, times $\poly(S)$ steps to compute each bit of the seed of Nisan's generator by running $\mathsf{SUExt}$. Thus, overall, $\mathsf{A}$ runs in time $T \cdot \poly(S)$.
	
	Next, we bound the space complexity of $\mathsf{A}$. If $S^{c + 1} > \lfloor n/9 \rfloor$, then $\mathsf{A}$ clearly runs in space $O(S + \log T) = O(S)$. Otherwise, space is required to store a loop index ($O(\log r)$ bits), the vertex $v$ ($O(S)$ bits), the index $b$ ($O(\log n)$ bits), and the seed $y$ ($O(S)$ bits). These terms are all bounded by $O(S)$. Running $\mathsf{SUExt}$ takes $O(\log \ell + \frac{\log \ell \log(1/\epsilon')}{\log k})$ bits of space. Since $k \geq S^{\Omega(1)}$, $\frac{\log \ell}{\log k} \leq O(1)$, and hence the space used for $\mathsf{SUExt}$ is only $O(\log S + \log(1/\epsilon')) = O(S)$. Finally, running $\mathsf{NisGen}$ takes $O(S + \log(1/\epsilon)) = O(S)$ bits of space. Therefore, overall, $\mathsf{A}$ runs in space $O(S)$.
	
	Finally, we bound the number of random bits used by $\mathsf{A}$. If $S^{c + 1} > \lfloor n/9 \rfloor$, then $\mathsf{A}$ uses $T$ random bits, which is at most $\frac{9T(1 + S^{c + 1})}{n}$ in this case. Otherwise, in each iteration of the loop, $\mathsf{A}$ uses $O(\log n)$ random bits for $b$, plus $O(S)$ random bits for $y$. Therefore, overall, the number of random bits used by $\mathsf{A}$ is $O(rS)$, which is $\lceil T/n \rceil \cdot \poly(S)$.
	
	\paragraph{Correctness}
	We now turn to proving \cref{eqn:simulate-branching-program-correctness}. If $S^{c + 1} > \lfloor n/9 \rfloor$, then obviously $\mathsf{A}(\mathcal{P}, v_0, x, T) \sim \mathcal{P}(v_0; x, U_T)$. Assume, therefore, that $S^{c + 1} \leq \lfloor n/9 \rfloor$. The proof will be by a hybrid argument with three hybrid distributions. The first hybrid distribution is defined by the algorithm $\mathsf{H}_1$ given by \cref{fig:hybrid-1}.
	\begin{figure}
		\begin{framed}
			\begin{enumerate}
				\item Initialize $v = v_0$. Repeat $r$ times:
				\begin{enumerate}
					\item Pick $b \in [B]$ uniformly at random and let $I = I_b$.
					\item Pick $y' \in \{0, 1\}^s$ uniformly at random.
					\item Let $v = \mathcal{P}\vert_{[n] \setminus I}(v; x, \mathsf{NisGen}(y'))$.
				\end{enumerate}
				\item Output $v$.
			\end{enumerate}
			\vspace{-5mm}
		\end{framed}
		\vspace{-5mm}
		\caption{The algorithm $\mathsf{H}_1$ defining the first hybrid distribution used to prove \cref{eqn:simulate-branching-program-correctness}. The only difference between $\mathsf{A}$ and $\mathsf{H}_1$ is that $\mathsf{H}_1$ picks a uniform random seed for $\mathsf{NisGen}$, instead of extracting the seed from the input.} \label{fig:hybrid-1}
	\end{figure}
	
	We need a standard fact about Markov chains. Suppose $M$ and $M'$ are stochastic matrices (i.e., each row is a probability vector) of the same size. We write $M \sim_{\gamma} M'$ to mean that for each row index $i$, the probability distributions $M_i$ and $M'_i$ are $\gamma$-close in total variation distance.
	\begin{lemma} \label{lem:markov}
		If $M \sim_{\gamma} M'$, then $M^r \sim_{\gamma r} (M')^r$.
	\end{lemma}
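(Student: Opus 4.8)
The plan is to prove the lemma by induction on $r$ via a one-step hybrid argument. The base case $r = 1$ is exactly the hypothesis $M \sim_\gamma M'$. For the inductive step, fix a row index $i$ and introduce the hybrid row vector $h = (M^{r-1})_i \cdot M'$, i.e.\ the distribution obtained by taking $r-1$ steps according to $M$ and then one step according to $M'$. Since $(M^r)_i = (M^{r-1})_i \cdot M$ and $((M')^r)_i = ((M')^{r-1})_i \cdot M'$, the triangle inequality for total variation distance reduces the task to bounding $\bigl\| (M^{r-1})_i \cdot M - h \bigr\|$ and $\bigl\| h - ((M')^{r-1})_i \cdot M' \bigr\|$ separately.

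For the first distance, write $(M^{r-1})_i \cdot M - h = \sum_k (M^{r-1})_{ik}\,(M_k - M'_k)$, which is a convex combination (with weights given by the probability vector $(M^{r-1})_i$) of the vectors $M_k - M'_k$; since each $\| M_k - M'_k \| \le \gamma$ by hypothesis and total variation distance is convex, this distance is at most $\gamma$. For the second distance, observe that $h - ((M')^{r-1})_i \cdot M' = \bigl( (M^{r-1})_i - ((M')^{r-1})_i \bigr) \cdot M'$, so $h$ and $((M')^{r-1})_i \cdot M'$ are the images of $(M^{r-1})_i$ and $((M')^{r-1})_i$ under the same stochastic map. Stochastic matrices are contractions in total variation distance (the data-processing inequality), so this distance is at most $\bigl\| (M^{r-1})_i - ((M')^{r-1})_i \bigr\| \le \gamma(r-1)$ by the inductive hypothesis. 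Adding the two bounds gives $\gamma + \gamma(r-1) = \gamma r$, completing the induction.

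The only point that needs a little care is the contraction property of stochastic matrices, which I would either cite as standard or dispatch in two lines: for distributions $p, q$, write $p - q = a - b$ with $a, b$ nonnegative and of disjoint support, so that $\| p \cdot M' - q \cdot M' \| = \tfrac12 \| (a-b) M' \|_1 \le \tfrac12(\|aM'\|_1 + \|bM'\|_1) = \tfrac12(\|a\|_1 + \|b\|_1) = \| p - q \|$, using that a row-stochastic matrix preserves the $\ell_1$ norm of nonnegative vectors. An essentially equivalent alternative is to construct an explicit coupling of the two $r$-step chains that disagrees at each step with probability at most $\gamma$ and union-bound over the $r$ steps; I would present whichever is shorter, but the hybrid computation above is the cleanest and makes the obstacle — really just the contraction fact — transparent.
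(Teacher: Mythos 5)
Your proof is correct; the paper does not prove \cref{lem:markov} itself but defers to Saks and Zhou \cite[Proposition 2.3]{sz99}, whose argument is the same standard telescoping/hybrid decomposition you give (one factor changed per step, bounded by convexity on one side and the $\ell_1$-contraction of stochastic matrices on the other). Nothing further is needed.
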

	For a proof of \cref{lem:markov}, see, e.g., work by Saks and Zhou \cite[Proposition 2.3]{sz99}. We are now ready to prove that for most $x$, the behavior of $\mathsf{A}$ is statistically similar to the behavior of $\mathsf{H}_1$.
	
	\begin{claim}[$\mathsf{A} \approx \mathsf{H}_1$] \label{clm:a-hybrid-1}
		Let $\delta = \epsilon' \cdot r \cdot 2^{S - 1} = 2^{-\Theta(S)}$. Then
		\begin{equation}
		\density\{x \in \{0, 1\}^n : \mathsf{A}(\mathcal{P}, v_0, x, T) \not \sim_{\delta} \mathsf{H}_1(\mathcal{P}, v_0, x, T)\} \leq 2^{-S^c}.
		\end{equation}
	\end{claim}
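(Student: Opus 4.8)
My plan is to analyze a single iteration of the main loop of $\mathsf{A}$ (resp.\ $\mathsf{H}_1$) as one step of a Markov chain on $V(\mathcal{P})$, and then pass to $r$ steps via \cref{lem:markov}. Fix $x$ and work under the standing assumption $S^{c+1} \le \lfloor n/9 \rfloor$. Let $M_x$ be the $V(\mathcal{P}) \times V(\mathcal{P})$ stochastic matrix whose row indexed by $v$ is the distribution of $\mathcal{P}\vert_{[n] \setminus I_b}(v; x, \mathsf{NisGen}(\mathsf{SUExt}(x\vert_{I_b}, y)))$ for uniform, independent $b \in [B]$ and $y$; let $M'_x$ be the analogous matrix in which a uniform $y' \in \{0,1\}^s$ is fed directly into $\mathsf{NisGen}$. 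Since $\mathsf{A}$ and $\mathsf{H}_1$ both start at $v_0$ and iterate the corresponding step $r$ times with fresh randomness, $\mathsf{A}(\mathcal{P}, v_0, x, T)$ is precisely the $v_0$-row of $M_x^r$ and $\mathsf{H}_1(\mathcal{P}, v_0, x, T)$ is the $v_0$-row of $(M'_x)^r$. By \cref{lem:markov}, it therefore suffices to show that all but a $2^{-S^c}$ fraction of inputs $x$ satisfy $M_x \sim_{\epsilon' 2^{S-1}} M'_x$; the factor of $r$ supplied by \cref{lem:markov} then produces exactly the claimed bound $\delta = \epsilon' \cdot r \cdot 2^{S-1}$.

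For the per-step matrix estimate I would invoke \cref{prop:zuc97}. Fix a vertex $v$ and a block index $b$, and let $f \colon \{0,1\}^s \to V(\mathcal{P})$ be given by $f(z) = \mathcal{P}\vert_{[n]\setminus I_b}(v; x, \mathsf{NisGen}(z))$. The key ``out of sight, out of mind'' observation is that, because of the restriction to $[n] \setminus I_b$, the function $f$ does not depend on the coordinates $x\vert_{I_b}$. Applying \cref{prop:zuc97} to this $f$ with $\mathsf{Ext} = \mathsf{SUExt}$ (a $(k,\epsilon')$-extractor, regarded as outputting $s$ bits) and with $|V| = \size(\mathcal{P}) \le 2^S$, so that $\delta_{\mathrm{samp}} = \epsilon' |V|/2 \le \epsilon' 2^{S-1}$, shows that there are at most $2^{k+1}|V| \le 2^{k+1+S}$ sources $w \in \{0,1\}^{\ell}$ with $f(U_s) \not\sim_{\delta_{\mathrm{samp}}} f(\mathsf{SUExt}(w, U_d))$. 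Since $f$ depends on $x$ only through $x\vert_{[n]\setminus I_b}$, for each fixing of those coordinates this ``bad'' set of sources is determined; and when $x$ is uniform, the substring $x\vert_{I_b}$ is uniform on $\{0,1\}^{\ell}$ and independent of $x\vert_{[n]\setminus I_b}$. Hence $\Pr_x[\,x\vert_{I_b} \text{ is bad for } (v,b)\,] \le 2^{k+1+S-\ell}$.

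Call $x$ \emph{good} if $x\vert_{I_b}$ avoids the bad set for every vertex $v$ and every $b \in [B]$. A union bound over the at most $2^S$ vertices and at most $n$ blocks gives that the density of inputs $x$ that are not good is at most $n \cdot 2^S \cdot 2^{k+1+S-\ell} = 2^{\log n + 2S + k + 1 - \ell}$. Recalling $\ell = S^{c+1}$, $k = \Theta(S^6)$, and $S \ge \log n$, this is at most $2^{-S^c}$; here it is harmless to assume $c$ is a sufficiently large constant, since proving the claim for a larger $c$ only strengthens it. For a good $x$, fixing any vertex $v$ and averaging the per-$(v,b)$ estimate over the uniform choice of $b$ (convexity of total variation distance) shows that the $v$-row of $M_x$ is within $\epsilon' 2^{S-1}$ of the $v$-row of $M'_x$; thus $M_x \sim_{\epsilon' 2^{S-1}} M'_x$, and the reduction in the first paragraph completes the proof.

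I expect the bulk of the work to be parameter bookkeeping rather than new ideas: one must verify that instantiating \cref{thm:su05} with source length $\ell = S^{c+1}$, error $\epsilon'$, and entropy $k$ meets that theorem's hypotheses and produces an output of at least $s$ bits, and --- the one place a genuine inequality enters --- that $\ell = S^{c+1}$ exceeds $k + 2S + \log n + O(1) + S^c$, which is exactly what lets the union bound over vertices and blocks survive. The conceptual content, namely that restricting $\mathcal{P}$ to $[n] \setminus I_b$ makes $x\vert_{I_b}$ an ``unseen'' source so that the extractor's sampling property (\cref{prop:zuc97}) applies, is already encoded in the definition of $\mathcal{P}\vert_I$, so nothing further is needed beyond assembling these pieces.
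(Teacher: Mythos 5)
Your proposal is correct and follows essentially the same route as the paper's proof: the same per-step stochastic matrices $M_x, M'_x$, the same application of \cref{prop:zuc97} to $f(z) = \mathcal{P}\vert_{[n]\setminus I_b}(v; x, \mathsf{NisGen}(z))$ exploiting that $f$ is independent of $x\vert_{I_b}$, the same union bound over vertices and blocks, and the same use of \cref{lem:markov} to pass to $r$ steps. The only cosmetic difference is that you phrase the counting as a probability over uniform $x$ using independence of $x\vert_{I_b}$ from $x\vert_{[n]\setminus I_b}$, whereas the paper sums over fixings $x'$ of the complementary coordinates; these are equivalent, and your handling of the constraint on $c$ matches the paper's ``assuming $c \ge 6$'' remark.
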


	\begin{proof}
		Fix any $b \in [B]$ and $v \in V(\mathcal{P})$. Let $I = I_b$, and fix any $x' \in \{0, 1\}^n$ with $x' \vert_I = 0^{|I|}$. Define $f: \{0, 1\}^s \to V$ by
		\begin{equation}
			f(y') = \mathcal{P}\vert_{[n] \setminus I}(v; x', \mathsf{NisGen}(y')).
		\end{equation}
		By \cref{prop:zuc97},
		\begin{equation}
			\#\{x\vert_I \in \{0, 1\}^\ell : f(\mathsf{SUExt}(x \vert_I, U_d)) \not \sim_{\epsilon' 2^{S - 1}} f(U_s)\} \leq 2^{k + S + 1}.
		\end{equation}
		Therefore,
		\begin{equation}
			\#\{x \in \{0, 1\}^n : x \vert_{[n] \setminus I} = x' \vert_{[n] \setminus I} \text{ and } f(\mathsf{SUExt}(x \vert_I, U_d)) \not \sim_{\epsilon' 2^{S - 1}} f(U_s)\} \leq 2^{k + S + 1}.
		\end{equation}
		Now, let $M[x]$ be the $\size(\mathcal{P}) \times \size(\mathcal{P})$ stochastic matrix defined by
		\begin{equation}
			M[x]_{uv} = \Pr_{b, y}[\mathcal{P} \vert_{[n] \setminus I}(u; x, \mathsf{NisGen}(\mathsf{SUExt}(x\vert_I, y))) = v \text{ where } I = I_b].
		\end{equation}
		Let $M'[x]$ be the stochastic matrix defined by
		\begin{equation}
			M'[x]_{uv} = \Pr_{b, y'}[\mathcal{P}\vert_{[n] \setminus I}(u; x, \mathsf{NisGen}(y')) = v \text{ where } I = I_b].
		\end{equation}
		By summing over all $b, v, x'$, we find that
		\begin{align}
			\#\{x \in \{0, 1\}^n : M[x] \not \sim_{\epsilon' 2^{S - 1}} M'[x]\} &\leq B \cdot 2^S \cdot 2^{n - \ell} \cdot 2^{k + S + 1} \\
			&\leq 2^{n - S^{c + 1} + O(S^6)} \\
			&\leq 2^{n - S^c},
		\end{align}
		assuming $c \geq 6$ and $n$ is sufficiently large. If $M[x] \sim_{\epsilon' 2^{S - 1}} M'[x]$, then by \cref{lem:markov}, $M[x]^r \sim_{\delta} M'[x]^r$. The output of $\mathsf{A}$ is a sample from $(M[x]^r)_{v_0}$ and the output of $\mathsf{H}_1$ is a sample from $(M'[x]^r)_{v_0}$, completing the proof.
	\end{proof}

	\begin{figure}
		\begin{framed}
			\begin{enumerate}
				\item Initialize $v = v_0$. Repeat $r$ times:
				\begin{enumerate}
					\item Pick $b \in [B]$ uniformly at random and let $I = I_b$.
					\item Pick $y'' \in \{0, 1\}^T$ uniformly at random.
					\item Let $v = \mathcal{P}\vert_{[n] \setminus I}(v; x, y'')$.
				\end{enumerate}
				\item Output $v$.
			\end{enumerate}
			\vspace{-5mm}
		\end{framed}
		\vspace{-5mm}
		\caption{The algorithm $\mathsf{H}_2$ defining the second hybrid distribution used to prove \cref{eqn:simulate-branching-program-correctness}. The only difference between $\mathsf{H}_1$ and $\mathsf{H}_2$ is that $\mathsf{H}_2$ feeds true randomness to $\mathcal{P}\vert_{[n] \setminus I}$, instead of feeding it a pseudorandom string from Nisan's generator.} \label{fig:hybrid-2}
	\end{figure}
	
	The second hybrid distribution is defined by the algorithm $\mathsf{H}_2$ given by \cref{fig:hybrid-2}.
	\begin{claim}[$\mathsf{H}_1 \approx \mathsf{H}_2$] \label{clm:hybrid-1-hybrid-2}
		For every $x$,
		\begin{equation}
			\mathsf{H}_1(\mathcal{P}, v_0, x, T) \sim_{\epsilon r} \mathsf{H}_2(\mathcal{P}, v_0, x, T).
		\end{equation}
	\end{claim}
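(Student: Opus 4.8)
The plan is to recycle the Markov-chain machinery already set up in the proof of \cref{clm:a-hybrid-1}. Fix $x \in \{0,1\}^n$. Recall from that proof the stochastic matrix $M'[x]$ with $M'[x]_{uv} = \Pr_{b, y'}[\mathcal{P}\vert_{[n]\setminus I}(u; x, \mathsf{NisGen}(y')) = v \text{ where } I = I_b]$; this is exactly the single-phase transition matrix of $\mathsf{H}_1$. Define the companion matrix $M''[x]_{uv} = \Pr_{b, y''}[\mathcal{P}\vert_{[n]\setminus I}(u; x, y'') = v \text{ where } I = I_b]$, which is the single-phase transition matrix of $\mathsf{H}_2$. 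Then the output of $\mathsf{H}_1(\mathcal{P}, v_0, x, T)$ is distributed as $(M'[x]^r)_{v_0}$ and the output of $\mathsf{H}_2(\mathcal{P}, v_0, x, T)$ is distributed as $(M''[x]^r)_{v_0}$, so by \cref{lem:markov} it suffices to show $M'[x] \sim_{\epsilon} M''[x]$.

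To prove $M'[x] \sim_{\epsilon} M''[x]$, fix a row index $u \in V(\mathcal{P})$ and fix $b \in [B]$, so that $I = I_b$ is determined. Consider the program $\mathcal{P}\vert_{[n]\setminus I}$ with the input coordinates hardwired to $x$ and with start vertex $u$: as a function of the random string alone, this is computed by an R-OW randomized branching program of size at most $\size(\mathcal{P}) \leq 2^S$ and length at most $\length(\mathcal{P}) \leq T$ (restriction and input-hardwiring can only delete edges and cannot violate the R-OW condition on the random bits). Since we have arranged $T \leq 2^S$, \cref{thm:nis92}, instantiated with parameters $S$, $T$, and $\epsilon$, gives $\mathcal{P}\vert_{[n]\setminus I}(u; x, \mathsf{NisGen}(U_s)) \sim_{\epsilon} \mathcal{P}\vert_{[n]\setminus I}(u; x, U_T)$. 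This holds for every $b \in [B]$, and a uniform average over $b$ of distributions that are each $\epsilon$-close in total variation distance remains $\epsilon$-close; the two averaged distributions are precisely the $u$-th rows of $M'[x]$ and $M''[x]$. As $u$ was arbitrary, $M'[x] \sim_{\epsilon} M''[x]$. Applying \cref{lem:markov} yields $M'[x]^r \sim_{\epsilon r} M''[x]^r$, and reading off the $v_0$-th rows completes the proof.

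I expect no real obstacle here: unlike \cref{clm:a-hybrid-1}, this claim holds for \emph{every} $x$ (no exceptional set), because we are comparing $\mathsf{NisGen}$'s output to true randomness rather than an extractor output to true randomness, and Nisan's generator fools the relevant branching programs unconditionally. The only point that requires a moment's care is the structural observation that hardwiring the input $x$ and performing the restriction to $[n]\setminus I$ produces a legitimate size-$2^S$, length-$T$ R-OW branching program over the random bits, so that \cref{thm:nis92} applies verbatim; this is routine.
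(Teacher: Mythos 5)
Your proof is correct and follows essentially the same route as the paper, which likewise establishes the claim by noting that Nisan's generator makes the single-phase transition matrices of $\mathsf{H}_1$ and $\mathsf{H}_2$ $\epsilon$-close and then applying \cref{lem:markov} exactly as in the proof of \cref{clm:a-hybrid-1}. Your write-up simply spells out the details that the paper leaves as ``perfectly analogous.''
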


	\begin{proof}
		This follows immediately from the correctness of $\mathsf{NisGen}$ and an application of \cref{lem:markov} that is perfectly analogous to the reasoning used to prove \cref{clm:a-hybrid-1}.
	\end{proof}

	\begin{figure}
		\begin{framed}
			\begin{enumerate}
				\item Initialize $v = v_0$. Repeat until $v$ is a terminal vertex of $\mathcal{P}$:
				\begin{enumerate}
					\item Pick $b \in [B]$ uniformly at random and let $I = I_b$.
					\item Pick $y'' \in \{0, 1\}^T$ uniformly at random.
					\item Let $v = \mathcal{P}\vert_{[n] \setminus I}(v; x, y'')$.
				\end{enumerate}
				\item Output $v$.
			\end{enumerate}
			\vspace{-5mm}
		\end{framed}
		\vspace{-5mm}
		\caption{The algorithm $\mathsf{H}_3$ defining the third hybrid distribution used to prove \cref{eqn:simulate-branching-program-correctness}. The only difference between $\mathsf{H}_2$ and $\mathsf{H}_3$ is that $\mathsf{H}_2$ terminates after $r$ iterations, whereas $\mathsf{H}_3$ waits until it reaches a terminal vertex of $\mathcal{P}$.} \label{fig:hybrid-3}
	\end{figure}

	Next, we must show that the output of $\mathsf{H}_2$ is statistically close to the output of $\mathsf{H}_3$. The idea is that in each iteration, with high probability, $\mathsf{H}_2$ progresses by roughly $B$ steps before running into a vertex $v$ with $i(v) \in I$. (Recall that $i(v)$ is the index of the input queried by vertex $v$.) Therefore, in total, with high probability, $\mathsf{H}_2$ progresses roughly $rB$ steps, which is at least $T$ by our choice of $r$. We now give the detailed statement and proof.
	
	\begin{claim}[$\mathsf{H}_2 \approx \mathsf{H}_3$] \label{clm:hybrid-2-hybrid-3}
		For every $x$,
		\begin{equation}
			\mathsf{H}_2(\mathcal{P}, v_0, x, T) \sim_{\exp(-r/8)} \mathsf{H}_3(\mathcal{P}, v_0, x, T).
		\end{equation}
	\end{claim}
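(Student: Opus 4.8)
The plan is to couple $\mathsf{H}_2$ and $\mathsf{H}_3$ by running them on the same random choices and to bound the probability that they disagree. Think of the common experiment underlying both algorithms (\cref{fig:hybrid-2,fig:hybrid-3}): initialize $v = v_0$ and, in iteration $t = 1, 2, \dots$, pick $b_t \in [B]$ and $y''_t \in \{0,1\}^T$ independently and uniformly and update $v \leftarrow \mathcal{P}\vert_{[n]\setminus I_{b_t}}(v; x, y''_t)$. Then $\mathsf{H}_2$ is this experiment stopped after $r$ iterations, while $\mathsf{H}_3$ is this experiment stopped once $v$ is a terminal vertex of $\mathcal{P}$. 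If $v$ is already a terminal vertex of $\mathcal{P}$, then $\mathcal{P}\vert_{[n]\setminus I}(v; x, y'')$ leaves $v$ unchanged, so once $\mathsf{H}_3$ halts the remaining iterations of $\mathsf{H}_2$ do nothing; hence the two outputs agree whenever $\mathsf{H}_3$ halts within $r$ iterations. Writing $\mathcal{E}$ for the event that after $r$ iterations $v$ is still non-terminal, we get $\mathsf{H}_2 \sim_{\Pr[\mathcal{E}]} \mathsf{H}_3$, and it remains to show $\Pr[\mathcal{E}] \le \exp(-r/8)$.

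Next I would prove a per-iteration progress bound. Let $Z_t$ be the number of edges traversed in iteration $t$. I claim that, conditioned on the history through iteration $t-1$ and on $v$ being non-terminal at the start of iteration $t$, for every integer $m \ge 0$ the probability that iteration $t$ halts by reaching a vertex $u$ with $i(u) \in I_{b_t}$ after at most $m$ steps is at most $(m+1)/B$. To see this, condition additionally on $y''_t$: this fixes the walk $q_0 = v, q_1, q_2, \dots$ that $\mathcal{P}(v; x, y''_t)$ follows until it reaches a terminal vertex, and its first $m+1$ vertices are distinct (as $\mathcal{P}$ is acyclic), so they query a set $A$ of at most $m+1$ input indices; since $b_t$ is independent of the history and of $y''_t$ and the blocks $I_1, \dots, I_B$ are pairwise disjoint, the random block $I_{b_t}$ meets $A$ with probability at most $|A|/B \le (m+1)/B$. (If instead the walk reaches a terminal vertex within fewer than $m$ steps, the same union bound gives an even smaller probability.)

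Then I would combine these bounds via a Chernoff argument. Fix $m+1 = \lceil (B-8)/2 \rceil$, which is a positive integer since $B \ge 9$ in this case (recall $S^{c+1} \le \lfloor n/9 \rfloor$, so $B \ge 9$), and note $(m+1)/B \le 1/2$. On $\mathcal{E}$, every one of the first $r$ iterations starts and ends at a non-terminal vertex of $\mathcal{P}$ (otherwise $v$ would remain terminal, contradicting $\mathcal{E}$), so each of these iterations halts by hitting its block $I_{b_t}$; moreover the concatenation of the traversed segments is a single path through the DAG $\mathcal{P}$, so $Z_1 + \dots + Z_r \le \length(\mathcal{P}) \le T$, and hence at most $T/(m+1) \le 2T/(B-8) \le r/4$ of the iterations have $Z_t > m$, where we used $r \ge \lceil 8T/(B-8) \rceil$ from \cref{eqn:r-def}. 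Therefore $\mathcal{E}$ implies that at least $3r/4$ of the first $r$ iterations both halt by hitting their block and have $Z_t \le m$. By the per-iteration bound, the indicator of this last event for iteration $t$ has conditional probability at most $(m+1)/B \le 1/2$ given the history, so its sum over $t \in [r]$ is stochastically dominated by a $\mathrm{Bin}(r, 1/2)$ random variable; by Hoeffding's inequality, $\Pr[\mathcal{E}] \le \Pr[\mathrm{Bin}(r,1/2) \ge 3r/4] \le \exp(-2r(1/4)^2) = \exp(-r/8)$, which completes the proof.

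The main obstacle is the constant-chasing: the threshold $m$ must be chosen so that the gap between $3r/4$ and the binomial mean $\le r/2$ is exactly wide enough to yield the promised exponent $r/8$ — this is precisely why the definition of $r$ in \cref{eqn:r-def} carries the factors of $8$ and the ``$-8$'' — and one must track the floor/ceiling bookkeeping, including the edge case $B = 9$ (where $m = 0$, so the relevant event is simply $Z_t = 0$). A secondary point to be careful about is that the per-iteration estimate must be genuinely conditional on the full history, so that the stochastic-domination-by-a-binomial step is valid.
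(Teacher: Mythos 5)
Your proposal is correct and follows essentially the same route as the paper: identify the total variation distance with the probability that $\mathsf{H}_2$ fails to reach a terminal vertex within $r$ phases, show each phase advances by roughly $B/2$ steps except with conditional probability at most $1/2$ (you via a union bound over the at most $m+1$ queried indices, the paper via padding the set of queried blocks to exactly $\lfloor B/2\rfloor$ so the per-phase probability is a constant), and apply Hoeffding to get $e^{-r/8}$. Your explicit stochastic-domination step is, if anything, a slightly more careful rendering of the paper's assertion that the events $E_t$ are independent.
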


	\begin{proof}
		Consider iteration $t$ of the loop in $\mathsf{H}_2$, where $1 \leq t \leq r$. Let $T_t$ be the number of steps through $\mathcal{P}\vert_{[n] \setminus I}$ that are taken in iteration $t$ when updating $v = \mathcal{P}\vert_{[n] \setminus I}(v; x, y'')$ before reaching a vertex that tries to query from $I$. (If we never reach such a vertex, i.e., we reach a terminal vertex of $\mathcal{P}$, then let $T_t = T$.) 
		We claim that
		\begin{equation} \label{eqn:game}
			\Pr\left[\sum_{t = 1}^r T_t < T\right] \leq e^{-r/8}.
		\end{equation}
		Proof: For $t \in [r]$, consider the value of $v$ at the beginning of iteration $t$ and the string $y'' \in \{0, 1\}^T$ chosen in iteration $t$. As a thought experiment, consider computing $\mathcal{P}(v; x, y'')$, i.e., taking a walk through the \emph{unrestricted} program. Let $v = u_0, u_1, u_2, \dots, u_{T'}$ be the vertices visited in this walk, $T' \leq T$.
		Let $S_t$ be the set of blocks $b' \in [B]$ that are queried by the first $B/2$ steps of this walk. That is,
		\begin{equation}
			S_t = \{b' \in [B]: \exists h < \lfloor B/2 \rfloor \text{ such that } i(u_h) \in I_{b'}\},
		\end{equation}
		so that $|S_t| \leq \lfloor B/2 \rfloor$. Let $S_t' = S_t \cup [B']$, where $B'$ is chosen so that $|S_t'| = \lfloor B/2 \rfloor$. Let $E_t$ be the event that $b \in S'_t$, where $b$ is the value chosen by $\mathsf{H}_2$ in iteration $t$ of the loop.
		
		Since $b$ and $y''$ are chosen \emph{independently} at random, the events $E_t$ are independent, and $\Pr[E_t] = \frac{\lfloor B/2 \rfloor}{B} \leq 1/2$. Therefore, by Hoeffding's inequality,
		\begin{equation}
			\Pr[\# E_t \text{ that occur} > (3/4)r] \leq e^{-r/8}.
		\end{equation}
		Now, suppose that $E_t$ does not occur. Then $b \not \in S_t'$, so $b \not \in S_t$. This implies that when updating $v = \mathcal{P}\vert_{[n] \setminus I}(v; x, y'')$ (taking a walk through the \emph{restricted} program), we either reach a terminal vertex of $\mathcal{P}$ or we take at least $\lfloor B/2 \rfloor$ steps before reaching a vertex that tries to query $I$. Therefore, $T_t \geq \min\{\lfloor B/2 \rfloor, T\}$. By \cref{eqn:r-def},
		\begin{align}
			\frac{r}{4} \cdot \left\lfloor \frac{B}{2} \right\rfloor \geq r \cdot \left(\frac{B}{8} - 1\right) \geq T.
		\end{align}
		\Cref{eqn:game} follows. Since $T \geq \length(\mathcal{P})$, $\sum_{t = 1}^r T_t \geq T$ implies that $\mathsf{H}_2$ outputs a terminal vertex of $\mathcal{P}$. Therefore, any random string that gives $\sum_{t = 1}^r T_t \geq T$ also causes $\mathsf{H}_2$ and $\mathsf{H}_3$ to output the same vertex.
	\end{proof}

	Finally, we argue that $\mathsf{H}_3$ perfectly simulates $\mathcal{P}$ (with zero error).
	\begin{claim}[$\mathsf{H}_3 \sim \mathcal{P}$] \label{clm:hybrid-3-p}
		For every $x$,
		\begin{equation}
			\mathsf{H}_3(\mathcal{P}, v_0, x, T) \sim \mathcal{P}(v_0; x, U_T).
		\end{equation}
	\end{claim}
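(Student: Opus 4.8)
The plan is to recognize $\mathsf{H}_3$ as nothing more than the standard on-line (``lazy'') way of sampling $\mathcal{P}(v_0; x, U_T)$, and to conclude that it has exactly that output distribution. Recall the lazy procedure: walk through $\mathcal{P}$ from $v_0$, and whenever the current vertex $u$ queries a random coordinate $j(u)$ whose value has not yet been revealed, reveal it with a fresh uniform bit; halt at the first terminal vertex of $\mathcal{P}$ reached and output it. This clearly samples $\mathcal{P}(v_0; x, U_T)$. Now observe that $\mathsf{H}_3$ carries out precisely this walk, up to two cosmetic modifications: (a) it artificially \emph{pauses} the walk whenever it is about to query an input coordinate lying in the current block $I$, resuming from the pause vertex in the next iteration; and (b) at the start of each iteration it \emph{forgets} all previously revealed random coordinates and draws a brand-new string $y''$ to play the role of the random bits. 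Modification (a) affects only the bookkeeping --- the sequence of edges traversed is unchanged, and $\mathsf{H}_3$'s halting rule ``the current vertex is terminal in $\mathcal{P}$'' coincides with the lazy procedure's --- so the whole content is to show that modification (b) is harmless.

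Modification (b) is harmless because of the R-OW hypothesis: along every walk in $\mathcal{P}$ the random-coordinate index $j(\cdot)$ is non-decreasing, so the coordinates queried in iteration $t$ are all (weakly) larger than those queried in iterations $1, \dots, t-1$; in particular a coordinate whose value was revealed in one iteration is never queried again later, so forgetting it at an iteration boundary has no effect on the walk. I would make this rigorous with a coupling. Sample $z \sim U_T$, run $\mathcal{P}(v_0; x, z)$ to obtain a walk $v_0 = u_0, u_1, \dots, u_L$ --- which, since $\mathcal{P}$ is a DAG, has length $L \le \length(\mathcal{P}) \le T$ and ends at a terminal vertex of $\mathcal{P}$ --- and choose i.i.d.\ uniform block indices $b_1, b_2, \dots$ for $\mathsf{H}_3$. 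These data determine the pause points $0 = s_0 \le s_1 \le \dots$, and I then \emph{define} the string $y''$ used in iteration $t$ to equal $z$ on exactly the random coordinates queried by the edges $\mathsf{H}_3$ traverses in that iteration and to be an independent fresh uniform bit on every other coordinate. Since distinct iterations query disjoint coordinate sets (by monotonicity of $j$ --- modulo the one wrinkle discussed below), the strings so defined are mutually independent and uniform, so this is a legitimate coupling, and an easy induction on $t$ shows that after iteration $t$ the current vertex of $\mathsf{H}_3$ equals $u_{s_t}$. Because each iteration advances the walk with probability at least $(B - 1)/B > 0$ (the randomly chosen block misses the input coordinate queried by the current vertex), with probability $1$ the walk reaches $u_L$ in finitely many iterations, at which point $\mathsf{H}_3$ halts and outputs $u_L = \mathcal{P}(v_0; x, z)$; this is distributed as $\mathcal{P}(v_0; x, U_T)$, which is the claim.

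The one delicate point --- the wrinkle above --- is the random coordinate that can straddle an iteration boundary: since an edge is allowed to leave $j$ unchanged, the last edge of iteration $t$ and the first edge of iteration $t+1$ may query the same coordinate, so the coordinate sets of consecutive iterations need not be literally disjoint, and forgetting that coordinate at the boundary is not obviously harmless. The natural way to remove this is to work with the normal form in which $\mathcal{P}$ reads a \emph{fresh} random bit at every step, i.e.\ $j(v') = j(v) + 1$ along every edge between nonterminal vertices: any R-OW program can be put into this form with only polynomial blow-up in size and length, by collapsing each maximal run of vertices sharing a single random-coordinate index into a gadget that reads that coordinate once and then reads the relevant input coordinates. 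For a program in this form the coordinate sets of distinct iterations are genuinely disjoint and the coupling goes through cleanly; I expect this normalization --- arranging that each random bit is ``used up'' exactly once --- to be the only part of the argument beyond routine bookkeeping.
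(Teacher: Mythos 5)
Your proof is correct and is essentially a rigorous rendering of the paper's one-sentence argument, which simply asserts that both processes assign probability $2^{-T'}$ to each root-to-terminal path; your coupling via deferred decisions is the natural way to make that precise, and your observation that each iteration advances with probability at least $(B-1)/B$ (so $\mathsf{H}_3$ terminates almost surely) is a point the paper skips. The ``wrinkle'' you flag is genuine: if an edge may leave $j$ unchanged, a random coordinate read just before a pause vertex can be re-read from a fresh string in the next iteration, and one can build small examples where $\mathsf{H}_3$ then reaches terminal vertices that $\mathcal{P}(v_0;x,U_T)$ never reaches --- so the claim as literally stated needs the fresh-bit normal form you describe (or an equivalent convention), which the paper's $2^{-T'}$ computation tacitly assumes. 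Your normalization (with the usual factor-two size blowup to remember the bit within a run of constant $j$) repairs this cleanly.
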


	\begin{proof}
		For any path $v_0, v_1, \dots, v_{T'}$ through $\mathcal{P}$ ending at a terminal vertex, both computations, $\mathsf{H}_3(\mathcal{P}, v_0, x, T)$ and $\mathcal{P}(v_0; x, U_T)$, have exactly a $2^{-T'}$ chance of following that path.
	\end{proof}

	\begin{proof}[Proof of \cref{thm:simulate-branching-program}]
		By \cref{clm:a-hybrid-1,clm:hybrid-1-hybrid-2,clm:hybrid-2-hybrid-3,clm:hybrid-3-p} and the triangle inequality,
		\begin{equation}
			\density\{x \in \{0, 1\}^n : \mathsf{A}(\mathcal{P}, v_0, x, T) \not \sim_{\delta} \mathcal{P}(v_0; x, U_T)\} \leq 2^{-S^c},
		\end{equation}
		where $\delta = \epsilon r + \epsilon' r \cdot 2^{S - 1} + e^{-r/8}$. By our choice of $\epsilon$ (\cref{eqn:eps-def}), the first term is at most $e^{-cS}/4$. By our choice of $\epsilon'$ (\cref{eqn:eps'-def}), the second term is also at most $e^{-cS}/4$. By our choice of $r$ (\cref{eqn:r-def}), the third term is at most $e^{-cS}/2$. Therefore, $\delta \leq e^{-cS}$.
	\end{proof}
	
	\subsection{Main result: Derandomizing uniform random-access algorithms}
	
	\cref{thm:simulate-branching-program} immediately implies $\mathbf{BPTISP}(n \cdot \poly(S), S)$ can be simulated by a typically-correct algorithm that runs in time $n \cdot \poly(S)$ and space $O(S)$ that uses only $\poly(S)$ random bits.
	\begin{corollary} \label{cor:bptisp-poly-coins}
		Fix a function $S(n) \geq \log n$ that is constructible in time $n \cdot \poly(S)$ and space $O(S)$, and fix a constant $c \in \N$. For every language $L \in \mathbf{BPTISP}(n \cdot \poly(S), S)$, there is a randomized algorithm $\mathcal{A}$ running in time $n \cdot \poly(S)$ and space $O(S)$ that uses $\poly(S)$ random bits such that
		\begin{equation} \label{eqn:bptisp-poly-coins}
			\density\{x \in \{0, 1\}^n : \Pr[\mathcal{A}(x) \neq L(x)] > 2^{-S^c}\} \leq 2^{-S^c}.
		\end{equation}
	\end{corollary}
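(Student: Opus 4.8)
The plan is to instantiate \cref{thm:simulate-branching-program} with the configuration graph of a uniform algorithm for $L$ and then amplify by a majority vote. Fix $L \in \mathbf{BPTISP}(n \cdot \poly(S), S)$ and let $\mathcal{M}$ be a randomized random-access Turing machine that on input $x \in \{0,1\}^n$ halts within $T = n \cdot \poly(S)$ steps, touches $O(S)$ work-tape cells, and satisfies $\Pr[\mathcal{M}(x) = L(x)] \geq 2/3$. For each $n$, the configuration graph of $\mathcal{M}$ on length-$n$ inputs is an R-OW randomized branching program $\mathcal{P}_n$ on $\{0,1\}^n \times \{0,1\}^T$ of length at most $T$ and size $2^{O(S)}$, with a distinguished start vertex $v_0$ and with every terminal vertex labeled $\mathtt{accept}$ or $\mathtt{reject}$. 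By padding $\mathcal{M}$'s configuration space with dummy isolated vertices we may assume $\lceil \log \size(\mathcal{P}_n) \rceil \geq S(n)$, so that the bounds furnished by \cref{thm:simulate-branching-program} are no weaker when phrased in terms of $S = S(n)$.

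The crucial observation is that $\mathcal{P}_n$ need never be written down. Since $\mathcal{M}$ is a fixed machine and $S$ is constructible within the allotted resources, from $n$ and a description of a configuration $v$ we can compute $i(v)$, $j(v)$, the four out-neighbors of $v$, and whether $v$ is terminal, all in $\poly(S)$ time and $O(S)$ space; likewise membership in a block $I = I_b$ is decidable in $\poly(S)$ time and $O(S)$ space. Hence we may run the algorithm $\mathsf{A}$ of \cref{thm:simulate-branching-program} with these on-the-fly computations in place of the adjacency-list lookups its footnote presumes, never materializing anything of size $2^{\Theta(S)}$. This keeps the running time $T \cdot \poly(S) = n \cdot \poly(S)$, the space $O(S)$, and the number of random bits $\lceil T/n\rceil \cdot \poly(S) = \poly(S)$.

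Now define $\mathcal{A}(x)$: run $\mathsf{A}(\mathcal{P}_n, v_0, x, T)$ for $k = \Theta(S^c)$ independent trials, each time obtaining a vertex $v$, and output the majority over the trials of the labels of these vertices (a non-terminal output counting as $\mathtt{reject}$). Since the trials reuse workspace and only require an $O(\log k) = O(\log S)$-bit counter, $\mathcal{A}$ runs in time $n \cdot \poly(S)$, space $O(S)$, and uses $k \cdot \poly(S) = \poly(S)$ random bits. For correctness, apply \cref{thm:simulate-branching-program} with the constant $c$: it gives a set $B_n$ with $\density(B_n) \leq 2^{-S^c}$ such that for $x \notin B_n$ a single trial's output distribution is $\exp(-cS)$-close to $\mathcal{P}_n(v_0; x, U_T)$, which is always terminal and labeled $L(x)$ with probability at least $2/3$; hence a single trial reports $L(x)$ with probability at least $2/3 - \exp(-cS) \geq 3/5$ (valid once $n$ exceeds an absolute constant, since then $S \geq \log n$ is large enough; the finitely many smaller lengths are hard-coded). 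As the trials are independent, a Hoeffding bound bounds the probability that the majority is wrong by $\exp(-\Omega(k)) \leq 2^{-S^c}$ once the constant hidden in $k = \Theta(S^c)$ is large enough. Therefore $\density\{x : \Pr[\mathcal{A}(x) \neq L(x)] > 2^{-S^c}\} \leq \density(B_n) \leq 2^{-S^c}$.

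The step to watch is the implicit handling of $\mathcal{P}_n$: one must confirm that every interaction \cref{thm:simulate-branching-program}'s algorithm has with the branching program — computing neighborhoods, restricting to $\mathcal{P}_n\vert_{[n]\setminus I}$, testing terminality — is a purely local query that the configuration graph of $\mathcal{M}$ answers in $\poly(S)$ time and $O(S)$ space, so that the simulation stays within the claimed bounds and remains uniform. The rest is routine: folding the bounded error of $\mathcal{M}$ together with the $\exp(-cS)$ sampling error of $\mathsf{A}$ into a $2^{-S^c}$ failure probability on the $(1 - 2^{-S^c})$-fraction of inputs outside $B_n$, and noting that constructibility of $S$ is precisely what lets $\mathcal{A}$ compute the relevant parameters uniformly.
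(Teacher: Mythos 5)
Your proposal is correct and follows essentially the same route as the paper: build the configuration graph of the uniform machine as an R-OW branching program whose neighborhoods are computable on the fly in $\poly(S)$ time and $O(S)$ space, invoke \cref{thm:simulate-branching-program}, and amplify with $O(S^c)$ independent majority-vote repetitions. The extra care you take about implicit access to $\mathcal{P}_n$ and about padding its size is exactly the (implicit) bookkeeping in the paper's proof.
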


	\begin{proof}
		Let $\mathcal{B}$ be the algorithm witnessing $L \in \mathbf{BPTISP}(n \cdot \poly(S), S)$. Let $c'$ be a constant so that $\mathcal{B}$ runs in time $n \cdot S^{c'}$. For $n \in \N$, let $\mathcal{P}_n$ be a randomized branching program, where each vertex in $V(\mathcal{P}_n)$ describes a configuration of $\mathcal{B}$ with at most $S(n)$ symbols written on each tape. For each vertex $v \in V(\mathcal{P}_n)$, let $i(v)$ be the location of the input tape read head in the configuration described by $v$, and let $j(v)$ be the location of the random tape read head in the configuration described by $v$. The transitions of $\mathcal{P}_n$ correspond to the transitions of $\mathcal{B}$ in the obvious way.
		
		By construction, $\mathcal{P}_n$ is an R-OW branching program with size $2^{O(S)}$ and length at most $n \cdot S^{c'}$. Furthermore, given a vertex $v$, the neighborhood of $v$ can be computed in $\poly(S)$ time and $O(S)$ space, simply by consulting the transition function for $\mathcal{B}$.
		
		Given $x \in \{0, 1\}^n$, the algorithm $\mathcal{A}_0$ runs the algorithm of \cref{thm:simulate-branching-program} on input $(\mathcal{P}_n, v_0, x, n \cdot S^{c'})$, where $v_0$ encodes the starting configuration of $\mathcal{B}$. This gives a vertex $v \in V(\mathcal{P}_n)$. The algorithm $\mathcal{A}_0$ accepts if and only if $v$ encodes an accepting configuration of $\mathcal{B}$. That way,
		\begin{equation}
			\density\{x \in \{0, 1\}^n : \Pr[\mathcal{A}_0(x) \neq L(x)] > 1/3 + e^{-cS}\} \leq 2^{-S^{c}}.
		\end{equation}
		The algorithm $\mathcal{A}(x)$ runs $O(S^c)$ repetitions of $\mathcal{A}_0(x)$ and takes a majority vote, driving the failure probability down to $2^{-S^c}$.
		
		Clearly, $\mathcal{A}$ runs in time $n \cdot S^{c'} \cdot \poly(S) \cdot S^c = n \cdot \poly(S)$ and space $O(S)$. The number of random bits used by $\mathcal{A}$ is $O(\frac{n \cdot S^{c'}}{n} \cdot \poly(S) \cdot S^c) = \poly(S)$.
	\end{proof}
	
	We can further reduce the randomness complexity by using a pseudorandom generator by Nisan and Zuckerman \cite{nz96}.
	\begin{theorem}[\cite{nz96}] \label{thm:nz96}
		Fix constants $c \in \N, \alpha > 0$. For every $S \in \N$, there is a generator $\mathsf{NZGen}: \{0, 1\}^s \to \{0, 1\}^{S^c}$ with seed length $s \leq O(S)$ such that if $\mathcal{P}$ is an R-OW randomized branching program of size $2^S$, $v$ is a vertex, and $x$ is an input, then
		\begin{equation}
			\mathcal{P}(v; x, \mathsf{NZGen}(U_s)) \sim_{\epsilon} \mathcal{P}(v; x, U_{S^c}),
		\end{equation}
		where $\epsilon = 2^{-S^{1 - \alpha}}$. Given $S$ and $z$, $\mathsf{NZGen}(z)$ can be computed in $O(S)$ space and $\poly(S)$ time.
	\end{theorem}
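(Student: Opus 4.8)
The statement is Nisan and Zuckerman's generator \cite{nz96}, and since it is space-bounded to begin with, I would follow their proof essentially verbatim, taking care only to track the space complexity explicitly. The guiding idea is to treat the $S^c$ output bits as a concatenation of \emph{blocks} and to produce every block by applying a seeded extractor to one common, uniformly random \emph{source}; the source may be reused across blocks because a width-$2^S$ program with one-way access to its randomness can ``remember'' at most $S$ bits about the blocks it has already consumed, so after reading a block the source still has plenty of min-entropy conditioned on the program's state. The catch, and the whole point of the construction, is to keep the seed length down to $O(S)$.

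In more detail, I would fix a space-efficient strong $(k,\epsilon_0)$-extractor $\mathsf{Ext}\colon\{0,1\}^{\ell}\times\{0,1\}^{d}\to\{0,1\}^{b}$ (e.g.\ of the type supplied by \cref{thm:su05}, which tolerates $\epsilon_0$ as small as we need and is computable in space $O(d)$), with $\ell=O(S)$, $k$ chosen so that $k\le \ell-S-O(\log(1/\epsilon_0))$, $d=O(\log\ell+\log(1/\epsilon_0))=O(S^{1-\alpha})$, and $\epsilon_0$ set so that $\epsilon_0$ times the total number of blocks over the whole construction is below $2^{-S^{1-\alpha}}$. The generator samples a uniform source $x\in\{0,1\}^{\ell}$ and outputs $\mathsf{Ext}(x,y_1),\dots,\mathsf{Ext}(x,y_t)$; crucially the extractor seeds $y_1,\dots,y_t$ are \emph{not} fresh random bits (that would cost $t\cdot d\gg S$ bits) but are themselves the output of a \emph{recursive} invocation of the construction on a shorter seed, which need only fool the slightly larger width-$2^{O(S)}$ program obtained by composing $\mathcal{P}$ with one block's worth of extraction. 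The recursion bottoms out, using truly uniform bits, once the amount of output required drops to $O(S)$; since a single level already stretches by a $\poly(S)$ factor and $S^c=\poly(S)$, the recursion has only $O(1)$ levels, so the seed -- one source per level plus the base-case string -- has total length $O(S)$.

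Correctness is a hybrid argument: replace the blocks (and, descending through the recursion, the sub-blocks) one at a time by truly uniform strings. The key lemma is a min-entropy chain rule -- if $X\sim U_{\ell}$ and $g\colon\{0,1\}^{b}\to[2^S]$ is any ``next state'' function, then for all but a $\sqrt{\epsilon_0}$ fraction of seeds $y$, conditioned on $g(\mathsf{Ext}(X,y))$ the source $X$ still has min-entropy at least $\ell-S-O(\log(1/\epsilon_0))\ge k$ -- so the next block's extractor output is $O(\sqrt{\epsilon_0})$-close to uniform even given the program's current vertex. As in the proof of \cref{thm:simulate-branching-program}, one phrases ``fooling $\mathcal{P}(v;x,\cdot)$ for all $v$'' block-by-block and combines the per-block errors using \cref{lem:markov}; summing over the $\poly(S)$ hybrids and the $O(1)$ levels gives total distance $\poly(S)\cdot\sqrt{\epsilon_0}\le 2^{-S^{1-\alpha}}$ after slightly shrinking $\epsilon_0$ (and, if needed, $\alpha$). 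For efficiency, each extractor evaluation runs in space $O(d)=O(S)$ and time $\poly(S)$, the recursion has constant depth, and at each level we store only the current source, a block index, and an extractor seed, each $O(S)$ bits; hence $\mathsf{NZGen}(z)$ is computable in space $O(S)$ and time $\poly(S)$. The main obstacle is precisely the seed-length accounting: one must choose the block lengths at every level so that the sources across all levels together use only $O(S)$ bits while the min-entropy budget at each level still covers all of that level's extractions -- everything else (the chain rule, the hybrid argument, the space bound) is routine.
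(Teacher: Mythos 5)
The paper offers no proof of this theorem---it is quoted directly from Nisan and Zuckerman \cite{nz96}---and your proposal is a faithful reconstruction of their argument: repeatedly extract output blocks from a common $O(S)$-bit source whose min-entropy survives conditioning on the program's $S$-bit state, and generate the extractor seeds by a constant-depth recursion so that the total seed length stays $O(S)$. The sketch is correct and matches the cited construction, so there is nothing to compare against in the paper itself.
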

	
	\begin{corollary}[Main result] \label{cor:bptisp-linear-coins}
		Fix a function $S(n) \geq \log n$ that is constructible in time $n \cdot \poly(S)$ and space $O(S)$, and fix constants $c \in \N, \alpha > 0$. For every language $L \in \mathbf{BPTISP}(n \cdot \poly(S), S)$, there is a randomized algorithm $\mathcal{A}$ running in time $n \cdot \poly(S)$ and space $O(S)$ that uses $O(S)$ random bits such that
		\begin{equation}
			\density\{x \in \{0, 1\}^n : \Pr[\mathcal{A}(x) \neq L(x)] > 2^{-S^{1 - \alpha}}\} \leq 2^{-S^c}.
		\end{equation}
	\end{corollary}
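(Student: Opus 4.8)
The plan is to run the $\poly(S)$-randomness algorithm from \cref{cor:bptisp-poly-coins} but feed it the output of the Nisan--Zuckerman generator $\mathsf{NZGen}$ (\cref{thm:nz96}) in place of its random coins. First I would apply \cref{cor:bptisp-poly-coins} with the given constant $c$ to obtain a randomized algorithm $\mathcal{A}_1$ for $L$ that runs in time $n \cdot \poly(S)$ and space $O(S)$, uses $m = \poly(S)$ random bits, and satisfies $\density\{x : \Pr[\mathcal{A}_1(x) \neq L(x)] > 2^{-S^c}\} \leq 2^{-S^c}$. The crucial structural point is that $\mathcal{A}_1$ accesses its random tape strictly from left to right: inside each phase of the algorithm of \cref{thm:simulate-branching-program} the bits selecting $b$ and then the seed $y$ are consumed in order, $y$ is copied to a work tape before $\mathsf{NisGen}$ is invoked, and the $O(S^c)$ amplification repetitions are run sequentially. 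Consequently, for each fixed $x \in \{0,1\}^n$, the map sending the contents of the random tape to the output vertex of $\mathcal{A}_1(x)$ is computed by an R-OW randomized branching program $\mathcal{P}_x$ on $\{0,1\}^n \times \{0,1\}^m$ of size $2^{O(S)}$ and length $n \cdot \poly(S)$; this is the same configuration-graph construction as in the proof of \cref{cor:bptisp-poly-coins}, except that now the random-tape head position plays the role of the one-way coordinate $j(v)$.

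Next I would invoke \cref{thm:nz96} with output length $m$, with ``$S$'' replaced by $O(S)$ so that it fools size-$2^{O(S)}$ programs, and with a constant $\alpha' \in (0,\alpha)$ (say $\alpha' = \alpha/2$). This gives $\mathsf{NZGen} : \{0,1\}^s \to \{0,1\}^m$ with $s = O(S)$, computable in $O(S)$ space and $\poly(S)$ time, such that $\mathcal{P}_x(v_0; x, \mathsf{NZGen}(U_s)) \sim_{\epsilon'} \mathcal{P}_x(v_0; x, U_m)$ with $\epsilon' = 2^{-\Omega(S^{1-\alpha'})}$ for every $x$. The algorithm $\mathcal{A}$ then picks a uniform seed $z \in \{0,1\}^s$ and simulates $\mathcal{A}_1(x)$, supplying the $i$-th coin on demand as $\mathsf{NZGen}(z)_i$; since the random head only advances, it suffices to store a $O(\log S)$-bit position counter and the current coin, recomputing $\mathsf{NZGen}(z)_i$ from $z$ (in $\poly(S)$ time and $O(S)$ space) each of the at most $m$ times the head moves. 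Hence $\mathcal{A}$ runs in time $n \cdot \poly(S)$, space $O(S)$, and uses only $s = O(S)$ random bits.

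For correctness, the closeness guarantee of $\mathsf{NZGen}$ yields $\Pr[\mathcal{A}(x) \neq L(x)] \leq \Pr[\mathcal{A}_1(x) \neq L(x)] + \epsilon'$ for every $x$. On the $(1 - 2^{-S^c})$-fraction of inputs good for $\mathcal{A}_1$ this is at most $2^{-S^c} + 2^{-\Omega(S^{1-\alpha'})} \leq 2^{-S^{1-\alpha}}$ for all sufficiently large $S$ (here the choice $\alpha' < \alpha$ is exactly what lets us absorb both the extra additive term and the hidden constant in the exponent), while the complementary set of inputs has density at most $2^{-S^c}$ at every length; this is the claimed bound. I do not expect a real obstacle in this argument. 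The only places that need care are verifying that the algorithm of \cref{cor:bptisp-poly-coins} really is one-way in its randomness (so that \cref{thm:nz96} applies) and tracking the constant-factor loss in the error exponent incurred by passing through a size-$2^{O(S)}$ program, which is handled by the slack between $\alpha'$ and $\alpha$.
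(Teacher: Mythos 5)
Your proposal is correct and is essentially the paper's own proof: the paper's (two-sentence) proof sketch likewise composes the algorithm of \cref{cor:bptisp-poly-coins}, viewed as an R-OW randomized branching program over its random tape, with the Nisan--Zuckerman generator of \cref{thm:nz96}. Your write-up just makes explicit the one-way-randomness verification and the error bookkeeping that the paper leaves implicit.
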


	\begin{proof}[Proof sketch]
		Compose the algorithm of \cref{cor:bptisp-poly-coins} with the Nisan-Zuckerman generator (\cref{thm:nz96}). The algorithm of \cref{cor:bptisp-poly-coins} can be implemented as a randomized branching program as in the proof of \cref{cor:bptisp-poly-coins}.
	\end{proof}

	Finally, we can eliminate the random bits entirely at the expense of time.
	
	\begin{corollary} \label{cor:bptisp-zero-coins}
		For every space-constructible function $S(n) \geq \log n$, for every constant $c \in \N$,
		\begin{equation}
			\mathbf{BPTISP}(n \cdot \poly(S), S) \text{ is within } 2^{-S^c} \text{ of } \mathbf{DSPACE}(S).
		\end{equation}
	\end{corollary}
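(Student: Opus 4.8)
The plan is to eliminate the randomness of the algorithm of \cref{cor:bptisp-linear-coins} by brute-force enumeration over its random strings. Fix $L \in \mathbf{BPTISP}(n \cdot \poly(S), S)$ and the constant $c \in \N$. Applying (the construction behind) \cref{cor:bptisp-linear-coins} with, say, $\alpha = 1/2$ and this same $c$ yields a randomized algorithm $\mathcal{A}$ running in space $O(S)$ and using only $m = O(S)$ random bits, such that
\begin{equation*}
  \density\{x \in \{0,1\}^n : \Pr[\mathcal{A}(x) \neq L(x)] > 2^{-\sqrt{S}}\} \leq 2^{-S^c}.
\end{equation*}
(Only space-constructibility of $S$ is needed here, since the deterministic algorithm built below is under no time constraint and can afford to recompute $S(n)$ directly in space $O(S)$.) The essential point --- and the reason this corollary, rather than the weaker \cref{cor:bptisp-poly-coins}, is the one to invoke --- is that $\mathcal{A}$ consumes only $O(S)$ random bits, so there are merely $2^{O(S)}$ random strings to sweep over.

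Next I would define the deterministic language $L'$ as follows. On input $x \in \{0,1\}^n$, first compute $S(n)$ and hence $2^m$; then loop over all $\rho \in \{0,1\}^m$, simulating $\mathcal{A}(x)$ with its (read-once) random tape hardwired to $\rho$ and keeping a running count of how many $\rho$ lead $\mathcal{A}$ to accept; finally accept iff a strict majority of the $2^m$ strings $\rho$ lead to acceptance. The loop counter over $\{0,1\}^m$, the current $\rho$, the acceptance tally (a number in $\{0, 1, \dots, 2^m\}$, hence $m + 1 = O(S)$ bits), and the workspace of $\mathcal{A}$ itself each occupy $O(S)$ bits, so $L' \in \mathbf{DSPACE}(S)$; its running time, $2^{O(S)} \cdot \poly(n)$, is immaterial. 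For any $x$ with $\Pr[\mathcal{A}(x) \neq L(x)] \leq 2^{-\sqrt{S}}$ and $S(n) \geq 2$, strictly more than half of the $\rho$ yield the correct verdict $L(x)$, so $L'(x) = L(x)$; since $S(n) \geq \log n$, the condition $S(n) \geq 2$ fails only for the finitely many lengths $n$ with $S(n) = 1$, and on those lengths we hardwire $L$ into $L'$ at $O(1)$ extra cost. Hence $L'$ agrees with $L$ on every input outside the bad set, and the displayed inequality gives $\Pr_{x \in \{0,1\}^n}[x \in L \Delta L'] \leq 2^{-S(n)^c}$ for every $n$, which is the claim.

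I do not anticipate a genuine obstacle: this is a standard ``enumerate the coins and majority-vote'' derandomization. The only point demanding care is the space accounting --- specifically, that because \cref{cor:bptisp-linear-coins} already drove the randomness complexity down to $O(S)$, both the enumeration index and the majority-vote tally fit in $O(S)$ space, so nothing is lost; starting instead from the $\poly(S)$-randomness algorithm of \cref{cor:bptisp-poly-coins} would have cost $\poly(S)$ space for the enumeration alone and broken the bound.
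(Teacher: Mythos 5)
Your proposal is correct and matches the paper's proof, which is exactly the one-line argument "run the algorithm of \cref{cor:bptisp-linear-coins} on all possible random strings and take a majority vote"; you have simply spelled out the space accounting that the paper leaves implicit. Your observation that one must start from the $O(S)$-randomness algorithm of \cref{cor:bptisp-linear-coins} rather than the $\poly(S)$-randomness one is also the reason the paper orders the corollaries as it does.
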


	\begin{proof}
		Run the algorithm of \cref{cor:bptisp-linear-coins} on all possible random strings and take a majority vote.
	\end{proof}
	
	\section{Derandomizing Turing machines} \label{sec:bptisp-tm}
	
	In this section, we give our improved typically-correct derandomizations for Turing machines. \Cref{sec:s-ow,sec:bptisp-tm-1} concern derandomization with a low mistake rate, and \cref{sec:s-r,sec:amplification,sec:bptisp-tm-2} concern derandomization of Turing machines with runtime $T \approx n^2$.
	
	\subsection{Low-randomness simulation of sequential-access branching programs with a low mistake rate} \label{sec:s-ow}

	Recall that for a nonterminal vertex $v$ in a branching program, $i(v)$ is the index of the input queried by $v$, and $j(v)$ is the index of the random string queried by $v$.
	\begin{definition}
		An \emph{S-OW randomized branching program} is a randomized branching program $\mathcal{P}$ such that for every edge $(v, v')$ between two nonterminal vertices, $|i(v) - i(v')| \leq 1$ and $j(v') \in \{j(v), j(v) + 1\}$.
	\end{definition}
	In words, an S-OW randomized branching program has \emph{sequential} access to its input and one-way access to its random bits. By ``sequential access'', we mean that after reading bit $i$, it reads bit $i - 1$, bit $i$, or bit $i + 1$, like a head of a Turing machine. For S-OW branching programs, we give an algorithm analogous to \cref{thm:simulate-branching-program} but with a much lower rate of mistakes.
	\begin{theorem} \label{thm:simulate-s-ow-branching-program}
		For each constant $c \in \N$, there is a randomized random-access algorithm $\mathsf{A}$ with the following properties. Suppose $\mathcal{P}$ is an S-OW randomized branching program on $\{0, 1\}^n \times \{0, 1\}^T$ with $S \geq \log n$, where $S \stackrel{\normalfont \text{def}}{=} \lceil \log \size(\mathcal{P}) \rceil$. Suppose $v_0 \in V(\mathcal{P})$, $T \geq \length(\mathcal{P})$, and $x \in \{0, 1\}^n$. Then $\mathsf{A}(\mathcal{P}, v_0, x, T)$ outputs a vertex $v \in V(\mathcal{P})$. The number of random bits used by $\mathsf{A}$ is $\lceil T/n \rceil \cdot \poly(S)$, and $\mathsf{A}$ runs in time\footnote{Like in \cref{thm:simulate-branching-program}, the graph of $\mathcal{P}$ should be encoded in adjacency list format. We also stress that $\mathsf{A}$ is a random-access simulation of sequential-access branching programs.} $T \cdot \poly(n, S)$ and space $O(S)$. Finally, for every such $\mathcal{P}, v_0, T$,
		\begin{equation} \label{eqn:simulate-s-ow-branching-program}
			\#\{x \in \{0, 1\}^n : \mathsf{A}(\mathcal{P}, v_0, x, T) \not \sim_{\exp(-cS)} \mathcal{P}(v_0; x, U_T)\} \leq 2^{n/S^c}.
		\end{equation}
	\end{theorem}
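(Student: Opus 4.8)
The plan is to re-use the ``out of sight, out of mind'' machinery behind \cref{thm:simulate-branching-program}, but to exploit the \emph{sequential} access to the input so that the region processed in the near future is predictable: from a vertex $u$ with $i(u) = p$, after $h$ more steps an S-OW program can only be reading a cell in $[p - h, p + h]$. So I would replace the randomly chosen block $I_b$ of \cref{fig:simulate-branching-program} by the \emph{deterministic} window $W(u) = [p - q, p + q] \cap [n]$, where $q$ is a parameter to be fixed, with source region its complement $P(u) = [n] \setminus W(u)$. One phase is then: with $u$ the current vertex, draw a fresh extractor seed $y$, compute $v = \mathcal{P}\vert_{W(u)}\bigl(u;\, x,\, \mathsf{NisGen}(\mathsf{SUExt}(x\vert_{P(u)}, y))\bigr)$, and set $u = v$. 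The algorithm runs $r = \lceil T / q \rceil$ such phases and outputs the final vertex. As in \cref{fig:simulate-branching-program} there is also a preliminary case split: if $S^{c'} > n$ for a suitable constant $c'$ depending on $c$, just simulate $\mathcal{P}$ directly using $T$ random bits, which is affordable in that regime since then $T \le \lceil T/n\rceil \cdot S^{c'}$ and the simulation is exact.

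One difference from \cref{thm:simulate-branching-program} is that the source $x\vert_{P(u)}$ is now a full-entropy string of length $\ell = |P(u)| \ge n - 2q - 1$, so $\log \ell$ may be as large as $\Theta(S)$ rather than $\Theta(\log S)$. To keep the Shaltiel--Umans seed length $d = O\!\bigl(\log \ell + \tfrac{\log \ell \,\log(1/\epsilon')}{\log k}\bigr)$ down to $O(S)$, I would instantiate \cref{thm:su05} with a \emph{large} entropy parameter, say $k$ on the order of $\sqrt{\ell}$ (legitimate since the source is uniform), so that $\log k = \Omega(\log \ell)$; combined with $\log(1/\epsilon') = O(S)$ this gives $d = O(S)$, and $\mathsf{NisGen}$ is set up as before with error $\exp(-cS)/(2r)$, giving Nisan seed length $s = O(S \log T) = O(S^2) \le k^{1-\alpha}$. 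The efficiency bookkeeping is then as in \cref{thm:simulate-branching-program}: the random bit count is $r \cdot d = \lceil T/n\rceil \cdot \poly(S)$; the space for the vertex, the phase counter, the head position, $\mathsf{NisGen}$, and $\mathsf{SUExt}$ is $O(S)$ each; and the running time is $T \cdot \poly(n, S)$, where the extra $\poly(n)$ factor (versus $\poly(S)$ in \cref{thm:simulate-branching-program}) comes from re-evaluating $\mathsf{SUExt}$ on its length-$\ell$ source every time $\mathsf{NisGen}$ queries one of the $s$ bits of its seed, since $s$ bits cannot be stored in $O(S)$ space.

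For correctness I would run the same hybrid argument through $\mathsf{H}_1$ (replace $\mathsf{SUExt}(x\vert_{P(u)}, y)$ by a uniform $s$-bit Nisan seed) and $\mathsf{H}_2$ (replace the Nisan output by a uniform $T$-bit string). The step $\mathsf{A} \approx \mathsf{H}_1$ is where the bad inputs appear: viewing one phase as a stochastic matrix indexed by the current vertex and applying \cref{prop:zuc97} to each vertex $u$ (fixing $x\vert_{W(u)}$ and treating $x\vert_{P(u)}$ as the source), the number of $x$ for which row $u$ is off by more than $\delta = \epsilon' |V|/2$ is at most $2^{|W(u)|} \cdot 2^{k+1} |V|$; a union bound over the $\le 2^S$ vertices and \cref{lem:markov} to lift one phase to $r$ phases give a bad set of size at most $2^{2q + k + O(S)}$. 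The step $\mathsf{H}_1 \approx \mathsf{H}_2$ is the correctness of $\mathsf{NisGen}$ together with \cref{lem:markov}, exactly as in \cref{clm:hybrid-1-hybrid-2}. Finally --- and here the S-OW structure genuinely helps --- $\mathsf{H}_2$ samples $\mathcal{P}(v_0; x, U_T)$ \emph{exactly}, with no analog of the Hoeffding estimate of \cref{clm:hybrid-2-hybrid-3}: by locality, any phase that has not already reached a terminal vertex of $\mathcal{P}$ advances at least $q$ edges, so $r = \lceil T/q\rceil$ phases would traverse more than $\length(\mathcal{P})$ edges --- impossible in the DAG --- unless a terminal vertex is reached, whence $\mathsf{H}_2$ always ends at a genuine terminal vertex of $\mathcal{P}$, and the rest is the path-probability argument of \cref{clm:hybrid-3-p}. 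Taking $q$ on the order of $n/S^c$ (shaved slightly to swallow the $k + O(S)$ lower-order terms) makes the bad set smaller than $2^{n/S^c}$ as required while $r = \lceil T/q\rceil = \lceil T/n\rceil \cdot \poly(S)$, and the total error on a good $x$ is $\exp(-cS)$ by the same accounting as in the proof of \cref{thm:simulate-branching-program}.

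The main obstacle is the simultaneous tuning of $q$, $k$, and $\epsilon'$. The window must be \emph{large} ($q = \Omega(n/\poly(S))$) so that only $\lceil T/n\rceil\cdot\poly(S)$ phases are needed; it must be \emph{small} --- really $2q + k = O(n/S^c)$ --- so the bad set stays below $2^{n/S^c}$; and the extractor entropy $k$ must be \emph{large enough} ($k = \Omega(\sqrt{\ell})$) to hold the seed length at $O(S)$ on a source of length $\approx n$, yet small enough to fit inside the $2^{n/S^c}$ budget and to satisfy the Shaltiel--Umans hypotheses $k \ge \log^{O(1)} \ell$ and $k \ge \log^{O(1)}(1/\epsilon')$ with $k^{1-\alpha} \ge s$. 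All of these can be met at once precisely when $S$ is not too large relative to $n$, which is exactly what the preliminary case split $S^{c'} > n$ is for; checking that every constraint and each error budget holds in the complementary regime is the bulk of the remaining, but routine, work.
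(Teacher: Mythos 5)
Your proposal is correct and follows essentially the same route as the paper's proof in \cref{apx:simulate-s-ow-branching-program}: the same case split for large $S$, the same use of a radius-$\Theta(n/\poly(S))$ exclusion window around the current head position with the complement as the extractor source, the same instantiation of $\mathsf{SUExt}$ with entropy $k \approx \sqrt{n}$ to keep the seed length $O(S)$ on a length-$\Theta(n)$ source, and the same two-hybrid argument in which locality makes $\mathsf{H}_2$ an exact simulation of $\mathcal{P}$. The only difference is cosmetic: you use a sliding window centered at $i(u)$, whereas the paper partitions $[n]$ into fixed blocks of size $h = \lfloor n/(3S^{c+1})\rfloor$ and excludes the three blocks nearest the head, which fixes the source length at $n - 3h$ and otherwise yields the identical parameter tuning and bad-set count.
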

	
	The proof of \cref{thm:simulate-s-ow-branching-program} is very similar to the proof of \cref{thm:simulate-branching-program}. The main difference is that instead of using a \emph{small} part of the input as the source of randomness, we use \emph{most} of the input as a source of randomness. The only part of the input that is \emph{not} used as a source of randomness is the region near the bit that the branching program was processing at the beginning of the current phase.
	
	Because the proof of \cref{thm:simulate-s-ow-branching-program} does not introduce any significantly new techniques, we defer the proof to \cref{apx:simulate-s-ow-branching-program}.
	
	\subsection{Derandomizing Turing machines with a low mistake rate} \label{sec:bptisp-tm-1}
	A \emph{randomized Turing machine} is defined like a randomized random-access Turing machine except that there are no index tapes. Thus, moving a read head from position $i$ to position $j$ takes $|i - j|$ steps. For functions $T, S: \N \to \N$, let $\mathbf{BPTISP}_{\text{TM}}(T, S)$ denote the class of languages $L$ such that there is a randomized Turing machine $\mathcal{A}$ that always runs in time $O(T(n))$ and space $O(S(n))$ such that for every $x \in \{0, 1\}^*$,
	\begin{equation}
		\Pr[\mathcal{A}(x) = L(x)] \geq 2/3.
	\end{equation}
	
	Trivially, a randomized Turing machine can be simulated by a randomized random-access Turing machine without loss in efficiency. Conversely, a single step of a randomized $O(S)$-space random-access Turing machine can be simulated in $O(n + S)$ steps by a randomized Turing machine. This proves the following elementary containments.
	\begin{proposition}
		For any functions $T, S: \N \to \N$ with $S(n) \geq \log n$,
		\begin{equation}
		\mathbf{BPTISP}_{\normalfont\text{TM}}(T, S) \subseteq \mathbf{BPTISP}(T, S) \subseteq \mathbf{BPTISP}_{\normalfont\text{TM}}(T \cdot (n + S), S).
		\end{equation}
	\end{proposition}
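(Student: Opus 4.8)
The plan is to verify the two containments separately; both are elementary simulations. For the first containment $\mathbf{BPTISP}_{\text{TM}}(T, S) \subseteq \mathbf{BPTISP}(T, S)$, I would simply observe that a multitape Turing machine is a random-access Turing machine that never writes to its index tapes and never executes a jump instruction. Since the random tape is consumed left-to-right in both models, a randomized Turing machine running in time $O(T)$ and space $O(S)$ is, verbatim, a randomized random-access Turing machine with the same time, space, and randomness bounds, so the inclusion is immediate.

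For the second containment $\mathbf{BPTISP}(T, S) \subseteq \mathbf{BPTISP}_{\text{TM}}(T \cdot (n + S), S)$, let $\mathcal{A}$ be a randomized random-access Turing machine deciding $L$ in time $O(T)$ and space $O(S)$, and I would build a randomized Turing machine $\mathcal{B}$ that simulates $\mathcal{A}$ step for step. The only instruction $\mathcal{B}$ cannot carry out directly is the move ``jump the head of tape $\tau$ to the position currently written on its index tape.'' To handle it, $\mathcal{B}$ keeps, on a dedicated work tape, the current head position of each of $\mathcal{A}$'s tapes written in binary; since the input head position is at most $n \leq 2^{S}$ and each work-tape head position is at most $O(S)$, all of these together occupy $O(S)$ cells. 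When $\mathcal{A}$ jumps tape $\tau$ from position $i$ to position $j$, $\mathcal{B}$ reads $j$ off the (simulated) index tape, subtracts the stored value $i$ to get $|i - j|$, physically walks the corresponding head $|i - j|$ steps in the appropriate direction, and updates the stored position. For the input tape $|i - j| \leq n$, and for each work tape $|i - j| \leq O(S)$, so one jump costs $O(n + S)$ steps, while every other instruction of $\mathcal{A}$ is simulated by $\mathcal{B}$ in $O(1)$ steps. Hence each of the $O(T)$ steps of $\mathcal{A}$ costs $O(n + S)$ steps of $\mathcal{B}$, giving total running time $O(T \cdot (n + S))$; the space used by $\mathcal{B}$ is that of $\mathcal{A}$'s work and index tapes, namely $O(S)$, plus the $O(S)$ cells of position bookkeeping. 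Because the random tape is still read left-to-right, $\mathcal{B}$ consumes the same random string as $\mathcal{A}$ and inherits its acceptance probability $\geq 2/3$.

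There is essentially no obstacle; the one point deserving care is that the head-position bookkeeping genuinely fits in $O(S)$ space, which is precisely where the hypothesis $S(n) \geq \log n$ enters — an input-tape position, possibly as large as $n$, is then representable in $O(S)$ bits, and in particular $\mathcal{A}$'s own input index tape already fits in its $O(S)$ space budget. Detecting the right end of the input is done in the usual way with a blank symbol or endmarker, so $\mathcal{B}$ never needs $n$ stored explicitly.
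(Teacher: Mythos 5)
Your proposal is correct and matches the paper's argument, which is exactly the two-sentence remark preceding the proposition: a Turing machine is trivially a random-access Turing machine, and each step of an $O(S)$-space random-access machine can be simulated in $O(n+S)$ steps by moving heads physically. Your write-up just supplies the routine details (head-position bookkeeping in $O(S)$ space via $S(n)\geq\log n$) that the paper leaves implicit.
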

	
	\Cref{thm:simulate-s-ow-branching-program} combined with the Nisan-Zuckerman generator \cite{nz96} immediately implies a derandomization theorem for Turing machines analogous to \cref{cor:bptisp-linear-coins}.

	%

	\begin{corollary} \label{cor:bptisp-tm-1-linear-coins}
		Fix a function $S: \N \to \N$ with $S(n) \geq \log n$ that is constructible in time $\poly(n, S)$ and space $O(S)$, and fix constants $c \in \N, \alpha > 0$. For every language $L \in \mathbf{BPTISP}_{\normalfont\text{TM}}(n \cdot \poly(S), S)$, there is a randomized algorithm $\mathcal{A}$ running in time $\poly(n, S)$ and space $O(S)$ that uses $O(S)$ random bits such that
		\begin{equation}
			\#\{x \in \{0, 1\}^n : \Pr[\mathcal{A}(x) \neq L(x)] > 2^{-S^{1 - \alpha}}\} \leq 2^{n/S^c}.
		\end{equation}
	\end{corollary}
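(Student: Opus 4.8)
The plan is to mirror the proof of \cref{cor:bptisp-linear-coins}, substituting the sequential-access simulation \cref{thm:simulate-s-ow-branching-program} for \cref{thm:simulate-branching-program}. Let $\mathcal{B}$ be a randomized Turing machine witnessing $L \in \mathbf{BPTISP}_{\text{TM}}(n \cdot \poly(S), S)$, say running in time $n \cdot S^{c'}$ and space $O(S)$. First I would build, for each $n$, a branching program $\mathcal{P}_n$ whose vertices are the configurations of $\mathcal{B}$ on length-$n$ inputs that use at most $S(n)$ cells per tape, with $i(v)$ the input-head position and $j(v)$ the random-head position, exactly as in the proof of \cref{cor:bptisp-poly-coins}. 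The key point is that since $\mathcal{B}$ has \emph{sequential} access to its input tape and reads its random tape left to right, $\mathcal{P}_n$ is an \emph{S-OW} randomized branching program of size $2^{O(S)}$ and length at most $n \cdot S^{c'}$, with $\poly(S)$-time, $O(S)$-space computable neighborhoods; the constructibility hypothesis on $S$ supplies the parameters within these bounds.

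Next I would run the algorithm $\mathsf{A}$ of \cref{thm:simulate-s-ow-branching-program} (with a suitably large constant in place of $c$) on $(\mathcal{P}_n, v_0, x, n \cdot S^{c'})$ and accept iff the returned vertex is accepting; call this $\mathsf{A}_0$. By the theorem, $\mathsf{A}_0$ runs in time $(n \cdot S^{c'}) \cdot \poly(n, S) = \poly(n, S)$ and space $O(S)$, uses $\lceil (n S^{c'})/n \rceil \cdot \poly(S) = \poly(S)$ random bits, and the number of $x$ with $\Pr[\mathsf{A}_0(x) \neq L(x)] > 1/3 + \exp(-cS)$ is at most $2^{n/S^c}$. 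Amplifying $\mathsf{A}_0$ by $O(S^c)$ independent repetitions with a majority vote yields $\mathsf{A}_1$ whose good-input error is at most $2^{-S^c}$, still using $\poly(S)$ random bits in time $\poly(n, S)$ and space $O(S)$; amplification cannot enlarge the bad set, so the count $2^{n/S^c}$ is preserved up to renaming the constant.

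Finally I would compose $\mathsf{A}_1$ with the Nisan--Zuckerman generator of \cref{thm:nz96}. Since $\mathsf{A}_1$ reads its $\poly(S)$ random bits from left to right, it can be realized as an R-OW randomized branching program of size $2^{O(S)}$ on $\poly(S)$ random bits --- precisely the object fooled by \cref{thm:nz96}. Instantiating $\mathsf{NZGen}$ with seed length $O(S)$ and error at most $\tfrac12 \cdot 2^{-S^{1-\alpha}}$, the algorithm $\mathcal{A}$ that expands an $O(S)$-bit seed and feeds the result to $\mathsf{A}_1$ uses $O(S)$ random bits, runs in time $\poly(n,S)$ (the generator costs $\poly(S)$ per bit and there are $\poly(S)$ bits), and space $O(S)$, and on every good input its error is at most $2^{-S^c} + \tfrac12 \cdot 2^{-S^{1-\alpha}} \leq 2^{-S^{1-\alpha}}$ for large $S$. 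The bad set is unchanged, giving the claimed bound $2^{n/S^c}$.

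I do not expect a genuine obstacle here, only two points needing care: (i) checking that $\mathsf{A}_1$ truly accesses its random bits one-way, so that \cref{thm:nz96} applies --- the same subtlety glossed over in the proof of \cref{cor:bptisp-linear-coins}; and (ii) tracking the running time, which is $\poly(n, S)$ rather than $n \cdot \poly(S)$ because a random-access machine simulating a \emph{sequential}-access program (as in \cref{thm:simulate-s-ow-branching-program}) pays $\poly(n, S)$ per step. Everything else is the bookkeeping of \cref{cor:bptisp-poly-coins,cor:bptisp-linear-coins} carried over verbatim.
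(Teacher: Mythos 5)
Your proposal is correct and follows essentially the same route as the paper, whose own argument is just a sketch saying to redo \cref{cor:bptisp-poly-coins} with \cref{thm:simulate-s-ow-branching-program} in place of \cref{thm:simulate-branching-program} and then compose with the Nisan--Zuckerman generator. Your expansion of the bookkeeping (the S-OW property of the configuration graph, the $\poly(n,S)$ running time, the one-way access of the amplified algorithm to its $\poly(S)$ random bits) is accurate.
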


	\begin{proof}[Proof sketch]
		A randomized Turing machine obviously gives rise to an S-OW randomized branching program. Like in the proof of \cref{cor:bptisp-poly-coins} (but with \cref{thm:simulate-s-ow-branching-program} in place of \cref{thm:simulate-branching-program}), we first obtain an algorithm that uses $\poly(S)$ random bits. Composing with the Nisan-Zuckerman generator (\cref{thm:nz96}) completes the proof.
	\end{proof}

	\begin{corollary} \label{cor:bptisp-tm-1-zero-coins}
		For every space-constructible function $S(n) \geq \log n$, for every constant $c \in \N$,
		\begin{equation}
			\mathbf{BPTISP}_{\normalfont\text{TM}}(n \cdot \poly(S), S) \text{ is within } 2^{-n + n/S^c} \text{ of } \mathbf{DSPACE}(S).
		\end{equation}
	\end{corollary}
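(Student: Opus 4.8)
The plan is to mimic the derivation of \cref{cor:bptisp-zero-coins} from \cref{cor:bptisp-linear-coins}: start from the randomness-efficient typically-correct algorithm of \cref{cor:bptisp-tm-1-linear-coins} and make it deterministic by exhaustively trying all random strings and taking a majority vote. Concretely, fix the constant $c$ and invoke \cref{cor:bptisp-tm-1-linear-coins} with this same $c$ and with, say, $\alpha = 1/2$: given $L \in \mathbf{BPTISP}_{\text{TM}}(n \cdot \poly(S), S)$, this yields a randomized algorithm $\mathcal{A}$ that runs in space $O(S)$, uses only $s = O(S)$ random bits, and satisfies $\#\{x \in \{0,1\}^n : \Pr[\mathcal{A}(x) \neq L(x)] > 2^{-S^{1/2}}\} \leq 2^{n/S^c}$. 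Define the deterministic algorithm $\mathcal{A}'$ on input $x \in \{0,1\}^n$ to loop over all $z \in \{0,1\}^s$, simulate $\mathcal{A}(x)$ feeding it $z$ as its random bits, and output whichever of accept/reject occurs for at least half of the $z$. The current seed $z$ serves as the loop counter and occupies $O(s) = O(S)$ bits; a tally of accepting simulations occupies another $O(S)$ bits; and each individual simulation of $\mathcal{A}$ reuses $O(S)$ workspace. Hence $\mathcal{A}' \in \mathbf{DSPACE}(S)$ --- we place no restriction on its running time, which is $2^{O(S)} \cdot \poly(n, S)$.

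For correctness, let $L'$ be the language decided by $\mathcal{A}'$ and call an input $x$ \emph{good} if $\Pr[\mathcal{A}(x) \neq L(x)] \leq 2^{-S^{1/2}}$. Since $S(n) \geq \log n$, we have $2^{-S(n)^{1/2}} < 1/2$ for all sufficiently large $n$, so for every good $x$ of such a length strictly more than half of the strings $z \in \{0,1\}^s$ make $\mathcal{A}(x)$ output $L(x)$, whence $\mathcal{A}'(x) = L(x)$. Therefore $L$ and $L'$ can differ at length $n$ only at inputs that are not good, of which there are at most $2^{n/S^c}$, giving $\Pr_{x \in \{0,1\}^n}[x \in L \Delta L'] \leq 2^{n/S^c}/2^n = 2^{-n + n/S^c}$. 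For the finitely many remaining (small) lengths, hard-wiring the correct answers into $\mathcal{A}'$ costs no extra space and makes the bound hold for all $n$. Since $L$ was arbitrary, this proves the corollary.

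The only subtlety is that \cref{cor:bptisp-tm-1-linear-coins} assumes $S$ is constructible in time $\poly(n,S)$, whereas here we assume only that $S$ is space-constructible. This is harmless: time-constructibility is used there only to keep the \emph{running time} of the randomized algorithm within $\poly(n,S)$, and $\mathcal{A}'$ has no time budget to meet. Since $S \geq \log n$ is space-constructible, $\mathcal{A}'$ can compute $S(n)$ in space $O(S)$ up front and then carry out the space-$O(S)$ construction underlying \cref{cor:bptisp-tm-1-linear-coins} --- \cref{thm:simulate-s-ow-branching-program} applied to the S-OW branching program of the given Turing machine, followed by the Nisan-Zuckerman generator (\cref{thm:nz96}) to reduce the random-bit count to $O(S)$ --- before enumerating the $2^{O(S)}$ seeds. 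I expect no genuine obstacle; this corollary is an essentially mechanical consequence of \cref{cor:bptisp-tm-1-linear-coins}, the only substantive ingredients being the majority vote and the arithmetic of the mistake rate.
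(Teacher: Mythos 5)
Your proposal is correct and is essentially the paper's own proof: the paper derives this corollary in one line by running the algorithm of \cref{cor:bptisp-tm-1-linear-coins} on all possible random strings and taking a majority vote, exactly as you do. Your additional bookkeeping (the space accounting for the seed enumeration, the mistake-rate arithmetic, and the remark about constructibility) fills in details the paper leaves implicit but does not change the argument.
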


	\begin{proof}
		Simulate the algorithm of \cref{cor:bptisp-tm-1-linear-coins} on all possible random strings and take a majority vote.
	\end{proof}
	
	\subsection{Simulating branching programs with random access to random bits} \label{sec:s-r}
	
	We now move on to our second derandomization of Turing machines, as outlined in \cref{sec:tm-overview}. Recall that for a nonterminal vertex $v$ in a branching program, $i(v)$ is the index of the input that is queried by $v$.
	
	\begin{definition}
		An \emph{S-R} randomized branching program is a randomized branching program $\mathcal{P}$ such that for every edge $(v, v')$ between two nonterminal vertices, $|i(v) - i(v')| \leq 1$.
	\end{definition}

	In words, an S-R randomized branching program has sequential access to its input and random access to its random bits. This model is \emph{more general} than the S-OW model; the S-OW model corresponds more directly to the randomized Turing machine model. But studying the more general S-R model will help us derandomize Turing machines.
	
	We will give a randomness-efficient algorithm for simulating S-R randomized branching programs, roughly analogous to \cref{thm:simulate-branching-program,thm:simulate-s-ow-branching-program}. The simulation will only work well if the branching program has small length \emph{and} uses few random bits.
	
	Our simulation of S-R randomized branching programs is a fairly straightforward application of work by Kinne et al.\ \cite{kvms12}; this section is not technically novel. But it is useful to be able to compare the work by Kinne et al.\ \cite{kvms12} to our algorithms based on the ``out of sight, out of mind'' technique.

	Unlike \cref{thm:simulate-branching-program,thm:simulate-s-ow-branching-program}, our simulation of S-R branching programs will not work on a step-by-step basis, generating a distribution on vertices that approximates the behavior of the branching program. Instead, our simulation of S-R branching programs will only work for S-R branching programs that \emph{compute a Boolean function}. We now give the relevant definition.
	
	\begin{definition}
		Let $\mathcal{P}$ be a randomized branching program on $\{0, 1\}^n \times \{0, 1\}^m$. Suppose some vertex $v_0 \in V(\mathcal{P})$ is labeled as the \emph{start vertex}, and every terminal vertex of $\mathcal{P}$ is labeled with an \emph{output bit} $b \in \{0, 1\}$. In this case, we identify $\mathcal{P}$ with a function $\mathcal{P}: \{0, 1\}^n \times \{0, 1\}^m \to \{0, 1\}$ defined by
		\begin{equation}
			\mathcal{P}(x, y) = \text{the output bit labeling } \mathcal{P}(v_0; x, y).
		\end{equation}
		We say that \emph{$\mathcal{P}$ computes $f: \{0, 1\}^n \to \{0, 1\}$ with failure probability $\delta$} if for every $x \in \{0, 1\}^n$,
		\begin{equation}
			\Pr[\mathcal{P}(x, U_m) = f(x)] \geq 1 - \delta.
		\end{equation}
	\end{definition}
	
	Instead of assuming a time bound, it will be useful to assume a bound on the \emph{query complexity} of the branching program.
	\begin{definition}
		Let $\mathcal{P}$ be randomized branching program. The \emph{query complexity} of $\mathcal{P}$, denoted $\queries(\mathcal{P})$, is the maximum, over all paths $v_1, v_2, \dots, v_T$ through $\mathcal{P}$ consisting entirely of nonterminal vertices, of
		\begin{equation}
			1 + \#\{t \in \{2, 3, \dots, T\} : i(v_t) \neq i(v_{t - 1})\}.
		\end{equation}
	\end{definition}
	In words, $\queries(\mathcal{P})$ is the number of steps that $\mathcal{P}$ takes in which it queries a \emph{new} bit of its input, i.e., not the bit that it queried in the previous step. Trivially, $\queries(\mathcal{P}) \leq \length(\mathcal{P})$. The reader is encouraged to think of the distinction between $\queries(\mathcal{P})$ and $\length(\mathcal{P})$ as being a technicality that can be ignored on the first reading.
	
	We can now state our deterministic simulation theorem for S-R randomized branching programs. It consists of a method of deterministically generating coins for the branching program from its input.
	
	\begin{theorem} \label{thm:simulate-s-r-branching-program}
		There is a constant $\alpha > 0$ so that for every $n, m$ with $m \leq n/3$, there is a function $\mathsf{R}: \{0, 1\}^n \to \{0, 1\}^m$ with the following properties. Suppose $\mathcal{P}$ is an S-R randomized branching program on $\{0, 1\}^n \times \{0, 1\}^m$ that computes a function $f$ with failure probability $\delta$. Suppose $TSm \leq \alpha n^2$, where $T \stackrel{\normalfont\text{def}}{=} \queries(\mathcal{P})$ and $S \stackrel{\normalfont\text{def}}{=} \lceil \log \size(\mathcal{P}) \rceil$. Then
		\begin{equation}
			\density\{x \in \{0, 1\}^n : \mathcal{P}(x, \mathsf{R}(x)) \neq f(x)\} \leq 3\delta + m \cdot 2^{-\alpha n / m}.
		\end{equation}
		Furthermore, given $x$ and $m$, $\mathsf{R}(x)$ can be computed in space $O(\log n)$.
	\end{theorem}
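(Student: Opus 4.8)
The plan is to recognize the computation of $\mathcal{P}$ as a multiparty number‑in‑hand (NIH) communication protocol, with the input $x$ split into contiguous blocks held by the parties, and then to feed the input $x$ into the seed‑extending pseudorandom generator against such protocols constructed by Kinne, van Melkebeek and Shaltiel \cite{kvms12}. Concretely, $\mathsf{R}$ will be the ``extending part'' of that generator: it maps a seed $z\in\{0,1\}^n$ to $(z,\mathsf{R}(z))\in\{0,1\}^{n+m}$, and we run it with $z=x$. Since $\mathsf{R}$ is then just a component of their generator, its computability in space $O(\log n)$ follows from the space‑efficiency of the extractor/condenser ingredients, exactly as in \cref{thm:guv09}.

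Set‑up: fix a number of blocks $k$ on the order of $m$ (so each block has length $\Theta(n/m)$, which is what will produce the $2^{-\alpha n/m}$ error term and the factor $m$ in front of it, via a union bound over the $k$ parties), partition $[n]$ into contiguous blocks $I_1,\dots,I_k$, give party $j$ the string $x|_{I_j}$, and give party $1$ in addition the $m$ random bits $\rho$. The central structural step is to show that $(x,\rho)\mapsto\mathcal{P}(x,\rho)$ is computed by a \emph{deterministic} $k$‑party NIH protocol of communication $C=O(TS+km)$: party $1$ first broadcasts $\rho$ (costing $O(km)$ bits), and thereafter the parties cooperatively walk through $\mathcal{P}$, the party owning the block currently under the input head doing the local work and passing the current vertex of $\mathcal{P}$ (an $O(S)$‑bit message, which also encodes the index of the next input bit since $S\ge\log n$) to the next party whenever the input head crosses a block boundary. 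Here is exactly where sequential input access enters: the head only leaves the bit it is reading when $\mathcal{P}$ queries a \emph{new} input bit, so the number of boundary crossings is at most $\queries(\mathcal{P})=T$ — this is why the theorem is phrased with $\queries$ rather than $\length$, and why random access to the random bits does no harm, since after the initial broadcast all of $\rho$ is known to every party.

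Next I would substitute $C=O(TS+km)$ and $k=\Theta(m)$ into the construction of Kinne et al.\ \cite{kvms12}; the hypothesis $TSm\le\alpha n^2$ is precisely what keeps the generator's parameters in range, and it yields a seed‑extending generator $G:\{0,1\}^n\to\{0,1\}^{n+m}$, $G(x)=(x,\mathsf{R}(x))$, such that no $k$‑party NIH protocol of communication $\le C$ distinguishes $G(U_n)$ from $U_{n+m}$ with advantage more than $\epsilon=m\cdot 2^{-\alpha n/m}$. Applying this to the protocol above and to its negation controls $\Pr_x[\mathcal{P}(x,\mathsf{R}(x))=1]$ and $\Pr_x[\mathcal{P}(x,\mathsf{R}(x))=0]$ in terms of the true averages $\E_{x,\rho}[\mathcal{P}(x,\rho)]$ and $\E_{x,\rho}[1-\mathcal{P}(x,\rho)]$. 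Converting this global statement into the per‑input conclusion ``$\mathcal{P}(x,\mathsf{R}(x))=f(x)$ for all but a $3\delta+\epsilon$ fraction of $x$'' is the standard Goldreich–Wigderson‑style final move: partition the inputs by the value of $f$, observe that on each part the wrong‑answer event is dominated by one of the two fooled tests up to an additive $\delta$ (since $\mathcal{P}$ errs with probability $\le\delta$ on every fixed input), and add the two contributions together with $\epsilon$. The passage from $\delta$ to $3\delta$ is the price of not being able to test $[\mathcal{P}(x,\rho)\neq f(x)]$ directly — that event is not a low‑communication function of $(x,\rho)$, because $f$ depends on all the blocks — so the argument has to be routed through $\mathcal{P}$'s acceptance probability.

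The part I expect to require the most care is the parameter bookkeeping rather than any new idea: one must choose $k$ so that the $k$ per‑party error contributions of size $2^{-\Omega(n/k)}$, the $O(km)$ bits spent broadcasting $\rho$, and the $O(TS)$ bits spent simulating $\mathcal{P}$ all mesh under the single hypothesis $TSm\le\alpha n^2$ and deliver exactly $m\cdot 2^{-\alpha n/m}$; and one must phrase the low‑communication protocols (and the two‑part argument for the per‑input guarantee) carefully enough that the Kinne et al.\ generator, used entirely as a black box, really applies. Beyond that, the ingredients are routine simulation and union bounds, which is why this section introduces no genuinely new technique.
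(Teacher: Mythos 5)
There is a genuine gap, and it is in the communication bound at the heart of your reduction. Your number-in-hand decomposition into $k=\Theta(m)$ disjoint contiguous blocks forces a handoff every time the head crosses a block boundary, and since the head can oscillate across a single boundary, you correctly bound the number of handoffs only by $\queries(\mathcal{P})=T$, giving communication $C=O(TS+km)$. But a $k$-party NIH protocol on $n$ total input bits can compute \emph{any} function of $(x,\rho)$ with communication about $n$ (all but one party write their blocks on the blackboard), so no seed-extending generator can fool NIH protocols once $C\gtrsim n$. Under the hypothesis $TSm\le\alpha n^2$ alone, $TS$ can be as large as $\alpha n^2/m\gg n$ (e.g.\ $T\approx n^2/\log^4 n$, $S=\log n$, $m=\polylog n$, which is exactly the regime needed for \cref{cor:bptisp-tm-2}), so your protocol is not low-communication and the black-box appeal to \cite{kvms12} does not go through. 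Your hypothesis would effectively have to be $TSm\lesssim n$, losing a factor of $n$.

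The paper avoids this by using a $3$-party \emph{number-on-forehead} protocol (\cref{lem:branching-program-cc}): the input is cut into three thirds, party $1$ sees $x_2,x_3$ and party $3$ sees $x_1,x_2$, and the random string is a public coin rather than one party's private input. Because the parties' knowledge overlaps on the middle third, a handoff is needed only when the simulation passes from reading $x_1$ to reading $x_3$ (or back), and sequential access forces the head to traverse all $n_2\ge\lfloor n/3\rfloor$ positions of $x_2$ between consecutive handoffs. This is what turns the trivial bound of $T$ crossings into $O(T/n)$ handoffs and communication $O(TS/n)$, which is exactly what $TSm\le\alpha n^2$ compares against the BNS hardness $\beta(\ell-\log(1/\epsilon))$ of $\gip_\ell$ with $\ell=\Theta(n/m)$. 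Two smaller misalignments in your write-up stem from the same structural confusion: the factor $m$ in the error $m\cdot 2^{-\alpha n/m}$ comes from the hybrid/predictor argument over the $m$ output bits of $\mathsf{R}$ (each bit is a $\gip$ of three length-$\ell$ chunks), not from a union bound over parties; and $\mathsf{R}$ is the explicit GIP-based map, computable in $O(\log n)$ space directly, with no extractor or condenser involved. Your final step (fixing a best string $y_*$, routing through $\mathcal{P}$'s acceptance probability, and paying $3\delta$ by the triangle inequality) does match the paper's.
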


	The function $\mathsf{R}$ is based on a pseudorandom generator by Kinne et al.\ \cite{kvms12} for \emph{multiparty communication protocols}. In a \emph{public-coin randomized $3$-party NOF protocol} $\Pi$, there are three parties, three inputs $x_1, x_2, x_3$, and one random string $y$. Party $i$ knows $x_j$ for $j \neq i$, and all three parties know $y$. All parties have access to a blackboard. The protocol specifies who should write next as a function of what has been written on the blackboard so far and $y$. Eventually, the protocol specifies the output $\Pi(x_1, x_2, x_3, y)$, which should be a function of what has been written on the blackboard and $y$. The communication complexity of $\Pi$ is the maximum number of bits written on the blackboard over all $x_1, x_2, x_3, y$. A \emph{deterministic $3$-party NOF protocol} is just the case $|y| = 0$.
	
	Following Kinne et al.\ \cite{kvms12}, we rely on a $3$-party communication complexity lower bound by Babai et al.\ \cite{bns92}. For an integer $\ell \in \N$, define $\gip_{\ell}: (\{0, 1\}^{\ell})^3 \to \{0, 1\}$ to be the generalized inner product function, i.e.,
	\begin{equation}
	\gip_{\ell}(x, y, z) = \sum_{i = 1}^{\ell} x_i y_i z_i \text{ mod } 2.
	\end{equation}
	Babai et al.\ showed that the trivial communication protocol for $\gip_{\ell}$ is essentially optimal, even in the average-case setting.
	\begin{theorem}[\cite{bns92}] \label{thm:bns92}
		There is a constant $\beta > 0$ so that for every $\ell \in \N, \epsilon > 0$, if $\Pi$ is a deterministic $3$-party NOF protocol with
		\begin{equation}
		\Pr_{x, y, z}[\Pi(x, y, z) = \normalfont\text{GIP}_{\ell}(x, y, z)] \geq \frac{1}{2} + \epsilon,
		\end{equation}
		then the communication complexity of $\Pi$ is at least $\beta \cdot(\ell - \log(1/\epsilon))$.
	\end{theorem}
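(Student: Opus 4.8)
The plan is to prove \cref{thm:bns92} by the \emph{discrepancy method} of Babai, Nisan, and Szegedy \cite{bns92}. First I would pass to the $\pm 1$ encoding $g = (-1)^{\gip_{\ell}} \colon (\{0, 1\}^{\ell})^3 \to \{-1, +1\}$, so that the hypothesis $\Pr_{x, y, z}[\Pi(x, y, z) = \gip_{\ell}(x, y, z)] \geq \frac{1}{2} + \epsilon$ becomes $\E_{x, y, z}[g(x, y, z) \cdot (-1)^{\Pi(x, y, z)}] \geq 2\epsilon$. The first ingredient is the standard structural fact that a deterministic $3$-party NOF protocol with communication cost $C$ partitions $(\{0, 1\}^{\ell})^3$ into at most $2^C$ \emph{cylinder intersections} --- sets of the form $\{(x, y, z) : (y, z) \in T_1,\ (x, z) \in T_2,\ (x, y) \in T_3\}$ --- on each of which the transcript, hence $(-1)^{\Pi}$, is constant. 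This is proved by induction on transcript length: whenever party $i$ writes a bit, that bit is a function only of the transcript so far and of the two inputs visible to $i$, so it simply refines the $i$-th cylinder. Setting $\mathrm{disc}(g) = \max_R \left| \E_{x, y, z}[g(x, y, z) \cdot \mathbf{1}_R(x, y, z)] \right|$, the maximum over cylinder intersections $R$, and summing the advantage over the at most $2^C$ pieces of the partition, we obtain $2\epsilon \leq \E[g \cdot (-1)^{\Pi}] \leq 2^C \cdot \mathrm{disc}(g)$, hence $C \geq \log(1/\mathrm{disc}(g)) - \log(1/\epsilon) + 1$.

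It remains to bound $\mathrm{disc}(\gip_{\ell})$, which is the crux of the argument. The tool is the Babai--Nisan--Szegedy ``cube'' inequality: for every $g \colon (\{0, 1\}^{\ell})^3 \to \{-1, +1\}$,
\begin{equation}
	\mathrm{disc}(g)^{8} \leq \E_{\substack{x^0, x^1, y^0, y^1 \\ z^0, z^1}} \left[ \prod_{a, b, c \in \{0, 1\}} g(x^a, y^b, z^c) \right],
\end{equation}
which follows from three successive applications of the Cauchy--Schwarz inequality, one per cylinder: each step discards one cylinder factor (which lies in $[0, 1]$), introduces a second independent copy of the corresponding variable, and doubles the product of copies of $g$, at the price of a square root. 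Specializing to $g = (-1)^{\gip_{\ell}}$, which factors coordinatewise as $g(x, y, z) = \prod_{j = 1}^{\ell} (-1)^{x_j y_j z_j}$, the right-hand side becomes $\lambda^{\ell}$, where $\lambda = \E_{A, B, C}\left[(-1)^{ABC}\right]$ for independent uniform bits $A, B, C$; here one uses the identity $\bigoplus_{a, b, c \in \{0, 1\}} x^a y^b z^c = (x^0 \oplus x^1)(y^0 \oplus y^1)(z^0 \oplus z^1)$ to collapse the eight-fold product within each coordinate. Since $ABC = 1$ with probability exactly $1/8$, we get $\lambda = 3/4$, and therefore $\mathrm{disc}(\gip_{\ell}) \leq (3/4)^{\ell / 8} = 2^{-\gamma \ell}$ with $\gamma = \frac{1}{8}\log_2(4/3) > 0$.

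Combining the two parts, every deterministic $C$-bit protocol with advantage $\epsilon$ on $\gip_{\ell}$ satisfies $C \geq \gamma \ell - \log(1/\epsilon) + 1$, which yields the bound $C \geq \beta(\ell - \log(1/\epsilon))$ of \cref{thm:bns92} for a suitable absolute constant $\beta > 0$. The step I expect to be the main obstacle is the careful proof of the cube inequality: one has to keep precise track of which variable is ``doubled'' at each Cauchy--Schwarz step, verify that the quantity being manipulated always has the form ``an average of a product of cylinder factors against a product of shifted copies of $g$'' so that the next step applies, and tally the exponents so that exactly three steps produce the eighth power (rather than some other power, which would only change $\gamma$ and hence $\beta$). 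Everything else --- the partition into cylinder intersections, the coordinatewise factorization, the evaluation $\lambda = 3/4$, and the concluding arithmetic --- is routine.
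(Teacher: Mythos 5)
Your approach is the right one: the paper gives no proof of \cref{thm:bns92} (it is imported from \cite{bns92}), and the discrepancy method plus the BNS cube inequality is exactly how that paper proves it. The partition of the input space into at most $2^C$ cylinder intersections, the bound $2\epsilon \leq 2^C \cdot \mathrm{disc}(\gip_\ell)$, the cube inequality $\mathrm{disc}(g)^8 \leq \E[\prod_{a,b,c} g(x^a,y^b,z^c)]$ via three applications of Cauchy--Schwarz, the coordinatewise factorization, and the evaluation $\lambda = 3/4$ are all correct, and they give $C \geq \gamma\ell - \log_2(1/\epsilon) + 1$ with $\gamma = \frac{1}{8}\log_2(4/3)$.

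The genuine gap is your final sentence. Since $\gamma < 1$, the bound $C \geq \gamma\ell - \log(1/\epsilon)$ does \emph{not} yield $C \geq \beta(\ell - \log(1/\epsilon))$ for any constant $\beta > 0$ uniformly over all $\epsilon > 0$: when $\gamma\ell < \log(1/\epsilon) < \ell$, your bound is vacuous while the target is $\Omega(\ell)$. Indeed, the statement as written in the paper is false in that regime: the trivial protocol that always outputs $0$ satisfies $\Pr[\Pi = \gip_\ell] = \frac{1}{2} + \frac{1}{2}(3/4)^{\ell}$, so $\epsilon = \frac12 (3/4)^\ell$ gives $\log_2(1/\epsilon) \approx 0.415\,\ell$ and hence $\ell - \log_2(1/\epsilon) = \Omega(\ell)$, yet the communication is $O(1)$. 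So you should not expect to close this step; what you have proved is the correct form of the BNS bound, $C \geq \gamma\ell - \log(1/\epsilon) - O(1)$, and the two forms agree up to the constant only in the regime $\log(1/\epsilon) \leq c\ell$ for a small constant $c$ (say $c = \gamma/2$, where $\gamma\ell - \log(1/\epsilon) \geq \frac{\gamma}{2}(\ell - \log(1/\epsilon))$). That restricted regime is the only one the paper ever uses: in \cref{lem:gen-fools-cc} and the proof of \cref{thm:simulate-s-r-branching-program}, $\epsilon$ is chosen as $2^{-\alpha n/m}$ with $\alpha$ a sufficiently small constant while $\ell \approx n/(3m)$, so $\log(1/\epsilon)$ is a small constant fraction of $\ell$ and your bound suffices. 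The right fix is to restate \cref{thm:bns92} in the form you derived (or add the hypothesis $\log(1/\epsilon) \leq c\ell$), not to look for a stronger argument.
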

	
	To define $\mathsf{R}$, let $x \in \{0, 1\}^n$. Partition $n = n_1 + n_2 + n_3$, where $n_i \geq \lfloor n/3 \rfloor$ for each $i$. Correspondingly partition $x = x_1 \circ x_2 \circ x_3$, where $|x_i| = n_i$. Define
	\begin{equation}
		\ell = \left\lfloor\frac{\lfloor n/3 \rfloor}{m}\right\rfloor,
	\end{equation}
	so that $\ell \geq 1$. For $i \in [3]$ and $j \in [m]$, let $x_{ij}$ be the $j$th $\ell$-bit substring of $x_i$. (Note that due to roundoff errors, for some values of $n$, some bits of $x$ are not represented in any $x_{ij}$.) Then we define
	\begin{equation}
		\mathsf{R}(x) = \text{GIP}_{\ell}(x_{11}, x_{21}, x_{31}) \circ \dots \circ \text{GIP}_{\ell}(x_{1m}, x_{2m}, x_{3m}) \in \{0, 1\}^m.
	\end{equation}
	
	Kinne et al.\ observed that $x \mapsto (x, \mathsf{R}(x))$ is a pseudorandom generator that fools $3$-party NOF protocols \cite{kvms12}. For clarity, we reproduce the argument here.
	
	\begin{lemma} \label{lem:gen-fools-cc}
		Suppose $\Pi: \{0, 1\}^{n_1} \times \{0, 1\}^{n_2} \times \{0, 1\}^{n_3} \times \{0, 1\}^m \to \{0, 1\}$ is a public-coin randomized $3$-party NOF protocol. Suppose that for some $\epsilon > 0$, $\Pi$ uses less than $\beta \cdot (\ell - \log(1/\epsilon))$ bits of communication, where $\beta$ is the constant of \cref{thm:bns92}. Then
		\begin{equation}
			\left|\Pr_{x_1, x_2, x_3, y}[\Pi(x_1, x_2, x_3, y) = 1] - \Pr_{x_1, x_2, x_3}[\Pi(x_1, x_2, x_3, \mathsf{R}(x_1, x_2, x_3)) = 1]\right| < \epsilon m.
		\end{equation}
	\end{lemma}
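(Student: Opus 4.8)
The plan is to prove \cref{lem:gen-fools-cc} by a hybrid argument over the $m$ blocks, converting any block where $x \mapsto (x, \mathsf{R}(x))$ fails to fool $\Pi$ into a deterministic $3$-party NOF protocol for $\gip_\ell$ that is too good to exist, contradicting \cref{thm:bns92}. First I would introduce $m + 1$ hybrid distributions $D_0, \dots, D_m$ on pairs $(x, w) \in \{0,1\}^{n_1 + n_2 + n_3} \times \{0,1\}^m$: in $D_k$, the string $x$ is uniform, the first $k$ coordinates of $w$ are $\gip_\ell(x_{1j}, x_{2j}, x_{3j})$ for $j \le k$, and the last $m - k$ coordinates of $w$ are independent uniform bits. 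Then $D_0$ is $(U_n, U_m)$, the distribution of a genuinely random public coin, while $D_m$ is exactly the distribution of $(x, \mathsf{R}(x))$ for uniform $x$. By the triangle inequality it suffices to prove that for every $k \in [m]$, $\left|\Pr_{D_{k-1}}[\Pi = 1] - \Pr_{D_k}[\Pi = 1]\right| < \epsilon$, since summing over the $m$ blocks then gives the claimed bound $\epsilon m$.

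Next, suppose toward a contradiction that some block $k$ violates this. I would build a public-coin randomized $3$-party NOF protocol $\Pi'$ for $\gip_\ell$. Given a $\gip_\ell$ instance $(a, b, c) \in (\{0,1\}^\ell)^3$ presented in the NOF manner, the parties use public coins to sample all coordinates of $x$ outside block $k$ (including the roundoff padding bits), the coordinates $w_{k+1}, \dots, w_m$, and a fresh uniform ``guess'' bit $z$; they plant $x_{1k} = a$, $x_{2k} = b$, $x_{3k} = c$; they compute $w_j = \gip_\ell(x_{1j}, x_{2j}, x_{3j})$ for $j < k$ (possible since every one of those $x_{ij}$ is a public coin), set $w_k = z$, and then simulate $\Pi$ on input $(x_1, x_2, x_3, w)$. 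The step I expect to require the most care is checking that the NOF access structure is respected: $\Pi$'s party $i$ needs $\{x_{i'} : i' \neq i\}$ together with $w$, and for $i' \neq i$ the block-$k$ part of $x_{i'}$ is precisely the piece of $(a, b, c)$ that the $i$-th party of the $\gip_\ell$ instance is allowed to see, while everything else ($w$, and the non-block-$k$ parts of all three $x_{i'}$'s) is public — so party $i$ can indeed carry out $\Pi$'s $i$-th role, and it never needs its own input $x_i$. Consequently the communication of $\Pi'$ equals that of $\Pi$, hence is less than $\beta(\ell - \log(1/\epsilon))$. Moreover the induced distribution on $\Pi$'s input is exactly $D_{k-1}$ when $w_k = z$ and exactly $D_k$ when $w_k = \gip_\ell(a, b, c)$.

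Finally, I would convert the block-$k$ distinguishing advantage into a predictor for $\gip_\ell(a, b, c)$ by the standard Yao argument. Assuming without loss of generality that $\Pr_{D_k}[\Pi = 1] - \Pr_{D_{k-1}}[\Pi = 1] \ge \epsilon$ (otherwise negate $\Pi$'s output), the protocol $\Pi'$ outputs $z$ if the simulated run of $\Pi$ accepts and $\bar z$ otherwise. Writing $q = \Pr[\Pi = 1 \mid w_k = \gip_\ell(a,b,c)]$ and $r = \Pr[\Pi = 1 \mid w_k = \overline{\gip_\ell(a,b,c)}]$ and using that $z$ is uniform and independent of $(a,b,c)$, one gets $\Pr_{D_{k-1}}[\Pi = 1] = \tfrac12 q + \tfrac12 r$, so $q - r \ge 2\epsilon$, and the predictor succeeds with probability $\tfrac12 q + \tfrac12(1 - r) = \tfrac12 + \tfrac12(q - r) \ge \tfrac12 + \epsilon$ over a uniform $\gip_\ell$ instance and the public coins. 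Averaging, some fixing of all public coins yields a \emph{deterministic} $3$-party NOF protocol that computes $\gip_\ell$ with probability at least $\tfrac12 + \epsilon$ on a uniform input and uses fewer than $\beta(\ell - \log(1/\epsilon))$ bits of communication, contradicting \cref{thm:bns92}. (If $\ell - \log(1/\epsilon) \le 0$ the hypothesis on $\Pi$'s communication is vacuous, so we may assume it is positive.) This establishes each per-block bound and hence the lemma.
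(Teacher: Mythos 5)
Your proposal is correct and is essentially the paper's proof: the paper invokes Yao's distinguisher-to-predictor argument as a black box to obtain an index $i$ and a predictor with advantage $\delta/m$, embeds the $\gip_\ell$ instance into block $i$ exactly as you do (treating everything outside that block as public coins), and fixes the randomness before applying \cref{thm:bns92}. You have simply unfolded the hybrid argument and the NOF access-structure check in full detail, including the correct accounting that yields advantage $\tfrac12 + \epsilon$ rather than $\tfrac12 + \epsilon/2$.
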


	\begin{proof}
		Let
		\begin{equation}
			\delta = \left|\Pr_{x_1, x_2, x_3, y}[\Pi(x_1, x_2, x_3, y) = 1] - \Pr_{x_1, x_2, x_3}[\Pi(x_1, x_2, x_3, \mathsf{R}(x_1, x_2, x_3)) = 1]\right|.
		\end{equation}
		By Yao's distinguisher-to-predictor argument \cite{yao82}, there is some index $i \in [m]$ and a protocol $\Pi': \{0, 1\}^{n_1} \times \{0, 1\}^{n_2} \times \{0, 1\}^{n_3} \times \{0, 1\}^{i - 1} \to \{0, 1\}$ so that
		\begin{equation}
			\Pr_{x_1, x_2, x_3}[\Pi'(x_1, x_2, x_3, \mathsf{R}(x_1, x_2, x_3) \vert_{[i - 1]}) = \mathsf{R}(x_1, x_2, x_3)_i] \geq \frac{1}{2} + \frac{\delta}{m}.
		\end{equation}
		The protocol $\Pi'$ is a public-coin randomized $3$-party NOF protocol that still uses less than $\beta \cdot (\ell - \log(1/\epsilon))$ bits of communication, since it merely involves simulating $\Pi$ with certain input/coin bits fixed to certain values and possibly negating the output. This immediately implies a protocol for $\gip_{\ell}$ with the same parameters with advantage $\delta/m$. There is some way to fix the randomness to preserve advantage, so by \cref{thm:bns92}, $\delta/m < \epsilon$.
	\end{proof}
	
	The connection between S-R randomized branching programs and 3-party communication protocols is given by the following lemma.
	
	\begin{lemma} \label{lem:branching-program-cc}
		There is a public-coin randomized $3$-party NOF protocol $\Pi: \{0, 1\}^{n_1} \times \{0, 1\}^{n_2} \times \{0, 1\}^{n_3} \times \{0, 1\}^m \to \{0, 1\}$ such that
		\begin{equation}
			\Pi(x_1, x_2, x_3, y) = \mathcal{P}(x_1 \circ x_2 \circ x_3, y),
		\end{equation}
		and $\Pi$ uses only $O(\frac{TS}{n})$ bits of communication.
	\end{lemma}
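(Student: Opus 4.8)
The plan is to have the three parties jointly simulate $\mathcal{P}$ step by step, always keeping in charge a party who can actually read the input cell currently under $\mathcal{P}$'s head. Since party $i$ sees $x_j$ for all $j \neq i$, party $3$ knows both $x_1$ and $x_2$ and party $1$ knows both $x_2$ and $x_3$; all parties know $\mathcal{P}$, its start vertex $v_0$, and the public string $y$. Write $R_1 = \{1,\dots,n_1\}$, $R_2 = \{n_1+1,\dots,n_1+n_2\}$, $R_3 = \{n_1+n_2+1,\dots,n\}$ for the three input regions. The protocol keeps a ``mode'': in mode L, party $3$ simulates $\mathcal{P}$ (which it can do while the head stays in $R_1 \cup R_2$); in mode R, party $1$ simulates (valid while the head stays in $R_2 \cup R_3$). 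The parties start in the mode determined by $i(v_0)$; at each step the active party follows the edge labeled $x_{i(v)}y_{j(v)}$, which it can compute from the relevant bit of $x$ and from $y$; and when $\mathcal{P}$ halts, the active party writes the output bit. By construction the value written equals $\mathcal{P}(x_1 \circ x_2 \circ x_3, y)$.

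The key point is that the middle region $R_2$ is visible to \emph{both} designated simulators, so a handoff is forced only when the active party truly gets stuck — in mode L when the head first enters $R_3$, in mode R when it first enters $R_1$ — and each handoff costs only the current vertex of $\mathcal{P}$, i.e.\ $\lceil \log \size(\mathcal{P})\rceil = S$ bits (plus $O(1)$ bits to flip the mode). In particular, arbitrary oscillation of the head near a single region boundary costs nothing.

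To bound the number of handoffs I would argue: they alternate L$\to$R and R$\to$L; right after an L$\to$R handoff the head sits at the first cell of $R_3$, and the next (R$\to$L) handoff needs it to reach the last cell of $R_1$, so — since $\mathcal{P}$ has sequential input access and its head moves by at most one cell per step — the head must pass through every cell of $R_2$ in between, consuming at least $n_2 \ge \lfloor n/3\rfloor$ of the at most $T = \queries(\mathcal{P})$ input queries (and symmetrically for R$\to$L to L$\to$R). Hence there are $O(T/n)$ handoffs and $O(TS/n)$ bits of communication.

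This is essentially the standard communication-complexity view of a space-bounded machine with a one-dimensional input head, so I do not expect a real obstacle; the only thing to watch is the bookkeeping in the last step — verifying that a traversal of $R_2$ really consumes $\lfloor n/3\rfloor$ distinct queries (this is exactly where \emph{sequential} rather than random input access is used) and that the lower-order additive terms (the initial/final segments and the $O(1)$ mode bits) are absorbed into $O(TS/n)$ in the parameter regime $TSm \le \alpha n^2$ where the lemma is applied.
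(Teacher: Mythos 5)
Your proposal is correct and is essentially identical to the paper's proof: parties $1$ and $3$ alternate simulating $\mathcal{P}$, handing off the current state ($S$ bits) only when the head enters the region the active party cannot see, and each pair of consecutive handoffs forces a full traversal of the middle block, so there are at most $O(T/n_2) = O(T/n)$ handoffs. Your observation that a traversal of $R_2$ consumes at least $n_2$ \emph{queries} (not just steps) is a worthwhile point of care, since $T$ here denotes $\queries(\mathcal{P})$.
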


	\begin{proof}
		Parties $1$ and $3$ alternate simulating the operation of $\mathcal{P}$. If party $1$ is simulating and the program reads from the first $n_1$ bits of the input, party $1$ sends the state to party $3$. Similarly, if party $3$ is simulating and the program reads from the last $n_3$ bits of the input, party $3$ sends the state to party $1$. Each such transition indicates that the program must have spent at least $n_2$ steps traversing the middle $n_2$ bits of the input. Therefore, the total number of such transitions is at most $\frac{T}{n_2}$.
	\end{proof}

	Given \cref{lem:gen-fools-cc,lem:branching-program-cc}, \cref{thm:simulate-s-r-branching-program} follows by a lemma by Kinne et al.\ \cite[Lemma 1]{kvms12}. For clarity, we reproduce the argument here.

	\begin{proof}[Proof of \cref{thm:simulate-s-r-branching-program}]
		The best case is at least as good as the average case, so there is some string $y_* \in \{0, 1\}^m$ such that
		\begin{equation}
			\Pr_{x \in \{0, 1\}^n}[\mathcal{P}(x, y_*) \neq f(x)] \leq \delta.
		\end{equation}
		Define $g: \{0, 1\}^n \times \{0, 1\}^m \to \{0, 1\}$ by
		\begin{equation}
			g(x, y) = \begin{cases}
				1 & \text{if } \mathcal{P}(x, y) = \mathcal{P}(x, y_*) \\
				0 & \text{otherwise.}
			\end{cases}
		\end{equation}
		Think of $x \in \{0, 1\}^n$ as $x = x_1 \circ x_2 \circ x_3$, like in the definition of $\mathsf{R}$. Then by \cref{lem:branching-program-cc}, $g$ can be computed by a $3$-party NOF protocol using $O(\frac{TS}{n})$ bits of communication. By choosing $\alpha$ small enough and setting $\epsilon = 2^{-\alpha n/m}$, this protocol for $f$ will use fewer than $\beta (\ell - \log(1/\epsilon))$ bits of communication. Therefore, by \cref{lem:gen-fools-cc},
		\begin{equation}
			\Pr_x[\mathcal{P}(x, \mathsf{R}(x)) \neq \mathcal{P}(x, y_*)] \leq \Pr_{x, y}[\mathcal{P}(x, y) \neq \mathcal{P}(x, y_*)] + \epsilon m.
		\end{equation}
		Therefore,
		\begin{align}
			\Pr_x[\mathcal{P}(x, \mathsf{R}(x)) \neq f(x)] &\leq \Pr_x[\mathcal{P}(x, y_*) \neq f(x)] + \Pr_x[\mathcal{P}(x, \mathsf{R}(x)) \neq \mathcal{P}(x, y_*)] \\
			&\leq \delta + \Pr_{x, y}[\mathcal{P}(x, y) \neq \mathcal{P}(x, y_*)] + \epsilon m \\
			&\leq \delta + \Pr_{x, y}[\mathcal{P}(x, y) \neq f(x)] + \Pr_x[\mathcal{P}(x, y_*) \neq f(x)] + \epsilon m \\
			&\leq \delta + \delta + \delta + \epsilon m.
		\end{align}
		Obviously, $\mathsf{R}(x)$ can be computed in $O(\log n)$ space.
	\end{proof}
	
	\subsection{Randomness-efficient amplification for branching programs} \label{sec:amplification}
	
	We will use a space-efficient expander walk algorithm by Gutfreund and Viola \cite{gv04}.
	\begin{theorem}[\cite{gv04}] \label{thm:gv04}
		For every $s \in \N$, there is a constant-degree expander graph $G$ on vertex set $\{0, 1\}^s$. Furthermore, there is an algorithm $\mathsf{GVWalk}$ such that if $y \in \{0, 1\}^s$ is a vertex and $e_1, e_2, \dots, e_r \in \{0, 1\}^{O(1)}$ are edge labels, then $\mathsf{GVWalk}(y, e_1, e_2, \dots, e_r)$ outputs the vertex reached by starting at $y$ and taking a walk by following the edge labels $e_1, e_2, \dots, e_r$. The algorithm $\mathsf{GVWalk}$ runs in space $O(\log s + \log r)$.
	\end{theorem}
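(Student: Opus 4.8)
The plan is to produce a strongly explicit constant-degree expander family on $\{0,1\}^s$ and then argue that a length-$r$ walk through it can be evaluated without ever storing a full $s$-bit vertex name on the work tape. Concretely, I would take the family obtained from the zig-zag product: fix a constant-size base graph $H$ (degree $D$, second-largest eigenvalue bounded away from $1$, on $D^2$ vertices), and build $G_s$ by the usual recursion --- square, then zig-zag with $H$ --- with a routine re-indexing so that $V(G_s) = \{0,1\}^s$ and the degree is a fixed constant. The spectral analysis is exactly the standard one (the gap of $G_s$ is bounded below by an absolute constant independent of $s$), so the real work is the space bound. The structural feature to exploit is that the rotation map $\operatorname{Rot}_{G_s}$ is an $O(1)$-depth combination of copies of $\operatorname{Rot}_{G_{s/2}}$ together with $O(1)$-size table lookups into $H$ and coordinate-permutation bookkeeping; unrolling the recursion, $\operatorname{Rot}_{G_s}$ becomes a depth-$O(\log s)$ ``formula'' in the base rotation map.

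Given that, I would first handle a single step: to compute one specified output bit of $\operatorname{Rot}_{G_s}(v,e)$, descend the depth-$O(\log s)$ recursion, keeping at each level only a constant amount of state (which subproblem we are in, and which $O(1)$ bits of the current vertex name are live), re-reading bits of $v$, of the edge label, and of the base-graph tables from the input tape on demand; this gives any output bit of one step in space $O(\log s)$. Then I would compute the $k$-th bit of the final vertex $v_r$ by recursion on the walk length rather than by materializing $v_1,\dots,v_{r-1}$: bit $k$ of $v_r$ is determined by $O(1)$ bits of $v_{r-1}$ (together with $e_r$ and the tables), each of which is in turn determined by $O(1)$ bits of $v_{r-2}$, and so on down to $v_0 = y$, which lies on the read-only input tape. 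A naive depth-$r$ stack for this back-trace would cost $\Theta(r)$ space, so the point --- following Gutfreund and Viola \cite{gv04} --- is to arrange the construction so that at each walk-step only a bounded-width ``frontier'' of bit-positions is live and the dependence threads can be resolved by divide-and-conquer on the step index, leaving only an $O(\log r)$-bit pointer plus the $O(\log s)$ per-step workspace. Putting the pieces together, $\mathsf{GVWalk}(y,e_1,\dots,e_r)$ emits the $s$ bits of $v_r$ one at a time using total work space $O(\log s + \log r)$.

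The main obstacle is exactly that last point: a crude locality argument produces a dependency cone of size $2^{\Theta(r)}$ and hence $\Theta(r)$ space for a depth-first back-trace, which is useless, and the reason one can instead get $O(\log r)$ is the permutation-network-like regularity of the zig-zag rotation map --- making that regularity explicit, so that the back-trace is effectively a walk along a bounded-width frontier indexable by an $O(\log r)$-bit pointer, is the delicate step. Everything else is routine: the expansion bound is the standard zig-zag computation, the base-graph manipulations are $O(1)$-size, and the $O(\log s)$ bound for a single step follows directly from the $O(\log s)$ recursion depth. In practice I would simply cite the Gutfreund--Viola construction \cite{gv04} and carry out this space accounting on top of it.
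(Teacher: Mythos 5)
The paper does not prove this statement; it imports it wholesale from Gutfreund and Viola \cite{gv04}, and the construction there is not the one you sketch. Gutfreund and Viola use a Margulis/Gabber--Galil-type expander on $\mathbb{Z}_m \times \mathbb{Z}_m$ (with $m \approx 2^{s/2}$), whose constant-degree neighbor maps are affine transformations $v \mapsto Av + b \bmod m$ with constant integer coefficients. A length-$r$ walk is then the composition of $r$ such affine maps, i.e.\ $v \mapsto (A_{r}\cdots A_{1})v + c \bmod m$, and the whole point of their argument is that this composition can be evaluated via iterated multiplication of constant integer matrices (entries of bit-length $O(r)$) followed by division/remainder mod $m$ --- operations known to lie in logspace (indeed in uniform $\mathbf{TC}^0$) by the results of Chiu--Davida--Litow and Hesse. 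That algebraic structure is what yields space $O(\log s + \log r)$ rather than the naive $O(s + \log r)$ needed to store a current vertex.

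Your zig-zag route has a genuine gap exactly at the step you flag as ``delicate.'' For the zig-zag/replacement-product recursion, a single step of the walk does \emph{not} have a bounded-width dependency structure on the bits of the vertex name: at each level of the recursion the low-order coordinates are updated by a rotation map that is selected by the high-order coordinates, so after unrolling, every output bit of one step depends on essentially all $s$ input bits, and after $r$ steps the dependency cone is the whole vertex name at every intermediate time. There is no ``bounded-width frontier indexable by an $O(\log r)$-bit pointer'' to walk along, and no $O(\log s + \log r)$-space evaluation of zig-zag walks is known; the honest bound for that construction is $O(s + \log r)$ (store and update the current vertex), which is too weak for the paper's application in \cref{lem:s-ow-to-s-r}, where $s = \Theta(S^2)$ but only $O(\log S + \log r)$ space is budgeted for $\mathsf{GVWalk}$. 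So the spectral analysis and the $O(\log s)$-space evaluation of a \emph{single} rotation map are fine, but the claimed reduction from $r$ steps to $O(\log r)$ space does not go through for zig-zag; you need the affine structure of the Margulis-type expander (or some other algebraic family) plus logspace iterated integer multiplication to get the stated bound.
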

	
	Recall that we are working toward derandomizing the class $\mathbf{BPTISP}_\text{TM}(T, S)$ for all $TS^2 \leq o(n^2 / \log n)$. This class corresponds to branching programs on $\{0, 1\}^n \times \{0, 1\}^T$ that compute some function with failure probability $1/3$. But \cref{thm:simulate-s-r-branching-program} requires that the branching program use at most $\frac{\alpha n^2}{TS}$ random bits. Furthermore, the failure probability of the branching program governs the mistake rate of the derandomization.
	
	We can overcome these two difficulties because randomized Turing machines correspond to \emph{S-OW} randomized branching programs (i.e., programs that have sequential access to the input and one-way access to the random bits), whereas \cref{thm:simulate-s-r-branching-program} applies to the more powerful S-R model (i.e., programs that have sequential access to the input and \emph{random} access to the random bits). An S-OW branching program can be simulated by an S-R branching program using very few random bits by applying Nisan's generator. The following lemma combines this idea with a random walk on an expander graph (\cref{thm:gv04}) for amplification. This is the same technique that Fortnow and Klivans used to prove that $\mathbf{BPL} \subseteq \mathbf{L}/O(n)$ \cite{fk06}.
	
	\begin{lemma} \label{lem:s-ow-to-s-r}
		Suppose $\mathcal{P}$ is an S-OW randomized branching program on $\{0, 1\}^n \times \{0, 1\}^T$ that computes a function $f: \{0, 1\}^n \to \{0, 1\}$ with failure probability $1/3$. Let $S = \log \size(\mathcal{P})$. For every $\delta > 0$, there is an S-R branching program $\mathcal{P}'$ on $\{0, 1\}^n \times \{0, 1\}^m$ that computes $f$ with failure probability $\delta$ such that
		\begin{align}
			\queries(\mathcal{P}') &\leq O((\queries(\mathcal{P}) + n) \log(1/\delta)), \\
			\log \size(\mathcal{P}') &\leq O(S + \log \log(1/\delta)), \\
			m &\leq O(S \log T + \log(1/\delta)).
		\end{align}
		Furthermore, given $\mathcal{P}$, $\delta$, and a vertex $v \in V(\mathcal{P}')$, the neighborhood of $v$ can be computed in time\footnote{As usual, we assume that the graph of $\mathcal{P}$ is encoded in adjacency list format. We also assume that the start vertex $v_0$ is designated in a way that allows it to be computed in the specified time and space.} $\poly(S, \log(1/\delta))$ and space $O(S + \log \log(1/\delta))$.
	\end{lemma}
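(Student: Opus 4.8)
The plan is the two-stage construction that Fortnow and Klivans used to prove $\mathbf{BPL} \subseteq \mathbf{L}/O(n)$ \cite{fk06}: first shrink the number of random bits with Nisan's generator, then amplify the success probability with a random walk on an expander. The first stage crucially exploits that $\mathcal{P}$, being S-OW, reads its random bits in a single left-to-right pass, so that $\mathcal{P}$ is in particular an R-OW branching program and Nisan's generator (\cref{thm:nis92}) fools it.

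\emph{Stage 1 (Nisan's generator).} We may assume that on every path $\mathcal{P}$ consumes at most $\min\{T, 2^S\}$ random bits, since $\length(\mathcal{P}) \le \size(\mathcal{P}) = 2^S$. Instantiate $\mathsf{NisGen}$ from \cref{thm:nis92} with these parameters and constant error $\epsilon_0 = 1/100$; its seed length is $s_0 = O(S \log T)$. Let $\mathcal{P}_1$ be the branching program on $\{0,1\}^n \times \{0,1\}^{s_0}$ that simulates $\mathcal{P}$ step for step, except that whenever $\mathcal{P}$ reads random bit $j$, $\mathcal{P}_1$ computes $\mathsf{NisGen}(z)_j$ from its own $s_0$-bit random string $z$ using two-way access to $z$ --- this is why we need the S-R model rather than S-OW. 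By \cref{thm:nis92} each such bit costs $\poly(S)$ time and $O(S)$ space, so the extra state is $O(S)$ bits and $\log \size(\mathcal{P}_1) = O(S)$; during this internal computation $\mathcal{P}_1$ holds its input head fixed, so it remains S-R and incurs no new input queries, giving $\queries(\mathcal{P}_1) \le \queries(\mathcal{P})$. By the defining property of $\mathsf{NisGen}$, $\mathcal{P}_1$ computes $f$ with failure probability at most $1/3 + \epsilon_0 < 0.34$.

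\emph{Stage 2 (expander-walk amplification).} Let $G$ be the constant-degree expander on $\{0,1\}^{s_0}$ from \cref{thm:gv04}, put $r = O(\log(1/\delta))$, and let $\mathcal{P}'$ take the random string $(w, e_1, \dots, e_r)$ with $w \in \{0,1\}^{s_0}$ and each $e_t \in \{0,1\}^{O(1)}$, so $m = s_0 + O(r) = O(S \log T + \log(1/\delta))$. On input $x$, $\mathcal{P}'$ runs $\mathcal{P}_1(x, w_t)$ for $t = 1, \dots, r$, where $w_t = \mathsf{GVWalk}(w, e_1, \dots, e_t)$, tallies how many runs output $1$, and outputs the majority; each bit of $w_t$ needed during the $t$-th run is recomputed on the fly by rerunning $\mathsf{GVWalk}$, which by \cref{thm:gv04} costs $O(\log s_0 + \log r) = O(\log S + \log\log(1/\delta))$ space, so $\mathcal{P}'$ never stores a whole $w_t$. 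Between runs $\mathcal{P}'$ walks its input head back to $\mathcal{P}_1$'s start position, at a cost of at most $n$ new-bit queries, so $\queries(\mathcal{P}') \le r \cdot (\queries(\mathcal{P}) + n) = O((\queries(\mathcal{P}) + n)\log(1/\delta))$. The state of $\mathcal{P}'$ holds the current $\mathcal{P}_1$-state ($O(S)$ bits), the loop counter and tally ($O(\log r)$ bits), the return position ($O(\log n) = O(S)$ bits), and the $\mathsf{GVWalk}$ workspace, so $\log \size(\mathcal{P}') = O(S + \log\log(1/\delta))$. Since each run fails with probability $< 1/2$, the standard Chernoff/hitting bound for walks on expanders makes the majority wrong with probability $2^{-\Omega(r)} \le \delta$ for a suitable $r = O(\log(1/\delta))$. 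Computing the neighborhood of a vertex of $\mathcal{P}'$ reduces to one transition of $\mathcal{P}$ (read from its adjacency list in $\poly(S)$ time), at most one $\mathsf{NisGen}$ bit, and at most one $\mathsf{GVWalk}$ step, all within time $\poly(S, \log(1/\delta))$ and space $O(S + \log\log(1/\delta))$.

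The arithmetic and the layout of $\mathcal{P}'$'s state are routine; the one point that needs care is staying inside the S-R model while keeping $\queries$ and $\size$ as small as claimed. Concretely, the internal computations ($\mathsf{NisGen}$ and $\mathsf{GVWalk}$) must never move the input head --- so that the $|i(v) - i(v')| \le 1$ constraint is trivially met and no spurious input queries are charged --- and the walk vertices $w_t$ must be recomputed rather than stored, so that the size grows only by the $O(\log\log(1/\delta))$ coming from the loop counter and the walk's workspace. The only external fact invoked beyond the excerpt is the standard expander Chernoff bound, which is exactly what lets a length-$O(\log(1/\delta))$ walk replace $O(\log(1/\delta))$ independent seeds.
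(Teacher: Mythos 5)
Your proof is correct and follows essentially the same route as the paper's: apply Nisan's generator (exploiting the S-R model's two-way access to the shortened random string) to cut the randomness to $O(S \log T)$ bits, then amplify via a Gutfreund--Viola expander walk of length $O(\log(1/\delta))$, with the same accounting of queries (the $+n$ for repositioning the input head), size (recomputing the walk vertices rather than storing them), and seed length. The only differences are cosmetic, such as your choice of constant error for $\mathsf{NisGen}$.
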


	\begin{proof}
		Let $\mathsf{NisGen}: \{0, 1\}^s \to \{0, 1\}^T$ be Nisan's generator with error $0.1$ for randomized branching programs of size $\size(\mathcal{P})$. Let $G$ be the expander of \cref{thm:gv04} on vertex set $\{0, 1\}^s$. We will interpret a string $y \in \{0, 1\}^m$ as describing a walk through $G$ from an arbitrary initial vertex of length $r - 1$, so that $m = s + O(r)$. Let $y_1, \dots, y_r \in \{0, 1\}^s$ be the vertices visited by this walk. The program $\mathcal{P}'(x, y)$ runs $\mathcal{P}(x, \mathsf{NisGen}(y_t))$ for every $y \in [r]$ and takes a majority vote of the answers; it finds the vertices $y_t$ by running the algorithm $\mathsf{GVWalk}$ of \cref{thm:gv04}. By the expander walk Chernoff bound \cite{gil98}, for an appropriate choice of $r = \Theta(\log(1/\delta))$, the failure probability of $\mathcal{P}'$ is at most $\delta$.
		
		Clearly, $\queries(\mathcal{P}') \leq r \cdot (\queries(\mathcal{P}) + n)$, where the $+ n$ term takes care of the steps needed to get from the final position of $x$ read in one iteration of $\mathcal{P}$ to the first position of $x$ read in the next iteration of $\mathcal{P}$ (recall that $\mathcal{P}'$ is an S-R branching program).
		
		The space needed by $\mathcal{P}'$ consists of the $S$ bits of space needed for $\mathcal{P}$, plus $O(S)$ bits of space for computing $\mathsf{NisGen}$, plus $O(\log r)$ bits of space to keep track of the answers generated by the iterations, plus $O(\log S + \log r)$ bits of space for $\mathsf{GVWalk}$. Finally, computing the neighborhood of $v$ merely requires inspecting the transition functions for the algorithms $\mathsf{NisGen}$ and $\mathsf{GVWalk}$, inspecting $\mathcal{P}$, and doing arithmetic.
	\end{proof}
	
	\subsection{Derandomizing Turing machines with runtime near $n^2$} \label{sec:bptisp-tm-2}
	Finally, we are ready to state and prove our typically-correct derandomization of $\mathbf{BPTISP}_{\text{TM}}(T, S)$ based on \cref{thm:simulate-s-r-branching-program}.
	
	\begin{corollary} \label{cor:bptisp-tm-2}
		Suppose $T, S: \N \to \N$ are both constructible in time $\poly(n)$ and space $O(S)$ and $TS^2 \leq o\left(\frac{n^2}{\log n}\right)$. For every language $L \in \mathbf{BPTISP}_{\normalfont\text{TM}}(T, S)$, there is a constant $\gamma > 0$ so that
		\begin{equation} \label{eqn:cor-tm-2}
			L \text{ is within } \exp\left(-\frac{\gamma n}{\sqrt{T S}}\right) + \exp\left(-\frac{\gamma n^2}{TS^2 \log n}\right) \text{ of } \mathbf{DTISP}(\poly(n), S).
		\end{equation}
	\end{corollary}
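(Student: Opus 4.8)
The plan is to reduce the statement to the two results we have just established about S-R randomized branching programs: the amplification lemma \cref{lem:s-ow-to-s-r} and the deterministic coin-generation theorem \cref{thm:simulate-s-r-branching-program}. First I would fix $L \in \mathbf{BPTISP}_{\normalfont\text{TM}}(T, S)$ together with a witnessing randomized Turing machine $\mathcal{B}$, and, exactly as in the proof of \cref{cor:bptisp-poly-coins}, encode the behaviour of $\mathcal{B}$ on inputs of length $n$ as an S-OW randomized branching program $\mathcal{P} = \mathcal{P}_n$ on $\{0,1\}^n \times \{0,1\}^T$ of size $2^{O(S)}$ and length $O(T)$ that computes $L_n$ (the restriction of $L$ to length $n$) with failure probability $1/3$; its transition function is read off from $\mathcal{B}$, so it is uniformly constructible because $T$ and $S$ are. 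We may assume $T(n) \geq n$.

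Next I would apply \cref{lem:s-ow-to-s-r} to $\mathcal{P}$ with a failure-probability target $\delta$ to be chosen, obtaining an S-R branching program $\mathcal{P}'$ on $\{0,1\}^n \times \{0,1\}^m$ that computes $L_n$ with failure probability $\delta$ and satisfies $\queries(\mathcal{P}') \leq O(T\lambda)$, $\log\size(\mathcal{P}') \leq O(S + \log\lambda)$ and $m \leq O(S\log T + \lambda)$, where $\lambda = \log(1/\delta)$. To feed $\mathcal{P}'$ into \cref{thm:simulate-s-r-branching-program} I need its hypothesis $\queries(\mathcal{P}') \cdot \log\size(\mathcal{P}') \cdot m \leq \alpha n^2$; expanding the bounds above, this reduces to an inequality of the form $c(TS^2\lambda\log n + TS\lambda^2) \leq \alpha n^2$, which I can guarantee by taking $\lambda = \Theta(\min\{n^2/(TS^2\log n),\ n/\sqrt{TS}\})$ with a sufficiently small hidden constant. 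The hypothesis $TS^2 \leq o(n^2/\log n)$ makes $\lambda = \omega(1)$, and, combined with $T \geq n$ and $S \geq \log n$, it also yields the auxiliary facts $TS \geq n$, $T \geq S\log^2 n$, $n/\sqrt{TS} = \omega(\log n)$ and $S\log n = o(n)$; in particular $m = o(n) \leq n/3$ and $\log\lambda = O(\log n) = O(S)$, so $\mathcal{P}'$ still has size $2^{O(S)}$ with neighbourhoods computable in space $O(S)$.

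Then I would invoke \cref{thm:simulate-s-r-branching-program} on $\mathcal{P}'$ to obtain $\mathsf{R} \colon \{0,1\}^n \to \{0,1\}^m$, computable in space $O(\log n)$, such that the deterministic procedure ``compute $\mathsf{R}(x)$, then evaluate $\mathcal{P}'(x, \mathsf{R}(x))$ and output the resulting bit'' disagrees with $L_n$ on at most a $3\delta + m \cdot 2^{-\alpha n/m}$ fraction of inputs. Since $\length(\mathcal{P}') = O(T\lambda) = \poly(n)$, each step of the simulation costs $\poly(S,\lambda) = \poly(n)$ time, and a bit of $\mathsf{R}$ is recomputable in $\poly(n)$ time, this procedure runs in $\poly(n)$ time and $O(S + \log\lambda + \log n) = O(S)$ space, so it witnesses that $L$ is within the relevant error of $\mathbf{DTISP}(\poly(n), S)$. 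It remains to bound the error. Using $2^{-\min\{a,b\}} = \max\{2^{-a}, 2^{-b}\} \leq 2^{-a} + 2^{-b}$ on our choice of $\lambda$, the term $3\delta = 3 \cdot 2^{-\lambda}$ is at most $\exp(-\gamma n^2/(TS^2\log n)) + \exp(-\gamma n/\sqrt{TS})$; and for the sampler term, from $m = O(S\log n + \lambda)$ we get $n/m = \Omega(\min\{n/(S\log n),\ n/\lambda\})$, and the auxiliary facts show that each of $n/(S\log n)$ and $n/\lambda$ is $\Omega(n/\sqrt{TS})$ with $n/\sqrt{TS} = \omega(\log n)$, so the polynomial prefactor $m \leq n$ is absorbed and $m \cdot 2^{-\alpha n/m} \leq \exp(-\gamma n/\sqrt{TS})$. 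Adding the two bounds gives \cref{eqn:cor-tm-2}.

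I expect the main obstacle to be precisely this parameter accounting: finding a single value of $\delta$ that simultaneously (i) keeps $\queries(\mathcal{P}') \cdot \log\size(\mathcal{P}') \cdot m$ under $\alpha n^2$, (ii) keeps $m$ under $n/3$ and $\log\log(1/\delta)$ under $O(S)$ so that the final algorithm stays in space $O(S)$, and (iii) makes both $3\delta$ and the polynomially inflated sampler error $m \cdot 2^{-\alpha n/m}$ fall below the claimed sum. The crux is that $m \cdot 2^{-\alpha n/m}$ carries a $\poly(n)$ prefactor, so one must extract from the single hypothesis $TS^2 \leq o(n^2/\log n)$ (together with $T \geq n$, $S \geq \log n$) all the side inequalities — $TS \geq n$, $T \geq S\log^2 n$, $n/\sqrt{TS} = \omega(\log n)$ — that force this prefactor to be harmless.
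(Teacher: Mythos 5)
Your proposal is correct and follows essentially the same route as the paper: encode the Turing machine as an S-OW branching program, amplify via \cref{lem:s-ow-to-s-r} to an S-R program with failure probability $\delta=2^{-\lambda}$, and apply \cref{thm:simulate-s-r-branching-program}; your unified choice $\lambda=\Theta(\min\{n^2/(TS^2\log n),\,n/\sqrt{TS}\})$ is exactly the paper's two-case split ($TS^3\gtrless n^2/\log^2 n$ is precisely the condition determining which term achieves the minimum), just packaged more compactly. The side inequalities you extract ($S\log n=o(n)$, $m=o(n)$, $n/\sqrt{TS}=\omega(\log n)$, absorption of the $\poly(n)$ prefactor in the sampler term) match the paper's verification.
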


	The rate of mistakes in \cref{cor:bptisp-tm-2} is always $o(1)$. The rate of mistakes gets smaller (i.e., the simulation quality gets higher) when $T$ and $S$ are smaller. For example, if $S = \log n$ and $T = n^2/\log^4 n$, the rate of mistakes in \cref{eqn:cor-tm-2} is $n^{-\Omega(1)}$. For another example, if $S = \polylog n$ and $T = n \polylog n$, the rate of mistakes in \cref{eqn:cor-tm-2} is $\exp\left(-\tilde{\Omega}(\sqrt{n})\right)$.
	
	As a reminder, \cref{cor:bptisp-tm-2} is incomparable to \cref{cor:bptisp-zero-coins}: the randomized classes in the two results are incomparable; the deterministic algorithm in \cref{cor:bptisp-tm-2} is faster; the mistake rate in \cref{cor:bptisp-tm-2} is lower when $S$ and $T$ are not too big. Similarly, \cref{cor:bptisp-tm-2} is incomparable to \cref{cor:bptisp-tm-1-zero-coins}: the randomized class in \cref{cor:bptisp-tm-2} is more powerful and the deterministic algorithm in \cref{cor:bptisp-tm-2} is faster, but the mistake rate in \cref{cor:bptisp-tm-2} is much higher. Finally, even when $S \geq n^{\Omega(1)}$, \cref{cor:bptisp-tm-2} is incomparable to derandomizing via the Nisan-Zuckerman generator \cite{nz96}, because the deterministic algorithm of \cref{cor:bptisp-tm-2} runs in polynomial time, although it makes some mistakes.
	
	Conceptually, the proof of \cref{cor:bptisp-tm-2} merely consists of combining \cref{lem:s-ow-to-s-r,thm:simulate-s-r-branching-program}. The only work to be done is in appropriately choosing $\delta$ and verifying parameters.

	\begin{proof}[Proof of \cref{cor:bptisp-tm-2}]
		Let $\mathcal{A}$ be the algorithm witnessing $L \in \mathbf{BPTISP}_{\text{TM}}(T, S)$. Let $\mathcal{P}_n$ be the S-OW branching program on $\{0, 1\}^n \times \{0, 1\}^T$ describing the behavior of $\mathcal{A}$ on inputs of length $n$.
		
		We consider two cases. First, suppose $TS^3 > n^2/\log^2 n$. Then let
		\begin{equation}
			\delta = \exp\left(-\frac{\gamma_0 n^2}{TS^2 \log n}\right),
		\end{equation}
		where the constant $\gamma_0$ will be specified later. Let $\mathcal{P}'_n$ be the S-R branching program on $\{0, 1\}^n \times \{0, 1\}^m$ given by \cref{lem:s-ow-to-s-r}. There is a constant $c$ that does not depend on $\gamma$ so that
		\begin{align}
			\queries(\mathcal{P}'_n) \cdot \log \size(\mathcal{P}'_n) \cdot m &\leq c TS^2 \log n \ln(1/\delta) + cTS\ln^2(1/\delta) \\
			&= c \gamma_0 n^2 + \frac{c \gamma_0^2 n^4}{TS^3 \log^2 n} \\
			&\leq c \gamma_0 n^2 + c \gamma_0^2 n^2.
		\end{align}
		Choose $\gamma_0$ so that $c\gamma_0 + c \gamma_0^2 \leq \alpha$, where $\alpha$ is the value in \cref{thm:simulate-s-r-branching-program}. Since $TS^2 \leq o(n^2/\log n)$ and $T \geq n$, we must have $S \leq o(\sqrt{n / \log n})$. Therefore,
		\begin{equation}
			m \leq O\left(S \log n + \frac{n^2}{TS^2 \log n}\right) \leq O\left(S \log n + \frac{TS^3 \log n}{TS^2}\right) \leq o(\sqrt{n \log n}) \leq n/3.
		\end{equation}
		Therefore, the hypotheses of \cref{thm:simulate-s-r-branching-program} are satisfied.
		
		The deterministic algorithm, naturally, outputs $\mathcal{P}'_n(x, \mathsf{R}(x))$, where $\mathsf{R}$ is the function of \cref{thm:simulate-s-r-branching-program}. It is immediate that this runs in $\poly(n)$ time and $O(S)$ space. Finally, to compute the rate of mistakes, observe that
		\begin{align}
			m \cdot 2^{-\alpha n/m} \leq \exp\left(-\Omega\left(-\frac{n}{S \log n}\right)\right),
		\end{align}
		whereas
		\begin{align}
			\delta &\geq \exp\left(-O\left(\frac{n}{S^2 \log n}\right)\right).
		\end{align}
		Therefore, when $n$ is sufficiently large, $m \cdot 2^{-\alpha n/m} < \delta$. Therefore,
		\begin{equation}
			\density\{x \in \{0, 1\}^n : \mathcal{P}'_n(x, \mathsf{R}(x)) \neq L(x)\} \leq 4\delta.
		\end{equation}
		
		For the second case, suppose $TS^3 \leq n^2/\log^2 n$. Then let
		\begin{equation}
			\delta = \exp\left(-\frac{\gamma_0 n}{\sqrt{TS}}\right).
		\end{equation}
		Again, let $\mathcal{P}'_n$ be the S-R branching program on $\{0, 1\}^n \times \{0, 1\}^m$ given by \cref{lem:s-ow-to-s-r}. Then
		\begin{align}
			\queries(\mathcal{P}'_n) \cdot \log \size(\mathcal{P}'_n) \cdot m &\leq cTS^2 \log n \ln(1/\delta) + c TS \ln^2(1/\delta) \\
			&=c \gamma_1 \sqrt{TS^3} n \log n + c \gamma_1^2 n^2 \\
			&\leq c \gamma_1 n^2 + c \gamma_1^2 n^2 \\
			&\leq \alpha n^2.
		\end{align}
		Furthermore, since $TS^3 \leq n^2 / \log^2 n$, taking a square root gives $S \sqrt{TS} \leq n/\log n$, and hence
		\begin{equation}
			m \leq O\left(S \log n + \frac{n}{\sqrt{TS}}\right) \leq O\left(\frac{n}{\sqrt{TS}}\right) < n/3.
		\end{equation}
		Therefore, again, the hypotheses of \cref{thm:simulate-s-r-branching-program}. In this case as well, the deterministic algorithm outputs $\mathcal{P}'_n(x, \mathsf{R}(x))$. We now compute the rate of mistakes again. We have
		\begin{align}
			m \cdot 2^{-\alpha n/m} \leq \exp(-\Omega(\sqrt{TS})) < \delta
		\end{align}
		for sufficiently large $n$, because $\sqrt{TS} \geq \sqrt{n \log n}$. Therefore, once again,
		\begin{equation}
			\density\{x \in \{0, 1\}^n : \mathcal{P}'_n(x, \mathsf{R}(x)) \neq L(x)\} \leq 4\delta.
		\end{equation}
		
		Choosing $\gamma < \gamma_0$ completes the proof.
	\end{proof}

	\section{Derandomization with advice} \label{sec:bpl-advice}
	
	As previously mentioned, Fortnow and Klivans showed that $\mathbf{BPL} \subseteq \mathbf{L}/O(n)$ \cite{fk06}. We now explain how to refine their ideas and slightly improve their result. Fortnow and Klivans' argument relied on the Gutfreund-Viola space-efficient expander walk (\cref{thm:gv04}). They only used this expander for its sampling properties. Extractors also have good sampling properties. Our improvement will come from simply replacing the expander-based sampler in Fortnow and Klivans' argument with the GUV-based extractor of \cref{thm:guv09}.
	
	\begin{theorem} \label{thm:fk06-improved}
		$\mathbf{BPL} \subseteq \mathbf{L}/(n + O(\log^2 n))$.
	\end{theorem}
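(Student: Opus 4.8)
The plan is to follow Fortnow and Klivans' strategy for $\mathbf{BPL} \subseteq \mathbf{L}/O(n)$ — Adleman's averaging argument combined with Nisan's generator and amplification via a space-efficient sampler — but to replace their expander-walk sampler with the log-space GUV-based extractor of \cref{thm:guv09}, used as an averaging sampler. In Fortnow and Klivans' argument the advice is forced up to $\Theta(n)$ because an expander walk long enough to drive the failure probability below $2^{-n}$ spends $\Theta(n)$ bits on its edge labels. The GUV extractor instead has seed length $O(\log \ell)$ while still outputting a constant fraction of its source entropy, so the only randomness that has to be hard-wired as advice is a single $\ell$-bit source string, and we will be able to take $\ell = n + O(\log^2 n)$.

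In detail: fix $L \in \mathbf{BPL}$, witnessed by a log-space randomized algorithm $\mathcal{A}$ of error $\le 1/3$ running in $\poly(n)$ time, and view its behavior on length-$n$ inputs as an R-OW randomized branching program of size $\poly(n)$ and length $\poly(n)$. By \cref{thm:nis92} with error $0.1$, there is a generator $\mathsf{NisGen}\colon \{0,1\}^s \to \{0,1\}^{\poly(n)}$ with $s = O(\log^2 n)$ such that $\Pr_{z \sim U_s}[\mathcal{A}(x, \mathsf{NisGen}(z)) \neq L(x)] \le 0.43$ for every $x$. Instantiate \cref{thm:guv09} with constants $\alpha = 1/2$ and $\epsilon = 0.01$, source length $\ell$ (fixed below), and entropy $k = 2s = O(\log^2 n)$, so that the output length is at least $s$ and the seed length is $d = O(\log \ell) = O(\log n)$; call this extractor $\mathsf{GUVExt}$. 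For each length-$n$ input $x$, define the Boolean test function $f_x(z) = [\mathcal{A}(x, \mathsf{NisGen}(z)) \neq L(x)]$. Applying \cref{prop:zuc97} with $V = \{0,1\}$ (so $\delta = \epsilon$), the number of source strings $w \in \{0,1\}^\ell$ for which $\Pr_y[f_x(\mathsf{GUVExt}(w,y)) = 1]$ exceeds $0.44$ is at most $2^{k+2}$.

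Now apply Adleman's union bound. The number of $w \in \{0,1\}^\ell$ that are bad for at least one of the $2^n$ inputs of length $n$ is at most $2^{n+k+2}$, so taking $\ell = n + k + 3 = n + O(\log^2 n)$ guarantees a source string $w^* = w^*(n)$ good for every $x$ of length $n$: strictly more than half of the $2^d$ seeds $y$ satisfy $\mathcal{A}(x, \mathsf{NisGen}(\mathsf{GUVExt}(w^*, y))) = L(x)$. Take $w^*$ as the advice, of length $n + O(\log^2 n)$. The deterministic machine, on input $x$ with advice $w^*$, cycles over all $2^d$ values of $y$, runs $\mathcal{A}(x, \mathsf{NisGen}(\mathsf{GUVExt}(w^*, y)))$ for each, tallies the outputs, and returns the majority value. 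This fits in $O(\log n)$ space: the loop counter over $y$ and the tally each use $O(\log n)$ bits, and the inner computation is a composition of log-space routines — $\mathcal{A}$ reads its random tape one-way, and Nisan's generator needs two-way access to its $O(\log^2 n)$-bit seed, which we never materialize but instead supply bit by bit by recomputing the relevant bit of $\mathsf{GUVExt}(w^*, y)$ on demand in $O(\log \ell)$ space directly from the advice tape. Hence $L \in \mathbf{L}/(n + O(\log^2 n))$.

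The only genuinely load-bearing point, and the source of the improvement over $O(n)$ advice, is that the GUV sampler has seed length $O(\log n)$ rather than $\Theta(n)$, so that the advice is a single source string of length barely more than $n$ instead of $n$ expander edge labels. Everything else is bookkeeping, the delicate piece being the space accounting for the last step: one must check that the three nested log-space computations compose in $O(\log n)$ space, in particular that handing Nisan's generator two-way access to its seed by recomputing bits of the GUV extractor's output from the advice tape incurs no space blowup (exactly the ``two-way access to the seed via two-way access to the source'' phenomenon), and that the constants ($1/3$ from $\mathcal{A}$, $0.1$ from $\mathsf{NisGen}$, $0.01$ extractor error) leave the sampled error below $1/2$ so the majority vote is correct. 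There is no obstacle beyond this once \cref{thm:nis92}, \cref{thm:guv09}, and \cref{prop:zuc97} are in hand.
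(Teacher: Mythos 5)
Your proposal is correct and follows essentially the same route as the paper: Adleman's union bound over inputs, with Nisan's generator fooling the log-space machine and the log-space GUV-based extractor of \cref{thm:guv09} (via \cref{prop:zuc97}) serving as the sampler whose short seed lets the advice be a single source string of length $n + O(\log^2 n)$. The only quibble is a harmless constant slip (the sampled error for a good source string is about $1/3 + 0.1 + 0.01 \approx 0.443$, so the threshold in your \cref{prop:zuc97} step should be $0.45$ rather than $0.44$), which does not affect the majority vote since the error stays below $1/2$.
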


	\begin{proof}
		Let $\mathcal{A}$ be an algorithm witnessing $L \in \mathbf{BPL}$, and assume $\mathcal{A}$ has failure probability at most $0.1$. Let $\mathsf{NisGen}: \{0, 1\}^s \to \{0, 1\}^{\poly(n)}$ be Nisan's generator (\cref{thm:nis92}) with error $0.1$ and space bound sufficient to fool $\mathcal{A}$, so that $s \leq O(\log^2 n)$. Let $\mathsf{GUVExt}: \{0, 1\}^{n + 2s + 3} \times \{0, 1\}^d \to \{0, 1\}^s$ be the $(2s, 0.1)$-extractor of \cref{thm:guv09}, so that $d \leq O(\log n)$.
		
		Given input $x \in \{0, 1\}^n$ and advice $a \in \{0, 1\}^{n + 2s + 3}$, run $\mathcal{A}(x, \mathsf{NisGen}(\mathsf{GUVExt}(a, z)))$ for all $z$ and take a majority vote.
		
		This algorithm clearly runs in space $O(\log n)$. By \cref{prop:zuc97}, for each fixed $x$, the number of advice strings $a$ causing the algorithm to give the wrong answer is at most $2^{2s + 2}$. Therefore, the total number of advice strings $a$ that cause the algorithm to give the wrong answer for any $x$ is at most $2^{n + 2s + 2} < 2^{|a|}$. Therefore, there is some choice of $a$ such that the algorithm succeeds on all inputs.
	\end{proof}

	We now generalize \cref{thm:fk06-improved}, showing that the amount of advice can be reduced to below $n$ in certain cases. We will rely on a special feature of Nisan's generator that Nisan used to prove $\mathbf{RL} \subseteq \mathbf{SC}$. The seed to Nisan's generator is naturally divided into two parts, $s = s_1 + s_2$, where $s_2 \leq O(S + \log(1/\epsilon))$.\footnote{The first $s_1$ bits specify the hash functions, and the last $s_2$ bits specify the input to those hash functions.} Nisan showed that there is an efficient procedure to \emph{check} that the first part of the seed is ``good'' for a particular randomized log-space algorithm and a particular input to that algorithm.
	\begin{lemma}[\cite{nis94}] \label{lem:nis94}
		For every $S \in \N$, there is a function $\mathsf{NisGen}: \{0, 1\}^{s_1} \times \{0, 1\}^{s_2} \to \{0, 1\}^{2^S}$, with $s_1 \leq O(S^2)$ and $s_2 \leq O(S)$, and an algorithm $\mathsf{Check}$, so that
		\begin{itemize}
			\item For any R-OW randomized branching program $\mathcal{P}$ with $\log \size(\mathcal{P}) \leq S$ and any input $x \in \{0, 1\}^n$,
			\begin{equation}
				\Pr_{y_1 \in \{0, 1\}^{s_1}}[\mathsf{Check}(\mathcal{P}, x, y_1) = 1] \geq 1/2.
			\end{equation}
			\item If $\mathsf{Check}(\mathcal{P}, x, y_1) = 1$, then for any vertex $v_0 \in V(\mathcal{P})$,
			\begin{equation}
				\mathcal{P}(v_0; x, \mathsf{NisGen}(y_1, U_{s_2})) \sim_{0.1} \mathcal{P}(v_0; x, U_{2^S}).
			\end{equation}
		\end{itemize}
		Furthermore, $\mathsf{Check}$ runs in space $O(S)$, and given $S$, $y_1$, and $y_2$, $\mathsf{NisGen}(y_1, y_2)$ can be computed in space $O(S)$.
	\end{lemma}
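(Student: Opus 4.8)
The plan is to recall the internal structure of Nisan's generator \cite{nis92} together with the ``offline checkability'' of its hash functions that was observed in \cite{nis94}, and to reconstruct the construction with the stated space and seed bounds. Set $T = 2^S$; since an R-OW branching program of size $2^S$ is a DAG, its longest path uses fewer than $2^S$ edges, so it reads fewer than $T$ random bits. Fix a block length $\ell = \Theta(S)$ (a large constant times $S$, pinned down below) and $k = \lceil \log(T/\ell) \rceil = S - \Theta(\log S)$. The seed of Nisan's generator splits as a \emph{base block} $x_0 \in \{0,1\}^\ell$ together with $k$ hash functions $h_1, \dots, h_k$, each a uniform affine map $w \mapsto aw + b$ over $\F_{2^\ell}$ (so $2\ell$ bits each); these form a pairwise-independent family for which, for $w \neq w'$, the pair $(h_i(w), h_i(w'))$ is uniform on $(\F_{2^\ell})^2$. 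Define $G_0(x_0) = x_0$ and $G_j(x_0; h_1, \dots, h_j) = G_{j-1}(x_0; h_1, \dots, h_{j-1}) \circ G_{j-1}(h_j(x_0); h_1, \dots, h_{j-1})$, truncating $G_k$ to $T$ bits. We let $\mathsf{NisGen}(y_1, y_2) := G_k$ with $y_1 = (h_1, \dots, h_k) \in \{0,1\}^{s_1}$, $s_1 = 2k\ell = O(S^2)$, and $y_2 = x_0 \in \{0,1\}^{s_2}$, $s_2 = \ell = O(S)$. Since the recursion has depth $k \le S$ and each level applies one affine map over $\F_{2^\ell}$, any prescribed bit of $G_k$ is the relevant coordinate of a composition of at most $k$ such maps applied to $x_0$, computable in space $O(\ell + k) = O(S)$.

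Next I would define the checking procedure. Fix $\mathcal{P}$ with its input $x$ hard-wired, and for a string $z$ of length $2^j\ell$ let $M^{(z)}_j$ be the $0/1$ (hence stochastic) transition matrix on $V(\mathcal{P})$ sending each vertex $u$ to the vertex reached from $u$ by reading $z$. Writing $G_j(w)$ for $G_j(w; h_1, \dots, h_j)$, call $h_i$ \emph{good} (relative to $\mathcal{P}, x, h_1, \dots, h_{i-1}$) if
\[
\E_{w \in \{0,1\}^\ell}\!\Bigl[M^{(G_{i-1}(w))}_{i-1} \cdot M^{(G_{i-1}(h_i(w)))}_{i-1}\Bigr] \;\sim_{\epsilon_0}\; \E_{w, w' \in \{0,1\}^\ell}\!\Bigl[M^{(G_{i-1}(w))}_{i-1} \cdot M^{(G_{i-1}(w'))}_{i-1}\Bigr],
\]
where $\epsilon_0 = 0.1/T$ and $\sim_{\epsilon_0}$ denotes row-wise total-variation closeness as in \cref{lem:markov}. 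On input $(\mathcal{P}, x, y_1)$, $\mathsf{Check}$ computes, for $i = 1, \dots, k$ in turn, both of these matrices entry by entry — each entry is a sum over the $2^\ell = 2^{O(S)}$ choices of $w$ (and $w'$) of a product evaluated by streaming the subgenerator output $G_{i-1}(\cdot)$ through $\mathcal{P}$ step by step — and outputs $1$ iff every $h_i$ is good. An individual entry is computed while holding only $O(S)$-bit vertex indices, an $O(S)$-bit accumulator, and $O(S)$ space for the subgenerator; reusing space across entries and across the $k$ levels, $\mathsf{Check}$ runs in space $O(S)$ (and time $\poly(\size(\mathcal{P}))$).

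For the first bullet, reveal $h_1, \dots, h_k$ one at a time. Condition on $h_1, \dots, h_{i-1}$; for a fixed entry $(a,b)$, the $(a,b)$-entries of the two matrices are $\E_w[\langle A_w, B_{h_i(w)}\rangle]$ and $\E_{w,w'}[\langle A_w, B_{w'}\rangle]$, where $A_w = M^{(G_{i-1}(w))}_{i-1}[a,\cdot]$ and $B_w = M^{(G_{i-1}(w))}_{i-1}[\cdot,b]$ are $\{0,1\}$-vectors. Pairwise independence of $h_i$ makes these expectations equal, and because $(h_i(w), h_i(w'))$ is a uniform pair for $w \neq w'$ the cross-covariances vanish, so $\mathrm{Var}_{h_i}$ of the first quantity is at most $2^{-\ell}$; Chebyshev then bounds the probability that this entry deviates by more than $\epsilon_0/|V(\mathcal{P})|$ by $2^{-\ell}|V(\mathcal{P})|^2/\epsilon_0^2 = 2^{-\ell + O(S)}$. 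A union bound over the $|V(\mathcal{P})|^2 = 2^{O(S)}$ entries, with $\ell$ a large enough constant times $S$, shows $h_i$ is good except with probability at most $1/(2k)$, regardless of $h_1, \dots, h_{i-1}$; a union bound over the $k$ levels gives $\Pr_{y_1}[\mathsf{Check}(\mathcal{P},x,y_1)=1] \ge 1/2$. For the second bullet, assume every $h_i$ is good, and let $\widehat N_j = \E_w[M^{(G_j(w))}_j]$ and $N_j$ the true width-$2^S$ transition matrix over $2^j\ell$ random bits, so $N_j = N_{j-1}^2$ by Chapman--Kolmogorov and $\widehat N_0 = N_0$. Goodness of $h_i$ gives $\widehat N_i \sim_{\epsilon_0} \E_{w,w'}[M^{(G_{i-1}(w))}_{i-1} M^{(G_{i-1}(w'))}_{i-1}] = \widehat N_{i-1}\cdot\widehat N_{i-1}$; by induction $\widehat N_{i-1} \sim_{e_{i-1}} N_{i-1}$, and since $\|AB - A'B'\| \le \|A-A'\| + \|B-B'\|$ for stochastic matrices (the inequality behind \cref{lem:markov}), $\widehat N_{i-1}^2 \sim_{2e_{i-1}} N_{i-1}^2 = N_i$. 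Hence $e_i \le 2e_{i-1} + \epsilon_0$ with $e_0 = 0$, so $e_k \le (2^k-1)\epsilon_0 \le T\epsilon_0 = 0.1$; as $\mathcal{P}(v_0;x,\mathsf{NisGen}(y_1,U_{s_2}))$ is distributed as row $v_0$ of $\widehat N_k$ and $\mathcal{P}(v_0;x,U_{2^S})$ as row $v_0$ of $N_k$ (both programs halting after fewer than $2^k\ell$ reads), this is exactly the claimed $0.1$-closeness.

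The crux is the \emph{goodness} predicate, which is made to do triple duty: it must be (i) computable in space $O(S)$, possible only because the $2^{O(S)}$-dimensional transition matrices are never materialized but compared entry-by-entry while streaming the subgenerator; (ii) satisfied by a random $h_i$ with probability $\ge 1 - 1/(2k)$ \emph{conditioned on all earlier hash functions}, so that the $k$ error events compose by a plain union bound rather than requiring any independence across levels; and (iii) strong enough that all $k$ hashes being good forces a genuine $0.1$-error simulation. Reconciling (iii) with the lossy recursion $e_i \le 2e_{i-1} + \epsilon_0$ is what pins the per-level accuracy at $\epsilon_0 \approx 2^{-S}$, and hence forces $\ell = \Theta(S)$ — which is precisely why the hash-function part of the seed is $\Theta(S^2)$ while the base block is only $\Theta(S)$. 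Getting all of these constants to close simultaneously is the main bookkeeping obstacle.
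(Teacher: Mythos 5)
The paper gives no proof of this lemma---it is imported directly from Nisan's $\mathbf{RL} \subseteq \mathbf{SC}$ paper \cite{nis94}---and your reconstruction of that construction (the recursive hash-based generator with seed split into $O(S^2)$ bits of hash functions and an $O(S)$-bit base block, the per-level ``goodness'' test evaluated entry-by-entry in $O(S)$ space by streaming the subgenerator, the pairwise-independence/Chebyshev argument giving each level failure probability $1/(2k)$ for the first bullet, and the lossy recursion $e_i \leq 2e_{i-1} + \epsilon_0$ closed by taking $\epsilon_0 = 0.1/T$ for the second) is correct and is essentially the argument of the cited source. The only point you elide is the standard normalization that the R-OW program consumes exactly one random bit per step, which is what makes $M^{(z \circ z')} = M^{(z)} M^{(z')}$ and the Chapman--Kolmogorov identity $N_j = N_{j-1}^2$ literally true; this is harmless and is implicitly assumed wherever the paper invokes Nisan's generator.
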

	
	A \emph{$\mathbf{ZP \cdot SPACE}(S)$ algorithm} for a language $L$ with failure probability $\delta$ is a randomized Turing machine $\mathcal{A}$ with \emph{two-way} access to its random bits such that $\mathcal{A}$ runs in space $O(S)$, $\Pr[\mathcal{A}(x) \in \{L(x), \bot\}] = 1$, and $\Pr[\mathcal{A}(x) = \bot] \leq \delta$. The following lemma refines a theorem by Nisan that says that $\mathbf{BPL} \subseteq \mathbf{ZP \cdot L}$ \cite{nis93}; the improvement is that our algorithm has a low failure probability relative to the number of random bits it uses.
	\begin{lemma} \label{lem:zp.space}
		Fix $S: \N \to \N$ with $S(n) \geq \log n$ and $\delta: \N \to [0, 1]$, both constructible in space $O(S)$. For every $L \in \mathbf{BPSPACE}(S)$, there is a $\mathbf{ZP\cdot SPACE}(S)$ algorithm $\mathcal{A}$ that decides $L$ with failure probability $\delta$ and uses $\log_2(1/\delta) + O(S^2)$ random bits.
	\end{lemma}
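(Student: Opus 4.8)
The plan is to combine the ``certified'' form of Nisan's generator in \cref{lem:nis94} with the extractor-as-sampler paradigm, instantiated with the GUV-based extractor of \cref{thm:guv09}. The point of using that particular extractor is that it outputs a constant fraction of the source entropy, which is exactly what keeps the additive overhead at $O(S^2)$ while leaving the coefficient of $\log_2(1/\delta)$ equal to $1$; an expander-walk sampler (as in \cref{thm:gv04}) would only give $O(\log(1/\delta))$ extra bits.

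First I would pass from $L \in \mathbf{BPSPACE}(S)$ to the natural R-OW randomized branching program $\mathcal{P}$ whose vertices are configurations, so that $\log\size(\mathcal{P}) \le O(S)$, $\length(\mathcal{P}) \le 2^{O(S)}$, and the neighborhood of a vertex is computable in space $O(S)$; after increasing $S$ by a constant factor we may assume $\length(\mathcal{P}) \le 2^S$. Applying \cref{lem:nis94} yields $\mathsf{NisGen}\colon\{0,1\}^{s_1}\times\{0,1\}^{s_2}\to\{0,1\}^{2^S}$ with $s_1 \le O(S^2)$, $s_2 \le O(S)$, and a procedure $\mathsf{Check}$. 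The zero-error core subroutine, exactly as in Nisan's argument that $\mathbf{BPL}\subseteq\mathbf{ZP \cdot L}$, is: given a candidate $y_1 \in \{0,1\}^{s_1}$ with $\mathsf{Check}(\mathcal{P}, x, y_1) = 1$, enumerate all $2^{s_2} = 2^{O(S)}$ strings $y_2$, compute $\mathcal{P}(v_0; x, \mathsf{NisGen}(y_1, y_2))$ for each, and output the majority vote of the corresponding accept/reject labels. Because $\mathsf{Check}(\mathcal{P}, x, y_1) = 1$ forces $\mathcal{P}(v_0; x, \mathsf{NisGen}(y_1, U_{s_2})) \sim_{0.1} \mathcal{P}(v_0; x, U_{2^S})$ and $\mathcal{P}$ has two-sided error at most $1/3$ with $1/3 + 0.1 < 1/2$, the exact majority over $y_2$ always equals $L(x)$; hence the subroutine never errs. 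Consequently the only way the whole algorithm can fail is by failing to find a $y_1$ in the ``good'' set $G = \{y_1 : \mathsf{Check}(\mathcal{P}, x, y_1) = 1\}$, which \cref{lem:nis94} guarantees has density at least $1/2$.

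To hit $G$ cheaply, I would take $\mathsf{GUVExt}\colon\{0,1\}^\ell\times\{0,1\}^d\to\{0,1\}^{s_1}$ from \cref{thm:guv09}, with constant error $1/4$, constant $\alpha = 1/2$, and entropy parameter $k = 2s_1 = O(S^2)$ (so the guaranteed output length $(1-\alpha)k \ge s_1$, truncated to $s_1$ bits), source length $\ell = \lceil\log_2(1/\delta)\rceil + k + 2 \le \log_2(1/\delta) + O(S^2)$, and seed length $d \le O(\log\ell)$, computable in space $O(\log\ell)$. The algorithm $\mathcal{A}$ samples $w \in \{0,1\}^\ell$ uniformly --- these are its $\ell$ random bits --- and runs the core subroutine with $y_1 = \mathsf{GUVExt}(w, z)$ for each seed $z \in \{0,1\}^d$ in turn, outputting the subroutine's answer the first time $\mathsf{Check}$ accepts and outputting $\bot$ if no seed works. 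Applying \cref{prop:zuc97} to the indicator function of $G$ (with $|V| = 2$ and extractor error $1/4$), all but at most $2^{k+2}$ strings $w$ satisfy $\Pr_z[\mathsf{GUVExt}(w, z) \in G] \ge 1/2 - 1/4 > 0$, so for every such $w$ some seed lands in $G$ and $\mathcal{A}(x) \ne \bot$. Therefore $\Pr_w[\mathcal{A}(x) = \bot] \le 2^{k+2}/2^\ell \le \delta$ by the choice of $\ell$, and combined with the previous paragraph this shows $\mathcal{A}$ is a $\mathbf{ZP\cdot SPACE}(S)$ algorithm for $L$ with failure probability $\delta$ using $\ell \le \log_2(1/\delta) + O(S^2)$ random bits.

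It remains to check the space bound, and this is the step I expect to need the most care. The string $w$ is never copied to a worktape --- it is read, possibly many times, directly off the two-way random tape (this is precisely why the two-way $\mathbf{ZP\cdot SPACE}$ model is needed rather than one-way access), and the $O(S^2)$-bit object $y_1 = \mathsf{GUVExt}(w,z)$ is likewise never stored: whenever $\mathsf{Check}$ or $\mathsf{NisGen}$ requests a bit of $y_1$, that bit is recomputed from $w$ and the stored seed $z$ using the space-$O(\log\ell)$ algorithm of \cref{thm:guv09}. Since $\delta$ is constructible in space $O(S)$ we have $1/\delta \le 2^{2^{O(S)}}$, hence $\ell \le 2^{O(S)}$ and $O(\log\ell) = O(S)$; the length $\ell$ (hence $\lceil\log_2(1/\delta)\rceil$) can be computed from the presentation $\delta = \delta_1/\delta_2$ in space $O(S)$ using an $O(S)$-bit counter. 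All the remaining workspace --- the seed $z$, the inner index $y_2$, the vertex of $\mathcal{P}$ being simulated, and the internal space of $\mathsf{Check}$, $\mathsf{NisGen}$, and the step-by-step simulation of $\mathcal{P}$ --- is $O(S)$ bits, so $\mathcal{A}$ runs in space $O(S)$ (and therefore halts in time $2^{O(S)}$). Once this recompute-on-demand threading is set up, correctness and the random-bit count follow immediately from \cref{lem:nis94,prop:zuc97,thm:guv09}.
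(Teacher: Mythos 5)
Your proposal is correct and follows essentially the same route as the paper's proof: reduce to an R-OW branching program, use the checkable form of Nisan's generator (\cref{lem:nis94}), and use the GUV-based extractor (\cref{thm:guv09}) with source length $\lceil\log_2(1/\delta)\rceil + 2s_1 + O(1)$ as a sampler (via \cref{prop:zuc97}) to locate a seed prefix that $\mathsf{Check}$ accepts, outputting $\bot$ only when every extractor seed fails. The parameter choices and the counting of bad random strings match the paper's up to inessential constants, and your recompute-on-demand space accounting is a correct (and slightly more explicit) version of the paper's space analysis.
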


	\begin{proof}
		Let $\mathcal{B}$ be the algorithm witnessing $L \in \mathbf{BPSPACE}(S)$, and assume $\mathcal{B}$ has failure probability at most $0.1$. Let $\mathcal{P}$ be the corresponding R-OW branching program for inputs of length $n$. Let $\mathsf{NisGen}: \{0, 1\}^{s_1} \times \{0, 1\}^{s_2} \to \{0, 1\}^{\poly(n)}$ be the generator of \cref{lem:nis94} with space bound $\lceil \log \size(\mathcal{P}) \rceil$, so that $s_1 \leq O(S^2)$. 
		
		Let $\ell = \lceil \log_2(1/\delta) \rceil + 2s_1 + 2$, and let $\mathsf{GUVExt}: \{0, 1\}^{\ell} \times \{0, 1\}^d \to \{0, 1\}^{s_1}$ be the $(2s_1, 0.1)$-extractor of \cref{thm:guv09}, so that $d \leq O(\log \log(1/\delta) + \log S)$. On input $x \in \{0, 1\}^n$ and random string $y \in \{0, 1\}^{\ell}$:
		\begin{enumerate}
			\item For every $z \in \{0, 1\}^d$:
			\begin{enumerate}
				\item Let $y_1 = \mathsf{GUVExt}(y, z)$.
				\item Run $\mathsf{Check}(\mathcal{P}, x, y_1)$, where $\mathsf{Check}$ is the algorithm from \cref{lem:nis94}.
				\item If $\mathsf{Check}$ accepts, run $\mathcal{B}(x, \mathsf{NisGen}(y_1, y_2))$ for every $y_2$, take a majority vote, and output the answer.
			\end{enumerate}
			\item Output $\bot$.
		\end{enumerate}
		Clearly, this algorithm runs in space $O(S + d)$. Since $\delta$ is constructible in space $O(S)$, its denominator must have at most $2^{O(S)}$ digits. Therefore, $\delta \geq 2^{-2^{O(S)}}$ and $d \leq O(S)$, so the algorithm runs in space $O(S)$. Furthermore, the algorithm is clearly zero-error. Finally, by \cref{prop:zuc97}, the number of $y$ such that $\mathsf{Check}(\mathcal{P}, x, y_1)$ rejects for every $z$ is at most $2^{2s_1 + 2}$, and hence the failure probability of the algorithm is at most $\frac{2^{2s_1 + 2}}{2^{\ell}} \leq \delta$.
	\end{proof}
	
	We now give our generalization of \cref{thm:fk06-improved}. From the work of Goldreich and Wigderson \cite{gw02}, it follows that if a language $L \in \mathbf{BPSPACE}(S)$ is in $\mathbf{DPSPACE}(S)/a$ for $a \ll n$ via an algorithm where most advice strings are ``good'', then $L$ is close to being in $\mathbf{DPSPACE}(S)$. Our theorem is a \emph{converse}\footnote{The statement of \cref{thm:fk06-improved-generalized} doesn't mention it, but indeed, in the proof of \cref{thm:fk06-improved-generalized}, most advice strings are ``good''.} to this result, showing that in the space-bounded setting, there is a very tight connection between typically-correct derandomizations and simulations with small amounts of advice.
	\begin{theorem} \label{thm:fk06-improved-generalized}
		Fix functions $S: \N \to \N$ with $S(n) \geq \log n$ and $\epsilon: \N \to [0, 1]$ that are constructible in space $O(S)$. Suppose a language $L \in \mathbf{BPSPACE}(S)$ is within $\epsilon$ of $\mathbf{DSPACE}(S)$. Then
		\begin{equation}
			L \in \mathbf{DSPACE}(S)/(n - \log_2(1/\epsilon(n)) + O(S^2)).
		\end{equation}
	\end{theorem}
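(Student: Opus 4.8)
The plan is to follow the Goldreich--Wigderson/Fortnow--Klivans paradigm of using the \emph{input itself} as a source of randomness, but -- and this is the only new twist relative to \cref{thm:fk06-improved} -- to union-bound only over the \emph{small} set of inputs on which the deterministic approximation is wrong. Fix $L' \in \mathbf{DSPACE}(S)$ that is within $\epsilon$ of $L$, and let $B_n = \{x \in \{0,1\}^n : L(x) \ne L'(x)\}$, so $|B_n| \le \epsilon(n)\, 2^n$. By \cref{lem:zp.space} applied with $\delta = 1/10$, fix a $\mathbf{ZP\cdot SPACE}(S)$ algorithm $\mathcal{Z}$ that decides $L$, uses $m \le O(S^2)$ random bits, and -- crucially -- is \emph{zero-error}: $\mathcal{Z}(x, y) \in \{L(x), \bot\}$ for all $x, y$, with $\Pr_y[\mathcal{Z}(x, U_m) = \bot] \le 1/10$. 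Finally fix $\mathsf{GUVExt} \colon \{0,1\}^{\ell} \times \{0,1\}^d \to \{0,1\}^s$, the $(k, 1/10)$-extractor of \cref{thm:guv09} with $k = \Theta(S^2)$ chosen large enough that $s \ge m$; then $d \le O(\log \ell) \le O(S)$ since $S \ge \log n$, and $\ell$ will be set to $n - \log_2(1/\epsilon(n)) + O(S^2)$.

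The deterministic algorithm, on input $x \in \{0,1\}^n$ and advice $a \in \{0,1\}^{\ell}$, does the following: for each seed $z \in \{0,1\}^d$, run $\mathcal{Z}(x, \mathsf{GUVExt}(a, z))$ (using the first $m$ output bits of the extractor, which, to respect the space bound, we do not store but recompute bit-by-bit on demand whenever $\mathcal{Z}$ -- which has two-way access to its random tape -- requests a bit, at a cost of $O(\log \ell) \le O(S)$ reusable space); if some seed makes $\mathcal{Z}$ output a bit $b \in \{0,1\}$, output that $b$, and otherwise output $L'(x)$. Since $\mathcal{Z}$ is zero-error, any bit emitted in the first case equals $L(x)$; and for $x \notin B_n$ the fallback value $L'(x)$ also equals $L(x)$. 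So the algorithm can err only when $x \in B_n$ \emph{and} $\mathcal{Z}(x, \mathsf{GUVExt}(a, z)) = \bot$ for every $z$. Applying \cref{prop:zuc97} to the function $y \mapsto \mathcal{Z}(x, y)$ (whose codomain has size $3$), for each fixed $x$ the number of advice strings $a$ with $\mathcal{Z}(x, \mathsf{GUVExt}(a, U_d)) \not\sim_{3/20} \mathcal{Z}(x, U_m)$ is at most $3 \cdot 2^{k+1}$; for every other $a$, the $\bot$-probability of $\mathcal{Z}(x, \mathsf{GUVExt}(a, U_d))$ is at most $1/10 + 3/20 < 1$, so some seed $z$ works.

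Union-bounding over the at most $\epsilon(n)\, 2^n$ elements of $B_n$, the number of ``bad'' advice strings for length $n$ is at most $\epsilon(n)\, 2^n \cdot 3 \cdot 2^{k+1} \le 2^{\, n - \lfloor \log_2(1/\epsilon(n)) \rfloor + k + O(1)}$, which is strictly less than $2^{\ell}$ (indeed at most $2^{\ell - 1}$, so in fact most advice strings are good) once we set $\ell = n - \lfloor \log_2(1/\epsilon(n)) \rfloor + k + O(1) = n - \log_2(1/\epsilon(n)) + O(S^2)$; since $\epsilon$ and $S$ are constructible in space $O(S)$, so is $\ell$. Hence for every $n$ there is an advice string correct on all of $\{0,1\}^n$, giving $L \in \mathbf{DSPACE}(S)/\ell$. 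The part that needs the most care is bookkeeping the space budget -- enumerating the $2^d$ seeds, running $\mathsf{GUVExt}$, and simulating $\mathcal{Z}$ on recomputed-on-demand random bits must all fit in $O(S)$ space -- together with the degenerate cases where $\log_2(1/\epsilon(n)) \ge n$, in which $B_n = \emptyset$ and $L = L' \in \mathbf{DSPACE}(S)$ needs no advice at all. The conceptual point -- that the zero-error structure of $\mathcal{Z}$ is exactly what lets the algorithm safely defer to $L'$ without knowing whether $x \in B_n$, which is what makes the $B_n$-only union bound legitimate and hence saves $\log_2(1/\epsilon)$ bits of advice -- is the heart of the argument.
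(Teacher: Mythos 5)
Your proof is correct and follows essentially the same route as the paper's: a zero-error space-bounded algorithm (\cref{lem:zp.space}) whose $\bot$ outcomes are deferred to the typically-correct algorithm $L'$, with the union bound taken only over the at most $\epsilon(n) \cdot 2^n$ inputs where $L'$ errs, and zero-error-ness guaranteeing correctness everywhere else. The only difference is organizational: the paper invokes \cref{lem:zp.space} directly with $\delta < 2^{-n}/\epsilon$ and hands its random string to the algorithm as the advice, whereas you take $\delta$ constant and add an explicit outer GUV-extractor layer applied to the advice --- the same amplification mechanism, just placed outside the lemma rather than inside it.
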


	\begin{proof}
		Let $\mathcal{A}$ be the algorithm of \cref{lem:zp.space} with $\delta < 2^{-n}/\epsilon$. Let $m = m(n)$ be the number of random bits used by $\mathcal{A}$. Let $\mathcal{B}$ be the algorithm witnessing the fact that $L$ is within $\epsilon$ of $\mathbf{DSPACE}(S)$.
		
		The algorithm with advice is very simple. Given input $x \in \{0, 1\}^n$ and advice $a \in \{0, 1\}^m$, output $\mathcal{A}(x, a)$, unless $\mathcal{A}(x, a) = \bot$, in which case output $\mathcal{B}(x)$. This algorithm clearly runs in $O(S)$ space and uses $n - \log_2(1/\epsilon(n)) + O(S^2)$ bits of advice.
		
		Now we argue that there is some advice string such that the algorithm succeeds on all inputs. Let $S \subseteq \{0, 1\}^n$ be the set of inputs on which $\mathcal{B}$ fails. Consider picking an advice string $a$ uniformly at random. For each string $x \in S$, $\Pr_a[\mathcal{A}(x, a) = \bot] \leq \delta$. Therefore, by the union bound, the probability that there is some $x \in S$ such that $\mathcal{A}(x, a) = \bot$ is at most $|S| \delta = \epsilon \cdot 2^n \cdot \delta < 1$. Therefore, there is \emph{some} advice string such that the algorithm succeeds on all inputs in $S$. Finally, for \emph{any} advice string, the algorithm succeeds on all inputs in $\{0, 1\}^n \setminus S$, because $\mathcal{A}$ is zero-error.
	\end{proof}

	Combining \cref{thm:fk06-improved-generalized} with our typically-correct derandomizations gives unconditional simulations with fewer than $n$ bits of advice:
	\begin{corollary}
		For every constant $c \in \N$,
		\begin{equation}
			\mathbf{BPTISP}(n \polylog n, \log n) \subseteq \mathbf{L}/(n - \log^c n).
		\end{equation}
	\end{corollary}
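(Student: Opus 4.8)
The plan is to obtain the stated inclusion by composing our typically-correct derandomization (\cref{cor:bptisp-zero-coins}) with the advice-reduction theorem (\cref{thm:fk06-improved-generalized}); no new ideas are needed. Fix the constant $c \in \N$, set $S(n) = \log n$, and fix a language $L \in \mathbf{BPTISP}(n \polylog n, \log n)$. Since $\polylog n = \poly(S)$, we have $L \in \mathbf{BPTISP}(n \cdot \poly(S), S)$; and dropping the running-time restriction, $L \in \mathbf{BPSPACE}(\log n)$. So both input hypotheses are in place: \cref{cor:bptisp-zero-coins} applies to $L$, and \cref{thm:fk06-improved-generalized} will convert its conclusion into an advice statement.

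First I would apply \cref{cor:bptisp-zero-coins} with a large constant $c'$ to be fixed below. Because $S(n) = \log n$ is space-constructible, this gives that $L$ is within $\epsilon$ of $\mathbf{DSPACE}(\log n) = \mathbf{L}$, where $\epsilon(n) = 2^{-(\log n)^{c'}}$. This $\epsilon$ is constructible in space $O(\log n)$: write $\epsilon = 1/\delta_2$ with $\delta_2(n) = 2^{\lceil (\log n)^{c'} \rceil}$, which is produced on the write-only output tape as a $1$ followed by $\lceil (\log n)^{c'} \rceil$ zeros, controlled by an $O(\log\log n)$-bit counter.

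Next I would feed this into \cref{thm:fk06-improved-generalized} (with $S(n) = \log n$ and $\epsilon$ as above), obtaining
\begin{equation}
	L \in \mathbf{DSPACE}(\log n) \,/\, \bigl(n - (\log n)^{c'} + O(\log^2 n)\bigr) = \mathbf{L} \,/\, \bigl(n - (\log n)^{c'} + O(\log^2 n)\bigr).
\end{equation}
Choosing $c'$ large enough (e.g.\ $c' = c + 2$) makes $(\log n)^{c'} - O(\log^2 n) \geq (\log n)^c$ for all sufficiently large $n$, so the advice length is at most $n - \log^c n$; the finitely many remaining input lengths are absorbed by a constant amount of extra advice (or one simply reads the inclusion asymptotically). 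Hence $L \in \mathbf{L}/(n - \log^c n)$.

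I do not anticipate a real obstacle: all the technical weight sits in \cref{thm:simulate-branching-program} (via \cref{cor:bptisp-zero-coins}) and in \cref{lem:zp.space} together with \cref{lem:nis94} (via \cref{thm:fk06-improved-generalized}). The only points that need a moment's attention are verifying that the mistake rate $\epsilon$ is logarithmic-space constructible and checking that the additive $O(\log^2 n)$ slack in the advice bound is swamped by the gap between $(\log n)^{c'}$ and $(\log n)^c$ — both routine.
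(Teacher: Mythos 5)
Your proposal is correct and matches the paper's proof exactly: the paper also obtains this corollary by combining \cref{cor:bptisp-zero-coins} with \cref{thm:fk06-improved-generalized}, and your bookkeeping about choosing $c'$ large enough and checking constructibility of $\epsilon$ is the right (routine) way to fill in the details.
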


	\begin{proof}
		Combine \cref{cor:bptisp-zero-coins,thm:fk06-improved-generalized}.
	\end{proof}

	\begin{corollary}
		For every constant $c \in \N$,
		\begin{equation}
			\mathbf{BPTISP}_{\normalfont\text{TM}}(n \polylog n, \log n) \subseteq \mathbf{L}/\left(\frac{n}{\log^c n}\right).
		\end{equation}
	\end{corollary}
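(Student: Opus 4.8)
The plan is to mirror the proof of the previous corollary, but to substitute our typically-correct derandomization of Turing machines (\cref{cor:bptisp-tm-1-zero-coins}), whose mistake rate is essentially $2^{-n}$, for the random-access derandomization (\cref{cor:bptisp-zero-coins}). Since \cref{thm:fk06-improved-generalized} turns a within-$\epsilon$ derandomization into a simulation using only about $n - \log_2(1/\epsilon)$ bits of advice, a mistake rate this small will drive the advice bound well below $n$, down to roughly $n/\log^c n$.

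Concretely, I would fix the constant $c \in \N$ and take an arbitrary $L \in \mathbf{BPTISP}_{\text{TM}}(n\polylog n, \log n)$. First, apply \cref{cor:bptisp-tm-1-zero-coins} with $S(n) = \log n$ (which is space-constructible) and with the constant $c + 1$ in place of $c$; this shows that $L$ is within $\epsilon(n) \stackrel{\text{def}}{=} 2^{-n + n/\log^{c+1} n}$ of $\mathbf{DSPACE}(\log n) = \mathbf{L}$, and $\epsilon$ is constructible in space $O(\log n)$. Second, observe that $L \in \mathbf{BPTISP}_{\text{TM}}(n\polylog n, \log n) \subseteq \mathbf{BPL} = \mathbf{BPSPACE}(\log n)$, so the hypotheses of \cref{thm:fk06-improved-generalized} are met with $S = \log n$ and this $\epsilon$; since $\log_2(1/\epsilon(n)) = n - n/\log^{c+1} n$, that theorem yields
\begin{equation}
L \in \mathbf{L}/\bigl(n - \log_2(1/\epsilon(n)) + O(\log^2 n)\bigr) = \mathbf{L}/\bigl(n/\log^{c+1} n + O(\log^2 n)\bigr).
\end{equation}
Third, absorb the lower-order term: for all sufficiently large $n$ one has $n/\log^{c+1} n + O(\log^2 n) \le n/\log^c n$ (both summands are $o(n/\log^c n)$), and the behavior on the finitely many remaining input lengths can be hard-wired into the advice string, giving $L \in \mathbf{L}/(n/\log^c n)$ as claimed.

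I do not expect a real obstacle here — the corollary is a direct composition of two already-established results. The only points requiring (minor) care are the bookkeeping in the last step, namely choosing the constant in the first invocation large enough that the additive $O(S^2) = O(\log^2 n)$ slack coming from \cref{thm:fk06-improved-generalized} is dominated by the target bound $n/\log^c n$, and checking the constructibility side conditions (together with the harmless fact that $\epsilon(n) > 1$ for a few small $n$, which only affects finitely many input lengths and is subsumed by hard-wiring).
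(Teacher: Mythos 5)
Your proposal is correct and is exactly the paper's proof: the paper simply says "Combine \cref{cor:bptisp-tm-1-zero-coins,thm:fk06-improved-generalized}," and your plan carries out that composition (with the sensible bookkeeping of invoking \cref{cor:bptisp-tm-1-zero-coins} at exponent $c+1$ so that the $O(\log^2 n)$ slack from \cref{thm:fk06-improved-generalized} is absorbed into $n/\log^c n$).
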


	\begin{proof}
		Combine \cref{cor:bptisp-tm-1-zero-coins,thm:fk06-improved-generalized}.
	\end{proof}

	\begin{corollary}
		\begin{equation}
			\mathbf{BPTISP}_{\normalfont\text{TM}}(n^{1.99}, \log n) \subseteq \mathbf{L}/(n - n^{\Omega(1)}).
		\end{equation}
	\end{corollary}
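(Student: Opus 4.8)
The plan is to mimic the two preceding corollaries, combining a typically-correct derandomization with the advice-reduction principle \cref{thm:fk06-improved-generalized}; the only change is that, because $n^{1.99}$ is far from quasilinear, I would feed in the near-quadratic-time derandomization \cref{cor:bptisp-tm-2} rather than \cref{cor:bptisp-tm-1-zero-coins}. Concretely, fix a language $L \in \mathbf{BPTISP}_{\text{TM}}(n^{1.99}, \log n)$ and set $T(n) = \lceil n^{1.99}\rceil$ and $S(n) = \lceil \log n\rceil$. These functions are constructible in time $\poly(n)$ and space $O(\log n)$, and $TS^2 = \tilde{O}(n^{1.99}) = o(n^2/\log n)$, so the hypothesis of \cref{cor:bptisp-tm-2} is met.

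\cref{cor:bptisp-tm-2} then supplies a constant $\gamma > 0$ such that $L$ is within $\epsilon(n) \stackrel{\text{def}}{=} \exp(-\gamma n/\sqrt{TS}) + \exp(-\gamma n^2/(TS^2\log n))$ of $\mathbf{DTISP}(\poly(n), \log n)$, and hence within $\epsilon$ of $\mathbf{DSPACE}(\log n)$ (the witness lies in $\mathbf{DTISP}(\poly(n), \log n) \subseteq \mathbf{DSPACE}(\log n)$). The next step is a short parameter estimate. Since $\sqrt{TS} = \tilde{O}(n^{0.995})$, the first term of $\epsilon(n)$ is $\exp(-\gamma n^{0.005}/\sqrt{\log n}) \le \exp(-n^{\Omega(1)})$; since $TS^2\log n = \tilde{O}(n^{1.99})$, the second term is $\exp(-\gamma n^{0.01}/\log^3 n) \le \exp(-n^{\Omega(1)})$ as well. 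Hence $\epsilon(n) \le \exp(-n^{\delta})$ for some constant $\delta > 0$ (depending on $\gamma$), so $\log_2(1/\epsilon(n)) \ge n^{\delta}$ for all large $n$. I would also note that $\epsilon$ can be taken constructible in space $O(\log n)$; if the unspecified constant $\gamma$ makes this awkward, one may instead use the explicit space-constructible bound $2^{-\lceil n^{\delta'}\rceil}$ for a small fixed rational $\delta' \in (0,\delta)$ satisfying $2^{-\lceil n^{\delta'}\rceil} \ge \epsilon(n)$ for all large $n$, together with the fact that ``within $\epsilon$'' is monotone in $\epsilon$ (the finitely many small input lengths are handled in the usual way).

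Finally, $L \in \mathbf{BPSPACE}(\log n)$ — a randomized Turing machine running in $O(\log n)$ space witnesses this once its time bound is forgotten — so \cref{thm:fk06-improved-generalized}, applied to $L$ with $S = \log n$ and this $\epsilon$, gives $L \in \mathbf{DSPACE}(\log n)/(n - \log_2(1/\epsilon(n)) + O(\log^2 n)) \subseteq \mathbf{L}/(n - n^{\delta'} + O(\log^2 n)) = \mathbf{L}/(n - n^{\Omega(1)})$, as claimed.

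The main obstacle is bookkeeping rather than conceptual: one must carefully substitute $T = n^{1.99}$, $S = \log n$ into the two-term mistake rate of \cref{cor:bptisp-tm-2} to confirm that it really is $\exp(-n^{\Omega(1)})$ — this is exactly what forces the advice count $n - \log_2(1/\epsilon(n))$ to drop a polynomial amount below $n$ — and one must verify that the resulting error function is space-constructible so that \cref{thm:fk06-improved-generalized} applies. Once those checks are in place, the statement follows immediately from composing the two cited theorems.
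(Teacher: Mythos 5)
Your proposal is correct and matches the paper's own proof, which is literally ``Combine \cref{cor:bptisp-tm-2,thm:fk06-improved-generalized}''; your parameter substitutions ($TS^2 = \tilde{O}(n^{1.99})$, mistake rate $\exp(-n^{\Omega(1)})$) and the constructibility remark are exactly the bookkeeping the paper leaves implicit.
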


	\begin{proof}
		Combine \cref{cor:bptisp-tm-2,thm:fk06-improved-generalized}.
	\end{proof}
	
	\section{Disambiguating efficient nondeterministic algorithms} \label{sec:nl-ul}
	
	\subsection{Overview}
	Recall that a nondeterministic algorithm is \emph{unambiguous} if on every input, there is at most one accepting computation. Suppose a language $L$ can be decided by a nondeterministic algorithm that runs in time $T = T(n) \geq n$ and space $S = S(n) \geq \log n$. Allender, Reinhardt, and Zhou showed that if $\mathsf{SAT}$ has exponential circuit complexity, there is an unambiguous algorithm for $L$ that runs in space $O(S)$ \cite{arz99}. Unconditionally, van Melkebeek and Prakriya recently gave an unambiguous algorithm for $L$ that runs in time $2^{O(S)}$ and space $O(S \sqrt{\log T})$ \cite{vmp17}.
	
	For some of our results on derandomizing efficient algorithms, we give a corresponding theorem for disambiguating efficient nondeterministic algorithms, albeit with slightly worse parameters.
	
	\subsubsection{Our results}
	
	Let $\mathbf{NTISP}(T, S)$ denote the class of languages that can be decided by a nondeterministic random-access Turing machines that runs in time $T$ and space $S$. Define $\mathbf{UTISP}(T, S)$ the same way, but with the additional requirement that the algorithm is unambiguous. In \cref{sec:disambiguate-bp,sec:ntisp}, we show that for every $S$ and every constant $c \in \N$,
	\begin{equation} \label{eqn:ntisp}
	\mathbf{NTISP}(n \cdot \poly(S), S) \text{ is within } 2^{-S^c} \text{ of } \mathbf{UTISP}(2^{O(S)}, S \sqrt{\log S}).
	\end{equation}
	\Cref{eqn:ntisp} is analogous to \cref{cor:bptisp-zero-coins}.

	
	Reinhardt and Allender showed that $\mathbf{NL} \subseteq \mathbf{UL}/\poly$ \cite{ra00}. In \cref{sec:ul-advice}, we improve the Reinhardt-Allender theorem by showing that $\mathbf{NL} \subseteq \mathbf{UL}/(n + O(\log^2 n))$. More generally, we show that if a language $L \in \mathbf{NSPACE}(S)$ is within $\epsilon(n)$ of being in $\mathbf{USPACE}(S)$, then $L \in \mathbf{USPACE}(S)/(n - \log_2(1/\epsilon(n)) + O(S^2))$. This result is analogous to \cref{thm:fk06-improved-generalized}.
	
	\subsubsection{Techniques}
	
	Our disambiguation theorems are proven using the same ``out of sight, out of mind'' technique that we used in \cref{sec:bptisp,sec:bptisp-tm-1} for derandomization. Roughly, this is possible because of prior work \cite{ra00, vmp17} that reduces the problem of disambiguating algorithms to certain derandomization problems. We review the necessary background in \cref{sec:vmp-overview}.
	
	Our disambiguation algorithms do not really introduce any additional novel techniques, beyond what we already used in \cref{sec:bptisp,sec:bptisp-tm-1}. Rather, our contribution in this section is to identify another setting where our techniques are helpful, thereby illustrating the generality of our techniques.
	
	\subsection{Preliminaries}
	Unambiguous algorithms can be \emph{composed} as long as the inner algorithm is ``single-valued'', which we now define. This notion corresponds to classes such as $\mathbf{UL} \cap \mathbf{coUL}$.
	\begin{definition}
		A \emph{single-valued unambiguous algorithm} $\mathcal{A}$ is a nondeterministic algorithm such that for every input $x$, all but one computation path outputs a special symbol $\botn$ (indicating that the nondeterministic choices were ``bad''). We let $\mathcal{A}(x)$ denote the output of the one remaining computation path.
	\end{definition}
	When describing unambiguous algorithms, we will often include steps such as ``Compute $a = \mathcal{A}(x)$'', where $\mathcal{A}$ is a single-valued unambiguous algorithm. Such a step should be understood as saying to run $\mathcal{A}$ on input $x$. If $\mathcal{A}$ outputs $\botn$, immediately halt and output $\botn$. Otherwise, let $a$ be the output of $\mathcal{A}$.
	
	\subsection{Unambiguous algorithms for connectivity by van Melkebeek and Prakriya} \label{sec:vmp-overview}
	
	Recall that the \emph{s-t connectivity problem} is defined by
	\begin{equation}
	\stconn = \{(G, s, t) : \text{there is a directed path from $s$ to $t$}\},
	\end{equation}
	where $G$ is a digraph and $s, t \in V(G)$. $\stconn$ is a classic example of an $\mathbf{NL}$-complete language \cite{jon75}. Using an ``inductive counting'' technique, Reinhardt and Allender gave a single-valued unambiguous algorithm for testing whether a given digraph is ``min-unique'', as well as a single-valued unambiguous algorithm for solving $\stconn$ in min-unique digraphs \cite{ra00}. Using the isolation lemma, Reinhardt and Allender showed that assigning random weights to a digraph makes it ``min-unique'' \cite{ra00}. These two results are the main ingredients in the proof that $\mathbf{NL} \subseteq \mathbf{UL}/\poly$ \cite{ra00}.
	
	Recently, van Melkebeek and Prakriya gave a ``pseudorandom weight generator'' with seed length $O(\log^2 n)$ \cite{vmp17}.\footnote{In the terminology of van Melkebeek and Prakriya \cite{vmp17}, here we refer to the ``hashing only'' approach.} Just like uniform random weights, the weights produced by this generator make a digraph ``min-unique'' with high probability.\footnote{The van Melkebeek-Prakriya generator only works for \emph{layered} digraphs, but this technicality does not matter for us.}
	
	Roughly, this pseudorandom weight generator by van Melkebeek and Prakriya will play a role in our disambiguation results that is analogous to the role that Nisan's generator played in our derandomization results.
	
	For our purposes, it is not necessary to give a precise account of min-uniqueness. What matters is that $\stconn$ can be decided in unambiguous log-space given two-way access to an $O(\log^2 n)$-bit random string. Furthermore, ``bad'' random strings can be unambiguously detected. We now state this result more carefully.
	
	\begin{theorem}[\cite{vmp17}] \label{thm:vmp-generator}
		There is a single-valued unambiguous algorithm $\mathsf{vMPSeededAlg}$ so that for every $x \in \{0, 1\}^n$,
		\begin{align}
			\Pr_{y \in \{0, 1\}^{\infty}}[\mathsf{vMPSeededAlg}(x, y) \in \{\stconn(x), \botr\}] &= 1, \\
			\Pr_{y \in \{0, 1\}^{\infty}}[\mathsf{vMPSeededAlg}(x, y) = \botr] &\leq 1/2.
		\end{align}
		Furthermore, $\mathsf{vMPSeededAlg}(x, y)$ only reads the first $O(\log^2 n)$ bits of $y$ (the ``seed'') and runs in space $O(\log n)$.
	\end{theorem}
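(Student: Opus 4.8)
The plan is to follow the standard route to unambiguous algorithms for connectivity via isolation, substituting a pseudorandom weight assignment for the uniform random one. First I would reduce to the case that the input digraph $G$ is \emph{layered}: the usual log-space reduction replaces $G$ by its time-expanded version with $|V(G)|$ layers (adding self-loops, so that $s$-$t$ reachability becomes ``reach $t$ within $|V(G)|$ steps''), and this preserves the log-space and unambiguity budgets; the resulting graph has $\poly(n)$ vertices and edges. So from now on $G$ is layered.

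Next I would invoke the two ingredients of Reinhardt and Allender \cite{ra00}: (a) a single-valued unambiguous log-space algorithm $\mathsf{Certify}$ that, given a weighted layered digraph, outputs $1$ if it is \emph{min-unique} (between every pair of vertices joined by a path there is a unique minimum-weight path) and $0$ otherwise (outputting $\botn$ on the ``bad'' nondeterministic branches); and (b) a single-valued unambiguous log-space algorithm $\mathsf{MUConn}$ that, given a min-unique weighted layered digraph together with $s,t$, correctly decides $\stconn$. The algorithm $\mathsf{vMPSeededAlg}(x,y)$ then reads the first $O(\log^2 n)$ bits of $y$ as a seed, computes from it a weight function $w = \mathsf{WeightGen}(y)$ on $E(G)$ (this is the step that uses the van Melkebeek and Prakriya generator \cite{vmp17}), runs $\mathsf{Certify}(G,w)$, outputs $\botn$ if that subroutine does, outputs $\botr$ if $\mathsf{Certify}$ reports ``not min-unique'', and otherwise outputs $\mathsf{MUConn}(G,w,s,t)$ (again propagating $\botn$). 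Since the inner subroutines are single-valued, this composition is again single-valued unambiguous, and it runs in space $O(\log n)$ because $\mathsf{WeightGen}$, $\mathsf{Certify}$, and $\mathsf{MUConn}$ all do (only the $O(\log^2 n)$-bit prefix of $y$ is consulted, via two-way access, so it need not be copied into workspace). Whenever $\mathsf{Certify}$ reports ``min-unique'' the output of $\mathsf{MUConn}$ is guaranteed correct, and when it reports ``not min-unique'' we output $\botr$; hence $\mathsf{vMPSeededAlg}(x,y) \in \{\stconn(x), \botr\}$ always, giving the first displayed equation of \cref{thm:vmp-generator}. It then remains to bound $\Pr_y[\mathsf{vMPSeededAlg}(x,y) = \botr] \le \Pr_y[(G, \mathsf{WeightGen}(y))\text{ is not min-unique}]$.

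The heart of the matter — and the step I expect to be the main obstacle — is showing $\Pr_y[(G, \mathsf{WeightGen}(y))\text{ is min-unique}] \ge 1/2$ for layered $G$, with the underlying seed of length only $O(\log^2 n)$ and $\mathsf{WeightGen}$ computable in log-space. This is exactly the derandomized isolation lemma of van Melkebeek and Prakriya \cite{vmp17} (their ``hashing only'' construction), which I would import wholesale. To indicate why it is plausible: the classical isolation lemma bounds the failure probability for uniform weights in $\{1,\dots,2m\}$ by $\sum_{e}\Pr[e\text{ is ambiguous}] \le m \cdot \frac{1}{2m} = 1/2$, where edge $e$ is \emph{ambiguous} when $w(e)$ hits a single value determined by the other weights; in a layered graph the witnesses to non-min-uniqueness (symmetric differences of two equal-weight paths with common endpoints) have tightly controlled combinatorial structure, which lets one replace the uniform weights by a hashing-based pseudorandom assignment whose short seed still kills every ambiguity with high probability — morally because the relevant ``bad'' linear tests over the weight bits are fooled by a Nisan-style generator (\cref{thm:nis92}) read in the layer order. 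Making this rigorous is where essentially all the work of \cite{vmp17} lies; I would simply cite it, recording that the pseudorandom weights are log-space computable from a seed of length $O(\log^2 n)$, which is precisely what the two displayed inequalities and the space/seed bounds of \cref{thm:vmp-generator} demand.
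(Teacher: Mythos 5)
Your proposal matches the paper's own proof sketch essentially exactly: both reduce to a layered digraph, use the van Melkebeek--Prakriya ``hashing only'' pseudorandom weight generator (imported wholesale, with seed length $O(\log^2 n)$) to weight the graph, run the Reinhardt--Allender single-valued unambiguous min-uniqueness test, output $\botr$ on failure, and otherwise run the Reinhardt--Allender connectivity algorithm for min-unique digraphs. The only cosmetic discrepancy is that the paper describes the hash functions as assigning weights to \emph{vertices} rather than edges, which does not affect the argument.
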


	\begin{proof}[Proof sketch]
		We assume that the reader is familiar with the paper by van Melkebeek and Prakriya \cite{vmp17}. Given an instance $x$ of $\stconn$, the algorithm $\mathsf{vMPSeededAlg}$ first applies a reduction, giving a \emph{layered} digraph $G$ on which to test connectivity. Then, the first $O(\log^2 n)$ bits of $y$ are interpreted as specifying $O(\log n)$ hash functions, which are used to assign weights to the vertices in $G$. An algorithm by Reinhardt and Allender \cite{ra00} is run to determine whether the resulting weighted digraph is min-unique. If it is not, $\mathsf{vMPSeededAlg}$ outputs $\botr$. If it is, another closely related algorithm by Reinhardt and Allender \cite{ra00} is run to decide connectivity in the resulting weighted digraph.
	\end{proof}

	Notice that $\mathsf{vMPSeededAlg}$ can be thought of as having \emph{three} read-only inputs: the ``real'' input $x \in \{0, 1\}^n$; the random seed $y \in \{0, 1\}^{O(\log^2 n)}$; and the nondeterministic bits $z \in \{0, 1\}^{\poly(n)}$. The algorithm has two-way access to $x$ and $y$ and one-way access to $z$. Notice also that a computation path of $\mathsf{vMPSeededAlg}$ has \emph{four} possible outputs: $0$, indicating that $x \not \in \stconn$; $1$, indicating that $x \in \stconn$; $\botn$, indicating bad nondeterministic bits $z$; and $\botr$, indicating bad random bits $y$.
	
	Iterating over all $y$ in \cref{thm:vmp-generator} would take $\Theta(\log^2 n)$ space. By modifying their ``pseudorandom weight generator'', van Melkebeek and Prakriya gave an unambiguous algorithm for $\stconn$ that runs in $O(\log^{3/2} n)$ space. The performance of their algorithm is improved if we only need to search for \emph{short} paths; the precise details are given by the following theorem.
	
	\begin{theorem}[\cite{vmp17}] \label{thm:vmp-alg-2}
		There is a single-valued unambiguous algorithm $\mathsf{vMPShortPathsAlg}$ such that if $G$ is a digraph, $s, t \in V(G)$, and $r \in \N$, then $\mathsf{vMPShortPathsAlg}(G, s, t, r) = 1$ if and only if there is a directed path from $s$ to $t$ in $G$ of length at most $r$. Furthermore, $\mathsf{vMPShortPathsAlg}$ runs in time $\poly(n)$ and space $O(\log n \sqrt{\log r})$.
	\end{theorem}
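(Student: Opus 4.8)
The plan is to follow van Melkebeek and Prakriya \cite{vmp17}: combine the pseudorandom weight generator of \cref{thm:vmp-generator} and the Reinhardt--Allender unambiguous ``min-unique'' machinery \cite{ra00} with a Saks--Zhou-style recursion \cite{sz99} that pushes the seed length, and hence the space, down from $O(\log n \log r)$ to $O(\log n \sqrt{\log r})$. It is convenient to reformulate the problem: for $p \in \N$ let $R_p \subseteq V(G) \times V(G)$ be the relation ``$v$ is reachable from $u$ by a path of length at most $p$''. This $R_p$ is the edge set of a digraph on $V(G)$; it is \emph{exact}, with no approximation involved, in contrast to the stochastic matrices of Saks--Zhou; and it satisfies the composition identity that $p'$-bounded reachability in the digraph $(V(G), R_p)$ is exactly $R_{pp'}$. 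Deciding the theorem's question amounts to querying $R_r$.

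For the recursion, set $a = \lceil \sqrt{\log r}\,\rceil$ and $p = 2^a$, so that $r \le p^k$ with $k = \lceil \log r/a\rceil = O(\sqrt{\log r})$, and compute $R_{p^k}$ (which determines $R_r$) by iterating $R_{p^0} = $ the edge relation of $G$ with self-loops, and $R_{p^i} = p$-bounded reachability in $(V(G), R_{p^{i-1}})$. Each individual step is carried out with the unambiguous toolkit: assign weights to $(V(G), R_{p^{i-1}})$ using the weight generator of \cref{thm:vmp-generator}, whose seed for a $p$-layer instance has length $O(\log n \cdot \log p) = O(a\log n)$ (the number of hash functions scales with the number of layers); then unambiguously test min-uniqueness \cite{ra00}, and, if the test succeeds, run the Reinhardt--Allender unambiguous $O(\log n)$-space reachability routine on the weighted digraph. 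Because the digraph at level $i$ has edge set $R_{p^{i-1}}$, each of its edge queries is answered by a recursive level $i-1$ subcomputation, so the whole thing is a depth-$k$ recursion.

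The main obstacle is the seed bookkeeping -- exactly the issue the Saks--Zhou trick resolves. If each of the $k$ levels searched for and stored its own good weight-generator seed, the recursion stack would hold $k\cdot O(a\log n) = O(\log n\log r)$ bits, no better than running the weight generator once for $r$-bounded reachability. Instead the algorithm uses a \emph{single} seed shared across all levels, which is legitimate here for a reason that is actually cleaner than in the original Saks--Zhou argument: each digraph $(V(G), R_{p^{i-1}})$ is a fixed object that does not depend on any randomness, so after boosting the per-level min-uniqueness probability to $1 - 1/(2k)$ -- which costs only a constant factor in the seed length -- a single random seed of length $O(a\log n)$ simultaneously makes all $k$ of these digraphs min-unique with probability at least $1/2$. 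The algorithm thus enumerates candidate seeds in order, and for each candidate verifies unambiguously -- by running the min-uniqueness test at every level with that same candidate substituted throughout the recursion -- that it is good, using the first candidate that passes; a branch that exhausts the enumeration, or that hits a bad seed at some level, outputs $\botr$ or $\botn$ as appropriate. Since we only ever compose single-valued unambiguous subroutines (\cref{thm:vmp-generator} and the Reinhardt--Allender routines are of this kind), the composition rule for single-valued unambiguous algorithms keeps the full construction single-valued unambiguous.

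With a shared seed, the recursion stack holds only $O(\log n)$ bits per level -- the vertices currently being queried together with a pointer into the Reinhardt--Allender routine -- for $O(k\log n) = O(\log n\sqrt{\log r})$ in total, plus the one shared $O(a\log n) = O(\log n\sqrt{\log r})$-bit candidate seed and $O(\log n)$ scratch, which is $O(\log n\sqrt{\log r})$ overall; the running time is $\poly(n)$ by the careful implementation of \cite{vmp17}. I expect the delicate part to be precisely the verification that a single enumerated seed can be reused by the min-uniqueness checks at all $k$ levels without blowing up the per-level control state beyond $O(\log n)$ bits, and this is where I would import the detailed construction of \cite{vmp17} rather than redo it from scratch.
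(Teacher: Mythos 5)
Your recursion is a genuinely different route from the one the paper sketches, and while the space accounting is plausible, the construction does not meet the theorem's $\poly(n)$ time bound, which is essential downstream: \cref{thm:simulate-nondeterministic-branching-program} invokes $\mathsf{vMPShortPathsAlg}$ on graphs of size $2^{O(S)}$ and charges only $\poly$ of that, i.e.\ $2^{O(S)}$, for it. There are two independent sources of super-polynomial time in your proposal. First, the depth-$k$ recursion with $k = \Theta(\sqrt{\log r})$: the Reinhardt--Allender routine at level $i$ makes $\poly(n)$ edge queries, each of which triggers a full level-$(i-1)$ computation, so the running time satisfies $T(i) \geq \poly(n)\cdot T(i-1)$ and hence $T(k) = n^{\Theta(\sqrt{\log r})}$; since the available space is far too small to memoize the relations $R_{p^i}$, this blow-up is inherent to the recursive formulation (it is exactly why the original Saks--Zhou algorithm runs in time $2^{\Theta(\log^{3/2} n)}$ rather than in polynomial time). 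Second, your seed search enumerates all candidates for a shared seed of length $\Theta(\log n\sqrt{\log r})$, i.e.\ $n^{\Theta(\sqrt{\log r})}$ candidates, which is again super-polynomial once $r = n^{\Omega(1)}$ (and taking $r \leq |V(G)|$ without loss of generality does not help).

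The paper's route avoids both problems: it runs a \emph{single} Reinhardt--Allender computation on the $r$-layer graph, with weights produced by van Melkebeek and Prakriya's ``combined hashing and shifting'' generator, whose seed consists of only $O(\sqrt{\log r})$ hash functions of $O(\log n)$ bits each --- the shifting is what lets one hash function serve many layers, so no outer recursion over path lengths is needed. The hash functions are then found \emph{one at a time} by exhaustive search over the $\poly(n)$ candidates for each, maintaining the invariant that the already-weighted portion of the graph is min-unique, for total time $\poly(n)$ and space $O(\log n\sqrt{\log r})$. (Your observation that the digraphs $(V(G), R_{p^{i-1}})$ are deterministic objects, so a single seed can be shared across all levels by a union bound, is correct and is indeed cleaner than the corresponding step in Saks--Zhou; the gap is not the correctness of the shared seed but the cost of finding it and of the recursion itself.)
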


	\begin{proof}[Proof sketch]
		Again, we assume that the reader is familiar with the paper by van Melkebeek and Prakriya \cite{vmp17}. Again, we first apply a reduction, giving a layered digraph $G'$ of width $|V(G)|$ and length $r$, so that the question is whether there is a path from the first vertex in the first layer to the first vertex in the last layer.
		
		We rely on the ``combined hashing and shifting'' generator by van Melkebeek and Prakriya \cite[Theorem 1]{vmp17}. The seed of this generator specifies $O(\sqrt{\log r})$ hash functions (each is specified with $O(\log n)$ bits). We find these hash functions by exhaustive search one at a time, maintaining the invariant that portions of $G'$ that have weights assigned are min-unique. We test for min-uniqueness using a slight variant of the algorithm by Reinhardt and Allender \cite{ra00} described by van Melkebeek and Prakriya \cite[Lemma 1]{vmp17}.
	\end{proof}

	Roughly speaking, \cref{thm:vmp-alg-2} plays a role in our disambiguation results that is analogous to the role that the Nisan-Zuckerman generator played in our derandomization results.
	
	
	\subsection{Disambiguating branching programs} \label{sec:disambiguate-bp}
	
	For us, a \emph{nondeterministic branching program} $\mathcal{P}$ on $\{0, 1\}^n \times \{0, 1\}^m$ is a randomized branching program (but we think of the second input to the program as nondeterministic bits instead of random bits) such that some vertex $v_0 \in V(\mathcal{P})$ is labeled as the \emph{start vertex} and some vertex $v_{\text{accept}} \in V(\mathcal{P})$ is labeled as the \emph{accepting vertex}. We identify $\mathcal{P}$ with a function $\mathcal{P}: \{0, 1\}^n \times \{0, 1\}^m \to \{0, 1\}$ defined by
	\begin{equation}
		\mathcal{P}(x, y) = \begin{cases}
		1 & \text{if } \mathcal{P}(v_0; x, y) = v_{\text{accept}}, \\
		0 & \text{otherwise,}
		\end{cases}
	\end{equation}
	and we \emph{also} identify $\mathcal{P}$ with a function $\mathcal{P}: \{0, 1\}^n \to \{0, 1\}$ defined by
	\begin{equation} \label{eqn:nondeterminism}
		\mathcal{P}(x) = 1 \iff \exists y\; \mathcal{P}(x, y) = 1.
	\end{equation}
	(\Cref{eqn:nondeterminism} expresses the fact that $\mathcal{P}$ is a \emph{nondeterministic} branching program.)	Finally, an \emph{R-OW} nondeterministic branching program is just a nondeterministic branching program that is R-OW when thought of as a randomized branching program, i.e., it reads its nondeterministic bits from left to right.
	
	\begin{theorem} \label{thm:simulate-nondeterministic-branching-program}
		For every constant $c \in \N$, there is a single-valued unambiguous algorithm $\mathsf{A}$ with the following properties. Suppose $\mathcal{P}$ is an R-OW nondeterministic branching program on $\{0, 1\}^n \times \{0, 1\}^T$. Suppose $S \geq \log n$, where $S \stackrel{\normalfont\text{def}}{=} \lceil \log \size(\mathcal{P}) \rceil$, and $T \geq \length(\mathcal{P})$. Then
		\begin{equation}
			\density\{x \in \{0, 1\}^n : \mathsf{A}(\mathcal{P}, x, T) \neq \mathcal{P}(x)\} \leq 2^{-S^c}.
		\end{equation}
		Furthermore, $\mathsf{A}(\mathcal{P}, x, T)$ runs in time $2^{O(S)}$ and space $O(S \sqrt{\log \lceil T/n \rceil + \log S})$.
	\end{theorem}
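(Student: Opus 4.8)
The plan is to transfer the ``out of sight, out of mind'' argument behind \cref{thm:simulate-branching-program} to the nondeterministic setting, with the unambiguous $\stconn$ machinery of van Melkebeek and Prakriya (\cref{thm:vmp-generator,thm:vmp-alg-2}) playing the role that Nisan's generator plays there. First I would normalize: since $\mathcal{P}$ is a DAG on $2^S$ vertices we may assume $T < 2^S$, and making $v_{\text{accept}}$ and all $\mathcal{P}$-terminals absorbing does not change $\mathcal{P}(x)$, so for fixed $x$ the question ``$\mathcal{P}(x)=1$?'' is precisely an instance of $\stconn$ on the $2^S$-vertex digraph $G_x$ obtained by plugging $x$ into $\mathcal{P}$, with all paths of length at most $T$. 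Now carve out $B=\lfloor n/S^{c+1}\rfloor$ disjoint blocks $I_1,\dots,I_B\subseteq[n]$ of size $S^{c+1}$ (if $n<S^{c+1}$, run $\mathsf{vMPShortPathsAlg}$ directly on $G_x$, as in step~1 of \cref{fig:simulate-branching-program}), set $r=\lceil T/n\rceil\cdot\poly(S)$ as in \cref{thm:simulate-branching-program}, and define a \emph{phase graph} $\hat G_x$ on vertex set $V(\mathcal{P})$ by putting an edge $u\to v$ whenever there is some $b\in[B]$ such that $v$ is reachable from $u$ in the restriction $\mathcal{P}\vert_{[n]\setminus I_b}$ with input $x$. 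The algorithm $\mathsf{A}$ then runs $\mathsf{vMPShortPathsAlg}$ to decide whether $v_{\text{accept}}$ is reachable from $v_0$ in $\hat G_x$ by a path of length at most $r$, and answers accordingly. Soundness holds for \emph{every} $x$: concatenating the $\mathcal{P}$-walks that witness the edges of a short $v_0\rightsquigarrow v_{\text{accept}}$ path in $\hat G_x$ gives a genuine $v_0\rightsquigarrow v_{\text{accept}}$ walk in $G_x$.

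For completeness, fix $x$ with $\mathcal{P}(x)=1$ and greedily split a genuine accepting path $P=(v_0=u_0,\dots,u_\ell=v_{\text{accept}})$, $\ell\le T$, into phases: from the current vertex, any $\lfloor B/2\rfloor$ consecutive steps of $P$ touch at most $\lfloor B/2\rfloor$ of the blocks, so some $I_{b^*}$ is untouched for at least the next $\lfloor B/2\rfloor$ steps, and the corresponding edge of $\hat G_x$ advances along $P$ by at least $\min\{\lfloor B/2\rfloor,\text{remaining}\}$ positions; hence $P$ is traversed within $r$ phases. This is the deterministic counterpart of \cref{clm:hybrid-2-hybrid-3}: the random choice of block in the $\mathbf{BPL}$ proof is here replaced by the existential quantifier over $b$ in the definition of $\hat G_x$. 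Checking that this substitution does not break soundness or the error analysis is the first place where some care is needed (it does not).

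The remaining work is to compute the edge relation of $\hat G_x$ unambiguously in $O(S)$ space for all but a negligible fraction of $x$. To test ``$v$ reachable from $u$ in $\mathcal{P}\vert_{[n]\setminus I_b}$'', I would run the Reinhardt--Allender min-uniqueness and connectivity routines underlying \cref{thm:vmp-generator}, but feeding them a hashing seed \emph{extracted from $x\vert_{I_b}$} by $\mathsf{SUExt}$ (\cref{thm:su05}, with parameters as in \cref{thm:simulate-branching-program}) rather than iterating over all seeds: extract $\poly(S)$ bits, read them as $\Theta(S)$ independent candidate seeds, and keep the first candidate under which the relevant $O(S)$-width layered subgraph of $\mathcal{P}\vert_{[n]\setminus I_b}$-with-$x$ is min-unique (testable in $O(S)$ space). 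Since this test never reads $x\vert_{I_b}$ --- the bits it consumes are ``out of sight'' --- \cref{prop:zuc97} applies exactly as in \cref{clm:a-hybrid-1}: a uniform bundle of candidates fails to induce min-uniqueness with probability only $2^{-\Omega(S)}$, and for fixed $x\vert_{[n]\setminus I_b}$ and fixed extractor seed the extracted bundle matches this up to error $2^{-\Theta(S)}$ for all but a $2^{k+O(1)}$-fraction of $x\vert_{I_b}$; iterating over the $2^{O(S)}$ extractor seeds and taking a majority of the resulting bits washes out the residual. A union bound over all $(\text{block},\text{pair of vertices},\text{extractor seed})$ triples --- exactly as in \cref{clm:a-hybrid-1} --- shows that every edge query is answered correctly on all but a $2^{-S^c}$ fraction of inputs $x$, and on those inputs both halves above yield $\mathsf{A}(\mathcal{P},x,T)=\mathcal{P}(x)$.

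Finally, efficiency: since $T<2^S$ we have $r\le 2^{O(S)}$, so $\hat G_x$ has $2^S$ vertices and the outer call to $\mathsf{vMPShortPathsAlg}$ with path bound $r$ runs in time $2^{O(S)}$ and space $O\!\left(S\sqrt{\log r}\right)=O\!\left(S\sqrt{\log\lceil T/n\rceil+\log S}\right)$ by \cref{thm:vmp-alg-2}; the inner subroutine resolving each edge query runs in $O(S)$ space and $2^{O(S)}$ time and is absorbed. Each ingredient is single-valued unambiguous, and the outer algorithm invokes the inner one only as a single-valued unambiguous oracle (whenever it needs to know an edge of $\hat G_x$), so $\mathsf{A}$ is single-valued unambiguous overall. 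I expect the main obstacle to be exactly the constraint that the inner reachability test cost only $O(S)$ space rather than the $O(S\sqrt{\log T})$ one gets by running $\mathsf{vMPShortPathsAlg}$ on $G_x$ directly --- this is what forces the use of an \emph{extracted} rather than exhaustively searched weight seed, hence the ``for almost every $x$'' min-uniqueness analysis via \cref{prop:zuc97} --- together with verifying that the whole composition stays single-valued unambiguous; everything else parallels \cref{sec:bptisp}.
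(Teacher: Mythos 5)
Your proposal is correct and follows essentially the same strategy as the paper: reduce to short-path $\stconn$ on a contracted graph whose edges are reachability queries in the block-restricted programs $\mathcal{P}\vert_{[n]\setminus I_b}[x]$, answer those queries unambiguously via $\mathsf{vMPSeededAlg}$ with seeds extracted from the ``out of sight'' block $x\vert_{I_b}$ (so \cref{prop:zuc97} bounds the bad inputs), and run $\mathsf{vMPShortPathsAlg}$ on the contracted graph with path bound $\lceil T/n\rceil\cdot\poly(S)$ to get the $O(S\sqrt{\log\lceil T/n\rceil+\log S})$ space bound. The only differences are organizational --- the paper loops over blocks $b$ and builds a separate graph $H_b$ on $\{v: i(v)\in I_b\}\cup\{v_0,v_{\text{accept}}\}$ with an averaging (rather than greedy) argument for completeness, and uses $\mathsf{GUVExt}$ rather than $\mathsf{SUExt}$ in the inner subroutine --- none of which changes the substance.
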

	
	Toward proving \cref{thm:simulate-nondeterministic-branching-program}, we introduce some notation. The computation of $\mathcal{P}(x)$ naturally reduces to $\stconn$. Let $\mathcal{P}[x]$ be the digraph $(V, E)$, where $V = V(\mathcal{P})$ and $E$ is the set of edges $(u, v)$ in $\mathcal{P}$ labeled with $x_{i(u)}0$ or $x_{i(u)}1$. (So every nonterminal vertex in $\mathcal{P}[x]$ has outdegree $2$.) That way, $\mathcal{P}(x) = 1$ if and only if $(\mathcal{P}[x], v_0, v_{\text{accept}}) \in \stconn$.
	
	\begin{figure}
		\begin{framed}
			\begin{enumerate}
				\item If $S^{c + 1} > n$, output $\mathsf{vMPShortPathsAlg}(\mathcal{P}[x], v_0, v_{\text{accept}}, T)$. Otherwise:
				\item Let $I_1, I_2, \dots, I_B \subseteq [n]$ be disjoint sets of size $S^{c + 1}$ with $B$ as large as possible.
				\item For $b = 1$ to $B$:
				\begin{enumerate}
					\item Let $I = I_b$. Let $V_b = \{v \in V(\mathcal{P}) : i(v) \in I\} \cup \{v_0, v_{\text{accept}}\}$. Let $E_b$ be the set of pairs $(u, v) \in V_b^2$ such that there is a directed path from $u$ to $v$ in $\mathcal{P} \vert_{[n] \setminus I}[x]$. Let $H_b$ be the digraph $(V_b, E_b)$.
					\item Compute $a \stackrel{\text{def}}{=} \mathsf{vMPShortPathsAlg}(H_b, v_0, v_{\text{accept}}, \lfloor S^{c + 1} T/n \rfloor + 1)$. Whenever $\mathsf{vMPShortPathsAlg}$ asks whether some pair $(u, v)$ is in $E_b$, run $\mathsf{B}(\mathcal{P}, x, b, u, v)$, where $\mathsf{B}$ is the algorithm of \cref{fig:eb-membership}.
					\item If $a = 1$, halt and output $1$.
				\end{enumerate}
				\item Output $0$.
			\end{enumerate}
			\vspace{-5mm}
		\end{framed}
		\vspace{-5mm}
		\caption{The algorithm $\mathsf{A}$ of \cref{thm:simulate-nondeterministic-branching-program}.} \label{fig:simulate-nondeterministic-branching-program}
	\end{figure}

	\begin{figure}
		\begin{framed}
			\begin{enumerate}
				\item For every $y \in \{0, 1\}^{O(\log S)}$:
				\begin{enumerate}
					\item Let $a' = \mathsf{vMPSeededAlg}(\mathcal{P} \vert_{[n] \setminus I}[x], u, v, \mathsf{GUVExt}(x \vert_I, y))$.
					\item If $a' \neq \botr$, halt and output $a'$.
				\end{enumerate}
				\item Output $\boti$.
			\end{enumerate}
			\vspace{-5mm}
		\end{framed}
		\vspace{-5mm}
		\caption{The algorithm $\mathsf{B}$ used by $\mathsf{A}$ to decide whether $(u, v) \in E_b$. The block $I$ is the same block $I_b$ used by $\mathsf{A}$.} \label{fig:eb-membership}
	\end{figure}

	The algorithm $\mathsf{A}$ of \cref{thm:simulate-nondeterministic-branching-program} is given in \cref{fig:simulate-nondeterministic-branching-program}. The algorithm relies on a subroutine $\mathsf{B}$ given in \cref{fig:eb-membership}.
	
	\paragraph{Parameters}
	Let $s$ be the number of random bits used by $\mathsf{vMPSeededAlg}$, so that $s \leq O(S^2)$. The subroutine $\mathsf{B}$ relies on the extractor $\mathsf{GUVExt}$ of \cref{thm:guv09}. This extractor is instantiated with source length $\ell \stackrel{\text{def}}{=} S^{c + 1}$, error $0.1$, entropy $k \stackrel{\text{def}}{=} 2s$, and output length $s$. The seed length of $\mathsf{GUVExt}$ is $d \leq O(\log \ell) = O(\log S)$.
	
	\paragraph{Efficiency}
	First, we bound the space complexity of $\mathsf{A}$. If $S^{c + 1} > n$, then $\mathsf{A}$ runs in space
	\begin{equation}
		O(\log \size(\mathcal{P}) \sqrt{\log T}) = O(S \sqrt{\log T}) \leq O\left(S \sqrt{\log \frac{TS^{c + 1}}{n}}\right) = O(S \sqrt{\log(T/n) + \log S}).
	\end{equation}
	Suppose now that $S^{c + 1} \leq n$. The extractor $\mathsf{GUVExt}$ runs in space $O(\log S)$, and $\mathsf{vMPSeededAlg}$ runs in space $O(S)$, so $\mathsf{B}$ runs in space $O(S)$. The algorithm $\mathsf{vMPShortPathsAlg}$ runs in space
	\begin{align}
		O(\log |V_b| \sqrt{\log(\lfloor S^{c + 1}T/n \rfloor + 1)}) &\leq O(S \sqrt{\log \lceil T/n \rceil + \log S}).
	\end{align}
	Therefore, overall, $\mathsf{A}$ runs in space $O(S \sqrt{\log \lceil T/n \rceil + \log S})$.
	
	Next, we bound the running time of $\mathsf{A}$. If $S^{c + 1} > n$, then $\mathsf{A}$ runs in time $\poly(\size(\mathcal{P})) = 2^{O(S)}$ as claimed. Suppose now that $S^{c + 1} \leq n$. Because $\mathsf{B}$ runs in space $O(S)$, it must run in time $2^{O(S)}$. Therefore, $\mathsf{vMPShortPathsAlg}$ runs in time $2^{O(S)} \cdot 2^{O(S)} = 2^{O(S)}$. Therefore, overall, $\mathsf{A}$ runs in time $2^{O(S)}$.
	
	\paragraph{Correctness}
	
	Since $\mathsf{vMPSeededAlg}$ and $\mathsf{vMPShortPathsAlg}$ are single-valued unambiguous algorithms, $\mathsf{A}$ is a single-valued unambiguous algorithm. All that remains is to show that for most $x$, $\mathsf{A}(\mathcal{P}, x, T) = \mathcal{P}(x)$. First, we show that for most $x$, the subroutine $\mathsf{B}$ is correct, i.e., the one computation path that does not output $\botn$ outputs a bit indicating whether $(u, v) \in E_b$. Clearly, the only way that $\mathsf{B}$ can be incorrect is if it outputs $\boti$, indicating a ``hard'' input $x$.

	\begin{claim} \label{clm:subroutine-correctness}
		For every $\mathcal{P}$,
		\begin{equation}
			\density\{x \in \{0, 1\}^n : \exists b, u, v \text{ such that } \mathsf{B}(\mathcal{P}, x, b, u, v) = \boti\} \leq 2^{-S^c}.
		\end{equation}
	\end{claim}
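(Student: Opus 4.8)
The plan is to follow the template of the proof of \cref{clm:a-hybrid-1}, with the van Melkebeek--Prakriya seeded algorithm of \cref{thm:vmp-generator} playing the role that Nisan's generator plays there and with $\botr$ simply treated as one more symbol in the range of the sampled function. The crucial structural point is ``out of sight, out of mind'': the restricted graph $\mathcal{P}\vert_{[n] \setminus I}[x]$, and hence the entire behavior of $\mathsf{B}(\mathcal{P}, x, b, u, v)$ apart from the value fed into $\mathsf{GUVExt}$, depends on $x$ only through $x\vert_{[n] \setminus I}$, because $\mathcal{P}\vert_{[n] \setminus I}$ has no outgoing edges from vertices querying a coordinate in $I = I_b$. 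Thus $x\vert_I$ is never looked at by anything that processes $\mathsf{GUVExt}(x\vert_I, \cdot)$, which is exactly the situation in which \cref{prop:zuc97} applies.

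Concretely, I would fix $b \in [B]$ (hence $I = I_b$), a pair $u, v \in V(\mathcal{P})$, and a value $\sigma \in \{0,1\}^{n - S^{c+1}}$ for $x\vert_{[n]\setminus I}$. Writing $G_\sigma$ for the graph $\mathcal{P}\vert_{[n]\setminus I}[x]$ (a function of $\sigma$ alone), define $f : \{0,1\}^s \to \{0, 1, \botr\}$ by $f(y') = \mathsf{vMPSeededAlg}(G_\sigma, u, v, y')$, using that $\mathsf{vMPSeededAlg}$ reads only $s = O(S^2)$ bits of its random input. By \cref{thm:vmp-generator}, $\Pr_{y' \sim U_s}[f(y') = \botr] \le 1/2$ for every such $f$. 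Applying \cref{prop:zuc97} to the $(2s, 0.1)$-extractor $\mathsf{GUVExt} : \{0,1\}^{S^{c+1}} \times \{0,1\}^d \to \{0,1\}^s$ and the function $f$, with $|V| = 3$ and thus $\delta = 0.1 \cdot 3/2 = 0.15$, bounds the number of blocks $w \in \{0,1\}^{S^{c+1}}$ with $f(\mathsf{GUVExt}(w, U_d)) \not\sim_{0.15} f(U_s)$ by $2^{2s+1}\cdot 3 \le 2^{2s+3}$. For any $w = x\vert_I$ outside this bad set, $\Pr_{y \sim U_d}[f(\mathsf{GUVExt}(w, y)) = \botr] \le 1/2 + 0.15 < 1$, so some seed $y$ satisfies $f(\mathsf{GUVExt}(w, y)) \ne \botr$, and by inspection of \cref{fig:eb-membership} this forces $\mathsf{B}(\mathcal{P}, x, b, u, v) \ne \boti$.

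The conclusion then follows by a union bound exactly as in \cref{clm:a-hybrid-1}. For each fixed $b, u, v, \sigma$ there are at most $2^{2s+3}$ completions $x\vert_I$ on which $\mathsf{B}$ returns $\boti$; summing over the $2^{n - S^{c+1}}$ choices of $\sigma$, the at most $2^{2S}$ pairs $(u,v) \in V(\mathcal{P})^2$, and the $B \le n \le 2^S$ choices of $b$ gives at most $2^S \cdot 2^{2S} \cdot 2^{n - S^{c+1}} \cdot 2^{2s+3} = 2^{n - S^{c+1} + O(S^2)}$ inputs $x$ for which some call to $\mathsf{B}$ outputs $\boti$; since $s = O(S^2)$, this is at most $2^{n - S^c}$ assuming $c \ge 2$ and $n$ sufficiently large, and dividing by $2^n$ gives the claimed density bound. (As in \cref{thm:simulate-branching-program}, it suffices to establish the statement for a larger constant, and one may assume $S^{c+1} \le n$, since otherwise $\mathsf{B}$ is never invoked by $\mathsf{A}$.)

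I do not anticipate a genuine obstacle here: the argument is structurally the same as that of \cref{clm:a-hybrid-1}. The only point requiring care is being precise that $G_\sigma = \mathcal{P}\vert_{[n]\setminus I}[x]$ depends on $x\vert_{[n]\setminus I}$ alone, so that the sampler bound of \cref{prop:zuc97} can be invoked separately for each value of $x\vert_{[n]\setminus I}$ before summing; the remainder is parameter bookkeeping.
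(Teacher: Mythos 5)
Your proposal is correct and follows essentially the same route as the paper: fix $b$, $u$, $v$, and the value of $x\vert_{[n]\setminus I}$, note that $\mathcal{P}\vert_{[n]\setminus I}[x]$ (and hence the function $y' \mapsto \mathsf{vMPSeededAlg}(\mathcal{P}\vert_{[n]\setminus I}[x], u, v, y')$) is independent of $x\vert_I$, apply \cref{prop:zuc97} to bound the number of bad blocks $x\vert_I$ by $2^{k+O(1)}$, and union-bound over $b$, $u$, $v$, and the fixed portion of $x$. The only differences are cosmetic: you spell out the step that closeness to a distribution hitting $\botr$ with probability at most $1/2$ forces some seed $z$ to yield a non-$\botr$ answer (which the paper leaves implicit), and your exponent $O(S^2)$ is slightly tighter than the paper's stated $O(S^4)$.
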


	\begin{proof}
		The graph $\mathcal{P}\vert_{[n] \setminus I}[x]$ does not depend on $x \vert_I$. Therefore, for each fixed $b$, each fixed $z \in \{0, 1\}^{n - |I_b|}$, and each fixed $u, v \in V(\mathcal{P})$, by \cref{prop:zuc97},
		\begin{equation}
			\#\{x : x \vert_{[n] \setminus I} = z \text{ and } \mathsf{B}(\mathcal{P}, x, b, u, v) = \boti\} \leq 2^{k + 2} \leq 2^{O(S^4)}.
		\end{equation}
		Therefore, by summing over all $b, z, u, v$,
		\begin{align}
			\#\{x \in \{0, 1\}^n : \exists b, u, v \text{ such that } \mathsf{B}(\mathcal{P}, x, b, u, v) = \boti\} &\leq 2^{n - S^{c + 1} + \log n + 2S + O(S^4)} \\
			&= 2^{n - S^{c + 1} + O(S^4)} \\
			&\leq 2^{n - S^c}
		\end{align}
		for sufficiently large $n$.
	\end{proof}

	Next, we show that as long as $\mathsf{B}$ does not make any mistakes, $\mathsf{A}$ is correct.

	\begin{claim} \label{clm:outer-routine-correctness}
		If $\mathcal{P}(x) = 1$, there is some $b \in [B]$ so that there is a path from $v_0$ to $v_{\text{accept}}$ through $H_b$ of length at most $\lfloor S^{c + 1} T/n \rfloor + 1$.
	\end{claim}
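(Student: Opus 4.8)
The plan is to manufacture, from the hypothesis $\mathcal{P}(x) = 1$, an explicit short path in one of the graphs $H_b$ by ``contracting'' an accepting computation of $\mathcal{P}$ down to the vertices that lie in $V_b$. First I would unpack the hypothesis: since $\mathcal{P}(x) = 1$, by \cref{eqn:nondeterminism} there is a nondeterministic string $y$ with $\mathcal{P}(x, y) = 1$, which by the definition of $\mathcal{P}[x]$ means there is a directed path from $v_0$ to $v_{\text{accept}}$ in $\mathcal{P}[x]$. Because $\mathcal{P}$ is acyclic, this path is simple, so it has length at most $\length(\mathcal{P}) \le T$. Fix such a path $\pi$.

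Next I would pick the right block $b^*$ by averaging. Along $\pi$ each step reads exactly one coordinate of $x$, and the blocks $I_1, \dots, I_B$ are pairwise disjoint, each of size $S^{c+1}$, with $B = \lfloor n / S^{c+1} \rfloor$. Writing $q_b$ for the number of steps of $\pi$ that read a coordinate of $I_b$, we get $\sum_{b=1}^{B} q_b \le \length(\mathcal{P}) \le T$, so some block $b^*$ has $q_{b^*} \le \lfloor S^{c+1} T / n\rfloor$; here the bound $B \cdot S^{c+1} \le n$ together with integrality of the $q_b$ is what produces the floor, and the degenerate regime $S^{c+1} T < n$ --- in which $\pi$ is shorter than $B$ and hence misses some block outright, so a one-edge path in $H_{b^*}$ already works --- should be handled separately. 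At the cost of excluding one more block I can also arrange that $v_0$ does not read $I_{b^*}$, so that $v_0$ is not one of the vertices that becomes terminal in $\mathcal{P}\vert_{[n]\setminus I_{b^*}}$.

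Finally I would contract. List the vertices of $\pi$ that lie in $V_{b^*}$, in their order along $\pi$, as $v_0 = u_0, u_1, \dots, u_k = v_{\text{accept}}$; these are $v_0$, $v_{\text{accept}}$, and the vertices at which $\pi$ reads $I_{b^*}$, so $k \le q_{b^*} + 1 \le \lfloor S^{c+1} T / n\rfloor + 1$. For each $j$, the portion of $\pi$ strictly between $u_j$ and $u_{j+1}$ uses only vertices outside $V_{b^*}$ and therefore reads no coordinate of $I_{b^*}$; unwinding the definitions of the restriction operation, of $\mathcal{P}[x]$, and of $E_{b^*}$ then shows $(u_j, u_{j+1}) \in E_{b^*}$. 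Hence $u_0, u_1, \dots, u_k$ is a directed path from $v_0$ to $v_{\text{accept}}$ in $H_{b^*}$ of length at most $\lfloor S^{c+1} T / n\rfloor + 1$, which is exactly what $\mathsf{vMPShortPathsAlg}$ in \cref{fig:simulate-nondeterministic-branching-program} is asked to detect.

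The delicate part is this last step: one has to pin down precisely which vertices of $\pi$ ``survive'' in $\mathcal{P}\vert_{[n]\setminus I_{b^*}}[x]$ and verify that each segment of $\pi$ between consecutive members of $V_{b^*}$ really is a legitimate directed path there. This is where the exact semantics of the restriction --- a vertex becomes terminal exactly when it tries to read a coordinate of $I_{b^*}$, so the computation stays ``out of sight'' of $x \vert_{I_{b^*}}$ --- together with a little endpoint bookkeeping for $v_0$ and $v_{\text{accept}}$ must be used. The averaging in the middle step is otherwise routine, apart from getting the floor to come out exactly.
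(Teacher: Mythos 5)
Your argument is correct and is essentially the paper's own proof: fix an accepting path of length at most $T$, average over the blocks $I_1,\dots,I_B$ to find one that the path reads at most $\lfloor S^{c+1}T/n\rfloor$ times (the paper phrases this probabilistically via linearity of expectation), and contract the path to its visits to $V_b$ to obtain the short path in $H_b$. The floor/rounding subtlety you flag (since $B=\lfloor n/S^{c+1}\rfloor$ may be strictly less than $n/S^{c+1}$) is present in the paper's proof as well and is immaterial, and your extra bookkeeping for $v_0$ and the degenerate regime is harmless.
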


	\begin{proof}
		Since $\mathcal{P}(x) = 1$, there is a path from $v_0$ to $v_{\text{accept}}$ through $\mathcal{P}[x]$ of length at most $T$. Let $v_0, v_1, v_2, \dots, v_{T'} = v_{\text{accept}}$ be the vertices visited by that path, so that $T' \leq T$. Consider picking $b \in [B]$ uniformly at random. Then for each $t < T'$, $\Pr[i(v_t) \in I_b] \leq S^{c + 1}/n$. Therefore, by linearity of expectation,
		\begin{equation}
			\E[\#\{t : i(v_t) \in I_b\}] \leq S^{c + 1} T/n.
		\end{equation}
		The best case is at least as good as the average case, so there is some $b \in [B]$ such that $\#\{t : i(v_t) \in I_b\} \leq S^{c + 1} T/n$. Let $t_1, t_2, \dots, t_r$ be the indices $t$ such that $i(v_t) \in I_b$. Then by the definition of $E_b$, the edges $(v_0, t_1), (t_1, t_2), \dots, (t_{r - 1}, t_{r}), (t_{r}, v_{\text{accept}})$ are all present in $H_b$. Therefore, there is a path from $v_0$ to $v_{\text{accept}}$ through $H_b$ of length at most $r + 1$.
	\end{proof}

	Combining \cref{clm:subroutine-correctness,clm:outer-routine-correctness} completes the proof of \cref{thm:simulate-nondeterministic-branching-program}.
	
	\subsection{Disambiguating uniform random-access algorithms} \label{sec:ntisp}
	
	\begin{corollary} \label{cor:ntisp}
		For every space-constructible function $S(n) \geq \log n$, for every constant $c \in \N$,
		\begin{equation}
			\mathbf{NTISP}(n \cdot \poly(S), S) \text{ is within } 2^{-S^c} \text{ of } \mathbf{UTISP}(2^{O(S)}, S \sqrt{\log S}).
		\end{equation}
	\end{corollary}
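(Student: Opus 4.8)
The plan is to derive \cref{cor:ntisp} from \cref{thm:simulate-nondeterministic-branching-program} in exactly the same way that \cref{cor:bptisp-zero-coins} is obtained from \cref{thm:simulate-branching-program} via \cref{cor:bptisp-poly-coins}. Fix $L \in \mathbf{NTISP}(n \cdot \poly(S), S)$ and let $\mathcal{A}$ be a nondeterministic random-access Turing machine witnessing this, say with running time $n \cdot S^{c'}$ for a constant $c'$; without loss of generality $\mathcal{A}$ has a unique accepting configuration. First I would, for each input length $n$, form the R-OW nondeterministic branching program $\mathcal{P}_n$ whose vertices are the configurations of $\mathcal{A}$ using at most $S(n)$ work-tape cells, with $i(v)$ the input-head position, $j(v)$ the nondeterministic-tape head position, $v_0$ the start configuration and $v_{\text{accept}}$ the accepting configuration, and whose edges are the transitions of $\mathcal{A}$. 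This $\mathcal{P}_n$ has size $2^{O(S)}$, length at most $n \cdot S^{c'}$, and satisfies $\mathcal{P}_n(x) = L(x)$ for every $x \in \{0,1\}^n$. Since $S$ is space constructible and the neighborhood of any configuration can be computed in $\poly(S)$ time and $O(S)$ space from the transition function of $\mathcal{A}$, the adjacency-list encoding of $\mathcal{P}_n$ is available to the algorithm $\mathsf{A}$ of \cref{thm:simulate-nondeterministic-branching-program} within its required resource bounds, exactly as in the proof of \cref{cor:bptisp-poly-coins}.

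Next I would set $L' = \{x : \mathsf{A}(\mathcal{P}_{|x|}, x, |x| \cdot S^{c'}) = 1\}$, where $\mathsf{A}$ is the single-valued unambiguous algorithm of \cref{thm:simulate-nondeterministic-branching-program} instantiated with the constant $c$ from the statement of \cref{cor:ntisp}. Because $\mathsf{A}$ is single-valued unambiguous and outputs one bit on its surviving computation path, $L'$ lies in $\mathbf{UTISP}$ with the time and space bounds of \cref{thm:simulate-nondeterministic-branching-program} evaluated at $T = n \cdot S^{c'}$. The key parameter check is that $\lceil T/n \rceil = S^{c'} = \poly(S)$, so $\sqrt{\log \lceil T/n \rceil + \log S} = O(\sqrt{\log S})$; hence $\mathsf{A}$ runs in time $2^{O(S)}$ and space $O(S \sqrt{\log S})$, giving $L' \in \mathbf{UTISP}(2^{O(S)}, S \sqrt{\log S})$. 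Finally, \cref{thm:simulate-nondeterministic-branching-program} bounds $\density\{x \in \{0,1\}^n : \mathsf{A}(\mathcal{P}_n, x, n S^{c'}) \neq \mathcal{P}_n(x)\}$ by $2^{-S^c}$, and since $\mathcal{P}_n(x) = L(x)$ this yields $\Pr_{x \in \{0,1\}^n}[x \in L \Delta L'] \leq 2^{-S^c}$, which is precisely the statement that $L$ is within $2^{-S^c}$ of $\mathbf{UTISP}(2^{O(S)}, S\sqrt{\log S})$.

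I do not expect a serious obstacle here: the substantive content is \cref{thm:simulate-nondeterministic-branching-program}, and \cref{cor:ntisp} is bookkeeping on top of it. The two points that need a little care are (i) the uniform/constructibility aspects of presenting $\mathcal{P}_n$ to $\mathsf{A}$ in adjacency-list form with neighborhoods computable in $\poly(S)$ time and $O(S)$ space — handled exactly as in \cref{cor:bptisp-poly-coins} using space-constructibility of $S$ — and (ii) plugging $T = n \cdot S^{c'}$ into the bounds of \cref{thm:simulate-nondeterministic-branching-program} and observing that the ratio $\lceil T/n\rceil = \poly(S)$ collapses the $\sqrt{\log \lceil T/n\rceil}$ term to $O(\sqrt{\log S})$, so the space bound becomes $O(S\sqrt{\log S})$ (with a hidden constant depending only on $\mathcal{A}$, which is harmless since $\mathbf{UTISP}(2^{O(S)}, S\sqrt{\log S})$ allows any constant factor).
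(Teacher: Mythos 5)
Your proposal is correct and matches the paper's own (much terser) proof sketch: translate the $\mathbf{NTISP}(n\cdot\poly(S),S)$ machine into an R-OW nondeterministic branching program of size $2^{O(S)}$ and length $T = n\cdot\poly(S)$, apply \cref{thm:simulate-nondeterministic-branching-program}, and note that $\lceil T/n\rceil = \poly(S)$ collapses the space bound to $O(S\sqrt{\log S})$. The extra bookkeeping you supply (uniform construction of $\mathcal{P}_n$ as in \cref{cor:bptisp-poly-coins}) is exactly what the paper leaves implicit.
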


	\begin{proof}[Proof sketch]
		The class $\mathbf{NTISP}(n \cdot \poly(S), S)$ corresponds to R-OW nondeterministic branching programs of size $2^{O(S)}$ and length $T = n \cdot \poly(S)$. For these parameters, the algorithm of \cref{thm:simulate-nondeterministic-branching-program} runs in time $2^{O(S)}$ and space $O(S \sqrt{\log S})$.
	\end{proof}

	\subsection{Disambiguation with advice} \label{sec:ul-advice}
	
	We now show how to disambiguate $\mathbf{NL}$ with only $n + O(\log^2 n)$ bits of advice. The proof is very similar to the proof of \cref{thm:fk06-improved}.
	
	\begin{theorem} \label{thm:ra00-improved}
		$\mathbf{NL} \subseteq \mathbf{UL}/(n + O(\log^2 n))$.
	\end{theorem}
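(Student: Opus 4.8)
The plan is to mimic the proof of \cref{thm:fk06-improved}, replacing Nisan's generator with the van Melkebeek--Prakriya seeded algorithm $\mathsf{vMPSeededAlg}$ of \cref{thm:vmp-generator}, and keeping the GUV-based extractor of \cref{thm:guv09} as a sampler to compress the advice down to nearly $n$ bits. First I would fix $L \in \mathbf{NL}$ and apply the standard logspace reduction from $L$ to $\stconn$ via configuration graphs, so that it suffices to disambiguate $\stconn$ on inputs of length $n$ using $n + O(\log^2 n)$ bits of advice. Let $s = O(\log^2 n)$ denote the seed length of $\mathsf{vMPSeededAlg}$, and let $\mathsf{GUVExt}: \{0,1\}^{n + 2s + 3} \times \{0,1\}^d \to \{0,1\}^s$ be the $(2s, 0.1)$-extractor of \cref{thm:guv09}, so that $d \le O(\log n)$ and $\mathsf{GUVExt}$ is computable in space $O(\log n)$.

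The advice-taking algorithm, on input $x \in \{0,1\}^n$ and advice $a \in \{0,1\}^{n + 2s + 3}$, iterates over all $z \in \{0,1\}^d$ and runs $\mathsf{vMPSeededAlg}(x, \mathsf{GUVExt}(a, z))$ (computing the bits of $\mathsf{GUVExt}(a,z)$ on demand, using two-way access to the advice $a$); as soon as some $z$ produces an output other than $\botr$ — which by \cref{thm:vmp-generator} is then guaranteed to equal $\stconn(x)$ — it halts and returns that bit. The symbol $\botn$ is propagated in the usual way, so the composition is single-valued unambiguous and in particular yields a $\mathbf{UL}$ machine; it clearly runs in space $O(\log n)$.

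Correctness is a counting argument via the sampler property (\cref{prop:zuc97}). For a fixed $x$, let $g: \{0,1\}^s \to \{0,1\}$ be the indicator of the event that $\mathsf{vMPSeededAlg}(x, \cdot)$ outputs $\botr$; by \cref{thm:vmp-generator}, $\Pr[g(U_s) = 1] \le 1/2$. Applying \cref{prop:zuc97} with $|V| = 2$ and $\delta = 0.1$, the number of advice strings $a$ with $g(\mathsf{GUVExt}(a, U_d)) \not\sim_{0.1} g(U_s)$ is at most $2^{2s + 2}$. For any other $a$ we get $\Pr_z[g(\mathsf{GUVExt}(a, z)) = 1] \le 0.6 < 1$, so some $z$ makes $\mathsf{vMPSeededAlg}(x, \mathsf{GUVExt}(a, z))$ output $\stconn(x)$, and the algorithm is correct on $x$. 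Summing over all $2^n$ inputs $x$, the number of advice strings that are bad for some $x$ is at most $2^{n + 2s + 2} < 2^{n + 2s + 3}$, so a good advice string exists. Since $2s + 3 = O(\log^2 n)$, this gives $\mathbf{NL} \subseteq \mathbf{UL}/(n + O(\log^2 n))$.

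I do not expect a genuine obstacle, since the argument parallels \cref{thm:fk06-improved}; the one point needing care is checking that the loop over the $2^d$ choices of $z$, together with the propagation of $\botn$, leaves exactly one non-$\botn$ computation path and that this path's output lies in $\{0,1\}$ whenever $a$ is a good advice string — so that declaring only the ``output $1$'' path accepting yields at most one accepting computation, as required for $\mathbf{UL}$.
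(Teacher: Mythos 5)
Your proposal is correct and follows essentially the same route as the paper: reduce $L$ to $\stconn$, apply $\mathsf{vMPSeededAlg}$ with seeds obtained by running the GUV-based extractor on the advice over all $2^d$ seeds, and count bad advice strings via \cref{prop:zuc97} exactly as in \cref{thm:fk06-improved}. The only cosmetic difference is that you halt at the first non-$\botr$ answer while the paper accepts iff some $a_z = 1$; both yield the same single-valued unambiguous simulation (just remember that $s$ is the seed length of $\mathsf{vMPSeededAlg}$ on the $\poly(n)$-length output of the reduction, which is still $O(\log^2 n)$).
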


	\begin{proof}
		Let $\mathcal{R}$ be a log-space reduction from $L \in \mathbf{NL}$ to $\stconn$. Let $s$ be the number of random bits used by $\mathsf{vMPSeededAlg}$ on inputs of length $n^c$, where $n^c$ is the length of outputs of $\mathcal{R}$ on inputs of length $n$. Let $\mathsf{GUVExt}: \{0, 1\}^{n + 2s + 3} \times \{0, 1\}^d \to \{0, 1\}^s$ be the $(2s, 0.1)$-extractor of \cref{thm:guv09}, so that $d \leq O(\log n)$.
		
		Given input $x \in \{0, 1\}^n$ and advice $a \in \{0, 1\}^{n + 2s + 3}$, compute
		\begin{equation}
			a_z \stackrel{\text{def}}{=} \mathsf{vMPSeededAlg}(\mathcal{R}(x), \mathsf{GUVExt}(a, z))
		\end{equation}
		for all $z$ and accept if there is some $z$ so that $a_z = 1$.
		
		This algorithm clearly runs in space $O(\log n)$ and is unambiguous. By \cref{prop:zuc97}, for each fixed $x$, the number of advice strings $a$ causing the algorithm to give the wrong answer is at most $2^{2s + 2}$. Therefore, the total number of advice strings $a$ that cause the algorithm to give the wrong answer for any $x$ is at most $2^{n + 2s + 2} < 2^{|a|}$. Therefore, there is some choice of $a$ such that the algorithm succeeds on all inputs.
	\end{proof}
	
	Just like we did with \cref{thm:fk06-improved}, we now generalize \cref{thm:ra00-improved}, showing that the amount of advice can be reduced to below $n$ if we start with a language that has a typically-correct disambiguation.
	
	\begin{theorem} \label{thm:ra00-improved-generalized}
		Fix functions $S: \N \to \N$ with $S(n) \geq \log n$ and $\epsilon: \N \to [0, 1]$ that are constructible in $O(S)$ space. Suppose a language $L \in \mathbf{NSPACE}(S)$ is within $\epsilon$ of $\mathbf{USPACE}(S)$. Then
		\begin{equation}
		L \in \mathbf{USPACE}(S)/(n - \log_2(1/\epsilon(n)) + O(S^2)).
		\end{equation}
	\end{theorem}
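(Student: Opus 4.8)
The plan is to transplant the proof of \cref{thm:fk06-improved-generalized} into the unambiguous setting, using the van Melkebeek--Prakriya seeded algorithm of \cref{thm:vmp-generator} in place of Nisan's generator together with the seed-checking procedure of \cref{lem:nis94}. The crucial point is that $\mathsf{vMPSeededAlg}$ already detects ``bad'' seeds on its own, outputting $\botr$, so it plays exactly the combined role of $\mathsf{NisGen}$ and $\mathsf{Check}$.

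First I would establish the unambiguous analogue of \cref{lem:zp.space}: for every $L \in \mathbf{NSPACE}(S)$ and every $\delta$ constructible in space $O(S)$, there is a single-valued unambiguous algorithm $\mathcal{A}$ that runs in space $O(S)$ with two-way access to a random string $y$ of length $\log_2(1/\delta) + O(S^2)$, such that for every $x$ the unique surviving computation outputs either $L(x)$ or $\botr$, and $\Pr_y[\mathcal{A}(x,y) = \botr] \le \delta$. To build $\mathcal{A}$, reduce ``$x \in L$?'' to $\stconn$ on the configuration graph of the $\mathbf{NSPACE}(S)$ machine, a digraph on $2^{O(S)}$ vertices whose adjacency relation is computable in space $O(S)$ given $x$; running $\mathsf{vMPSeededAlg}$ on this implicitly-presented graph then uses a seed of $s = O(S^2)$ bits and space $O(S)$. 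Take $\mathsf{GUVExt}\colon \{0,1\}^\ell \times \{0,1\}^d \to \{0,1\}^s$ to be the $(2s, 0.1)$-extractor of \cref{thm:guv09} with source length $\ell = \lceil \log_2(1/\delta)\rceil + 2s + 2$; since $\delta$ is constructible in space $O(S)$ its denominator has at most $2^{O(S)}$ digits, so $\log_2(1/\delta) \le 2^{O(S)}$ and $d \le O(\log \ell) \le O(S)$. The algorithm $\mathcal{A}(x,y)$ iterates over all $z \in \{0,1\}^d$, running $\mathsf{vMPSeededAlg}$ on the configuration graph with seed $\mathsf{GUVExt}(y,z)$, and outputs the first answer that is not $\botr$ (and $\botr$ if every $z$ fails). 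Being a sequential composition of single-valued unambiguous algorithms, $\mathcal{A}$ is single-valued unambiguous; it uses $\ell = \log_2(1/\delta) + O(S^2)$ random bits and space $O(S + d) = O(S)$. For the failure bound, apply \cref{prop:zuc97} to the indicator $f\colon \{0,1\}^s \to \{0,1\}$ of ``$\mathsf{vMPSeededAlg}$ outputs $\botr$ on this seed'' (so $\E[f(U_s)] \le 1/2$): for all but $2^{2s+2}$ strings $y$ some $z$ yields a good seed, whence $\Pr_y[\mathcal{A}(x,y)=\botr] \le 2^{2s+2}/2^{\ell} \le \delta$.

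Given this, the theorem follows just as \cref{thm:fk06-improved-generalized} does. Take $\mathcal{A}$ as above with $\delta$ just below $2^{-n}/\epsilon(n)$, which is still constructible in space $O(S)$ because $\epsilon$ and $2^{-n}$ are; then $\mathcal{A}$ uses $m = n - \log_2(1/\epsilon(n)) + O(S^2)$ random bits. Let $\mathcal{B}$ be the $\mathbf{USPACE}(S)$ algorithm deciding a language $L'$ that agrees with $L$ on all but a fraction $\epsilon(n)$ of length-$n$ inputs. The advice machine, on input $x \in \{0,1\}^n$ and advice $a \in \{0,1\}^m$, simulates $\mathcal{A}(x,a)$ (reading $a$ in place of the random tape), rejecting on every branch that outputs $\botn$; on the single surviving branch it accepts or rejects according to the output bit if that output is in $\{0,1\}$, and otherwise (output $\botr$) it continues by simulating $\mathcal{B}(x)$ and accepts iff $\mathcal{B}$ accepts. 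This runs in space $O(S)$, uses $n - \log_2(1/\epsilon(n)) + O(S^2)$ bits of advice, and is unambiguous, since the only accepting computation, if any, comes from the surviving branch of the $\mathcal{A}$-simulation, which contributes either a single deterministic decision or the at-most-one accepting path of $\mathcal{B}$. For correctness, note the machine can err on $(x,a)$ only when $x$ lies among the at most $\epsilon(n)2^n$ inputs on which $\mathcal{B}$ is wrong \emph{and} $\mathcal{A}(x,a) = \botr$; over a uniform $a$ the probability this happens for some such $x$ is at most $\epsilon(n)2^n \cdot \delta < 1$ by the union bound, so some advice string works for all $x$ simultaneously (in fact a $(1 - \epsilon(n)2^n\delta)$-fraction of advice strings do).

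The step I expect to require the most care is the unambiguous composition in the last paragraph: one must check that $\mathcal{A}$ is genuinely single-valued so that continuing its unique surviving path into the nondeterministic computation of $\mathcal{B}$ creates no second accepting computation, which is precisely what the single-valued unambiguous model and the $\botr$-convention of \cref{thm:vmp-generator} are designed to support. The two smaller points to watch are (i) presenting the size-$2^{O(S)}$ configuration graph implicitly, via its $O(S)$-space adjacency relation, so that $\mathsf{vMPSeededAlg}$ costs $O(S)$ rather than $2^{O(S)}$ space, and (ii) using the constructibility of $\delta$ to keep the GUV extractor's seed length $d = O(\log \ell)$ down to $O(S)$; both are handled exactly as in the proofs of \cref{lem:zp.space} and \cref{thm:simulate-nondeterministic-branching-program}.
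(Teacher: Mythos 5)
Your proposal is correct and follows essentially the same route as the paper: it first establishes the exact analogue of \cref{lem:zp.space} for single-valued unambiguous algorithms (the paper's \cref{lem:zp.uspace}), built from $\mathsf{vMPSeededAlg}$ composed with the $(2s,0.1)$ GUV extractor and analyzed via \cref{prop:zuc97}, and then runs the same advice-selection union bound with $\delta < 2^{-n}/\epsilon$, falling back to the typically-correct unambiguous algorithm when the output is $\botr$. Your added care about the single-valued composition and the implicit presentation of the configuration graph matches what the paper implicitly relies on.
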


	The proof of \cref{thm:ra00-improved-generalized} is very similar to the proof of \cref{thm:fk06-improved-generalized}. Because the proof of \cref{thm:ra00-improved-generalized} does not introduce any significantly new techniques, we defer the proof to \cref{apx:ra00-improved-generalized}.

	\begin{corollary}
		For every constant $c \in \N$,
		\begin{equation}
			\mathbf{NTISP}(n \polylog n, \log n) \subseteq \mathbf{USPACE}(\log n \sqrt{\log \log n})/(n - \log^c n).
		\end{equation}
	\end{corollary}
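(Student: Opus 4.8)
The plan is to combine the typically-correct disambiguation of \cref{cor:ntisp} with the advice-reduction of \cref{thm:ra00-improved-generalized}, exactly as the three preceding corollaries of this section combined \cref{cor:bptisp-zero-coins,cor:bptisp-tm-1-zero-coins,cor:bptisp-tm-2} with \cref{thm:fk06-improved-generalized}. Fix a language $L \in \mathbf{NTISP}(n \polylog n, \log n)$ and the constant $c$, and choose an auxiliary constant $c'$ that is sufficiently large as a function of $c$; concretely $c' = c + 3$ will do. First I would apply \cref{cor:ntisp} with the space bound $\log n$ and the constant $c'$: since $2^{O(\log n)} = \poly(n)$ and the resulting space bound is $(\log n)\sqrt{\log \log n}$, this shows that $L$ is within $\epsilon(n) := 2^{-\log^{c'} n}$ of $\mathbf{UTISP}(\poly(n), \log n \sqrt{\log \log n})$, and hence within $\epsilon$ of $\mathbf{USPACE}(\log n \sqrt{\log \log n})$.

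Next I would set $S(n) := \log n \sqrt{\log \log n}$ and invoke \cref{thm:ra00-improved-generalized} with this $S$ and this $\epsilon$. Both of its hypotheses are in hand: $L \in \mathbf{NSPACE}(\log n) \subseteq \mathbf{NSPACE}(S)$ after discarding the $\mathbf{NTISP}$ time bound, and $L$ is within $\epsilon$ of $\mathbf{USPACE}(S)$ by the previous paragraph; moreover $S$ is space-constructible, and $\epsilon = 2^{-\log^{c'} n}$ is constructible in space $O(\log \log n) \leq O(S)$ (writing the denominator $2^{\log^{c'} n}$ needs only a counter running up to $\log^{c'} n$). The theorem then yields
\[
	L \in \mathbf{USPACE}(\log n \sqrt{\log \log n}) / \big(n - \log^{c'} n + O(S(n)^2)\big),
\]
and here $S(n)^2 = \log^2 n \log \log n$.

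Finally I would verify the arithmetic. Because $c' = c + 3$, the quantity $\log^{c'} n$ dominates both $\log^c n$ and $\log^2 n \log \log n$, so for all sufficiently large $n$ we have $n - \log^{c'} n + O(\log^2 n \log \log n) \leq n - \log^c n$; since $\mathbf{USPACE}(\cdot)/a \subseteq \mathbf{USPACE}(\cdot)/a'$ whenever $a \leq a'$, this gives $L \in \mathbf{USPACE}(\log n \sqrt{\log \log n})/(n - \log^c n)$, as claimed. The one point I would be most careful about is that \cref{thm:ra00-improved-generalized} carries a single space parameter on both of its sides, so it must be applied with that parameter set to the \emph{larger} bound $\log n \sqrt{\log \log n}$ — which $L$ satisfies on the nondeterministic side essentially for free — rather than with $\log n$; this is exactly why the additive loss in the advice count is $O(S^2) = O(\log^2 n \log \log n)$ instead of $O(\log^2 n)$, and it is harmless here only because we are free to take $c'$ larger than $c + 2$.
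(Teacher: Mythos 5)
Your proposal is correct and follows the same route as the paper: apply \cref{cor:ntisp} (with a suitably enlarged constant) to place $L$ within $2^{-\log^{c'}n}$ of $\mathbf{USPACE}(\log n\sqrt{\log\log n})$, then invoke \cref{thm:ra00-improved-generalized} with $S(n)=\log n\sqrt{\log\log n}$ and absorb the $O(S^2)$ advice loss by the choice of $c'$. The paper's proof is just a terser version of this; your bookkeeping of the constants and the constructibility hypotheses is accurate.
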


	\begin{proof}
		For any $L \in \mathbf{NTISP}(n \polylog n, \log n)$, obviously $L \in \mathbf{NSPACE}(\log n \sqrt{\log \log n})$, and by \cref{cor:ntisp}, $L$ is within $2^{-\log^c n}$ of $\mathbf{USPACE}(\log n \sqrt{\log \log n})$. Applying \cref{thm:ra00-improved-generalized} completes the proof.
	\end{proof}


	
	\section{Directions for further research} \label{sec:future}
	The main open problem in this area is to prove that $\mathbf{BPL}$ is within $o(1)$ of $\mathbf{L}$. \cref{cor:bptisp-zero-coins} implies that $\mathbf{BPTISP}(n \polylog n, \log n)$ is within $o(1)$ of $\mathbf{L}$, and \cref{cor:bptisp-tm-2} implies that $\mathbf{BPTISP}_{\text{TM}}(n^{1.99}, \log n)$ is within $o(1)$ of $\mathbf{L}$, but $\mathbf{BPL}$ allows time $n^c$ where $c$ is an arbitrarily large constant. At present, for a generic language $L \in \mathbf{BPL}$, we do not even know a deterministic log-space algorithm that succeeds on at least \emph{one} input of each length.
	
	
	This work also provides some additional motivation for studying small-space extractors. The two extractors we used in this paper (\cref{thm:su05,thm:guv09}) were sufficient for our applications, but it would be nice to have a single log-space extractor that is optimal up to constants for the full range of parameters.
	
	\section{Acknowledgments}
	
	We thank Michael Forbes, Scott Aaronson, David Zuckerman, Adam Klivans, and Anna G{\'a}l for helpful comments on an early draft of this paper. We thank Amnon Ta-Shma, Lijie Chen, Chris Umans, David Zuckerman, Adam Klivans, Anna G{\'a}l, Gil Cohen, Shachar Lovett, Oded Goldreich, and Avi Wigderson for helpful discussions.
	
	\bibliographystyle{alpha}
	\bibliography{bptisp}

\begin{thebibliography}{BFNW93}

\bibitem[Adl78]{adl78}
Leonard Adleman.
\newblock Two theorems on random polynomial time.
\newblock In {\em Proceedings of the 19th Annual Symposium on {F}oundations of
  Computer Science (FOCS '78)}, pages 75--83. IEEE, 1978.

\bibitem[Alm19]{alm19}
Josh Alman.
\newblock An illuminating algorithm for the light bulb problem.
\newblock In {\em 2nd {S}ymposium on {S}implicity in {A}lgorithms}, volume~69
  of {\em OASIcs OpenAccess Ser. Inform.}, pages Art. No. 2, 11. Schloss
  Dagstuhl. Leibniz-Zent. Inform., Wadern, 2019.

\bibitem[Arm98]{arm98}
Roy Armoni.
\newblock On the derandomization of space-bounded computations.
\newblock In {\em Proceedings of the 2nd International Workshop on
  Randomization and Computation (RANDOM '98)}, volume 1518 of {\em Lecture
  Notes in Computer Science}, pages 47--59. Springer, Berlin, 1998.

\bibitem[ARZ99]{arz99}
Eric Allender, Klaus Reinhardt, and Shiyu Zhou.
\newblock Isolation, matching, and counting uniform and nonuniform upper
  bounds.
\newblock {\em Journal of Computer and System Sciences}, 59(2):164--181, 1999.

\bibitem[AT04]{at04}
Vikraman Arvind and Jacobo Toran.
\newblock Solvable group isomorphism is (almost) in {NP} $\cap$ {coNP}.
\newblock In {\em Proceedings of the 19th Annual Conference on Computational
  Complexity (CCC '04)}, pages 91--103. IEEE, 2004.

\bibitem[BFNW93]{bfnw93}
L\'aszl\'o Babai, Lance Fortnow, Noam Nisan, and Avi Wigderson.
\newblock B{PP} has subexponential time simulations unless {EXPTIME} has
  publishable proofs.
\newblock {\em Computational Complexity}, 3(4):307--318, 1993.

\bibitem[BM84]{bm84}
Manuel Blum and Silvio Micali.
\newblock How to generate cryptographically strong sequences of pseudorandom
  bits.
\newblock {\em SIAM Journal on Computing}, 13(4):850--864, 1984.

\bibitem[BNS92]{bns92}
L\'aszl\'o Babai, Noam Nisan, and M\'ari\'o Szegedy.
\newblock Multiparty protocols, pseudorandom generators for logspace, and
  time-space trade-offs.
\newblock {\em Journal of Computer and System Sciences}, 45(2):204--232, 1992.

\bibitem[BSSV03]{bssv03}
Paul Beame, Michael Saks, Xiaodong Sun, and Erik Vee.
\newblock Time-space trade-off lower bounds for randomized computation of
  decision problems.
\newblock {\em Journal of the ACM}, 50(2):154--195, 2003.

\bibitem[CCvM06]{ccvm06}
Jin-Yi Cai, Venkatesan~T. Chakaravarthy, and Dieter van Melkebeek.
\newblock Time-space tradeoff in derandomizing probabilistic logspace.
\newblock {\em Theory of Computing Systems}, 39(1):189--208, 2006.

\bibitem[DvM06]{dvm06}
Scott Diehl and Dieter van Melkebeek.
\newblock Time-space lower bounds for the polynomial-time hierarchy on
  randomized machines.
\newblock {\em SIAM Journal on Computing}, 36(3):563--594, 2006.

\bibitem[FK06]{fk06}
Lance Fortnow and Adam~R. Klivans.
\newblock Linear advice for randomized logarithmic space.
\newblock In {\em Proceedings of the 23rd Annual Symposium on Theoretical
  Aspects of Computer Science (S{TACS} '06)}, volume 3884 of {\em Lecture Notes
  in Computer Science}, pages 469--476. Springer, Berlin, 2006.

\bibitem[FvM00]{fvm00}
Lance Fortnow and Dieter van Melkebeek.
\newblock Time-space tradeoffs for nondeterministic computation.
\newblock In {\em Proceedings of the 15th Annual Conference on Computational
  Complexity (CCC '00)}, pages 2--13. IEEE, 2000.

\bibitem[Gil98]{gil98}
David Gillman.
\newblock A {C}hernoff bound for random walks on expander graphs.
\newblock {\em SIAM Journal on Computing}, 27(4):1203--1220, 1998.

\bibitem[GUV09]{guv09}
Venkatesan Guruswami, Christopher Umans, and Salil Vadhan.
\newblock Unbalanced expanders and randomness extractors from
  {P}arvaresh-{V}ardy codes.
\newblock {\em Journal of the ACM}, 56(4):Art. 20, 34, 2009.

\bibitem[GV04]{gv04}
Dan Gutfreund and Emanuele Viola.
\newblock Fooling parity tests with parity gates.
\newblock In {\em Proceedings of the 8th International Workshop on
  Randomization and Computation (RANDOM '04)}, volume 3122 of {\em Lecture
  Notes in Computer Science}, pages 381--392. Springer, 2004.

\bibitem[GW02]{gw02}
Oded Goldreich and Avi Wigderson.
\newblock Derandomization that is rarely wrong from short advice that is
  typically good.
\newblock In {\em Randomization and approximation techniques in computer
  science (RANDOM '02)}, volume 2483 of {\em Lecture Notes in Computer
  Science}, pages 209--223. Springer, Berlin, 2002.

\bibitem[HR03]{hr03}
Tzvika Hartman and Ran Raz.
\newblock On the distribution of the number of roots of polynomials and
  explicit weak designs.
\newblock {\em Random Structures \& Algorithms}, 23(3):235--263, 2003.

\bibitem[HW12]{hw12}
Lane~A Hemaspaandra and Ryan Williams.
\newblock {SIGACT} news complexity theory column 76: an atypical survey of
  typical-case heuristic algorithms.
\newblock {\em ACM SIGACT News}, 43(4):70--89, 2012.

\bibitem[IW97]{iw97}
Russell Impagliazzo and Avi Wigderson.
\newblock {P = BPP} if {E} requires exponential circuits: Derandomizing the
  {XOR} lemma.
\newblock In {\em Proceedings of the 29th Annual Symposium on Theory of
  Computing (STOC '97)}, pages 220--229, New York, NY, USA, 1997. ACM.

\bibitem[Jon75]{jon75}
Neil~D. Jones.
\newblock Space-bounded reducibility among combinatorial problems.
\newblock {\em Journal of Computer and System Sciences}, 11(1):68--85, 1975.

\bibitem[KNW08]{knw08}
Daniel~M Kane, Jelani Nelson, and David~P Woodruff.
\newblock Revisiting norm estimation in data streams.
\newblock {\em arXiv preprint arXiv:0811.3648}, 2008.

\bibitem[KS05]{ks05}
Neeraj Kayal and Nitin Saxena.
\newblock On the ring isomorphism \& automorphism problems.
\newblock In {\em Proceedings of the 20th Annual Conference on Computational
  Complexity (CCC '05)}, pages 2--12. IEEE, 2005.

\bibitem[KvM02]{kvm02}
Adam~R. Klivans and Dieter van Melkebeek.
\newblock Graph nonisomorphism has subexponential size proofs unless the
  polynomial-time hierarchy collapses.
\newblock {\em SIAM Journal on Computing}, 31(5):1501--1526, 2002.

\bibitem[KvMS12]{kvms12}
Jeff Kinne, Dieter van Melkebeek, and Ronen Shaltiel.
\newblock Pseudorandom generators, typically-correct derandomization, and
  circuit lower bounds.
\newblock {\em Computational Complexity}, 21(1):3--61, 2012.

\bibitem[Nis92]{nis92}
Noam Nisan.
\newblock Pseudorandom generators for space-bounded computation.
\newblock {\em Combinatorica}, 12(4):449--461, 1992.

\bibitem[Nis93]{nis93}
Noam Nisan.
\newblock On read-once vs.\ multiple access to randomness in logspace.
\newblock {\em Theoretical Computer Science}, 107(1):135--144, 1993.

\bibitem[Nis94]{nis94}
Noam Nisan.
\newblock {${\rm RL}\subseteq {\rm SC}$}.
\newblock {\em Computational Complexity}, 4(1):1--11, 1994.

\bibitem[NW94]{nw94}
Noam Nisan and Avi Wigderson.
\newblock Hardness vs.\ randomness.
\newblock {\em Journal of Computer and System Sciences}, 49(2):149--167, 1994.

\bibitem[NZ96]{nz96}
Noam Nisan and David Zuckerman.
\newblock Randomness is linear in space.
\newblock {\em Journal of Computer and System Sciences}, 52(1):43--52, 1996.

\bibitem[RA00]{ra00}
Klaus Reinhardt and Eric Allender.
\newblock Making nondeterminism unambiguous.
\newblock {\em SIAM Journal on Computing}, 29(4):1118--1131, 2000.

\bibitem[Sha10]{sha10}
Ronen Shaltiel.
\newblock Typically-correct derandomization.
\newblock {\em ACM SIGACT News}, 41(2):57--72, 2010.

\bibitem[Sha11]{sha11}
Ronen Shaltiel.
\newblock Weak derandomization of weak algorithms: explicit versions of {Y}ao's
  lemma.
\newblock {\em Computational Complexity}, 20(1):87--143, 2011.

\bibitem[STV01]{stv01}
Madhu Sudan, Luca Trevisan, and Salil Vadhan.
\newblock Pseudorandom generators without the {XOR} lemma.
\newblock {\em Journal of Computer and System Sciences}, 62(2):236--266, 2001.
\newblock Special issue on the 14th Annual Conference on Computational
  Complexity (CCC '99).

\bibitem[SU05]{su05}
Ronen Shaltiel and Christopher Umans.
\newblock Simple extractors for all min-entropies and a new pseudorandom
  generator.
\newblock {\em Journal of the ACM}, 52(2):172--216, 2005.

\bibitem[SW14]{sw14}
Rahul Santhanam and Ryan Williams.
\newblock On uniformity and circuit lower bounds.
\newblock {\em Computational Complexity}, 23(2):177--205, 2014.

\bibitem[SZ99]{sz99}
Michael Saks and Shiyu Zhou.
\newblock {$BP_H SPACE(S) \subseteq DSPACE(S^{3/2})$}.
\newblock {\em Journal of Computer and System Sciences}, 58(2):376--403, 1999.

\bibitem[Vad12]{vad12}
Salil~P. Vadhan.
\newblock Pseudorandomness.
\newblock {\em Foundations and Trends® in Theoretical Computer Science},
  7(1–3):1--336, 2012.

\bibitem[vL99]{vl99}
J.~H. van Lint.
\newblock {\em Introduction to coding theory}, volume~86 of {\em Graduate Texts
  in Mathematics}.
\newblock Springer-Verlag, Berlin, third edition, 1999.

\bibitem[vMP17]{vmp17}
Dieter van Melkebeek and Gautam Prakriya.
\newblock {Derandomizing Isolation in Space-Bounded Settings}.
\newblock In {\em 32nd Annual Conference on Computational Complexity (CCC
  '17)}, volume~79 of {\em Leibniz International Proceedings in Informatics
  (LIPIcs)}, pages 5:1--5:32, Dagstuhl, Germany, 2017. Schloss
  Dagstuhl--Leibniz-Zentrum fuer Informatik.

\bibitem[vMS05]{vms05}
Dieter van Melkebeek and Rahul Santhanam.
\newblock Holographic proofs and derandomization.
\newblock {\em SIAM Journal on Computing}, 35(1):59--90, 2005.

\bibitem[Yao82]{yao82}
Andrew~C. Yao.
\newblock Theory and applications of trapdoor functions.
\newblock In {\em 23rd Annual Symposium on Foundations of Computer Science
  (FOCS '82)}, pages 80--91. IEEE, New York, 1982.

\bibitem[Zim08]{zim08}
Marius Zimand.
\newblock Exposure-resilient extractors and the derandomization of
  probabilistic sublinear time.
\newblock {\em Computational Complexity}, 17(2):220--253, 2008.

\bibitem[Zuc97]{zuc97}
David Zuckerman.
\newblock Randomness-optimal oblivious sampling.
\newblock {\em Random Structures \& Algorithms}, 11(4):345--367, 1997.

\end{thebibliography}
	
	\appendix
	\section{Proof of \cref{thm:su05}: The Shaltiel-Umans extractor} \label{apx:su05}
	
	In this section, we discuss the proof of \cref{thm:su05}. The extractor follows the same basic construction that Shaltiel and Umans used for a ``low error'' extractor \cite[Corollary 4.21]{su05}. We will assume that the reader is familiar with the paper by Shaltiel and Umans \cite{su05}. We will also switch to the parameter names by Shaltiel and Umans, so the source length of the extractor is $n$ rather than $\ell$, and the seed length is $t$ rather than $d$. In these terms, we are shooting for time $\poly(n)$ and space $O(t)$.
	
	The only change to the \emph{construction} that we make is that we will use a different instantiation of the ``base field'' $\F_q$. Shaltiel and Umans \cite{su05} used a deterministic algorithm by Shoup that finds an irreducible polynomial of degree $\log q$ over $\F_2$ in time $\poly(\log q)$. Unfortunately, Shoup's algorithm is not sufficiently space-efficient for our purposes. To get around this issue, we use an extremely explicit family of irreducible polynomials:
	\begin{lemma}[{\cite[Theorem 1.1.28]{vl99}}] \label{lem:vl99}
		For every $a \in \N$, the polynomial $x^{2 \cdot 3^a} + x^{3^a} + 1$ is irreducible over $\F_2$.
	\end{lemma}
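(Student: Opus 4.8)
The plan is to realize $f_a(x) := x^{2\cdot 3^a} + x^{3^a} + 1$ as the reduction mod $2$ of the $3^{a+1}$-st cyclotomic polynomial, and then to compute the degree of the minimal polynomial of its roots. First I would record the factorization over $\F_2$
\[
  (x^{3^a} - 1)\,f_a(x) = x^{3^{a+1}} - 1,
\]
which is just $(y-1)(y^2+y+1) = y^3-1$ with $y = x^{3^a}$. Since $3^{a+1}$ is odd, $x^{3^{a+1}} - 1$ is separable over $\F_2$; hence $f_a$ has a root $\zeta$ in $\bar{\F}_2$, every root $\zeta$ of $f_a$ satisfies $\zeta^{3^{a+1}} = 1$, and no root satisfies $\zeta^{3^a} = 1$ (otherwise $f_a(\zeta) = 1 + 1 + 1 = 1 \neq 0$ in $\F_2$). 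Therefore each root $\zeta$ has multiplicative order exactly $3^{a+1}$.

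Second, I would use the standard description of finite subfields of $\bar{\F}_2$: we have $\F_2(\zeta) = \F_{2^d}$, where $d$ is least with $\zeta \in \F_{2^d}$, and since $\F_{2^d}^\times$ is cyclic of order $2^d - 1$, an element of order $3^{a+1}$ lies in $\F_{2^d}$ iff $3^{a+1} \mid 2^d - 1$. Thus $d = \mathrm{ord}_{3^{a+1}}(2)$, the multiplicative order of $2$ modulo $3^{a+1}$, so the minimal polynomial $m_\zeta(x)$ of $\zeta$ over $\F_2$ has degree $\mathrm{ord}_{3^{a+1}}(2)$.

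The crux is therefore to show $\mathrm{ord}_{3^{a+1}}(2) = \phi(3^{a+1}) = 2\cdot 3^a$, i.e.\ that $2$ is a primitive root modulo every power of $3$. I would invoke the classical primitive-root lifting lemma: if $g$ is a primitive root modulo an odd prime $p$ with $g^{p-1} \not\equiv 1 \pmod{p^2}$, then $g$ is a primitive root modulo $p^k$ for every $k \geq 1$. For $p = 3$, $g = 2$ this is immediate, since $2$ is a primitive root mod $3$ and $2^2 = 4 \not\equiv 1 \pmod 9$. (If one prefers not to cite the lemma, the same conclusion follows by induction on $k$: the order of $2$ mod $3^k$ is even because $2^n \equiv 1 \pmod{3^k}$ forces $2 \mid n$, and a lifting-the-exponent computation gives $v_3\!\left(2^{2\cdot 3^{k-1}} - 1\right) = k$ exactly, which rules out both maximal proper divisors $3^{k-1}$ and $2\cdot 3^{k-2}$ of $2\cdot 3^{k-1}$ as the order.) Hence $\deg m_\zeta = 2\cdot 3^a$.

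Finally, $m_\zeta \mid f_a$ since $f_a(\zeta) = 0$ and $f_a$ is monic, and both polynomials are monic of degree $2\cdot 3^a$, so $f_a = m_\zeta$, which is irreducible. I expect the only non-routine input to be the primitive-root claim for $2$ modulo $3^{a+1}$; the rest is formal. Equivalently, one may cite the general fact that $\Phi_{3^{a+1}}(x) = \Phi_3(x^{3^a}) = f_a(x)$ factors over $\F_2$ into $\phi(3^{a+1})/\mathrm{ord}_{3^{a+1}}(2)$ irreducible factors of equal degree $\mathrm{ord}_{3^{a+1}}(2)$, after which it again suffices to verify $\mathrm{ord}_{3^{a+1}}(2) = \phi(3^{a+1})$.
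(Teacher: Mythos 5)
Your proof is correct and complete. The paper itself offers no proof of this lemma --- it is cited verbatim from van Lint's textbook --- and your argument (recognize $x^{2\cdot 3^a}+x^{3^a}+1$ as the $3^{a+1}$-st cyclotomic polynomial reduced mod $2$, so that irreducibility is equivalent to $2$ being a primitive root modulo $3^{a+1}$, which follows from $2^2\not\equiv 1\pmod 9$ and the standard lifting lemma) is exactly the standard way this fact is established. All the steps check out, including the separability of $x^{3^{a+1}}-1$ in characteristic $2$ and the valuation computation $v_3\bigl(2^{2\cdot 3^{k-1}}-1\bigr)=k$ in the parenthetical alternative.
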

	Therefore, by replacing $q$ by some power of two between $q$ and $q^3$, we can easily, deterministically construct an irreducible polynomial of degree $\log q$ in time $\poly(\log q)$ and space $O(\log q)$. This only affects the bit length of field elements, $\log q$, by at most a factor of $3$. Therefore, the hypotheses of Shaltiel and Umans' main technical theorem \cite[Theorem 4.5]{su05} are still met, so the extractor is still correct.
	
	Now we turn to analyzing the efficiency of the extractor. The parameters $h, d, m, \rho, q$ used by Shaltiel and Umans (with the described modification to $q$) can all easily be computed in time $\poly(n)$ and space $O(t)$. Next, we inspect the construction of the matrix $B$ used by Shaltiel and Umans \cite[Proof of Lemma 4.18]{su05}. The exhaustive search used to find the irreducible polynomial $p(z)$ takes space $O(d \log q) \leq O(t)$. The exhaustive search used to find the generator $g$ for $(H^d)^{\times}$ also takes space $O(d \log q) = O(t)$. Finally, multiplication by $g$ takes space $O(d \log q) = O(t)$.
	
	It follows immediately that the ``$q$-ary extractor'' $E'$ given by Shaltiel and Umans \cite[Equation 8]{su05} runs in space $O(t)$, because we only need to store the vector $B^i \vec{v}$. Finally, to get from $E'$ to the final extractor, a simple Hadamard code is applied, which can trivially be computed in time $\poly(n)$ and space $O(t)$.

	\section{Proof of \cref{thm:guv09}: The GUV extractor} \label{apx:guv09}
	
	In this section, we discuss the proof of \cref{thm:guv09}. We will assume that the reader is familiar with the paper by Guruswami, Umans, and Vadhan. Recall that a \emph{condenser} is like an extractor, except that the output is merely guaranteed to be close to having high entropy instead of being guaranteed to be close to uniform.
	
	\begin{definition}
		A function $\mathsf{Con}: \{0, 1\}^n \times \{0, 1\}^d \to \{0, 1\}^{n'}$ is a \emph{$k \to_{\epsilon} k'$ condenser} if for every random variable $X$ with $H_{\infty}(X) \geq k$, there exists a distribution $Z$ with $H_{\infty}(Z) \geq k'$ such that if we let $Y \sim U_d$ be independent of $X$, then $\mathsf{Con}(X, Y) \sim_{\epsilon} Z$.
	\end{definition}	

	Guruswami, Umans, and Vadhan constructed a lossy condenser based on folded Reed-Solomon codes~\cite[Theorem 6.2]{guv09}. To ensure space efficiency, we will slightly modify their construction to get the following condenser. We will follow the parameter names by Guruswami, Umans, and Vadhan.
	
	\begin{theorem}[Based on {\cite[Theorem 6.2]{guv09}}] \label{thm:guv-condenser}
		Let $\alpha > 0$ be a constant. Consider any $n \in \N, \ell \leq n$ such that $2^{\ell}$ is an integer and any $\epsilon > 0$. There is a parameter $t = \Theta(\log(n \ell / \epsilon))$ and a  $$ (1 + 1/\alpha) \ell t + \log(1/\epsilon) \to_{3\epsilon} \ell t + d - 2$$ condenser $\mathsf{GUVCon}: \{0, 1\}^n \times \{0, 1\}^d \to \{0, 1\}^{n'}$, computable in space $O(d)$, with seed length $d \leq (1 + 1/\alpha) t$ and output length $n' \leq (1 + 1/\alpha) \ell t + d$, provided $\ell t \geq \log(1/\epsilon)$.
	\end{theorem}

	\begin{proof}[Proof sketch]
		We need to use a base field $\F_q$ based on \cref{lem:vl99}, so we slightly modify the parameters of the GUV construction as follows. Choose $q$ to be the smallest power of two of the form $2^{2 \cdot 3^a}$ such that $q \geq (2^{2 + 1/\alpha} \cdot n\ell / \epsilon)^{1 + \alpha}$. This $q$ satisfies $q \leq (2^{2 + 1/\alpha} \cdot n\ell / \epsilon)^{3 + 3\alpha}$. Next, define $t = \lceil \frac{\alpha \log q}{1 + \alpha} \rceil$ and $h = 2^t$, so that $q \in ((h/2)^{1 + 1/\alpha}, h^{1 + 1/\alpha}]$. Therefore, we still have
		\begin{align}
			q &> h \cdot h^{1/\alpha} / 2^{1 + 1/\alpha} \\
			&\geq h \cdot q^{1 / (1 + \alpha)} / 2^{1 + 1/\alpha} \\
			&\geq 2 h n \ell / \epsilon,
		\end{align}
		and hence $A \geq \epsilon q / 2$. The rest of the argument is as in the original paper \cite{guv09}.
	\end{proof}

	There is a standard extractor based on expander walks that works well for constant error and constant entropy rate. Using the Gutfreund-Viola expander walk (\cref{thm:gv04}), this extractor runs in logarithmic space:
	\begin{lemma} \label{lem:gv04-extractor}
		Let $\alpha, \epsilon > 0$ be constants. There is some constant $\beta \in (0, 1)$ so that for all $n$, there is a $(\beta n, \epsilon)$-extractor $\mathsf{GVExt}: \{0, 1\}^n \times \{0, 1\}^d \to \{0, 1\}^m$ with $t \leq \log(\alpha n)$ and $m \geq (1 - \alpha) n$ so that given $x$ and $y$, $\mathsf{GVExt}(x, y)$ can be computed in $O(\log n)$ space.
	\end{lemma}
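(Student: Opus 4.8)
The plan is to implement the textbook ``random walk on an expander'' extractor for sources of high constant min-entropy rate, and to verify that the Gutfreund-Viola walk (\cref{thm:gv04}) makes it run in logarithmic space. Fix the constant-degree expander $G$ of \cref{thm:gv04}; after adding self-loops if necessary, assume its degree is exactly $2^c$ for a constant $c$ and its second-largest eigenvalue is at most some absolute constant $\lambda < 1$. Choose a walk length $D$ that is a power of two (to be pinned down below) and set $m = n - (D-1)c$. On input $x \in \{0,1\}^n$, parse $x = (v, e_1, \dots, e_{D-1})$ with $v \in \{0,1\}^m$ a vertex of $G$ and $e_1, \dots, e_{D-1} \in \{0,1\}^c$ edge labels; on seed $y \in \{0,1\}^d$ with $d = \log D$, read $y$ as an index $i \in [D]$ and define $\mathsf{GVExt}(x, y)$ to be the $i$th vertex of the walk in $G$ starting at $v$ and following $e_1, \dots, e_{D-1}$, i.e. $\mathsf{GVExt}(x,y) = \mathsf{GVWalk}(v, e_1, \dots, e_{i-1})$. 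Parsing the relevant substrings of $x$ uses $O(\log n)$ space, and by \cref{thm:gv04} the call to $\mathsf{GVWalk}$ uses $O(\log m + \log D) = O(\log n)$ space, so $\mathsf{GVExt}$ is computable in $O(\log n)$ space.

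For correctness I would invoke the standard equivalence between extractors and averaging samplers (see, e.g., \cite{vad12}) together with Gillman's expander-walk Chernoff bound \cite{gil98}. Observe first that when $x \sim U_n$, the pair (start vertex, edge labels) is uniform, so the vertices $v_1(x), \dots, v_D(x)$ visited by the walk form a genuine uniform-start length-$D$ random walk in $G$. Fix any test set $T \subseteq \{0,1\}^m$ of density $\mu = |T|/2^m$; Gillman's bound gives $\Pr_{x \sim U_n}\!\bigl[\,\bigl|\tfrac{1}{D}\#\{i : v_i(x) \in T\} - \mu\bigr| > \tfrac{\epsilon}{2}\,\bigr] \le \exp(-\Omega(\epsilon^2 D))$, the hidden constant depending only on $\lambda$. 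Write $\mathrm{BAD}_T$ for the set of $x$ violating this bound, so $\density(\mathrm{BAD}_T) \le \delta := \exp(-\Omega(\epsilon^2 D))$. If $X$ has min-entropy at least $\beta n$, then $\Pr[X \in \mathrm{BAD}_T] \le \delta \cdot 2^{(1-\beta)n}$, so as long as $D$ is large enough that $\delta \le \tfrac{\epsilon}{2}\,2^{-(1-\beta)n}$ we get $\Pr[X \in \mathrm{BAD}_T] \le \epsilon/2$. Since $U_d$ is uniform over $[D]$, $\Pr[\mathsf{GVExt}(X, U_d) \in T] = \E_X\bigl[\tfrac{1}{D}\#\{i : v_i(X) \in T\}\bigr]$, and conditioning on whether $X \in \mathrm{BAD}_T$ yields $\mu - \epsilon \le \Pr[\mathsf{GVExt}(X, U_d) \in T] \le \mu + \epsilon$. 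As $T$ was arbitrary, $\mathsf{GVExt}(X, U_d) \sim_\epsilon U_m$, i.e. $\mathsf{GVExt}$ is a $(\beta n, \epsilon)$-extractor.

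It remains to choose $\beta$ and $D$ so that all constraints hold at once: correctness needs $\exp(-\Omega(\epsilon^2 D)) \le \tfrac{\epsilon}{2}\,2^{-(1-\beta)n}$, which for $n$ large holds once $D \ge \Theta\bigl((1-\beta)n/\epsilon^2\bigr)$; the output bound $m = n - (D-1)c \ge (1-\alpha)n$ needs $D \le \alpha n / c + 1$; and the seed bound $\log D \le \log(\alpha n)$ needs $D \le \alpha n$. Setting $\beta := 1 - \theta\alpha\epsilon^2$ for a small enough absolute constant $\theta$ (depending on $\lambda$ and $c$) drives the lower bound on $D$ below $\alpha n/(2c)$, so for all large $n$ the resulting interval for $D$ is nonempty and -- having multiplicative width at least $2$ -- contains a power of two, which we take as $D$; the finitely many small $n$ are trivial. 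Since $\alpha, \epsilon$ are constants, $\beta$ is a constant in $(0,1)$. The only step requiring any care is precisely this simultaneous bookkeeping of the constants (together with the cosmetic points that $G$'s degree and $D$ must be powers of two, handled by self-loops and by rounding $D$ down); the probabilistic core of the argument is entirely standard.
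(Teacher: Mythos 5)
Your proposal is correct and is exactly the construction the paper has in mind: the paper's proof is only a two-line sketch citing the standard expander-walk extractor (GUV, Theorem 4.6) with the space bound coming from \cref{thm:gv04}, and you have filled in that standard argument (sampler-to-extractor reduction via the expander Chernoff bound) together with the parameter bookkeeping. No gaps.
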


	\begin{proof}[Proof sketch]
		This construction of an extractor from an expander is standard; see, e.g., an exposition by Guruswami et al. \cite[Theorem 4.6]{guv09}. The space bound follows from \cref{thm:gv04}.
	\end{proof}

	Finally, \cref{thm:guv09} follows by composing \cref{thm:guv-condenser} and \cref{lem:gv04-extractor}, just as is explained in the paper by Guruswami et al.\ \cite[Theorem 4.7]{guv09}.
	
	\section{Proof of \cref{prop:zuc97}: Extractors are good samplers} \label{apx:zuc97}
	
	Let $X \subseteq \{0, 1\}^{\ell}$ be the set on the left-hand side of \cref{eqn:zuc97}. Since total variation distance is half $\ell_1$ distance, for each $x \in X$,
	\begin{equation}
		\sum_{v \in V} |\Pr[f(U_s) = v] - \Pr[f(\mathsf{Ext}(x, U_d)) = v]| > \epsilon |V|.
	\end{equation}
	Therefore, by the triangle inequality, for each $x \in X$, there is some $v_x \in V$ such that
	\begin{equation} \label{eqn:vx}
		|\Pr[f(U_s) = v_x] - \Pr[f(\mathsf{Ext}(x, U_d)) = v_x]| > \epsilon.
	\end{equation}
	Partition $X = X_1 \cup \dots \cup X_{|V|}$, where $\mathcal{X}_v = \{x \in X : v_x = v\}$. For each $v$, we can further partition $X_v$ into $X_v^+ \cup X_v^-$, based on which term of the left hand side of \cref{eqn:vx} is bigger.
	
	Identify $X_v^+$ with a random variable that is uniformly distributed over the set $X_v^+$, and let $Y \sim U_d$ be independent of $X_v^+$. Then 
	\begin{equation}
		\Pr[\mathsf{Ext}(X_v^+, Y) \in f^{-1}(v_x)] > \Pr[U_s \in f^{-1}(v_x)] + \epsilon.
	\end{equation}
	Therefore, by the extractor condition, $|X_v^+| \leq 2^k$. Similarly, $|X_v^-| \leq 2^k$, and hence $|X_v| \leq 2^{k + 1}$. By summing over all $v$, we conclude that $|X| \leq 2^{k + 1} |V|$ as claimed.
	
	\section{Proof of \cref{thm:simulate-s-ow-branching-program}: Derandomizing S-OW branching programs} \label{apx:simulate-s-ow-branching-program}
	
	The algorithm $\mathsf{A}$ of \cref{thm:simulate-s-ow-branching-program} is given in \cref{fig:simulate-s-ow-branching-program}. The analysis is similar to the proof of \cref{thm:simulate-branching-program}. The main difference is when we argue that the second hybrid distribution, $\mathsf{H}_2$, simulates $\mathcal{P}$. (This argument has just two hybrid distributions.) Details follow.
	
	\begin{figure}
		\begin{framed}
			\begin{enumerate}
				\item If $S^{c + 1} > \sqrt{n}$, directly simulate $\mathcal{P}(v_0; x, U_T)$ using $T$ random bits. Otherwise:
				\item Partition $[n]$ into disjoint blocks, $[n] = I_1 \cup I_2 \cup \dots \cup I_B$, where $|I_b| \approx h$. More precisely, let $B = \lceil n/h \rceil$, and let $I_b = \{h \cdot (b - 1) + 1, h \cdot (b - 1) + 2, \dots, \min\{h \cdot b, n\}\}$. Let $I_0 = I_{B + 1} = \emptyset$.
				\item For $b \in [B]$, let $I'_b = [n] \setminus (I_{b - 1} \cup I_b \cup I_{b + 1})$, with the largest elements removed so that $|I'_b| = n - 3h$.
				\item Initialize $v = v_0$. Repeat $r$ times, where $r \stackrel{\text{def}}{=} \lceil T/h \rceil$:
				\begin{enumerate}
					\item Let $b \in [B]$ be such that $i(v) \in I_b$. Let $I = I'_b$.
					\item Pick $y \in \{0, 1\}^{O(S)}$ uniformly at random.
					\item Let $v = \mathcal{P}\vert_{[n] \setminus I}(v; x, \mathsf{NisGen}(\mathsf{SUExt}(x\vert_I, y)))$.
				\end{enumerate}
				\item Output $v$.
			\end{enumerate}
			\vspace{-5mm}
		\end{framed}
		\vspace{-5mm}
		\caption{The algorithm $\mathsf{A}$ of \cref{thm:simulate-s-ow-branching-program}.} \label{fig:simulate-s-ow-branching-program}
	\end{figure}
	
	\paragraph{Parameters} Just like in the proof of \cref{thm:simulate-branching-program}, we can assume without loss of generality that $T \leq 2^S$. The block size $h$ in \cref{fig:simulate-s-ow-branching-program} is
	\begin{equation}
	h \stackrel{\text{def}}{=} \left\lfloor \frac{n}{3S^{c + 1}} \right\rfloor.
	\end{equation}
	Note that this time, the number of phases, $r$, is $\lceil T/h \rceil$, where $h$ is the \emph{block size}, in contrast to the proof of \cref{thm:simulate-branching-program}, where the number of phases was roughly $T/B$, where $B$ is the \emph{number of blocks}.
	
	The algorithm $\mathsf{A}$ relies on Nisan's generator $\mathsf{NisGen}$ (\cref{thm:nis92}). Naturally, the generator is instantiated with parameters $S, T$ from the statement of \cref{thm:simulate-s-ow-branching-program}. The error of $\mathsf{NisGen}$ is set at $\epsilon \stackrel{\text{def}}{=} \frac{\exp(-cS)}{2r}$, just like in the proof of \cref{thm:simulate-branching-program}. Again, the seed length of $\mathsf{NisGen}$ is $s \leq O(S \log T) \leq O(S^2)$.
	
	The algorithm $\mathsf{A}$ also relies on the Shaltiel-Umans extractor $\mathsf{SUExt}$ of \cref{thm:su05}. This extractor is instantiated with source length $\ell \stackrel{\text{def}}{=} n - 3h$, $\alpha \stackrel{\text{def}}{=} 1/2$, error
	\begin{equation}
	\epsilon' \stackrel{\text{def}}{=} \frac{\exp(-cS)}{r \cdot 2^S},
	\end{equation}
	and entropy $k \stackrel{\text{def}}{=} \sqrt{n}$. This choice of $k$ meets the hypotheses of \cref{thm:su05}, because $\log^{4/\alpha} \ell \leq \log^8 n \leq k$, and $S^{c + 1} \leq \sqrt{n}$, so $\log^{4/\alpha}(1/\epsilon) \leq \polylog n \leq k$. Furthermore, by construction, $k^{1 - \alpha} = n^{1/4} \geq s$ as long as $c \geq 4$ and $n$ is sufficiently large, so we can think of $\mathsf{SUExt}_2$ as outputting $s$ bits.
	
	\paragraph{Efficiency} The runtime analysis of $\mathsf{A}$ is essentially the same as in the proof of \cref{thm:simulate-branching-program}; the only substantial difference is that the input to $\mathsf{SUExt}$ has length $\Theta(n)$, so $\mathsf{SUExt}$ takes $\poly(n)$ time instead of $\poly(S)$ time. Thus, overall, $\mathsf{A}$ runs in time $T \cdot \poly(n, S)$. The space complexity and randomness complexity analyses are essentially the same as in the proof of \cref{thm:simulate-branching-program}.
	
	\paragraph{Correctness} The proof of \cref{eqn:simulate-s-ow-branching-program} has the same structure as the proof of \cref{eqn:simulate-branching-program-correctness}. Assume without loss of generality that $S^{c + 1} \leq \sqrt{n}$. The first hybrid distribution is defined by the algorithm given in \cref{fig:s-ow-hybrid-1}. The number of ``bad'' inputs in \cref{clm:s-ow-a-hybrid-1} is much lower than the number of ``bad'' inputs in \cref{clm:a-hybrid-1}; intuitively, this is because $\mathsf{A}$ uses a \emph{much larger} portion of the input as a source of randomness compared to the algorithm of \cref{thm:simulate-branching-program}.
	
	\begin{figure}
		\begin{framed}
			\begin{enumerate}
				\item Initialize $v = v_0$. Repeat $r$ times, where $r \stackrel{\text{def}}{=} \lceil T/h \rceil$:
				\begin{enumerate}
					\item Let $b \in [B]$ be such that $i(v) \in I_b$. Let $I = I'_b$.
					\item Pick $y' \in \{0, 1\}^s$ uniformly at random.
					\item Let $v = \mathcal{P} \vert_{[n] \setminus I}(v; x, \mathsf{NisGen}(y'))$.
				\end{enumerate}
				\item Output $v$.
			\end{enumerate}
			\vspace{-5mm}
		\end{framed}
		\vspace{-5mm}
		\caption{The algorithm $\mathsf{H}_1$ defining the first hybrid distribution used to prove \cref{eqn:simulate-s-ow-branching-program}. The only difference between $\mathsf{A}$ and $\mathsf{H}_1$ is that $\mathsf{H}_1$ picks a uniform random seed for $\mathsf{NisGen}$, instead of extracting the seed from the input.} \label{fig:s-ow-hybrid-1}
	\end{figure}
	
	\begin{claim}[$\mathsf{A} \approx \mathsf{H}_1$] \label{clm:s-ow-a-hybrid-1}
		Recall that $\epsilon'$ is the error of $\mathsf{SUExt}$. Then
		\begin{equation}
		\#\{x \in \{0, 1\}^n : \mathsf{A}(\mathcal{P}, v_0, x, T) \not \sim_{\epsilon' r \cdot 2^{S - 1}} \mathsf{H}_1(\mathcal{P}, v_0, x, T)\} \leq 2^{n/S^c}.
		\end{equation}
	\end{claim}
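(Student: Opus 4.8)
The plan is to imitate the proof of \cref{clm:a-hybrid-1}, reading the Shaltiel--Umans extractor as a sampler via \cref{prop:zuc97} and propagating closeness across the $r$ phases with \cref{lem:markov}. There are only two differences from that earlier argument, both bookkeeping: the block $I = I'_b$ used in a phase is now a \emph{deterministic} function of the current vertex (the unique $b$ with $i(v) \in I_b$) rather than a uniformly random block, and the portion of the input \emph{not} used as a randomness source now has size only $3h$, so the portion that \emph{is} used has size $\ell = n - 3h$. The first change actually simplifies the Markov picture. For an input $x$, let $M[x]$ be the $\size(\mathcal{P}) \times \size(\mathcal{P})$ stochastic matrix with
\[
M[x]_{uv} = \Pr_{y}\!\left[\, \mathcal{P}\vert_{[n] \setminus I}\bigl(u; x, \mathsf{NisGen}(\mathsf{SUExt}(x\vert_I, y))\bigr) = v \ \text{ where } I = I'_b,\ i(u) \in I_b \,\right],
\]
and let $M'[x]$ be defined the same way but with $\mathsf{NisGen}(y')$ for a uniform $y' \in \{0,1\}^s$ in place of $\mathsf{NisGen}(\mathsf{SUExt}(x\vert_I, y))$. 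Each phase of $\mathsf{A}$ (resp.\ $\mathsf{H}_1$) replaces the current vertex by an independent draw from the corresponding row of $M[x]$ (resp.\ $M'[x]$), so $\mathsf{A}(\mathcal{P}, v_0, x, T)$ is distributed as $(M[x]^r)_{v_0}$ and $\mathsf{H}_1(\mathcal{P}, v_0, x, T)$ as $(M'[x]^r)_{v_0}$. By \cref{lem:markov}, taking $\gamma = \epsilon' 2^{S-1}$ (so $\gamma r = \epsilon' r \cdot 2^{S-1}$), it suffices to show that $M[x] \sim_{\epsilon' 2^{S-1}} M'[x]$ for all but $2^{n/S^c}$ inputs $x$.

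To bound the bad inputs, fix a vertex $u \in V(\mathcal{P})$ and let $I = I'_b$ be the corresponding block, so $|[n] \setminus I| = 3h$. The map $y' \mapsto \mathcal{P}\vert_{[n] \setminus I}(u; x, \mathsf{NisGen}(y'))$ depends only on $x\vert_{[n] \setminus I}$, since the restricted program never queries $I$; call it $f_u$, and note its range lies in $V(\mathcal{P})$ with $\size(\mathcal{P}) \le 2^S$. The rows $M[x]_u$ and $M'[x]_u$ are the laws of $f_u(\mathsf{SUExt}(x\vert_I, U_d))$ and $f_u(U_s)$. A $(k, \epsilon')$-extractor is in particular a $(k, \epsilon' 2^S/\size(\mathcal{P}))$-extractor, so \cref{prop:zuc97} applied with $\delta = \epsilon' 2^{S-1}$ shows that for each fixed value of $x\vert_{[n] \setminus I}$ there are at most $2^{k+1}\size(\mathcal{P}) \le 2^{k+S+1}$ strings $x\vert_I$ with $M[x]_u \not\sim_{\epsilon' 2^{S-1}} M'[x]_u$; ranging over the $2^{3h}$ values of $x\vert_{[n] \setminus I}$ gives at most $2^{3h + k + S + 1}$ bad inputs on account of this $u$. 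Summing over the at most $2^S$ vertices of $\mathcal{P}$,
\[
\#\{x \in \{0,1\}^n : M[x] \not\sim_{\epsilon' 2^{S-1}} M'[x]\} \ \le\ 2^{\,3h + k + 2S + 1}.
\]

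It remains to verify $3h + k + 2S + 1 \le n/S^c$, which is the only step requiring any care; it is nonetheless routine, and I expect no real obstacle. Here $3h \le n/S^{c+1}$ by the choice $h = \lfloor n/(3S^{c+1}) \rfloor$, we have $k = \sqrt{n}$, and the standing hypothesis $S^{c+1} \le \sqrt{n}$ together with $S \ge \log n$ gives $n/S^c \ge S\sqrt{n} \ge \sqrt{n}\,\log n$; hence for $n$ sufficiently large each of $3h$, $k$, and $2S+1$ is at most $\frac{1}{\log n} \cdot \frac{n}{S^c}$, so their sum is at most $n/S^c$. Therefore $M[x] \sim_{\epsilon' 2^{S-1}} M'[x]$ for all but $2^{n/S^c}$ inputs $x$, and for every such $x$ the Markov step gives $\mathsf{A}(\mathcal{P}, v_0, x, T) \sim_{\epsilon' r \cdot 2^{S-1}} \mathsf{H}_1(\mathcal{P}, v_0, x, T)$, which is the claim.
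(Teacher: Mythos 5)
Your proposal is correct and follows essentially the same route as the paper: read $\mathsf{SUExt}$ as a sampler via \cref{prop:zuc97}, union-bound over vertices and over the $2^{3h}$ settings of $x\vert_{[n]\setminus I}$, and propagate through the $r$ phases with \cref{lem:markov}. The only (harmless) divergence is that you correctly exploit the fact that the block is determined by the current vertex and so drop the extra factor of $B$ that the paper carries in its count; both bounds land comfortably under $2^{n/S^c}$.
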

	
	\begin{proof}[Proof sketch]
		The proof follows exactly the same reasoning as the proof of \cref{clm:a-hybrid-1}. The number of bad $x$ values is bounded by
		\begin{align}
		\# \text{ bad } x &\leq B \cdot 2^S \cdot 2^{n - |I'_b|} \cdot 2^{k + S + 1} \\
		&\leq 2^{3h + \sqrt{n} + O(S)} \\
		&\leq 2^{n/S^{c + 1} + \sqrt{n} + O(S)} \\
		&\leq 2^{3n/S^{c + 1}} \\
		&\leq 2^{n/S^c}
		\end{align}
		for sufficiently large $n$.
	\end{proof}
	
	\begin{figure}
		\begin{framed}
			\begin{enumerate}
				\item Initialize $v = v_0$. Repeat $r$ times, where $r \stackrel{\text{def}}{=} \lceil T/h \rceil$:
				\begin{enumerate}
					\item Let $b \in [B]$ be such that $i(v) \in I_b$. Let $I = I'_b$.
					\item Pick $y'' \in \{0, 1\}^T$ uniformly at random.
					\item Let $v = \mathcal{P} \vert_{[n] \setminus I}(v; x, y'')$.
				\end{enumerate}
				\item Output $v$.
			\end{enumerate}
			\vspace{-5mm}
		\end{framed}
		\vspace{-5mm}
		\caption{The algorithm $\mathsf{H}_2$ defining the second hybrid distribution used to prove \cref{eqn:simulate-s-ow-branching-program}. The only difference between $\mathsf{H}_1$ and $\mathsf{H}_2$ is that $\mathsf{H}_2$ feeds true randomness to $\mathcal{P} \vert_{[n] \setminus I}$, instead of feeding it a pseudorandom string from Nisan's generator.} \label{fig:s-ow-hybrid-2}
	\end{figure}
	
	The second hybrid distribution is defined by the algorithm given in \cref{fig:s-ow-hybrid-2}.
	\begin{claim}[$\mathsf{H}_1 \approx \mathsf{H}_2$] \label{clm:s-ow-hybrid-1-hybrid-2}
		For every $x$,
		\begin{equation}
		\mathsf{H}_1(\mathcal{P}, v_0, x, T) \sim_{\epsilon r} \mathsf{H}_2(\mathcal{P}, v_0, x, T),
		\end{equation}
		where $\epsilon$ is the error of $\mathsf{NisGen}$.
	\end{claim}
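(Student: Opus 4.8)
The plan is to follow the proof of \cref{clm:hybrid-1-hybrid-2} essentially verbatim: model a single iteration of the loop in $\mathsf{H}_1$ (resp.\ $\mathsf{H}_2$) of \cref{fig:s-ow-hybrid-1} (resp.\ \cref{fig:s-ow-hybrid-2}) as one step of a Markov chain on the vertex set $V(\mathcal{P})$, use the correctness of $\mathsf{NisGen}$ to bound the per-step error by $\epsilon$, and then amplify over the $r$ iterations with \cref{lem:markov}.

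In detail, fix $x$. For a vertex $u \in V(\mathcal{P})$, let $b(u) \in [B]$ be the index with $i(u) \in I_{b(u)}$ and set $I(u) = I'_{b(u)}$; this is exactly the block chosen by $\mathsf{H}_1$ and $\mathsf{H}_2$ at the start of a phase that begins at $u$. Define the $\size(\mathcal{P}) \times \size(\mathcal{P})$ stochastic matrices
\begin{align}
	M[x]_{uv} &= \Pr_{y' \in \{0,1\}^s}\!\left[\mathcal{P}\vert_{[n] \setminus I(u)}(u; x, \mathsf{NisGen}(y')) = v\right], \\
	M'[x]_{uv} &= \Pr_{y'' \in \{0,1\}^T}\!\left[\mathcal{P}\vert_{[n] \setminus I(u)}(u; x, y'') = v\right].
\end{align}
For each fixed $u$ the block $I(u)$ is fixed, so $\mathcal{P}\vert_{[n] \setminus I(u)}$ is an R-OW randomized branching program of size at most $\size(\mathcal{P}) = 2^S$. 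Hence \cref{thm:nis92}, applied with this branching program, start vertex $u$, and input $x$, gives $M[x]_u \sim_\epsilon M'[x]_u$; since this holds for every row index $u$, we get $M[x] \sim_\epsilon M'[x]$. By \cref{lem:markov}, $M[x]^r \sim_{\epsilon r} M'[x]^r$. Finally, the output of $\mathsf{H}_1(\mathcal{P}, v_0, x, T)$ is distributed according to row $v_0$ of $M[x]^r$ and the output of $\mathsf{H}_2(\mathcal{P}, v_0, x, T)$ is distributed according to row $v_0$ of $M'[x]^r$, which yields the claim.

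There is essentially no obstacle here — the only point worth a moment's thought is that the restriction $\mathcal{P}\vert_{[n] \setminus I(u)}$ is still a legitimate R-OW randomized branching program of size $\le 2^S$, so that \cref{thm:nis92} applies as a black box; this is immediate, since deleting outgoing edges preserves the R-OW property and cannot increase the vertex count, and we have already reduced to the case $T \le 2^S$. Unlike in \cref{clm:s-ow-a-hybrid-1}, there is no ``most $x$'' caveat, because no extractor is involved in passing from $\mathsf{H}_1$ to $\mathsf{H}_2$: the seed fed to $\mathsf{NisGen}$ is genuinely uniform in both the source and target distributions of this step, so the bound holds for every $x$.
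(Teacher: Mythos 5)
Your proposal is correct and matches the paper's intended argument, which is given only as a pointer to the proof of \cref{clm:hybrid-1-hybrid-2}: model each phase as one step of a Markov chain on $V(\mathcal{P})$, bound the per-row error by $\epsilon$ via \cref{thm:nis92} applied to the restricted program $\mathcal{P}\vert_{[n]\setminus I(u)}$, and amplify with \cref{lem:markov}. Your adaptation of the transition matrices to the deterministic, vertex-dependent block choice $I(u)$ (rather than the random $b$ of \cref{clm:a-hybrid-1}) is exactly the right modification for the S-OW setting.
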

	
	\begin{proof}[Proof sketch]
		The proof is the same as that of \cref{clm:hybrid-1-hybrid-2}.
	\end{proof}
	
	All that remains is the final step of the hybrid argument. In this case, $\mathsf{H}_2$ actually simulates $\mathcal{P}$ with \emph{no} error. This argument is where we finally use the fact that $\mathcal{P}$ only has sequential access to its input.
	
	\begin{claim}[$\mathsf{H}_2 \sim \mathcal{P}$] \label{clm:s-ow-hybrid-2-p}
		For every $x$,
		\begin{equation}
		\mathsf{H}_2(\mathcal{P}, v_0, x, T) \sim \mathcal{P}(v_0; x, U_T).
		\end{equation}
	\end{claim}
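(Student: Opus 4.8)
The plan is to mimic the last two steps of the proof of \cref{eqn:simulate-branching-program-correctness} (namely \cref{clm:hybrid-2-hybrid-3,clm:hybrid-3-p}), while observing that the sequential-access structure of $\mathcal{P}$ upgrades the ``$\sum_t T_t \ge T$ with high probability'' step into one that holds with probability $1$, so that no error at all is incurred. Concretely, fix $x$ and a run of $\mathsf{H}_2$, and for each phase $t \in [r]$ let $T_t$ be the number of steps of $\mathcal{P}$ executed in phase $t$ while computing $v \mapsto \mathcal{P}\vert_{[n]\setminus I}(v;x,y'')$. First I would observe that once a phase reaches a terminal vertex of $\mathcal{P}$, every subsequent phase is a no-op and $\mathsf{H}_2$ outputs that terminal vertex; so it suffices to show that some phase always reaches a terminal vertex of $\mathcal{P}$, and then to argue that conditioned on this the output distribution is exactly right.

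The heart of the matter is the following deterministic observation: in any phase $t$ that does not reach a terminal vertex of $\mathcal{P}$, we must have $T_t \ge h$. Indeed, such a phase starts at a nonterminal vertex $v$ with $i(v)\in I_b$, and it restricts to $I = I'_b = [n]\setminus (I_{b-1}\cup I_b \cup I_{b+1})$, so the walk halts as soon as it reaches a vertex querying an index in $I'_b$. Since $\mathcal{P}$ has sequential access to its input (consecutive vertices query indices differing by at most $1$) and $I_{b-1}\cup I_b\cup I_{b+1}$ is a contiguous subinterval of $[n]$ containing $i(v)$, to reach an index in $I'_b$ the walk must exit this interval through one of its endpoints, and doing so requires traversing an entire full-size neighboring block: on the left, any index of $I'_b$ is at distance at least $i(v) - (h(b-2)+1) + 1 > h$ from $i(v)$, and symmetrically on the right, while if the block on the relevant side is the truncated last block $I_B$ (or is empty, when $b\in\{1,B\}$) then that side offers no index of $I'_b$ at all and the walk is forced to the other side. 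Hence $T_t \ge h$ as claimed. Consequently, if no phase reached a terminal vertex of $\mathcal{P}$, then $\sum_{t=1}^r T_t \ge rh = \lceil T/h\rceil\, h \ge T \ge \length(\mathcal{P})$; but the concatenation of the phase-walks is a single walk through $\mathcal{P}$, which cannot take $\length(\mathcal{P})$ or more steps without reaching a terminal vertex — a contradiction. So for every $x$ and every realization of the random choices, $\mathsf{H}_2$ outputs a terminal vertex of $\mathcal{P}$.

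Finally, given that $\mathsf{H}_2(\mathcal{P},v_0,x,T)$ is a terminal vertex of $\mathcal{P}$ with probability $1$, I would conclude exactly as in \cref{clm:hybrid-3-p}: for any path $v_0, v_1, \dots, v_{T'}$ through $\mathcal{P}$ ending at a terminal vertex, both $\mathsf{H}_2(\mathcal{P}, v_0, x, T)$ and $\mathcal{P}(v_0; x, U_T)$ follow that path with probability exactly $2^{-T'}$, since the string $y''$ used in each phase of $\mathsf{H}_2$ is fresh and uniform; hence the two output distributions coincide. I expect the only place that needs genuine care to be the ``$T_t \ge h$'' step, i.e., checking that the block-partition roundoff in \cref{fig:simulate-s-ow-branching-program} (the possibly short last block $I_B$, and the empty $I_0, I_{B+1}$) never lets the walk reach $I'_b$ in fewer than $h$ steps, valid whenever $n$ is large enough that $S^{c+1} \le \sqrt n$; everything else is a verbatim adaptation of \cref{clm:hybrid-2-hybrid-3,clm:hybrid-3-p}.
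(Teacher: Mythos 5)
Your proposal is correct and follows essentially the same route as the paper's proof sketch: since $I'_b$ omits all indices within $h$ of $i(v)$, each phase that does not already sit at a terminal vertex advances the sequential-access walk by at least $h$ steps, so $r = \lceil T/h\rceil$ phases cover $T \ge \length(\mathcal{P})$ steps with probability $1$, and the conclusion then follows exactly as in \cref{clm:hybrid-3-p}. Your extra care with the roundoff in the block partition (the short last block and the empty $I_0, I_{B+1}$) is a correct filling-in of details the paper leaves implicit.
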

	
	\begin{proof}[Proof sketch]
		The set $I'_b$ chosen by $\mathsf{H}_2$ excludes every index in $[n]$ that is within $h$ of $i(v)$. Therefore, each iteration of the loop in $\mathsf{H}_2$ simulates at least $h$ steps of $\mathcal{P}$. Since $r \geq T/h$, overall, $\mathsf{H}_2$ simulates at least $T$ steps of $\mathcal{P}$. But $T \geq \length(\mathcal{P})$, so we are done, just like in the proof of \cref{clm:hybrid-3-p}.
	\end{proof}
	
	\begin{proof}[Proof of \cref{thm:simulate-s-ow-branching-program}]
		By \cref{clm:s-ow-a-hybrid-1,clm:s-ow-hybrid-1-hybrid-2,clm:s-ow-hybrid-2-p} and the triangle inequality,
		\begin{equation}
		\#\{x \in \{0, 1\}^n : \mathsf{A}(\mathcal{P}, v_0, x, t) \not \sim_{\delta} \mathcal{P}(v_0; x, U_T)\} \leq 2^{n/S^c},
		\end{equation}
		where $\delta = \epsilon r + \epsilon' r \cdot 2^{S - 1}$. By our choice of $\epsilon$, the first term is at most $e^{-cS} / 2$. By our choice of $\epsilon'$, the second term is also at most $e^{-cS}/2$. Therefore, $\delta \leq e^{-cS}$.
	\end{proof}

	\section{Proof of \cref{thm:ra00-improved-generalized}: Disambiguation with advice} \label{apx:ra00-improved-generalized}
	
	We begin with randomness-efficient amplification of \cref{thm:vmp-generator}; \cref{lem:zp.uspace} is analogous to \cref{lem:zp.space}, and its proof follows the same reasoning. The details are included only for completeness.

	\begin{lemma} \label{lem:zp.uspace}
		Fix $S: \N \to \N$ with $S(n) \geq \log n$ and $\delta: \N \to [0, 1]$, both constructible in space $O(S)$. For every $L \in \mathbf{NSPACE}(S)$, there is a single-valued unambiguous algorithm $\mathcal{A}$ so that for every $x \in \{0, 1\}^n$,
		\begin{align}
		\Pr_{y \in \{0, 1\}^{\infty}}[\mathcal{A}(x, y) \in \{L(x), \botr\}] &= 1, \label{eqn:zero-error} \\
		\Pr_{y \in \{0, 1\}^{\infty}}[\mathcal{A}(x, y) = \botr] &\leq \delta(n).
		\end{align}
		Furthermore, $\mathcal{A}$ only reads the first $\log_2(1/\delta(n)) + O(S^2)$ bits of $y$ and runs in space $O(S)$.
	\end{lemma}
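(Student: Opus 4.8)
The plan is to mirror the proof of \cref{lem:zp.space}, with van Melkebeek and Prakriya's seeded unambiguous connectivity algorithm (\cref{thm:vmp-generator}) playing the role that Nisan's checkable generator (\cref{lem:nis94}) played there; conveniently, $\mathsf{vMPSeededAlg}$ already bundles ``check the seed'' and ``decide'' into a single single-valued unambiguous procedure, so no separate checker is needed. First I would fix a space-$O(S)$ reduction $\mathcal{R}$ from $L \in \mathbf{NSPACE}(S)$ to $\stconn$ whose output digraphs have $2^{O(S)}$ vertices; then $\mathsf{vMPSeededAlg}(\mathcal{R}(x), \cdot)$ reads only the first $s \le O(S^2)$ bits of its random string and runs in space $O(S)$. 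Next, set $\ell = \lceil \log_2(1/\delta(n)) \rceil + 2s + 2$ and let $\mathsf{GUVExt}\colon \{0,1\}^{\ell} \times \{0,1\}^d \to \{0,1\}^s$ be the $(2s, 0.1)$-extractor of \cref{thm:guv09}, so $d \le O(\log \ell) = O(\log S + \log\log(1/\delta))$. The algorithm $\mathcal{A}$, on input $x$ and random string $y \in \{0,1\}^{\ell}$, would iterate over all $z \in \{0,1\}^d$, running $\mathsf{vMPSeededAlg}(\mathcal{R}(x), \mathsf{GUVExt}(y, z))$ and halting with that output the first time it is not $\botr$; if every $z$ gives $\botr$, it outputs $\botr$.

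For correctness I would argue as follows. Since $\mathsf{vMPSeededAlg}$ is single-valued unambiguous and (per the composition convention) a $\botn$ from it propagates to a $\botn$ from $\mathcal{A}$, the composed algorithm $\mathcal{A}$ is single-valued unambiguous, its unique surviving nondeterministic path being the concatenation of those of the invocations up to and including the first non-$\botr$ one. Because $\mathsf{vMPSeededAlg}$ only ever outputs $\stconn(\mathcal{R}(x)) = L(x)$ or $\botr$, the output of $\mathcal{A}$ always lies in $\{L(x), \botr\}$, which is \cref{eqn:zero-error}. For the failure bound, fix $x$ and apply \cref{prop:zuc97} with $V = \{0,1\}$ and $f(y_1) = \mathbf{1}[\mathsf{vMPSeededAlg}(\mathcal{R}(x), y_1) = \botr]$: since $\Pr[f(U_s) = 1] \le 1/2$ and all but $2^{2s+2}$ strings $y$ satisfy $f(\mathsf{GUVExt}(y, U_d)) \sim_{0.1} f(U_s)$, for each such $y$ there is a $z$ with $f(\mathsf{GUVExt}(y,z)) = 0$, i.e.\ $\mathcal{A}(x,y) \neq \botr$. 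Hence $\mathcal{A}(x,\cdot)$ outputs $\botr$ for at most $2^{2s+2} \le 2^{\ell} \delta(n)$ strings, as needed.

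It remains to verify the resource bounds. The algorithm reads only the $\ell = \log_2(1/\delta(n)) + O(S^2)$ bits of $y$ fed to $\mathsf{GUVExt}$. For space, $\mathsf{GUVExt}$ uses $O(\log\ell)$ space, bits of $\mathcal{R}(x)$ are computable in $O(S)$ space, $\mathsf{vMPSeededAlg}$ uses $O(S)$ space, and the loop counter $z$ needs $d$ bits; since $\delta$ is constructible in space $O(S)$ its denominator has at most $2^{O(S)}$ digits, so $\delta(n) \ge 2^{-2^{O(S)}}$ and thus $d \le O(S)$, giving total space $O(S)$. I expect the only genuinely delicate point --- beyond routine parameter arithmetic --- to be the single-valuedness of $\mathcal{A}$ across the loop over $z$, where each iteration spends fresh nondeterministic bits; but this is precisely the scenario the single-valued-unambiguous composition convention was set up to handle, and the same pattern (without the loop) already appears in the proof of \cref{thm:ra00-improved}.
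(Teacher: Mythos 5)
Your proposal is correct and matches the paper's proof essentially line for line: the same reduction to $\stconn$, the same instantiation of $\mathsf{GUVExt}$ as a $(2s,0.1)$-extractor with $\ell = \lceil\log_2(1/\delta)\rceil + 2s + 2$, the same loop over seeds $z$ halting at the first non-$\botr$ answer, and the same counting of bad random strings via \cref{prop:zuc97}. Your explicit choice of $f$ and $V=\{0,1\}$ in the sampler argument is just a slightly more detailed rendering of the step the paper states directly.
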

	
	\begin{proof}
		Let $\mathcal{R}$ be an $O(S)$-space reduction from $L$ to $\stconn$. For $x \in \{0, 1\}^n$, $\mathcal{R}(x) \in \{0, 1\}^{\bar{n}}$, where $\bar{n} = 2^{O(S)}$, and without loss of generality, $\bar{n}$ depends only on $n$. Let $s$ be the number of random bits used by $\mathsf{vMPSeededAlg}$ on inputs of length $\bar{n}$, so that $s \leq O(\log^2 \bar{n}) = O(S^2)$.
		
		Let $\ell = \lceil \log_2(1/\delta) \rceil + 2s + 2$, and let $\mathsf{GUVExt}: \{0, 1\}^{\ell} \times \{0, 1\}^d \to \{0, 1\}^s$ be the $(2s, 0.1)$-extractor of \cref{thm:guv09}, so that $d \leq O(\log \log(1/\delta) + \log S)$. On input $x \in \{0, 1\}^n, y \in \{0, 1\}^\ell$:
		\begin{enumerate}
			\item For every $z \in \{0, 1\}^d$:
			\begin{enumerate}
				\item Let $a = \mathsf{vMPSeededAlg}(\mathcal{R}(x), \mathsf{GUVExt}(y, z))$.
				\item If $a \neq \botr$, halt and output $a$.
			\end{enumerate}
			\item Halt and output $\botr$.
		\end{enumerate}
		
		Clearly, this algorithm runs in space $O(S + d)$. Since $\delta$ is constructible in space $O(S)$, its denominator must have at most $2^{O(S)}$ digits. Therefore, $\delta \geq 2^{-2^{O(S)}}$ and $d \leq O(S)$, so the algorithm runs in space $O(S)$. Furthermore, it is clearly single-valued unambiguous, and it is ``zero-error'', i.e., \cref{eqn:zero-error} holds. Finally, by \cref{prop:zuc97}, the number of $y$ such that $\mathsf{vMPSeededAlg}(\mathcal{R}(x), \mathsf{GUVExt}(y, z)) = \botr$ for every $z$ is at most $2^{2s + 2}$, and hence the probability that the algorithm outputs $\botr$ is at most $\frac{2^{2s + 2}}{2^{\ell}} \leq \delta$.
	\end{proof}
	
	\begin{proof}[Proof of \cref{thm:ra00-improved-generalized}]
		Let $\mathcal{A}$ be the algorithm of \cref{lem:zp.uspace} with $\delta < 2^{-n}/\epsilon$. Let $m = m(n)$ be the number of random bits used by $\mathcal{A}$. Let $\mathcal{B}$ be the algorithm witnessing the fact that $L$ is within $\epsilon$ of $\mathbf{USPACE}(S)$.
		
		Given input $x \in \{0, 1\}^n$ and advice $a \in \{0, 1\}^m$, compute $a = \mathcal{A}(x, a)$. If $a \neq \botr$, output $a$. If $a = \botr$, output $\mathcal{B}(x)$. This algorithm clearly runs in $O(S)$ space, uses $n - \log_2(1/\epsilon(n)) + O(S^2)$ bits of advice, and is unambiguous (in fact, single-valued unambiguous).
		
		Now we argue that there is some advice string such that the algorithm succeeds on all inputs. Let $S \subseteq \{0, 1\}^n$ be the set of inputs on which $\mathcal{B}$ fails. Consider picking an advice string $a$ uniformly at random. For each string $x \in S$, $\Pr_a[\mathcal{A}(x, a) = \botr] \leq \delta$. Therefore, by the union bound, the probability that there is some $x \in S$ such that $\mathcal{A}(x, a) = \botr$ is at most $|S| \delta = \epsilon \cdot 2^n \cdot \delta < 1$. Therefore, there is \emph{some} advice string such that the algorithm succeeds on all inputs in $S$. Finally, for \emph{any} advice string, the algorithm succeeds on all inputs in $\{0, 1\}^n \setminus S$ by \cref{eqn:zero-error}.
	\end{proof}
\end{document}